\documentclass[reqno,11pt,centertags]{article}
\usepackage{amsmath,amsthm,amscd,amssymb,latexsym,upref}
\date{\today}

\input epsf
\usepackage{epsfig}
\usepackage[T2A,OT1]{fontenc}
\usepackage[ot2enc]{inputenc}
\usepackage[russian,english]{babel}

\usepackage{epic,eepicemu}

\newcommand{\bbD}{{\mathbb{D}}}
\newcommand{\cD}{{\mathcal{D}}}

\newcommand{\bbR}{{\mathbb{R}}}

\newcommand{\bbZ}{{\mathbb{Z}}}
\newcommand{\bbC}{{\mathbb{C}}}

\newcommand{\bbT}{{\mathbb{T}}}

\newcommand{\cF}{{\mathcal{F}}}

\newcommand{\cP}{{\mathcal{P}}}
\newcommand{\cK}{{\mathcal{K}}}
\newcommand{\cQ}{{\mathcal{Q}}}

\newcommand{\cE}{{\mathcal{E}}}

\newcommand{\fz}{{\mathfrak{z}}}
\newcommand{\e}{{\epsilon}}
\renewcommand{\l}{{\ell}}

\newcommand{\fA}{{\mathfrak{A}}}

\renewcommand{\k}{\varkappa}
\newcommand{\z}{\zeta}
\renewcommand{\Re}{\text{\rm Re\,}}
\newcommand{\Om}{{\bar\bbC\setminus E}}
\renewcommand{\Im}{\text{\rm Im\,}}

\newcommand{\sgn}{\text{\rm sgn}}
\renewcommand{\span}{\text{\rm span}}

\allowdisplaybreaks \numberwithin{equation}{section}
\newtheorem{theorem}{Theorem}[section]

\newtheorem{lemma}[theorem]{Lemma}
\newtheorem{proposition}[theorem]{Proposition}

\newtheorem{corollary}[theorem]{Corollary}

\theoremstyle{definition}
\newtheorem{definition}[theorem]{Definition}

\newtheorem{remark}[theorem]{Remark}

\begin{document}

\title{Kotani-Last problem and Hardy spaces on surfaces of Widom type}

\author{A. Volberg\thanks{A. V. is supported by NSF grant DMS 0758552, he also thanks the J. Kepler University of Linz for the hospitality during his visit in summer  2012.}\ \ and P. Yuditskii\thanks{In the frame of FWF project P25591-N25}}


%



\maketitle

\begin{abstract}
It is  a small theory of non almost periodic ergodic families of Jacobi matrices
with pure (however) absolutely continuous spectrum. And the reason why this effect may happen:
under our ``axioms" we found an  analytic condition on the resolvent set that is responsible for (exactly equivalent to)  this effect.

\medskip

\noindent
\textit{Keywords}: {Ergodic operators, Jacobi matrices, Hardy spaces on Riemann surfaces,  reproducing kernels,
j-expanding matrix functions.}
\end{abstract}

\section{Introduction}

\subsection{Ergodic Jacobi matrices. Setting of the problem}
Latest developments in the spectral theory of Schr\"odinger operators, Jacobi and CMV matrices \cite{AD, AJ, GS, JS,  KS, KSDp, PR09, PR11, REMA11, KOT, KR}, and our own new level of understanding of the function theory in multi-connected domains \cite{Yu2011, Yu2012} (for a classical background see \cite{WID, Pom, Has}), provided a solid basis for solving
the Kotani-Last problem \cite{DD},  in fact, below we give a negative answer to their conjecture.
According to Arthur Avila   ``this problem has been around for a while, and became a central topic of the theory after recent popularization (by Barry Simon, Svetlana Jitomirskaya and David Damanik)".

In the setting of this problem we follow \cite{DD}. Suppose $(\Xi,d\xi)$ is a probability measure space, $T:\Xi\to\Xi$ is an invertible ergodic transformation, and $\log \cP:\Xi\to \bbR$ and $\cQ:\Xi\to \bbR$ are bounded and measurable. We define coefficient sequences 
$$
p_n(\xi)=\cP(T^n\xi), \ q_n(\xi)=\cQ(T^n\xi),\quad \xi\in \Xi, \ n\in\bbZ,
$$
and corresponding Jacobi matrices acting in $\l^2=\l^2(\bbZ)$
\begin{equation}\label{kli1}
J(\xi) e_n=p_n(\xi) e_{n-1}+q_n(\xi) e_n+p_{n+1}(\xi)e_{n+1},
\end{equation}
where $e_n$'s are the vectors of the standard basis. We call $\{J(\xi)\}_{\xi\in\Xi}$ an ergodic family of Jacobi matrices.

A non-random measure $d\omega(x)$, called the \textit{density of states measure}, is defined by the relation
$$
\int\frac{d\omega(x)}{x-z}=\int_{\Xi}\langle(J(\xi)-z)^{-1}e_0,e_0\rangle d\xi, \quad
z\in\bbC\setminus \bbR.
$$
  The almost sure spectrum of $J(\xi)$ is given by the closed  support of this measure (Avron-Simon, 1983).
  The integrated density of states and the \textit{Lyapunov exponent} are related by the Thouless formula
  \begin{equation}\label{thfo}
  G(z)=c+\int\log|z-x|d\omega(x), \quad c=-\int \log \cP(\xi) d\xi.
\end{equation}
Finally, the almost sure \textit{absolutely continuous spectrum} $E$ of $J(\xi)$ is given by the essential closure of the set of energies for which the Lyapunov exponent vanishes
  (Ishii, Pastur, Kotani), that is, by the essential closure of the set
  $
 \{z: G(z)=0\}.
  $

 The  \textit{Kotani-Last problem} asks to show that if the Lebesgue measure of $E$ is positive, $|E|>0$, then 
  $J(\xi)$ is almost periodic, that is, the set of translates $\{S^{-n}J(\xi)S^n\}_{n\in \bbZ}$, $Se_n=e_{n+1}$, is a pre-compact set in the operator topology.
  
  The best known example of an ergodic (and almost periodic) Jacobi matrix is the \textit{almost Mathieu operator}:
  $$
  \Xi=\bbT=\{e^{i\phi}\}, \ d\xi= dm\ \text{(the Lebesgue measure)}, \ Te^{i\phi}=e^{i(\phi+2\pi\tau)},
  $$
  and
  \begin{equation}\label{kotani1}
  \cP(e^{i\phi})=1, \quad \cQ(e^{i\phi})=2\lambda \cos\phi, \ \text{i.e.,}\ 
  q_n(e^{i\phi})=2\lambda\cos(2\pi n\tau+\phi)
\end{equation}
A \textit{general almost periodic Jacobi matrix} looks quite similar to this particular case, namely
  $
  \Xi= A$ is a compact Abelian group, $d\xi=d\alpha$ is the Haar measure on $A$, $T\alpha=\mu^{-1}\alpha$, $\mu\in A$, (multiplication by a fixed element of $A$), 
  and $\cP(\alpha)$ and $\cQ(\alpha)$ are arbitrary continuous functions on the group $A$.
  Let us point out that $T$ is ergodic if the trajectory $\{\mu^{-n}\alpha\}_{n\in\bbZ}$ is dense in $A$ ($\tau$ is irrational in the almost Mathieu case).
  
    The Kotani-Last conjecture was motivated by the following celebrated theorem.
  
  \begin{theorem}[Kotani, 1989]\label{thkotani1}
  Let $\cP(\xi)=1$ and $Q(\xi)$ take finitely many values. Assume that  $J(\xi)$ are $d\xi$-almost surely not periodic. Then the Lebesgue measure of $E$ is zero (i.e., the absolutely continuous spectrum is empty).
  \end{theorem}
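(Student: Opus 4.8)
The plan is to prove the contrapositive: assuming $|E|>0$, I will show that $J(\xi)$ is periodic for a.e.\ $\xi$, which contradicts the hypothesis and forces $|E|=0$. \emph{Step 1: Kotani's deterministic theorem.} Since $|E|>0$, the Lyapunov exponent vanishes on a set of positive Lebesgue measure, so Kotani's theory for ergodic Jacobi matrices applies: for a.e.\ $\xi$ and a.e.\ $x\in E$ the half-line Weyl functions satisfy the reflectionless relation $m_+(x+i0,\xi)=-\overline{m_-(x+i0,\xi)}$. Using Schwarz reflection and a Privalov-type analytic continuation across the positive-measure set $E$, this pins down the ``forward'' Weyl function --- and hence $q_0(\xi)$, read off from its asymptotics at $\infty$ --- as a function of the ``backward'' one. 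Consequently the potential is \emph{deterministic}: $q_0(\xi)$ is measurable with respect to $\sigma\big(q_n(\xi):n\le-1\big)$, and, by stationarity together with time-reversal symmetry, also with respect to $\sigma\big(q_n(\xi):n\ge1\big)$. I take this (Kotani's theory) as known background; note that the finiteness of $\operatorname{ran}\cQ$ has not yet been used.

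\emph{Step 2: finitely many values $\Rightarrow$ bounded memory.} Pass to the coding space: put $\mathcal A=\operatorname{ran}\cQ$ (a finite set) and let $\Omega\subseteq\mathcal A^{\bbZ}$ be the topological support of the push-forward of $d\xi$ under $\xi\mapsto(q_n(\xi))_{n\in\bbZ}$; it is shift-invariant and carries a fully supported ergodic measure, and Step 1 provides a measurable prediction $q_0=F\big((q_n)_{n\le-1}\big)$ on $\Omega$. The goal of this step is to upgrade $F$ to one depending on only \emph{finitely many} coordinates: that there is $N\in\bbN$ with $q_0=F(q_{-1},\dots,q_{-N})$ throughout $\Omega$. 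The natural route is to show $F$ is \emph{continuous} for the product topology on $\Omega$; since $\mathcal A$ is discrete this makes $F$ locally constant, and compactness of $\Omega$ then forces it to factor through finitely many coordinates. Continuity should follow from the fact that, with $\cP\equiv1$ and $\mathcal A$ finite, the one-step transfer cocycle over the shift is finite-valued, the Weyl functions depend continuously (locally uniformly on $\bbC_+$, being limits of truncated continued fractions) on the corresponding halves of the potential, and the reconstruction of $m_+$ from $m_-$ in Step 1 is rigid precisely because it takes place on a set of positive measure. Making this rigorous --- equivalently, showing that $|E|>0$ forces the transfer matrices at ($E$-almost every) energy to be two-sidedly bounded for a.e.\ $\omega$, a property that for a finite alphabet is extremely restrictive --- is the analytic heart of the proof and the step I expect to be the main obstacle.

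\emph{Step 3: bounded memory $\Rightarrow$ periodicity.} Assume $q_0=F(q_{-1},\dots,q_{-N})$ on $\Omega$. Let $W\subseteq\mathcal A^N$ be the set of words that occur as $(\omega_{-1},\dots,\omega_{-N})$ for some $\omega\in\Omega$, and set $\phi(w_1,\dots,w_N):=\big(F(w_1,\dots,w_N),w_1,\dots,w_{N-1}\big)$. The prediction property shows $\phi$ maps $W$ into $W$, and since every $\omega\in\Omega$ is bi-infinite each word in $W$ has a predecessor, so $\phi$ is onto $W$ --- hence a bijection of this finite set. A bijection of a finite set has only periodic orbits, so for every $\omega\in\Omega$ the $\phi$-orbit of its window $(\omega_{-1},\dots,\omega_{-N})$ is periodic, which forces $\omega_{n+p}=\omega_n$ for all $n\in\bbZ$, where $p$ is the length of that orbit. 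Hence every element of $\Omega$ is a periodic sequence, i.e.\ $J(\xi)$ is periodic for a.e.\ $\xi$, contradicting the hypothesis; therefore $|E|=0$. Of the three steps, Step 1 is a citation of Kotani theory and Step 3 is a soft finitary argument, so the entire weight of the proof rests on Step 2.
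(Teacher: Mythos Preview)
The paper does not prove this theorem at all: Theorem~\ref{thkotani1} is quoted as a known 1989 result of Kotani, purely as motivation, with no proof offered. So there is nothing to compare your argument against in the paper itself; I will simply assess your proposal on its own terms.

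Your three-step outline is exactly Kotani's original strategy, and Steps~1 and~3 are fine. The difficulty you flag in Step~2, however, is largely self-inflicted, because in Step~1 you cite the \emph{measurable} form of determinism (that $q_0$ is $\sigma(q_n:n\le-1)$-measurable) rather than the stronger \emph{support theorem} Kotani actually proves: if $\omega,\omega'\in\Omega$ agree on a half-line then $\omega=\omega'$. This pointwise statement on the topological support is what does the work; it is obtained by showing that the reflectionless relation, which a~priori holds only a.e.\ in $\xi$, propagates to every point of the support via continuity of the Weyl functions $m_\pm(\cdot,\omega)$ in $\omega$ together with a boundary-uniqueness argument on the positive-measure set $E$.

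Once you have the support theorem, your Step~2 becomes a soft topological one-liner and not ``the analytic heart of the proof'': the restriction map $\Omega\to\mathcal A^{\bbZ_{\le 0}}$ is continuous and injective, and $\Omega$ is compact (closed in the compact product $\mathcal A^{\bbZ}$), so the restriction is a homeomorphism onto its image. Hence its inverse is continuous; in particular $\omega\mapsto\omega_1$ is a continuous function of $(\omega_n)_{n\le 0}$. Since the target $\mathcal A$ is discrete this map is locally constant, and compactness of $\Omega$ then gives a single $N$ with $\omega_1=F(\omega_0,\dots,\omega_{-N+1})$ throughout $\Omega$. Your Step~3 then finishes the proof verbatim. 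So the only genuine gap in your proposal is that you are trying to manufacture continuity directly from transfer-matrix estimates, when the clean route is: upgrade Step~1 to the support theorem, then invoke compact--Hausdorff to get continuity for free.
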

  
 The spectrum of the almost Mathieu operator has been studied deeply. For example, three problems  of Barry Simon's fifteen problems about Schr\"o\-dinger operators ``for the twenty-first century", which featured the almost Mathieu operator \cite{Sim2000}, are now all solved. For $\lambda\in(0,1)$, $J(\phi)$ has almost surely purely absolutely continuous spectrum (it  has almost surely pure point spectrum for $\lambda>1$). But as soon as we substitute in \eqref{kotani1} the continuos function $2\lambda\cos\phi$ with $Q(e^{i\phi})=2\lambda\, \sgn(\phi)$, where $\sgn(\phi)=1$, $\phi\in(0,\pi)$ and $\sgn(\phi)=-1$ for $\phi\in(-\pi,0)$, then by the Kotani Theorem, the absolutely continuous  spectrum disappears for an arbitrary amplitude $\lambda>0$! So, the potential function $\cQ$ is continuous on $\bbT$ (the operator is almost periodic) and an a.c. spectrum is possible, but as soon as  the function $\cQ$ is discontinuous (the operator is not almost periodic) the a.c. spectrum is forbidden. Of course, we should have in mind that
 Theorem \ref{thkotani1} holds only for quite simple discontinuous functions (a finite number of jumps).
 
 A \textit{negative} answer to the Kotani-Last conjecture was given by  Arthur Avila \cite{A}. He made the following announcement on his website: 
  \begin{itemize}
 \item[] \textit{Absolute continuity without almost periodicity}:
 ``We produce a counterexample which is like a ``limit-periodic driven" almost Mathieu operator. To guarantee the existence of some absolutely continuous spectrum, we work in the perturbative regime and do a bit of KAM". 
  \end{itemize}
  In the beginning of October, 2012 preprint \cite{A} appeared, we were made aware of that  after writing this article and sending it to several colleagues.
 
Naturally, it is  important and highly challenging not only  to prove or disprove the conjecture, but   also to understand the reasons for such an interesting phenomenon.
There are at least two programs dealing with this problematic: 
 \begin{itemize}
 \item[]
 S. Kotani, \textit{Grassmann manifold and spectral theory of 1-D Schr\"odinger operators}, Mini courses and Workshop "Hilbert spaces of entire functions and spectral theory of self-adjoint differential operators", at CRM and IMUB, 2011, 
 \end{itemize}
 see also \cite{KOT}. Another one is being  developed by C. Remling (or Poltoratski and Remling) \cite{PR09, PR11, REM11, REMA11}.

In this paper we present our approach (very different from Avila's approach in \cite{A}) to giving a negative answer to the Kotani--Last question. Our approach is dual to that of A. Avila in the following sense. Starting from a certain operator he proves  its required spectral property. We suggest to use methods of the
 \textit{inverse spectral theory}. That is, starting with a certain class of Jacobi matrices given by means of their spectral properties we will prove the required properties of the generated coefficient sequences. In \cite{A} the discrete setting represents considerably higher level of difficulties in comparison with a continuous case. In particular, in the discrete setting in \cite{A} the spectrum is not pure absolutely continuous.  Unlike \cite{A}, our ergodic family has only pure absolutely continuous spectrum, but, on the other hand, we have Jacobi matrices and not discrete Schr\"odinger operators ($\mathcal{P}=1$ in \cite{A}). See also remark \ref{ravno1} below. It is interesting to note that in our case a ``simple" analytic property of the resolvent domain is totally responsible for almost periodicity.
 See remark \ref{switch} below.

By Kotani theory \cite{DD} ergodic matrices with a.c. spectrum are \textit{reflectionless} and we start our construction with the definition of this class.

\subsection{Reflectionless Jacobi matrices: class $J(E)$. Main result}

Let  $J$ be a two-sided matrix acting in $\l^2$
\begin{equation}\label{kli1bis}
Je_n=p_n e_{n-1}+q_n e_n+p_{n+1}e_{n+1}.
\end{equation}
Let 
$\l^2_+=\span_{n\ge 0}\{e_n\}$ and $\l^2_-=\span_{n\le -1}\{e_n\}$. We define
\begin{equation*}\label{kli2}
J_{\pm}= P_{\l^2_\pm} J|\l_\pm^2,
\end{equation*}
and corresponding resolvent functions
 \begin{equation}\label{kli3}
r_{\pm}(z)=\langle(J_\pm -z)^{-1} e_{\frac{-1\pm 1}2},  e_{\frac{-1\pm 1}2}\rangle.
\end{equation}

\begin{definition}
For a compact $E=[b_0,a_0]\setminus \cup_{j\ge 1}(a_j,b_j)$ of  positive Lebesgue measure, $|E|>0$, we say that $J\in J(E)$ if the spectrum of $J$ is $E$ and
 \begin{equation}\label{kli4}
\frac 1{r_{+}(x+i0)}=\overline{p_0^2 r_-(x+i0)}\ \text{for almost all}\ x\in E.
\end{equation}
\end{definition}

Notice that $r_{\pm}(z)$ are functions with positive imaginary part in the upper half-plane (in what follows we say functions of Nevanlinna class). As it well known such functions possess boundary limit values for almost all $x\in\bbR$. 

The property \eqref{kli4} is shift invariant, that is, $S^{-1}JS\in J(E)$ as soon as $J\in J(E)$.
$J(E)$ is a compact in the sense of pointwise convergence of coefficients sequences.

We are specially interested in the case when every element of $J(E)$ has \textit{absolutely continuous} spectrum. For instance, if $E$ is a system of non-degenerated intervals with a unique accumulation point, say $0$, then $E$ is a subject for the condition
$$
\int_E\frac{dx}{|x|}=\infty.
$$
In general the structure of corresponding sets $E$ was  studied  quite completely by Poltoratski and Remling \cite{PR09}.

To be able to use freely Complex Analysis we assume, in addition, that the resolvent domain $\Omega=\bar \bbC\setminus E$ is of Widom type.
One can give a \textit{parametric description of regular Widom domains} by means of conformal mappings on the so called comb-domains, see survey \cite{EY}.

Consider two sequences $\{\omega_k\}$, $\omega_k\in (0,1)$, $\omega_k\not=\omega_j$ if
$k\not=j$, and $\{h_k\}$, $h_k>0$, $\sum h_k<\infty$. Let $\Pi$ be a region obtain from the half-strip
$$
\{z=x+iy:\ -\pi <x<0,\ y>0\}
$$
by removing vertical intervals $\{-\pi\omega_k+iy: 0<y\le h_k\}$. Let $\theta$ be the conformal map from the upper half-plane to $\Pi$ such that
$$
\theta(b_0)=-\pi, \ \theta(a_0)=0,\ \theta(\infty)=\infty.
$$
Notice that it has a continuous extension to the closed half-plane. The set $E$ corresponds to the base of the comb, $E=\theta^{-1}([-\pi,0])$, and the gaps correspond to the slits.

By the symmetry principle $\theta(z)$ can be extended to $\Omega$ as \textit{multivalued function},
but $G(z):=\Im\theta(z)$ is single valued harmonic function with logarithmic pole at infinity, such that 
$G(x\pm i0)=0$ for $x\in E$. In other words, $G(z)=G(z,\infty)$ is the Green function in $\Omega$ with respect to infinity. 

Further, the Lebesgue measure on the base of the comb corresponds to the \textit{harmonic measure} $d\omega$ on $E$ and we have \eqref{thfo}. In particularly we get immediately that $d\omega(x)$ and $dx$ are mutually absolutely continuous on $E$.

To work with multivalued functions,  it is convenient to introduce from the very beginning a universal covering of $\Omega$. Let $\bbD/\Gamma\simeq \bar \bbC\setminus E$, that is, there exists a Fuchsian group $\Gamma$ and a meromorphic function $\fz:\bbD\to \bar \bbC\setminus E$, $\fz\circ\gamma=\fz$ for all $\gamma\in\Gamma$, such that
$$
\forall z\in\Om\  \exists \z\in\bbD: \fz(\z)=z\ \text{and}\ \fz(\z_1)=\fz(\z_2)\Rightarrow \z_1=\gamma(\z_2).
$$
We assume that $\fz$ meets the normalization conditions $\fz(0)=\infty$, $(\z\fz)(0)>0$. 

Of course $\Gamma$ is equivalent to the fundamental group of the domain $\pi_1(\Omega)$ and generators $\gamma_j$'s of $\Gamma$ corresponds to the closed curves that start on the positive half-axis $x>a_0$, go in the upper half plane to the gap $(a_j,b_j)$'s and go back in the lower half-plane. 

For instance, in this case for the function $b(\zeta):=e^{i\theta(\fz(\z))}$ we have
$$
  b\circ\gamma_j=\mu(\gamma_j) b,
\ \text{where}\ \mu(\gamma_j)=e^{-2\pi i\omega_j}.
$$
The system of multipliers $\{\mu(\gamma)\in\bbT\}_{\gamma\in\Gamma}$ form a character of the group $\Gamma$, i.e.,
$\mu(\gamma_1\gamma_2)=\mu(\gamma_1)\mu(\gamma_2)$. Thus $b(\z)$ is so called \textit{character automorphic function} in $\bbD$ with respect to $\Gamma$.

Now, every meromorphic automorphic function $f$ in $\bbD$, $f\circ\gamma=f$,  is of the form $f=F\circ\fz$, where $F$ is meromorphic in $\Om$ and vice versa.
 We say that $F(z)$ belongs to the Smirnov class $N_+(\Omega)$ if $f=F\circ\fz$ is of Smirnov class in $\bbD$, i.e., it can be represented as a ratio of two bounded functions $f=f_1/f_2$ and the denominator $f_2$ is an outer function.

Note that the Lebesgue measure on $\bbT$ corresponds to the harmonic measure on $E$. Thus
$F(z)\in N_+(\Omega)$ has boundary values for almost all $x\in E$ with respect to the Lebesgue measure $dx$, restricted to this set.

Now we are ready to formulate our main result, but we need to define one more property 
of domains of Widom type.

\begin{definition} We say that the Direct Cauchy Theorem (DCT) holds in a Widom domain $\Omega$ if
\begin{equation}\label{DCT1}
\frac 1{2\pi i}\oint_{\partial \Omega}\frac{F(x)}{x-z}dx=F(z), \quad z\in\Omega,
\end{equation}
for all $F\in N_+(\Omega)$ such that 
\begin{equation}\label{DCT2}
\oint_{\partial \Omega}|{F(x)}dx|<\infty \ \text{and}\ F(\infty)=0.
\end{equation}
\end{definition}

\begin{theorem}\label{thmain}
Let $E$ be such that
\begin{itemize}
\item[(i)] Every  $J\in J(E)$ has  absolutely continuous spectrum;
\item[(ii)] The domain $\Omega=\Om$ is of Widom type;
\item[(iii)] The frequencies   $\{\omega_j\}$ are independent in the sense that for every finite collection of indexes 
$I$ and $n_j\in\bbZ$
$$
\sum_{j\in I}\omega_j n_j=0\ \text{\rm mod} \ 1
$$
implies that  $n_j=0$ for all $j\in I$.
\end{itemize}
Then on the compact $\Xi:=J(E)$ there is an  unique shift invariant measure $dJ$, $dJ=d(S^{-1}J S)$, such that the measurable functions, see \eqref{kli1bis},
\begin{equation}\label{functionstut}
\cP(J)=p_0=\langle J e_0,e_{-1}\rangle,\quad \cQ(J)=q_0=\langle J e_0,e_{0}\rangle, \quad J\in J(E),
\end{equation}
define an ergodic family $\{J\}_{J\in J(E)}$.

Moreover, all elements of the family are either almost periodic or they all are not, depending on the following analytic property of the domain $\Omega$: either
DCT holds true or it fails.
\end{theorem}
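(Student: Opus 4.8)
The plan is to transfer the whole problem, via the function theory on $\Omega$, to a rotation of a compact abelian group, and then read off both assertions from the resulting coordinates.

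\textbf{Step 1 (parametrization by characters).} For a character $\alpha\in\Gamma^{*}:=\mathrm{Hom}(\Gamma,\bbT)$ let $H^{2}(\alpha)$ be the Hardy space of $f\in H^{2}(\bbD)$ with $f\circ\gamma=\alpha(\gamma)f$, with reproducing kernel $k^{\alpha}$. Since $b\circ\gamma_{j}=\mu(\gamma_{j})\,b$, $\mu(\gamma_{j})=e^{-2\pi i\om_{j}}$, multiplication by $b$ carries $H^{2}(\alpha)$ into $H^{2}(\alpha\mu)$, and orthogonalizing the chain $\{b^{n}H^{2}(\alpha\mu^{-n})\}_{n\in\bbZ}$ yields an orthonormal system of $\l^{2}(\bbZ)$ in which multiplication by the coordinate $\fz$ acts as a two-sided Jacobi matrix $\Phi(\alpha)$. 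As in the inverse spectral machinery on Widom surfaces one then checks that $\sigma(\Phi(\alpha))=E$, that the half-line Weyl functions of $\Phi(\alpha)$ obey the reflectionless identity \eqref{kli4} (so $\Phi(\alpha)\in J(E)$), that $\Phi(\alpha\mu)=S^{-1}\Phi(\alpha)S$, and that the coefficients \eqref{functionstut} of $\Phi(\alpha)$ are explicit in the numbers $k^{\alpha\mu^{n}}(0,0)$ and the normalization constant $(\z\fz)(0)$. In the opposite direction, $J\in J(E)$ determines --- by meromorphic continuation of $r_{+}$ through $E$ --- a divisor in the gaps $(a_{j},b_{j})$, hence a character; hypothesis (i) is what guarantees this divisor lies on the expected torus. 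Thus $\Phi:\Gamma^{*}\to J(E)$ is equivariant (it intertwines translation by $\mu$ with the shift $S$) and, at the set-theoretic level, a bijection whose inverse, the divisor map, is continuous when $J(E)$ carries the topology of pointwise convergence of coefficients (poles of $r_{+}$ move continuously under resolvent convergence). Finally, since the frequencies of $b$ are the $\om_{j}$, hypothesis (iii) is exactly the Kronecker condition that $\{\mu^{n}\}_{n\in\bbZ}$ be dense in $\Gamma^{*}$, i.e. that translation by $\mu$ be a minimal rotation of $\Gamma^{*}$.

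\textbf{Step 2 (measure and ergodicity; no DCT needed).} A minimal rotation of a compact abelian group is uniquely ergodic and ergodic, with Haar measure $d\alpha$ the only invariant probability. Pushing $d\alpha$ forward by $\Phi$ gives an $S$-invariant probability $dJ$ on $J(E)$; any $S$-invariant probability on $J(E)$ pushes forward by the continuous equivariant divisor map to a $\mu$-invariant measure on $\Gamma^{*}$, hence to $d\alpha$, and since the divisor determines $J$ for $d\alpha$-a.e.\ character (the exceptional characters forming a Haar-null set) it coincides with $dJ$; so $dJ$ is unique. Consequently $(J(E),dJ,S)$ is measure-theoretically isomorphic to $(\Gamma^{*},d\alpha,\ \text{translation by }\mu)$, it is ergodic, the functions \eqref{functionstut} are Borel, and $\{J\}_{J\in J(E)}$ is an ergodic family. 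The same factor-map argument applied to orbit closures shows $J(E)$ is minimal under $S$.

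\textbf{Step 3 (the dichotomy).} The shift is implemented on $\l^{2}(\bbZ)$ by a unitary, so the operator norm on $J(E)$ is shift invariant; with minimality this gives, for every $\alpha_{0}\in\Gamma^{*}$, the equivalences: $\Phi(\alpha_{0})$ is almost periodic $\iff$ $n\mapsto\Phi(\mu^{n}\alpha_{0})=S^{-n}\Phi(\alpha_{0})S^{n}$ is a Bohr almost periodic operator-valued sequence $\iff$ $\Phi:\Gamma^{*}\to J(E)$ is continuous into the operator norm (for ``$\Leftarrow$'': compactness of $\Gamma^{*}$ upgrades continuity to uniform continuity and produces relatively dense sets of $\e$-almost-periods; for ``$\Rightarrow$'': a discontinuity of $\Phi$ at some $\beta$ lets one interleave, inside the dense orbit $\{\mu^{n}\alpha_{0}\}$, two subsequences approaching $\beta$ whose $\Phi$-images stay a fixed distance apart, so the orbit is not precompact). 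Since this criterion is independent of $\alpha_{0}$, either all elements of $J(E)$ are almost periodic or none are. It remains to identify ``$\Phi$ norm-continuous on $\Gamma^{*}$'' with DCT: by Step 1 this is the continuity of $\alpha\mapsto k^{\alpha}(0,0)$, and more generally of the family of reproducing kernels $k^{\alpha}$, over $\Gamma^{*}$, and this continuity of $\{H^{2}(\alpha)\}_{\alpha}$ is precisely the Direct Cauchy Theorem in $\Omega$ --- when DCT holds the relevant Cauchy/reproducing integrals converge and vary continuously in the character, while a failure of \eqref{DCT1}--\eqref{DCT2} makes $\alpha\mapsto k^{\alpha}(0,0)$ merely semicontinuous, with genuine jumps, forcing a genuine discontinuity of $\Phi$ and the breakdown of almost periodicity throughout $J(E)$.

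\textbf{The main obstacle.} Everything through Step 2 is essentially formal once the construction of $\Phi$ is in place; the real content is the last equivalence of Step 3 --- that norm-continuity of the character-to-matrix map $\Phi$ (equivalently, continuity over $\Gamma^{*}$ of the reproducing kernels of the $H^{2}(\alpha)$) holds if and only if DCT holds, and that when DCT fails the ensuing discontinuity genuinely lands on one of the coefficients $\cP,\cQ$ and not on an inessential auxiliary quantity. This is where the fine theory of Widom domains enters (Blaschke products, the Widom condition versus DCT, and the use of hypothesis (i) to rule out degeneracies of $H^{2}(\alpha)$), and where the precise behaviour of $\Phi$ in the DCT-failure case --- including the size of the exceptional set of characters --- must be settled.
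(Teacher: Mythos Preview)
Your proposal has a structural gap in Step~1 that undermines Steps~2 and~3. You define $\Phi(\alpha)=J(H^{2}(\alpha))$ with $H^{2}(\alpha)=\{f\in H^{2}:f\circ\gamma=\alpha(\gamma)f\}$ and assert that $\Phi:\Gamma^{*}\to J(E)$ is a bijection whose inverse is the divisor map. This is precisely what breaks when DCT fails. The paper's point is that for each $\alpha$ there are \emph{two} natural Hardy spaces, $\check H^{2}(\alpha)\subseteq\hat H^{2}(\alpha)$ (your $H^{2}(\alpha)$ is the paper's $\hat H^{2}(\alpha)$), and in fact a whole family of intermediate shift-invariant subspaces between them (Subsection~\ref{subsec25} and Theorem~\ref{mainj}), each producing a \emph{different} element of $J(E)$ with the same character. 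So the Abel map $\pi:J(E)\to\Gamma^{*}$ is well defined but not injective over $\Theta=\{\alpha:\hat H^{2}(\alpha)\ne\check H^{2}(\alpha)\}$, and your $\Phi$ is not surjective. The divisor map $J(E)\to D(E)$ is indeed a homeomorphism, but $D(E)$ and $\Gamma^{*}$, though both infinite tori, are not identified by the Abel map unless DCT holds; you have conflated them.

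This non-injectivity is not a nuisance to be dismissed as ``exceptional characters forming a Haar-null set'': proving $|\Theta|_{d\alpha}=0$ is the Main Lemma of the paper (Theorem~\ref{th62}), obtained by building an inner multiplier $w$ with $w\,\hat H^{2}(\alpha)\subset\check H^{2}(\alpha\beta)$ uniformly in $\alpha$, and it is what makes the invariant-measure argument work at all. More importantly, the non-injectivity is the \emph{mechanism} for non-almost-periodicity, not a discontinuity of a single-valued $\Phi$. The paper picks $\alpha\in\Theta$, forms $\hat J=J(\hat H^{2}(\alpha))\ne\check J=J(\check H^{2}(\alpha))$, and observes that if both were almost periodic then along any sequence $\alpha\mu^{-m_{n}}\to\beta\notin\Theta$ the shifted orbits of both would converge in norm to the unique $\pi^{-1}(\beta)$, contradicting $\|\hat J-\check J\|\ne 0$; propagating this one non-almost-periodic $J_{0}$ to all of $J(E)$ then requires the separate fact (Theorem~\ref{thsid}) that nonempty open sets in $J(E)\simeq D(E)$ have positive $dJ$-measure. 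Your Step~3, phrased as ``$\Phi$ norm-continuous iff DCT'', cannot be made rigorous as written because when DCT fails there is no single-valued $\Phi$ whose continuity one can test; the ``jump'' you want is between two distinct branches over the same character, not a discontinuity of one branch.
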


\begin{remark}
The statement that DCT implies that all elements of $J(E)$ are almost periodic was proved in \cite{SY}. Note that DCT implies automatically  that all $J$'s have absolutely continuous spectrum, and  the condition $(iii)$ is also redundant. 
\end{remark}

Our proof of the main part of Theorem \ref{thmain} is as follows. Widom condition allows us to construct \textit{generalized Abel map} $\pi$ from $J(E)$ to the group of characters $\Gamma^*$. As soon as DCT fails this map is not one to one. It deals with a hidden singularity in Hardy spaces of character automorphic functions  on $\Omega$, which exists  even in the case that all $J$'s have no singular  spectral components. Nevertheless, based on the main analytic lemma, we show that this map is one to one almost everywhere with respect to the Haar measure on $\Gamma^*$ and therefore it defines a  unique shift invariant measure on $J(E)$. On the other hand, presence of different Jacobi matrices in $J(E)$, corresponding to the same character implies that they are not almost periodic.

\begin{remark}
Concerning almost periodicity: the functions $\cP(J)$ and $\cQ(J)$ defined by \eqref{functionstut} are continuous on the compact $J(E)$, but this set does not have the structure of an Abelian group as soon as DCT fails. On the other hand essentially the same ergodic family can be given by means of the functions $\hat \cP(\alpha)$ and $\hat \cQ(\alpha)$ on $\Gamma^*$, see Theorem \ref{th44}. Now $\Gamma^*$  is an Abelian group, but the functions   $\hat \cP(\alpha)$ and $\hat \cQ(\alpha)$ are not continuous as  soon as DCT fails, of course this discontinuity is of a very non-trivial structure.
\end{remark}

\begin{remark}
We mentioned above a hidden singularity in Hardy spaces of character automorphic functions  on $\Omega$, which exists  even in the case that all $J$'s have no singular  spectral components. We describe it in Section 5 by means of the so called $j$-inner matrix functions. Revealing of this structure allows us easily construct \textit{examples} of Widom Domains of this sort \textit{without DCT},  see \cite{Yu2011}. Note, by the way, that the main result of \cite{SY} is stated for the so called homogeneous sets $E$. If $E$ is homogeneous then DCT holds in $\Omega=\Om$. For a certain period there was a conjecture that the homogeneity is also a necessary condition for DCT. However in \cite{Yu2011} examples of non-homogeneous boundaries of  Widom domains with DCT are given.
\end{remark}

\begin{remark}
\label{ravno1}
The special case of Jacobi matrices, when $\cP(\xi)=1$, the so called discrete Schr\"odinger operators, is studied very intensively, see e.g. \cite{AD, AJ, GS, JS}, and in \cite{A} the Kotani-Last conjecture is studied in this setting.
Recall that we are working in the framework of the inverse spectral theory. From this point of view  $\cP(\xi)=1$ is an a priori extra condition and  this is a long-standing problem, which was formulated explicitly for instance by V.A. Marchenko: ``clarify which spectral data correspond exactly to discrete Schr\"odinger operators".
We did not investigate this problem in our context, in particular, 
it is not clear now whether [1] can be derived from our result.
\end{remark}

\begin{remark}
\label{switch}
Finally, we restate our main result in the following words: for an ergodic Jacobi matrices with the \textit{purely} absolutely continuos spectrum in a \textit{generic} situation (under a  probably technical Widom condition) we found a kind of switch with only two positions: DCT holds or it fails. As soon as the switch is in the position ``on" all operators of the class are almost periodic as soon as it is ``off"  none of them is. Amazingly,  an \textit{analytic} property of the resolvent domain completely characterizes  \textit{almost periodicity}.
\end{remark}

\section{Hardy spaces}

\subsection{Widom's function}
We described Widom domains by means conformal mappings $\theta(z)$. Recall that
$G(z)=G(z,\infty)=\Im\theta(z)$ is the Green function with respect to infinity.  Denote $c_j=\theta^{-1}(-\pi 
\omega_j+ih_j)$, $c_j\in(a_j,b_j)$. They are critical points of $G(z)$. Thus $\sum h_j<\infty$ is in fact the Widom condition
\begin{equation}\label{hs1}
\sum_{c_j:\nabla G(c_j)=0}G(c_j)<\infty.
\end{equation}

Fix $z_0\in\Omega$ and let orb$(\z_0)=\fz^{-1}(z_0)=\{\gamma(\z_0)\}_{\gamma\in\Gamma}$. The Blaschke product $b_{z_0}$ with zeros $\fz^{-1}(z_0)$ in the disc is related to the Green function $G(z,z_0)$ by
\begin{equation}\label{hs2}
\log\frac{1}{|b_{z_0}(\z)|}=G(\fz(\z), z_0).
\end{equation}

We normalize $b_{z_0}$ as follows: $b_{z_0}(0)>0$. It is a character automorphic function with character $\mu_{z_0}$. 

We put
\begin{equation}\label{hs3}
\Delta(\z):=\prod_{j\ge 1} b_{c_j}(\z).
\end{equation}
Then $\Delta(0)=e^{-\sum_{j\ge 1} h_j}$ by \eqref{hs2}. We use notation $\nu$ for the character of $\Delta$, $\Delta\circ\gamma=\nu(\gamma)\Delta$.

Let $\Gamma^*$ be the group of characters, it is provided  with topology of convergence on each element of the discrete group $\Gamma$. 

For $\alpha\in\Gamma^*$ we define
$$
H^\infty(\alpha)=\{f\in H^\infty: \ f\circ\gamma=\alpha(\gamma) f\},
$$ 
where $H^\infty$ means the standard Hardy space in $\bbD$.

According to the fundamental Widom theorem
\begin{equation*}\label{hs4}
\Delta(0)=\inf_{\alpha\in\Gamma^*}\sup_{f\in H^{\infty}(\alpha):\|f\|\le 1} |f(0)|.
\end{equation*}
Thus the Widom condition \eqref{hs1} is equivalent to the fact that $H^\infty(\alpha)$ contains a non-constant element for all $\alpha\in\Gamma^*$.

\subsection{The Hardy space $\hat H^p(\alpha)$}
Similarly to $H^\infty$-spaces for $p\ge 1$ we define
$$
\hat H^p(\alpha)=\{f\in H^p: \ f\circ\gamma=\alpha(\gamma) f\}, \ \alpha\in\Gamma^*.
$$
Then we introduce $F(z)$: 
\begin{equation}\label{hs005}
F\circ\fz=f\in\hat H^p(\alpha). 
\end{equation}
We can easily see that it is an modular automorphic function, that is, $|F|\circ\gamma=F$ for all $\gamma\in\pi_1(\Omega)$, and moreover 
\begin{equation}\label{hs5}
F\circ\gamma=\alpha(\gamma) F, \ \forall\gamma\in\pi_1(\Omega).
\end{equation}
$F$ is obviously in Smirnov's class of $\Omega$ and
\begin{equation}\label{hs6}
\int_E |F(x)|^p d\omega(x)=\int_\bbT |f(\z)|^p dm(\z)<\infty,
\end{equation}
where $d\omega$ is the harmonic measure on $E$ and $dm$ is the Lebesgue measure on $\bbT$.

Conversely,  any character automorphic function $F$ satisfying \eqref{hs5} with the finite norm \eqref{hs6}, which is in Smirnov's class in $\Omega$ can be obtained by formula \eqref{hs005}.

The latter space of functions will be called $\hat H^p_{\Omega}(\alpha)$. We saw that
\begin{equation}\label{hs103}
\hat H^p_{\Omega}(\alpha)\simeq\hat H^p(\alpha)
\end{equation}

by the operator $F\to F\circ\fz$.

\subsection{The Hardy space $\check H^q(\alpha)$. \\ Direct and Inverse Cauchy Theorems}

This family of spaces will be defined via orthogonal complements. 

For the standard duality between the spaces $L^p$ and $L^q$, $1/p+1/q=1$,  we use the notation
$$
\langle f,g\rangle =\int_\bbT \bar g f dm.
$$
An important role will be played by the following averaging operator \cite{Pom}, originally defined on functions $f\in L^\infty$,
\begin{equation}\label{hs7}
\displaystyle
P^\alpha f=\frac{\sum_{\gamma\in\Gamma} \alpha^{-1}(\gamma) |\gamma'| f\circ\gamma}
{\sum_{\gamma\in\Gamma} |\gamma'|}\in L^\infty(\alpha).
\end{equation}
Recall that the Green's function $G(z,\infty)$ corresponds to the Blaschke product $b=b_\infty$. 
Since $|b'(\z)|=\sum_{\gamma\in\Gamma}|\gamma'(\z)|$, $\z\in\bbT$, the expression \eqref{hs7}
can be rewritten as follows
\begin{equation}\label{hs8}
\displaystyle
P^\alpha f=\frac{b}{b'} \sum_{\gamma\in\Gamma} \alpha^{-1}(\gamma) \frac{\gamma'}{\gamma} f\circ\gamma.
\end{equation}

The main properties of $P^\alpha$ are:
\begin{itemize}
\item $\langle P^\alpha f, g\rangle=\langle  f, g\rangle$ for all $f\in L^\infty$ and $g\in L^1(\alpha)$,
\item $\| P^{\alpha}f\|_p\le \|f\|_p$.
\end{itemize}

Notice that the expression for $P^\alpha$ \eqref{hs7} and its main properties   follow mainly from the fact: for a Widom domain $\Omega\simeq\bbD/\Gamma$ there exists a measurable fundamental set for the action of the corresponding group $\Gamma$ on the unit circle $\bbT$ \cite{Pom}.

Turn now to the Hardy spaces. The representation \eqref{hs8} implies that $P^{\alpha}$ maps
$\Delta H^\infty$ to $H^\infty(\alpha)$. It follows from the fact that for Widom domains $b'$ is of Smirnov's class, moreover its inner part is the Widom function $\Delta$ (the Blaschke product over critical points) \cite{Pom}. Evidently,
\begin{equation}\label{hs9}
(P^\alpha\Delta h)(0)=\Delta(0) h(0), \ \forall h\in H^\infty.
\end{equation}

The following statement is known as the Inverse Cauchy Theorem \cite{Has}.

\begin{theorem}\label{thICT}
The annihilator of $\hat H_0^{p}(\alpha)=\{f\in \hat H_0^{p}(\alpha): f(0)=0\}$, $1\le p<\infty$, is contained in the subspace
\begin{equation}\label{hs10}
\Delta \overline{\hat H^q(\nu\alpha^{-1})}=
\{ \Delta \bar g: \ g\in \hat H^q(\nu\alpha^{-1})\}\subset L^q(\alpha).
\end{equation}
\end{theorem}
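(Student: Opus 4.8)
The plan is to test the annihilation condition only against the special elements of $\hat H_0^p(\alpha)$ produced by the averaging operator $P^\alpha$ of \eqref{hs8}, and then to peel off $P^\alpha$ and $\Delta$ using the two identities already recorded for $P^\alpha$ together with a classical $H^1$ Fourier-coefficient argument. First I would fix $h\in H^\infty$ with $h(0)=0$. By the remark after \eqref{hs8} we have $P^\alpha(\Delta h)\in H^\infty(\alpha)\subset\hat H^p(\alpha)$, while \eqref{hs9} gives $(P^\alpha\Delta h)(0)=\Delta(0)h(0)=0$; hence $P^\alpha(\Delta h)\in\hat H_0^p(\alpha)$. Keeping the restriction $h(0)=0$ is essential: for constant $h$ the function $P^\alpha\Delta$ does not vanish at the origin and is not an admissible test function.

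Next, let $\phi\in L^q(\alpha)$ annihilate $\hat H_0^p(\alpha)$. Since $q\ge1$ and $m$ is a probability measure, $\phi\in L^1(\alpha)$, so the reproducing property $\langle P^\alpha f,g\rangle=\langle f,g\rangle$ applied with $f=\Delta h\in L^\infty$ and $g=\phi$ yields
$$
0=\langle P^\alpha(\Delta h),\phi\rangle=\langle \Delta h,\phi\rangle=\int_\bbT\overline{\phi}\,\Delta\,h\,dm\qquad\text{for all }h\in H^\infty\text{ with }h(0)=0.
$$
Put $\Phi:=\overline{\phi}\,\Delta\in L^q\subset L^1$. Taking $h(\z)=\z^n$ for $n\ge1$ shows that all Fourier coefficients of $\Phi$ of negative index vanish, i.e. $\Phi\in H^1$; and since $|\Delta|=1$ a.e. on $\bbT$ we have $|\Phi|=|\phi|\in L^q$, so in fact $\Phi\in H^q$. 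Denoting this function $g_0$, on $\bbT$ we obtain $\phi=\Delta\overline{g_0}$.

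To finish, I would identify the character of $g_0$. From $\phi\circ\gamma=\alpha(\gamma)\phi$, $\Delta\circ\gamma=\nu(\gamma)\Delta$ and $\overline{\alpha(\gamma)}=\alpha(\gamma)^{-1}$,
$$
g_0\circ\gamma=\overline{\phi\circ\gamma}\,(\Delta\circ\gamma)=\overline{\alpha(\gamma)}\,\nu(\gamma)\,\overline{\phi}\,\Delta=(\nu\alpha^{-1})(\gamma)\,g_0,
$$
so $g_0\in\hat H^q(\nu\alpha^{-1})$ and therefore $\phi=\Delta\overline{g_0}\in\Delta\overline{\hat H^q(\nu\alpha^{-1})}$, which is the assertion.

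I do not anticipate a serious obstacle here: the argument uses only \eqref{hs8}, \eqref{hs9}, the two listed properties of $P^\alpha$, and the elementary fact that an $L^1$ function annihilating $\z H^\infty$ lies in $H^1$. The only point that needs care is the bookkeeping at the origin: one must feed in elements of $\hat H_0^p(\alpha)$, not of $\hat H^p(\alpha)$, which is precisely why \eqref{hs9} and the condition $h(0)=0$ enter, and why only exponents $n\ge1$ appear in the Fourier step, leaving the zeroth coefficient of $\Phi$ free in agreement with $g_0\in H^q$ rather than $\z H^q$. It is also worth remarking that the reverse inclusion is the genuinely difficult statement, governed by the Direct Cauchy Theorem, which is why this ``inverse'' direction holds unconditionally on Widom domains.
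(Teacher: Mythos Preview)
Your argument is correct and is essentially the same as the paper's proof: both test the annihilator against $P^\alpha(\Delta h)$ for $h\in H^\infty$ with $h(0)=0$, use the averaging identity $\langle P^\alpha f,g\rangle=\langle f,g\rangle$ to strip off $P^\alpha$, and then read off from $\langle h,\bar\Delta\phi\rangle=0$ that $\Delta\bar\phi\in H^q\cap L^q(\nu\alpha^{-1})=\hat H^q(\nu\alpha^{-1})$. Your write-up simply spells out the Fourier-coefficient step and the character bookkeeping that the paper compresses into one line.
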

\begin{proof}
Let $f\in L^q(\alpha)$ and $\langle g, f\rangle=0$ for all $g\in \hat H_0^{p}(\alpha)$. Then for any function $h\in H^\infty$, $h(0)=0$, we have
\begin{equation}\label{00hs11}
0=\langle P^{\alpha}\Delta h, f\rangle=\langle \Delta h, f\rangle=\langle  h, \bar \Delta f\rangle.
\end{equation}
Hence, $\Delta \bar f\in H^q\cap L^{q}(\alpha^{-1}\nu)=\hat H_0^q(\nu\alpha^{-1})$.
\end{proof}
 
\begin{definition}
The space $\check H^q(\alpha^{-1}\nu)$ consists of all functions $g$ such that $f=\Delta\bar g$ belongs to the annihilator of $\hat H_0^{p}(\alpha)$.
\end{definition}

By Theorem \ref{thICT}, $\check H^q(\alpha^{-1}\nu)$ is a closed subspace of  $\hat H^q(\alpha^{-1}\nu)$. Such spaces coincide for all $\alpha$ if and only if the Direct Cauchy Theorem holds, that is,
\begin{equation}\label{hs11}
\int_\bbT\frac{f}{\Delta}dm=\langle  f,  \Delta \rangle=
\left(\frac{f}{\Delta}\right)(0), \ \forall f\in \hat H^1(\nu).
\end{equation}
Indeed, it remains to show that $\Delta\bar g$ annihilates $\hat H_0^p(\alpha)$ as soon as 
$g\in \hat H^q(\alpha^{-1}\nu)$. But, by \eqref{hs11}
$$
\langle  f,  \Delta\bar g \rangle=\langle  f g,  \Delta \rangle=0
$$
since $fg\in\hat H_0^1(\nu)$.

Finally notice that \eqref{00hs11}, in fact, implies that
\begin{equation}\label{hs12}
\check H^p(\alpha)=\text{clos}_{L^p}P^\alpha(\Delta H^\infty).
\end{equation}

\subsection{Reproducing kernels in $\hat H^2(\alpha)$ and $\check H^2(\alpha)$}

We define $\hat k^\alpha_{\zeta_0}(\z)=\hat k^\alpha(\zeta,\z_0)$ and 
$\check k^\alpha_{\zeta_0}(\z)=\check k^\alpha(\zeta,\z_0)$ by
\begin{equation}\label{hs13}
\begin{matrix}
\langle f, \hat k^\alpha_{\zeta_0}\rangle = f(\z_0), & \forall f\in\hat H^2(\alpha)\\
\langle g, \check k^\alpha_{\zeta_0}\rangle = g(\z_0), & \forall g\in\check H^2(\alpha)
\end{matrix}
\end{equation}
$\hat k^{\alpha}$, $\check k^\alpha$ mean $\hat k_0^{\alpha}$, $\check k_0^\alpha$
correspondingly.

\begin{lemma} For an arbitrary $\alpha\in\Gamma^*$
\begin{equation}\label{00hs14}
\begin{matrix}
L^2(\alpha)&=&\Delta \overline{\check H_0^2(\alpha^{-1}\nu)}
&\oplus& \{\hat e^{\alpha}\}
&\oplus &\hat H_0^2(\alpha)\\
&=&\Delta \overline{\check H_0^2(\alpha^{-1}\nu)}&\oplus &\{\Delta\overline{\check e^{\alpha^{-1}\nu}}\}
&\oplus& \hat H_0^2(\alpha)
\end{matrix}
\end{equation}
where
$$
\hat e^\alpha(\z)=\frac{\hat k^{\alpha}(\z)}{\|\hat k^\alpha\|},
\quad \check e^\alpha(\z)=\frac{\check k^{\alpha}(\z)}{\|\check k^\alpha\|}.
$$
Moreover, 
\begin{equation}\label{hs14}
\hat e^\alpha=\Delta\overline{\check e^{\alpha^{-1}\nu}} \ \text{on}\ \bbT 
\end{equation}
and 
\begin{equation}\label{bishs14}
\sqrt{\hat k^{\alpha}(0)
\check k^{\alpha^{-1}\nu}(0)}=\hat e^\alpha(0)\check e^{\alpha^{-1}\nu}(0)=\Delta(0).
\end{equation}
\end{lemma}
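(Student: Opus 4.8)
The plan is to decompose $L^2(\alpha)$ by combining the Inverse Cauchy Theorem (Theorem~\ref{thICT}) with its mirror image, and then to identify the one-dimensional ``defect" with the appropriate reproducing-kernel direction. First I would recall that $\hat H^2(\alpha)$ is a closed subspace of $L^2(\alpha)$, so we may write $L^2(\alpha)=\hat H_0^2(\alpha)\oplus(\hat H_0^2(\alpha))^{\perp}$; since $\hat H^2(\alpha)\ominus\hat H_0^2(\alpha)$ is one-dimensional, spanned by the unit vector $\hat e^{\alpha}$ (because evaluation at $0$ is a bounded functional with Riesz representative $\hat k^\alpha$), the orthogonal complement of $\hat H_0^2(\alpha)$ splits as $\{\hat e^{\alpha}\}\oplus\big(\hat H^2(\alpha)\big)^{\perp}$. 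By Theorem~\ref{thICT} and the definition of $\check H^q$, the annihilator of $\hat H_0^2(\alpha)$ inside $L^2(\alpha)$ equals $\{\hat e^\alpha\}\oplus\Delta\overline{\check H^2_0(\alpha^{-1}\nu)}$: indeed the annihilator is $\Delta\overline{\check H^2(\alpha^{-1}\nu)}$, which contains the constant-at-$0$ part and splits off $\check H^2_0$. This already produces the first line of \eqref{00hs14}, provided one checks that $\Delta\overline{\check H^2_0(\alpha^{-1}\nu)}$ is orthogonal to $\hat e^\alpha$ — which follows since $\hat e^\alpha\in\hat H^2(\alpha)$ pairs to zero against the annihilator of $\hat H^2_0$ restricted to functions vanishing at $0$, i.e. against $\Delta\overline{\check H^2_0}$.

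Next I would establish \eqref{hs14}, that $\hat e^\alpha=\Delta\overline{\check e^{\alpha^{-1}\nu}}$ on $\bbT$, which simultaneously gives the second line of \eqref{00hs14}. Run the same decomposition argument with the roles of ``hat" and ``check" swapped: applying Theorem~\ref{thICT} to $\check H_0^p$ in place of $\hat H_0^p$ (using \eqref{hs12}, $\check H^p(\alpha)=\mathrm{clos}_{L^p}P^\alpha(\Delta H^\infty)$, together with $(P^\alpha\Delta h)(0)=\Delta(0)h(0)$) shows that the annihilator of $\check H_0^2(\alpha^{-1}\nu)$ is $\Delta^{-1}$ times... more precisely, that $\Delta\overline{\check H^2(\alpha^{-1}\nu)}$ has orthogonal complement in $L^2(\alpha)$ equal to $\hat H^2(\alpha)$. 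Combining the two one-codimensional statements forces $\Delta\overline{\check H^2(\alpha^{-1}\nu)}\ominus\Delta\overline{\check H^2_0(\alpha^{-1}\nu)}=\{\hat e^\alpha\}$; but that same space is, by definition, the span of $\Delta\overline{\check e^{\alpha^{-1}\nu}}$. Since both $\hat e^\alpha$ and $\Delta\overline{\check e^{\alpha^{-1}\nu}}$ are unit vectors in $L^2(\alpha)$ and we may fix their phases by the normalizations $b_{z_0}(0)>0$ and $\hat k^\alpha(0),\check k^\alpha(0)>0$ (reproducing kernels are positive at the base point), they must coincide, giving \eqref{hs14}.

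Finally, \eqref{bishs14} is a computation of $\hat e^\alpha(0)$. On one hand $\hat e^\alpha(0)=\hat k^\alpha(0)/\|\hat k^\alpha\|=\sqrt{\hat k^\alpha(0)}$ since $\|\hat k^\alpha\|^2=\langle \hat k^\alpha,\hat k^\alpha\rangle=\hat k^\alpha(0)$; similarly $\check e^{\alpha^{-1}\nu}(0)=\sqrt{\check k^{\alpha^{-1}\nu}(0)}$. On the other hand, evaluating \eqref{hs14} is not literally ``at $0$'' — the identity $\hat e^\alpha=\Delta\overline{\check e^{\alpha^{-1}\nu}}$ holds on $\bbT$, not in $\bbD$ — so the correct route is to pair against the constant function $1$ after multiplying, i.e. use $\hat e^\alpha(0)=\langle \hat e^\alpha,\hat e^\alpha\rangle$ type manipulation, or more cleanly: $\hat e^\alpha(0)\check e^{\alpha^{-1}\nu}(0)=\langle \Delta\overline{\check e^{\alpha^{-1}\nu}},\hat e^\alpha\rangle\cdot(\text{correction})$... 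The clean argument is to use \eqref{hs9}: for $h\equiv 1$, $(P^\alpha\Delta)(0)=\Delta(0)$, and $P^\alpha\Delta\in\check H^2(\alpha)$ by \eqref{hs12}, so $\langle P^\alpha\Delta,\hat k^\alpha\rangle=(P^\alpha\Delta)(0)=\Delta(0)$ while also $\langle P^\alpha\Delta,\hat k^\alpha\rangle=\langle \Delta,\hat k^\alpha\rangle$ by the averaging property; expanding $\langle\Delta,\hat k^\alpha\rangle$ using \eqref{hs14} to write $\Delta=\hat e^\alpha\,\check e^{\alpha^{-1}\nu}$ on $\bbT$ yields $\Delta(0)=\hat e^\alpha(0)\check e^{\alpha^{-1}\nu}(0)$ after dividing by the norms, and the square-root form follows from the two kernel identities above. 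I expect the main obstacle to be bookkeeping the character twists and the $\Delta$-versus-$\bar\Delta$ factors consistently across the two dual decompositions, and making rigorous the passage from an identity on $\bbT$ to the evaluation at the interior point $0$ — this last point is exactly where the averaging operator $P^\alpha$ and property \eqref{hs9} must be invoked rather than naive substitution.
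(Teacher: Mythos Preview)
Your overall plan matches the paper's: use the very definition of $\check H^2(\alpha^{-1}\nu)$ (namely, that $\Delta\overline{\check H^2(\alpha^{-1}\nu)}$ is the annihilator of $\hat H_0^2(\alpha)$) to get $L^2(\alpha)=\hat H_0^2(\alpha)\oplus\Delta\overline{\check H^2(\alpha^{-1}\nu)}$, and then identify the one-dimensional piece. But two steps in your write-up do not go through.

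First, the key orthogonality $\hat e^\alpha\perp\Delta\overline{\check H_0^2(\alpha^{-1}\nu)}$ is asserted rather than proved. Your justification (``$\hat e^\alpha\in\hat H^2(\alpha)$ pairs to zero against the annihilator of $\hat H_0^2$ restricted to functions vanishing at $0$'') is circular: $\hat e^\alpha\notin\hat H_0^2(\alpha)$, so there is no reason it should annihilate any part of $(\hat H_0^2(\alpha))^\perp$ without further work. The ``swap roles'' paragraph is too vague to supply this. The paper closes this gap by the direct computation you yourself later invoke: for $h\in H^\infty$,
\[
\langle P^{\alpha^{-1}\nu}\Delta h,\ \Delta\overline{\hat k^\alpha}\rangle
=\langle h,\ \overline{\hat k^\alpha}\rangle
=\int_{\bbT} h\,\hat k^\alpha\,dm
=h(0)\,\hat k^\alpha(0)
=\frac{\hat k^\alpha(0)}{\Delta(0)}\,(\Delta h)(0),
\]
using \eqref{hs12} for density and \eqref{hs9} for the value at $0$. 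This shows in one stroke that $\Delta\overline{\hat k^\alpha}$ is a scalar multiple of the reproducing kernel $\check k^{\alpha^{-1}\nu}$, which is exactly the collinearity $\{\hat e^\alpha\}=\{\Delta\overline{\check e^{\alpha^{-1}\nu}}\}$ you need, with the constant $\hat k^\alpha(0)/\Delta(0)$ already determined. The decomposition \eqref{00hs14} then follows immediately.

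Second, your derivation of \eqref{bishs14} contains an algebraic slip: from \eqref{hs14} one gets $\hat e^\alpha\check e^{\alpha^{-1}\nu}=\Delta\,|\check e^{\alpha^{-1}\nu}|^2$ on $\bbT$, not $\Delta=\hat e^\alpha\check e^{\alpha^{-1}\nu}$; the latter would force $|\check e^{\alpha^{-1}\nu}|=1$ a.e., which is false in general. The clean route (and what the paper means by ``obviously'') is to take $L^2$-norms in the collinearity $\Delta\overline{\hat k^\alpha}=\dfrac{\hat k^\alpha(0)}{\Delta(0)}\,\check k^{\alpha^{-1}\nu}$: the left side has norm $\sqrt{\hat k^\alpha(0)}$ (since $|\Delta|=1$ on $\bbT$), the right side has norm $\dfrac{\hat k^\alpha(0)}{\Delta(0)}\sqrt{\check k^{\alpha^{-1}\nu}(0)}$, and equating gives $\Delta(0)^2=\hat k^\alpha(0)\,\check k^{\alpha^{-1}\nu}(0)$. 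Your phase discussion is then unnecessary: the explicit constant fixes everything.
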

\begin{remark}
Equality \eqref{00hs14} means that the annihilator of $\hat H^2(\alpha)$ is exactly $\Delta \overline{\check H^2_0 (\alpha^{-1}\nu)}$.
\end{remark}
\begin{proof}
Notice that $\hat k^{\alpha}$ is orthogonal to $\hat H_0^2(\alpha)$, and therefore belongs to 
$\Delta\overline{\check H^2(\alpha^{-1}\nu)}$. On the other hand, on the dense set in 
$\check H^2(\alpha^{-1}\nu)$ we have
$$
\langle P^{\alpha^{-1}\nu}\Delta h, \Delta \overline{\hat k^{\alpha}}\rangle=
\langle  h, \overline{\hat k^{\alpha}}\rangle=\int_\bbT h \hat k^\alpha dm=h(0)\hat k^\alpha(0)=
\left(\frac{\hat k^\alpha}{\Delta}\right)(0) (\Delta h)(0).
$$
Thus  $\Delta \overline{\hat k^{\alpha}}$ is collinear to the reproducing kernel in $\check H^2(\alpha^{-1}\nu)$ and so \eqref{hs14} holds. But \eqref{hs14} implies obviously \eqref{bishs14}.
Evidently,
$\hat H^2(\alpha)=\{\hat e^{\alpha}\}
\oplus \hat H_0^2(\alpha)$ and similar for the check-spaces, so \eqref{00hs14} is also proved.
\end{proof}
\begin{corollary} The reproducing kernels in origin are uniformly bounded:
\begin{equation}\label{01hs14}
\Delta^2(0)\le \check k^\alpha(0)\le \hat k^{\alpha}(0)\le 1.
\end{equation}
\end{corollary}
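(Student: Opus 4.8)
The corollary follows immediately from the lemma by specializing the reproducing kernel identities at the origin and using elementary Hilbert space geometry. Here is the plan.

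\medskip

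\noindent\textbf{Upper bound $\hat k^\alpha(0)\le 1$.} The reproducing kernel of $\hat H^2(\alpha)$ in the origin satisfies $\hat k^\alpha(0)=\langle \hat k^\alpha,\hat k^\alpha\rangle=\|\hat k^\alpha\|^2$, and also $\hat k^\alpha(0)=\langle 1,\hat k^\alpha\rangle$ — wait, one must be careful, $1$ need not lie in $\hat H^2(\alpha)$. Instead I would argue as follows. The constant function $1$ has norm $1$ in $L^2(\bbT)$; by the orthogonal decomposition \eqref{00hs14}, $\hat H^2(\alpha)$ is a closed subspace of $L^2(\alpha)\cong L^2(\bbT)$. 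Let $P$ denote the orthogonal projection onto $\hat H^2(\alpha)$. Then for any $f\in\hat H^2(\alpha)$ with $\|f\|\le 1$ we have $|f(0)|=|\langle f,\hat k^\alpha\rangle|\le \|\hat k^\alpha\|=\sqrt{\hat k^\alpha(0)}$, and the extremal $f$ is $\hat e^\alpha$, giving $\hat e^\alpha(0)=\sqrt{\hat k^\alpha(0)}$. On the other hand $|\hat e^\alpha(0)|=|\langle \hat e^\alpha,\hat k^\alpha\rangle|/\|\hat k^\alpha\|$; since $\hat e^\alpha\in \Delta\overline{\check H^2(\alpha^{-1}\nu)}\subset L^2$ has unit norm, applying Cauchy–Schwarz against the constant $1$ (which has unit $L^2$-norm) does not directly work either. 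The cleanest route: $\hat k^\alpha(0)=\hat e^\alpha(0)^2$, and by \eqref{hs14} $\hat e^\alpha(0)=\Delta(0)\overline{\check e^{\alpha^{-1}\nu}(0)}$; but $|\check e^{\alpha^{-1}\nu}(0)|\le 1/\Delta(0)$ would be needed, which is circular. I would therefore instead use the Widom extremal characterization: $\sup\{|f(0)|:f\in H^\infty(\alpha),\|f\|_\infty\le 1\}\le 1$ trivially, and since $\hat e^\alpha$ is the $\hat H^2(\alpha)$-extremal, a standard argument ($\hat H^\infty(\alpha)$ dense in $\hat H^2(\alpha)$ in the appropriate sense, or directly noting $\hat e^\alpha\overline{\hat e^\alpha}$ integrates to $1$ while $|\hat e^\alpha(0)|^2\le \int|\hat e^\alpha|^2\,dm$ by subharmonicity of $|\hat e^\alpha|^2$ if $\hat e^\alpha\in H^2$) gives $\hat k^\alpha(0)=|\hat e^\alpha(0)|^2\le 1$. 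Indeed $\hat e^\alpha\in\hat H^2(\alpha)\subset H^2$, so $|\hat e^\alpha(0)|^2\le \|\hat e^\alpha\|_2^2=1$, which is exactly $\hat k^\alpha(0)\le 1$. The same reasoning applied to $\check H^2(\alpha)$ — once we know $\check H^2(\alpha)\subset H^2$, which holds by Theorem \ref{thICT} and \eqref{hs12} — gives $\check k^\alpha(0)\le 1$ as well.

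\medskip

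\noindent\textbf{The chain $\check k^\alpha(0)\le \hat k^\alpha(0)$ and $\Delta^2(0)\le\check k^\alpha(0)$.} By \eqref{hs12}, $\check H^p(\alpha)=\mathrm{clos}_{L^p}P^\alpha(\Delta H^\infty)$, and \eqref{hs9} gives $(P^\alpha\Delta h)(0)=\Delta(0)h(0)$. Taking $h$ running over the unit ball of $H^\infty$ (so $h(0)$ ranges over the unit disc) and using $\|P^\alpha\Delta h\|_2\le\|\Delta h\|_2\le\|h\|_2$, we get functions in $\check H^2(\alpha)$ of norm $\le 1$ whose value at $0$ can be made arbitrarily close to $\Delta(0)$; hence the extremal value $\check e^\alpha(0)=\sqrt{\check k^\alpha(0)}\ge\Delta(0)$, i.e. $\check k^\alpha(0)\ge\Delta^2(0)$. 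Rewriting with $\alpha$ replaced by $\alpha^{-1}\nu$ (so that $\nu(\alpha^{-1}\nu)^{-1}=\alpha$) and combining with the identity $\sqrt{\hat k^\alpha(0)\check k^{\alpha^{-1}\nu}(0)}=\Delta(0)$ from \eqref{bishs14}, we obtain $\hat k^\alpha(0)=\Delta^2(0)/\check k^{\alpha^{-1}\nu}(0)\ge \Delta^2(0)/1=\Delta^2(0)$, and more importantly, since $\check k^{\alpha^{-1}\nu}(0)\ge\Delta^2(0)$ forces $\hat k^\alpha(0)=\Delta^2(0)/\check k^{\alpha^{-1}\nu}(0)\le 1$ (recovering the upper bound cleanly!) — and running the argument for $\check k^\alpha(0)$ against $\hat k^{\alpha^{-1}\nu}(0)\le 1$ via \eqref{bishs14} with roles swapped gives $\check k^\alpha(0)=\Delta^2(0)/\hat k^{\alpha^{-1}\nu}(0)$. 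Finally, $\check H^2(\alpha)\subseteq\hat H^2(\alpha)$ (Theorem \ref{thICT} says $\check H^2$ is a closed subspace of $\hat H^2$), and for nested reproducing kernel Hilbert spaces $\mathcal{K}_1\subseteq\mathcal{K}_2$ one always has $k_1(0)\le k_2(0)$ (the value of the diagonal kernel is the squared norm of the evaluation functional, which can only decrease on a subspace); this yields $\check k^\alpha(0)\le\hat k^\alpha(0)$.

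\medskip

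\noindent\textbf{Assembling and the main obstacle.} Putting the three pieces together: $\Delta^2(0)\le\check k^\alpha(0)\le\hat k^\alpha(0)\le 1$, which is \eqref{01hs14}. The only genuinely non-formal input is establishing the inclusions $\check H^2(\alpha)\subseteq\hat H^2(\alpha)\subseteq H^2$ (so that subharmonicity of $|f|^2$ gives $|f(0)|^2\le\|f\|_2^2$) — but the first is Theorem \ref{thICT} and the second is definitional. Thus the real work is purely bookkeeping with \eqref{hs9}, \eqref{hs12}, \eqref{bishs14}, and the monotonicity of diagonal reproducing kernels under subspace inclusion; I expect the mildly delicate point to be justifying that the near-extremizers $P^\alpha(\Delta h)/\|P^\alpha(\Delta h)\|$ actually approach the extremal ratio $\check e^\alpha(0)$, i.e. that the supremum of $|g(0)|$ over the unit ball of $\check H^2(\alpha)$ is attained and equals $\sqrt{\check k^\alpha(0)}$ — this is standard Hilbert space duality but deserves a line.
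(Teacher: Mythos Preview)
Your proof is correct and lands on the same two ingredients the paper uses, though with detours. The paper's argument is two lines: (i) $\hat k^\alpha$ is the orthogonal projection of the constant function $1\in H^2$ onto the closed subspace $\hat H^2(\alpha)\subset H^2$ (since $\langle f,1\rangle=\int_{\bbT}f\,dm=f(0)$ for every $f\in H^2$), so $\hat k^\alpha(0)=\|\hat k^\alpha\|^2\le\|1\|^2=1$; (ii) then \eqref{bishs14} gives $\check k^{\alpha^{-1}\nu}(0)=\Delta^2(0)/\hat k^\alpha(0)\ge\Delta^2(0)$, and the middle inequality comes from the inclusion $\check H^2(\alpha)\subseteq\hat H^2(\alpha)$.

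Your objection that ``$1$ need not lie in $\hat H^2(\alpha)$'' is beside the point: the projection is taken in the ambient $H^2$ (or $L^2(\bbT)$), not in $L^2(\alpha)$, and that is all one needs. The argument you finally settle on, $|\hat e^\alpha(0)|^2\le\|\hat e^\alpha\|_2^2=1$, is exactly the same fact in different clothing. Your direct lower bound via $P^\alpha(\Delta h)$ with $h=1$ is a valid alternative to invoking \eqref{bishs14}, but the paper's route is shorter since \eqref{bishs14} was just proved in the lemma. The worry at the end about ``near-extremizers'' is unfounded: the identity $\check k^\alpha(0)=\sup\{|g(0)|^2:g\in\check H^2(\alpha),\ \|g\|\le1\}$ is immediate from the reproducing property, so exhibiting a single $g=P^\alpha\Delta$ with $\|g\|\le1$ and $g(0)=\Delta(0)$ already gives $\check k^\alpha(0)\ge\Delta^2(0)$.
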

\begin{proof}
The last inequality holds since $\hat k^{\alpha}$ is the projection of $1$ on $\hat H^{2}(\alpha)$.
The first one follows from \eqref{bishs14}.

\end{proof}

\begin{corollary}\label{cor33} We have
$$
\hat H_0^2(\alpha)=b\hat H^2(\alpha\mu^{-1}), \quad 
\check H_0^2(\alpha)=b\check H^2(\alpha\mu^{-1}).
$$
Also,
$$
\fz\hat H_0^2(\alpha)\subset \hat H^2(\alpha), \quad 
\fz\check H_0^2(\alpha)\subset\check H^2(\alpha).
$$
\end{corollary}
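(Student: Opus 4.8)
The plan is to deduce everything from one elementary fact about the Blaschke product $b=b_\infty$: it is inner, it has character $\mu$, and its zeros are exactly the points of the $\Gamma$-orbit $\fz^{-1}(\infty)$ of the origin, all simple. First I would establish $\hat H_0^2(\alpha)=b\hat H^2(\alpha\mu^{-1})$. If $f\in\hat H_0^2(\alpha)$ then $|f|$ is $\Gamma$-automorphic and $f(0)=0$, hence $f$ vanishes on all of $\fz^{-1}(\infty)$; therefore $f/b$ is holomorphic in $\bbD$, lies in the Smirnov class, and satisfies $|f/b|=|f|$ a.e.\ on $\bbT$, so $f/b\in H^2$; since $(f/b)\circ\gamma=(\alpha\mu^{-1})(\gamma)(f/b)$ we get $f/b\in\hat H^2(\alpha\mu^{-1})$, and the reverse inclusion is immediate because $b\in H^\infty(\mu)$ with $b(0)=0$. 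Replacing $\alpha$ by $\nu\alpha^{-1}\mu$ in this identity I also record $\hat H_0^2(\nu\alpha^{-1}\mu)=b\hat H^2(\nu\alpha^{-1})$, which is what the $\check{}$-space computation will need.

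Next I would prove $\check H_0^2(\alpha)=b\check H^2(\alpha\mu^{-1})$, recalling that $u\in\check H^2(\beta)$ iff $u\in L^2(\beta)$ and $\Delta\bar u$ annihilates $\hat H_0^2(\beta^{-1}\nu)$, while $\check H_0^2(\beta)=\check H^2(\beta)\ominus\{\check e^\beta\}=\{u\in\check H^2(\beta):u(0)=0\}$. For the inclusion $\supseteq$: given $g\in\check H^2(\alpha\mu^{-1})$, the first part gives $bg\in\hat H_0^2(\alpha)$; for $h\in\hat H_0^2(\alpha^{-1}\nu)$ one simply regroups $\langle h,\Delta\overline{bg}\rangle=\langle bh,\Delta\bar g\rangle$, and $bh\in\hat H_0^2((\alpha\mu^{-1})^{-1}\nu)$, so the pairing is zero; hence $bg\in\check H^2(\alpha)$ and, since $b(0)=0$, $bg\in\check H_0^2(\alpha)$. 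For the inclusion $\subseteq$: given $f\in\check H_0^2(\alpha)$, write $f=bg$ with $g=f/b\in\hat H^2(\alpha\mu^{-1})$ by the first part; to see $g\in\check H^2(\alpha\mu^{-1})$ I must check that $\Delta\bar g$ annihilates $\hat H_0^2((\alpha\mu^{-1})^{-1}\nu)=b\hat H^2(\alpha^{-1}\nu)$, and writing a generic element as $bh$ with $h\in\hat H^2(\alpha^{-1}\nu)$ one computes $\langle bh,\Delta\bar g\rangle=\langle h,\Delta\bar f\rangle$ (using $bg=f$), which vanishes for every $h\in\hat H_0^2(\alpha^{-1}\nu)$ because $f\in\check H^2(\alpha)$.

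The one point that is not pure character bookkeeping is the leftover direction $h=\hat e^{\alpha^{-1}\nu}$, and this is precisely where the hypothesis $f(0)=0$ enters. Here I would invoke the boundary identity \eqref{hs14}, which applied at the character $\nu\alpha^{-1}$ reads $\hat e^{\nu\alpha^{-1}}=\Delta\overline{\check e^\alpha}$ on $\bbT$; combined with $|\Delta|=1$ on $\bbT$ it gives $\langle\hat e^{\nu\alpha^{-1}},\Delta\bar f\rangle=\langle f,\check e^\alpha\rangle=f(0)/\|\check k^\alpha\|=0$ by the reproducing property in $\check H^2(\alpha)$. Thus $\Delta\bar f$ annihilates all of $\hat H^2(\alpha^{-1}\nu)$, whence $g=f/b\in\check H^2(\alpha\mu^{-1})$ and $f\in b\check H^2(\alpha\mu^{-1})$. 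I expect this to be the only genuine obstacle: one must recognize that $\check H_0^2$ carries exactly one extra orthogonality relation (against $\check e^\alpha$) beyond $\check H^2$, and that its transport under $u\mapsto\Delta\bar u$ is encoded by \eqref{hs14}; everything else is inner-function algebra and careful tracking of characters through $b$.

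Finally, for the two inclusions involving $\fz$, I would rewrite them via the identities just proved: $\fz\hat H_0^2(\alpha)=(\fz b)\hat H^2(\alpha\mu^{-1})$ and $\fz\check H_0^2(\alpha)=(\fz b)\check H^2(\alpha\mu^{-1})$. The point is that $\fz b\in H^\infty(\mu)$: $\fz$ is automorphic and of bounded characteristic in $\bbD$ (its only poles, on $\fz^{-1}(\infty)$, satisfy the Blaschke condition and its boundary values lie in the bounded set $E$), with $b$ its inner ``denominator'', so $\fz b$ is a bounded holomorphic function, automatically of character $\mu$. Multiplication by $H^\infty(\mu)$ trivially sends $\hat H^2(\alpha\mu^{-1})$ into $\hat H^2(\alpha)$, and it sends $\check H^2(\alpha\mu^{-1})$ into $\check H^2(\alpha)$ by the same annihilation computation as above: for $\phi\in H^\infty(\mu)$, $g\in\check H^2(\alpha\mu^{-1})$ and $h\in\hat H_0^2(\alpha^{-1}\nu)$ one has $\langle h,\Delta\overline{\phi g}\rangle=\langle\phi h,\Delta\bar g\rangle$ with $\phi h\in\hat H_0^2((\alpha\mu^{-1})^{-1}\nu)$, so the pairing vanishes. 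This finishes the proof.
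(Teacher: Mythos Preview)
Your proof is correct. It differs from the paper's in that the paper dispatches the $\check{}$-identities in one line from the orthogonal decomposition \eqref{00hs14}: since multiplication by $b$ is a unitary on $L^2$ and, by the hat identity, carries $\hat H^2(\nu\alpha^{-1})$ onto $\hat H_0^2(\nu\alpha^{-1}\mu)$, it carries their orthogonal complements onto each other; but those complements are precisely $\Delta\overline{\check H_0^2(\alpha)}$ and $\Delta\overline{\check H^2(\alpha\mu^{-1})}$, which gives $\check H_0^2(\alpha)=b\check H^2(\alpha\mu^{-1})$ at once. The $\fz$-inclusion for the check spaces then follows ``similarly'': since $\fz$ is real on $\bbT$, $\Delta\overline{\fz f}=\fz\,\Delta\bar f$, and one uses the hat inclusion $\fz\hat H_0^2(\nu\alpha^{-1})\subset\hat H^2(\nu\alpha^{-1})$ together with $\Delta\overline{\check H_0^2(\alpha)}\perp\hat H^2(\nu\alpha^{-1})$.

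Your route instead unfolds the definition of $\check H^2$ and verifies the annihilation conditions directly, isolating the one nontrivial direction (against $\hat e^{\nu\alpha^{-1}}$) and killing it with \eqref{hs14} and the reproducing property --- this is exactly what the decomposition \eqref{00hs14} encodes, so the two arguments are equivalent in content. What the paper's approach buys is brevity and a structural viewpoint (check spaces as orthogonal complements, hence automatically preserved under the unitary $\times b$); what yours buys is that it never appeals to the full decomposition, only to the defining annihilation and the single identity \eqref{hs14}, and it makes the multiplier fact $\fz b\in H^\infty(\mu)$ explicit rather than implicit in the formula \eqref{hs15}.
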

\begin{proof}
It is evident for hat-spaces. Therefore it holds also for check-spaces by \eqref{00hs14}.
Moreover, for $f\in \hat H_0^2(\alpha)$
\begin{equation}\label{hs15}
\fz f=(\fz f)(0)\frac{\hat k^{\alpha}}{\hat k^{\alpha}(0)}+g,
\end{equation}
where $g\in \hat H_0^2(\alpha)$, and similarly for $f\in  \check H_0^2(\alpha)$.

\end{proof}

\subsection{Intermediate Hardy space $H^2(\alpha)$ and \\ related Jacobi matrix}\label{subsec25}

\begin{definition}\label{def41}
By $H^2(\alpha)$ we denote an arbitrary close space such that
\begin{equation}\label{hs16}
\check H^2(\alpha)\subseteq H^2(\alpha)\subseteq \hat H^2(\alpha)
\end{equation}
and
\begin{equation}\label{hs17}
\fz H_0^2(\alpha)\subseteq H^2(\alpha), \ \text{where}\  
H_0^2(\alpha):=\{f\in H^2(\alpha): f(0)=0\}.
\end{equation}
\end{definition}

Let $k^\alpha(\z)$ be the reproducing kernel of this space.
We have an evident decomposition
\begin{equation}\label{hs18}
H^2(\alpha)=\{e^\alpha\}\oplus
H_0^2(\alpha), \ e^{\alpha}(\z):=\frac{k^\alpha(\z)}{\sqrt{k^\alpha(0)}}
\end{equation}

Note that every function of $H^2_0(\alpha)$ is of the form $f=bg$, $g\in \hat H^2(\alpha\mu^{-1})$.
We define $H^2(\alpha\mu^{-1})$ by $H_0^2(\alpha)=:b H^2(\alpha\mu^{-1})$. We have to check that this definition does not conflict with Definition \ref{def41}. But, indeed,
$\check H_0^2(\alpha)\subseteq H_0^2(\alpha)\subseteq \hat H_0^2(\alpha)$ and a function $f$  from $b H_0^2(\alpha\mu^{-1})$ has double zero in the origin. Therefore $\fz f$ belongs to $H^2(\alpha)$ and $(\fz f)(0)=0$. Thus both conditions \eqref{hs16} and \eqref{hs17} are fulfilled.

Thus \eqref{hs18} can be rewritten as
\begin{equation*}\label{hs19}
H^2(\alpha)=\{e^\alpha\}\oplus
b H^2(\alpha\mu^{-1}),
\end{equation*}
and we can iterate this relation
\begin{equation*}\label{hs20}
H^2(\alpha)=\{e^\alpha\}\oplus
\{be^{\alpha\mu^{-1}}\}\oplus
b^2 H^2(\alpha\mu^{-2}),
\end{equation*}
and so on...

\begin{lemma}\label{l42}
The system of vectors 
\begin{equation}\label{hs21}
e_n(\z)=e_n^\alpha(\z):=b^n \frac{k^{\alpha\mu^{-n}}(\z)}{\sqrt{k^{\alpha\mu^{-n}}(0)}}, 
\   n\in \bbZ_+,
\end{equation}
forms an orthonormal basis in $H^2(\alpha)$.
\end{lemma}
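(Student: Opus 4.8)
The plan is to read the statement off the iterated ``telescoping'' decomposition of $H^2(\alpha)$ set up immediately before the lemma. Applying the splitting $H^2(\beta)=\{e^\beta\}\oplus bH^2(\beta\mu^{-1})$ repeatedly --- first with $\beta=\alpha$, then $\beta=\alpha\mu^{-1}$, and so on --- and using that multiplication by the inner function $b$ is an isometry of $L^2(\bbT)$, one obtains for every $n\ge 1$ the orthogonal decomposition
\begin{equation*}
H^2(\alpha)=\bigoplus_{k=0}^{n-1}\{b^k e^{\alpha\mu^{-k}}\}\oplus b^n H^2(\alpha\mu^{-n}).
\end{equation*}
Since $e^{\alpha\mu^{-k}}=k^{\alpha\mu^{-k}}/\sqrt{k^{\alpha\mu^{-k}}(0)}$, the one-dimensional summands here are precisely the spans of the vectors $e_k$ of \eqref{hs21}.

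From this, orthonormality of $\{e_n\}_{n\ge0}$ is immediate: for $m\ne n$ the vectors $e_m$ and $e_n$ sit in distinct orthogonal summands of the decomposition written with $n$ replaced by $\max(m,n)+1$, and $\|e_n\|=1$ because $\|k^\gamma\|^2=k^\gamma(0)$ by the reproducing property while $|b^n|=1$ a.e.\ on $\bbT$. Thus only completeness remains, and this is the one step that requires a genuine argument rather than bookkeeping.

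For completeness, observe that the displayed decomposition identifies the orthogonal complement of $\span\{e_0,\dots,e_{n-1}\}$ in $H^2(\alpha)$ with $b^nH^2(\alpha\mu^{-n})$; hence the orthogonal complement of $\span\{e_n:n\ge0\}$ equals $\bigcap_{n\ge0}b^nH^2(\alpha\mu^{-n})$. Let $f$ lie in this intersection. For each $n$ we may then write $f=b^ng_n$ with $g_n\in H^2(\alpha\mu^{-n})\subset H^2$. Since $b=b_\infty$ vanishes at the origin --- its zero set is the $\Gamma$-orbit of $0$ and $\fz(0)=\infty$ --- the function $f$ has a zero of order at least $n$ at $\z=0$; letting $n\to\infty$ and recalling that a nonzero $H^2$ function vanishes to finite order, we conclude $f\equiv0$. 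Hence $\{e_n\}_{n\ge0}$ is a complete orthonormal system in $H^2(\alpha)$.

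The main obstacle is exactly this last step --- excluding a nonzero $H^2$ function divisible by $b^n$ for all $n$ --- which is where the vanishing of $b$ at the origin enters (via the relation $H^2_0(\alpha)=bH^2(\alpha\mu^{-1})$). Everything else is a routine unwinding of the elementary splitting $H^2(\beta)=\{e^\beta\}\oplus bH^2(\beta\mu^{-1})$ already at hand.
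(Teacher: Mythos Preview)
Your proof is correct and follows exactly the paper's approach: the paper's proof is the one-line observation that a function orthogonal to all $e_n$ has a zero of infinite multiplicity at the origin and hence vanishes, which is precisely what you spell out via the iterated decomposition $H^2(\alpha)=\bigoplus_{k=0}^{n-1}\{e_k\}\oplus b^n H^2(\alpha\mu^{-n})$. You have simply made explicit the bookkeeping (orthonormality and the identification of the orthogonal complement with $\bigcap_n b^n H^2(\alpha\mu^{-n})$) that the paper leaves to the reader.
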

\begin{proof}
Note that a function $f\in H^2(\alpha)$, which is orthogonal to all vectors of the system has zero of infinite multiplicity in the origin. Thus $f=0$.
\end{proof}

Now we will shift $H^2(\alpha)$ in the "negative direction". Note that $e_0(0)>\check e^\alpha(0)>\Delta (0)$, see \eqref{01hs14}. Therefore $\fz e_0$ does not belong to $H^2(\alpha)$ (it has a simple pole at the origin). We define $e_{-1}(\z)$ via its orthogonal projection
\begin{equation}\label{00hs21}
\fz e_0=p_0e_{-1}+q_0e_0+p_1 e_1,\quad \|e_{-1}\|=1,\  (be_{-1})(0)>0.
\end{equation}
Here we took into account that $\fz e_0$ is orthogonal to $b^2H^2(\alpha\mu^{-2})$. Thus $e_{-1}$ is orthogonal to all $e_n$, $n\ge 0$.

We define $H^2(\alpha\mu)$ by
\begin{equation}\label{hs22}
b^{-1}H^2(\alpha\mu)=\{e_{-1}\}\oplus H^2(\alpha).
\end{equation}
Again we have to check that this definition is correct. Evidently $H^2(\alpha\mu)\subseteq \hat H^2(\alpha\mu)$. $H_0^2(\alpha\mu)=bH^2(\alpha)$, so $\fz H_0^2(\alpha\mu)\subset H^2(\alpha\mu)$ by definition of $e_{-1}$. Finally, since $\fz \check k^{\alpha}\in \fz H^{2}(\alpha)$,
by Corollary \ref{cor33}, we have the inclusion  $\check H^2(\alpha\mu)\subseteq  H^2(\alpha\mu)$.

\begin{remark}
Let us note that
\begin{equation}\label{00hs22}
be_{-1}(\z)(be_{-1})(0)=k^{\alpha\mu}(\z)
\end{equation}
Indeed, if
$$
f=c_0 (b e_{-1})+c_1{be_0}+c_2(b e_1)+\dots\in H^2(\alpha\mu)
$$
then
$$
f(0)=c_0(be_{-1})(0)=\langle f, be_{-1} (be_{-1})(0)\rangle.
$$
\end{remark}

Now we can iterate \eqref{hs22}:
\begin{equation*}\label{hs00023}
b^{-2}H^2(\alpha\mu^2)=\{e_{-2}\}\oplus 
\{e_{-1}\}\oplus H^2(\alpha),
\end{equation*}
and so on...

\begin{theorem}
The system of vectors 
\begin{equation}\label{hs23}
e_n(\z)=e_n^\alpha(\z):=b^n \frac{k^{\alpha\mu^{-n}}(\z)}{\sqrt{k^{\alpha\mu^{-n}}(0)}}, 
\   n\in \bbZ,
\end{equation}
forms an orthonormal basis in $L^2(\alpha)$.
\end{theorem}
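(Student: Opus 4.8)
The plan is to treat the two halves of the statement separately: orthonormality of $\{e_n\}_{n\in\bbZ}$ is built into the construction preceding the theorem, so the real content is completeness in $L^2(\alpha)$. First I would record that $\{e_n\}_{n\ge0}$ is already an orthonormal basis of $H^2(\alpha)$ by Lemma \ref{l42}, and that iterating \eqref{hs22} as indicated before the theorem adjoins at each step, to $b^{-(m-1)}H^2(\alpha\mu^{m-1})=\span\{e_{-m+1},\dots,e_{-1}\}\oplus H^2(\alpha)$, one further unit vector $e_{-m}$ (defined via its orthogonal projection exactly as in \eqref{00hs21}) orthogonal to that whole subspace; by induction the system is orthonormal and
\[
\overline{\span}\{e_n:n\in\bbZ\}=\overline{\bigcup_{m\ge0}b^{-m}H^2(\alpha\mu^m)}.
\]
Hence it suffices to show this closed subspace equals $L^2(\alpha)$, i.e. that $f\in L^2(\alpha)$ with $\langle f,e_n\rangle=0$ for all $n\in\bbZ$ must be $0$.

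Next I would convert the hypothesis into orthogonality of the $b$-translates of $f$. Since $\{e_n\}_{n\ge0}$ spans $H^2(\alpha)$ we get $f\perp H^2(\alpha)$; together with $f\perp e_{-1},\dots,e_{-m}$ and the iterated \eqref{hs22} this gives $f\perp b^{-m}H^2(\alpha\mu^m)$, that is $b^mf\perp H^2(\alpha\mu^m)$, for all $m\ge0$. On the other hand the decomposition $H^2(\alpha)=\{e^\alpha\}\oplus bH^2(\alpha\mu^{-1})$ iterated gives $b^\ell H^2(\alpha\mu^{-\ell})\subseteq H^2(\alpha)$, so $f\perp b^\ell H^2(\alpha\mu^{-\ell})$, that is $b^{-\ell}f\perp H^2(\alpha\mu^{-\ell})$, for all $\ell\ge0$. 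Throughout one uses that $|b|=1$ a.e. on $\bbT$, so that multiplication by $b^m$ is a unitary map of $L^2(\beta)$ onto $L^2(\beta\mu^m)$. In sum, $b^mf\perp H^2(\alpha\mu^m)$ for every $m\in\bbZ$.

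The analytic heart of the proof is to locate $b^mf$ precisely. Since $\check H^2(\beta)\subseteq H^2(\beta)$, the above yields $b^mf\perp\check H^2(\alpha\mu^m)$, and I claim that the annihilator of $\check H^2(\beta)$ in $L^2(\beta)$ is $\Delta\overline{\hat H_0^2(\beta^{-1}\nu)}$ — the mirror image of the assertion (remark after \eqref{00hs14}) that the annihilator of $\hat H^2(\beta)$ is $\Delta\overline{\check H_0^2(\beta^{-1}\nu)}$. I would obtain it by applying the conjugate-linear isometry $v\mapsto\Delta\bar v$ of $L^2(\beta^{-1}\nu)$ onto $L^2(\beta)$ to \eqref{00hs14} and simplifying by means of \eqref{hs14} and $\Delta\bar\Delta\equiv1$ on $\bbT$, which turns $L^2(\beta^{-1}\nu)=\Delta\overline{\check H_0^2(\beta)}\oplus\{\hat e^{\beta^{-1}\nu}\}\oplus\hat H_0^2(\beta^{-1}\nu)$ into $L^2(\beta)=\check H^2(\beta)\oplus\Delta\overline{\hat H_0^2(\beta^{-1}\nu)}$. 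Granting this, $b^mf=\Delta\overline{u_m}$ with $u_m\in\hat H_0^2(\alpha^{-1}\mu^{-m}\nu)$ for every $m\in\bbZ$.

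Finally I would extract a contradiction. Conjugating $b^mf=\Delta\overline{u_m}$ on $\bbT$ (where $|b|=|\Delta|=1$) shows that $w:=\Delta\bar f=b^m u_m$, so $w$ is independent of $m$. For $m\ge0$, Corollary \ref{cor33} lets us write $u_m=bv_m$ with $v_m\in\hat H^2(\alpha^{-1}\mu^{-m-1}\nu)$, whence $w=b^{m+1}v_m\in\hat H^2(\alpha^{-1}\nu)$; since $b$ has a simple zero at the origin (because $0\in\fz^{-1}(\infty)$), $w$ vanishes at the origin to order at least $m+1$. Letting $m\to\infty$ forces $w$ to vanish to infinite order at the origin, hence $w\equiv0$, and since $\Delta\ne0$ a.e. on $\bbT$ this gives $f=0$. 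I expect the third step to be the delicate one: $H^2(\alpha)$ is only squeezed between $\check H^2(\alpha)$ and $\hat H^2(\alpha)$, so only the inclusion $\check H^2\subseteq H^2$ is available, yet this already confines every $b^mf$ to the explicit space $\Delta\overline{\hat H_0^2(\cdot)}$, after which the asymmetry between the $m$-fold pole of $b^{-m}$ and the simple zero of $b$ at the origin yields infinitely many independent vanishing conditions on $w=\Delta\bar f$ and kills $f$.
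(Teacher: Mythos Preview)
Your proof is correct, and it rests on the same structural fact the paper uses---namely that $L^2(\beta)\ominus\check H^2(\beta)=\Delta\overline{\hat H_0^2(\beta^{-1}\nu)}$, the mirror of \eqref{00hs14}---combined with the inclusion $\check H^2\subseteq H^2$. The packaging, however, is different. The paper argues constructively: any $f\in L^2(\alpha)\ominus H^2(\alpha)$ is $\Delta\bar g$ with $g\in\hat H_0^2(\alpha^{-1}\nu)$; a second application of Lemma~\ref{l42} to the dual hat-space supplies a finite approximant $g_n=\sum_{j=1}^n a_j\hat e_j^{\alpha^{-1}\nu}$, and the pointwise duality \eqref{hs14} rewrites $\Delta\bar g_n$ as $\sum\bar a_j\check e_{-j}^\alpha\in b^{-n}\check H^2(\alpha\mu^n)\subseteq b^{-n}H^2(\alpha\mu^n)=\overline{\span}\{e_j:j\ge -n\}$. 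You run the same machinery contrapositively: orthogonality to every $e_n$ pushes $b^m f$ into $\Delta\overline{\hat H_0^2(\alpha^{-1}\mu^{-m}\nu)}$ for all $m\ge0$, and conjugation yields a single $H^2$ function $w=\Delta\bar f=b^m u_m$ with a zero of order $\ge m+1$ at the origin for every $m$, whence $w\equiv0$. The paper's route is a touch shorter since it reuses Lemma~\ref{l42} on the dual side in place of your mirror-decomposition step and vanishing-order argument; your route has the virtue of isolating exactly where the analytic input (the simple zero of $b$ at the origin) enters. Note also that your detour through negative $m$ (the $\ell\ge0$ case) is superfluous---only $m\ge0$ is used in the final step.
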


\begin{proof}
Due to Lemma \ref{l42} we have to show that this system approximate any vector $f$ form $L^2(\alpha)\ominus H^2(\alpha)$.  Since $\check H^2(\alpha)\subseteq H^2(\alpha)$ it is of the form
$f=\Delta \bar g$, where $g\in \hat H_0^2(\alpha^{-1}\nu)$. By Lemma \ref{l42}, this vector can be successfully approximated by a finite combination
$$
g_n=\sum_{j=1}^n  a_j \hat e^{\alpha^{-1}\nu}_j.
$$
It means that for an arbitrary $\epsilon>0$ there exists $f_n=\Delta \bar g_n$ such that $\|f-f_n\| \le \epsilon$.

Due to duality \eqref{hs14}
$$
f_n=\sum_{j=1}^n\bar a_j\Delta\overline{\hat e^{\alpha^{-1}\nu}_j}=
\sum_{j=-n}^{-1}\bar a_j  \check e^{\alpha}_j\in b^{-n} \check H^2(\alpha\mu^n).
$$
Since $\check H^2(\alpha\mu^n)\subseteq H^2(\alpha\mu^n)$, this vector is approximated by
span$_{j\ge -n}\{e_j\}$.
\end{proof}

Our construction, see \eqref{00hs21}, leads to the following theorem.
\begin{theorem}\label{th44}
The multiplication operator by $\fz$ in $L^2(\alpha)$ with respect to the basis
\eqref{hs23} is the following Jacobi matrix $J=J(H^2(\alpha))$:
\begin{equation}\label{hs24}
\fz e^{\alpha}_n=p_n(\alpha) e^{\alpha}_{n-1}+
q_n(\alpha) e^{\alpha}_{n}+
p_{n+1}(\alpha) e^{\alpha}_{n+1},
\end{equation}
where
\begin{equation}\label{hs25}
p_n(\alpha)=\cP(\alpha\mu^{-n}), \quad \cP(\alpha)=(\fz b)(0)\sqrt{\frac{k^{\alpha}(0)}{k^{\alpha\mu}(0)}}
\end{equation}
and
\begin{equation}\label{00hs25}
q_n(\alpha)=\cQ(\alpha\mu^{-n}), \ 
\cQ(\alpha)=\frac{(\fz b)(0)}{b'(0)}\left\{\frac{(k^{\alpha})'(0)}{k^{\alpha}(0)}
-\frac{(k^{\alpha\mu})'(0)}{k^{\alpha\mu}(0)}
\right\}+\frac{(\fz b)'(0)}{b'(0)}.
\end{equation}

\end{theorem}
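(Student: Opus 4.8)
The plan is to compute the matrix of multiplication by $\fz$ in the basis \eqref{hs23} entry by entry, extracting the diagonal and off-diagonal coefficients from the values at the origin of reproducing kernels and their derivatives. First I would record the structure of the basis. By construction (iterating \eqref{hs18} and \eqref{hs22}) we have, for every $n\in\bbZ$,
\begin{equation*}
e_n=b^n\,\frac{k^{\alpha\mu^{-n}}}{\sqrt{k^{\alpha\mu^{-n}}(0)}},\qquad
H^2(\alpha\mu^{-n})=\{e^{\alpha\mu^{-n}}\}\oplus bH^2(\alpha\mu^{-n-1}),
\end{equation*}
so that $\span_{j\le n}\{e_j\}=b^{-n}H^2(\alpha\mu^{n})$ is a nested exhaustion with the key property that $\fz H_0^2(\beta)\subseteq H^2(\beta)$ for every character $\beta$; applying this with $\beta=\alpha\mu^{-n}$ and recalling $e_j$ for $j>n$ lies in $b^{n+1}H^2(\alpha\mu^{-n-1})=H_0^2(\alpha\mu^{-n})$ up to the power $b^{n}$, one sees that $\fz e_n$ is orthogonal to $e_j$ for $j\ge n+2$ and (by the symmetric argument, or by self-adjointness of multiplication by the real-valued $\fz$ on $\bbT$) to $e_j$ for $j\le n-2$. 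Hence $\fz e_n=p_n e_{n-1}+q_n e_n+p_{n+1}e_{n+1}$ for suitable real $p_n,q_n$, which is precisely \eqref{hs24}; the value $p_{n+1}=\langle \fz e_n,e_{n+1}\rangle=\langle e_n,\fz e_{n+1}\rangle$ shows the off-diagonal terms are shared, as written.

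Next I would derive the covariance of the coefficients under the shift of characters. Multiplication by $b$ is an isometry $L^2(\alpha\mu^{-1})\to L^2(\alpha)$ carrying $e_n^{\alpha\mu^{-1}}$ to $e_{n+1}^{\alpha}$ (directly from \eqref{hs23}) and commuting with multiplication by $\fz$ (since $\fz$ is $\Gamma$-invariant and $b$ is a fixed function). Therefore the matrix of $\fz$ in the $\alpha\mu^{-1}$-basis is the shift by one of its matrix in the $\alpha$-basis, which gives $p_n(\alpha)=p_{n-1}(\alpha\mu^{-1})$ and $q_n(\alpha)=q_{n-1}(\alpha\mu^{-1})$, hence by iteration $p_n(\alpha)=\cP(\alpha\mu^{-n})$, $q_n(\alpha)=\cQ(\alpha\mu^{-n})$ with $\cP(\alpha)=p_0(\alpha)$, $\cQ(\alpha)=q_0(\alpha)$. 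It remains to identify $p_0$ and $q_0$ through the distinguished relation \eqref{00hs21}: $\fz e_0=p_0 e_{-1}+q_0 e_0+p_1 e_1$.

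Finally I would extract the two formulas. For $p_0=\cP(\alpha)$: pair \eqref{00hs21} with $be_{-1}$. Using \eqref{00hs22}, $(be_{-1})(be_{-1})(0)=k^{\alpha\mu}$ is the reproducing kernel of $H^2(\alpha\mu)$, and $be_0=b^{1}k^{\alpha}/\sqrt{k^\alpha(0)}$ lies in $b^{-1}H^2(\alpha\mu)$; evaluating $\langle \fz e_0, be_{-1}\rangle$ via $\langle \fz(be_0), be_{-1}(be_{-1})(0)\rangle/(be_{-1})(0)=(\fz be_0)(0)/(be_{-1})(0)$ — here $(\fz b e_0)(0)=(\fz b)(0)e_0(0)=(\fz b)(0)\sqrt{k^\alpha(0)}$ since $(be_0)(0)$ vanishes and $\fz$ has a simple pole — and using $(be_{-1})(0)=\sqrt{k^{\alpha\mu}(0)}$ from \eqref{00hs22}, one gets $p_0=(\fz b)(0)\sqrt{k^{\alpha}(0)/k^{\alpha\mu}(0)}$, which is \eqref{hs25}. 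For $q_0=\cQ(\alpha)$: pair \eqref{00hs21} with $e_0$ to get $q_0=\langle \fz e_0,e_0\rangle$; write $e_0=k^\alpha/\sqrt{k^\alpha(0)}$ and compute $\langle \fz e_0,e_0\rangle=\langle \fz k^\alpha,k^\alpha\rangle/k^\alpha(0)$. The function $\fz k^\alpha$ has a simple pole at $0$ (since $k^\alpha(0)>0$) with the same principal part as $(\fz b)(0)/(b'(0))\cdot k^\alpha$ restricted to $b^{-1}H^2$; subtracting the $e_{-1}$- and $e_1$-components and reading off the value and first Taylor coefficient at the origin — using that the $b^{-1}$-pole contribution accounts for $p_0 e_{-1}$ and $p_1 e_1$, whose origin-expansions bring in $k^{\alpha\mu}$, $b'(0)$, and $(\fz b)'(0)$ — produces the bracket $\frac{(\fz b)(0)}{b'(0)}\{(k^\alpha)'(0)/k^\alpha(0)-(k^{\alpha\mu})'(0)/k^{\alpha\mu}(0)\}+(\fz b)'(0)/b'(0)$ of \eqref{00hs25}.

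\textbf{Main obstacle.} The delicate point is the bookkeeping of Laurent/Taylor coefficients at $\z=0$ in the formula for $q_0$: one must carefully separate the simple-pole part of $\fz e_0$ (which feeds $e_{-1}$) from the regular part (which feeds $q_0 e_0+p_1 e_1$), and correctly express all the resulting constants through $k^\alpha(0),(k^\alpha)'(0),k^{\alpha\mu}(0),(k^{\alpha\mu})'(0),b'(0),(\fz b)(0),(\fz b)'(0)$, keeping track of the normalizations $(be_{-1})(0)>0$ and $b_{z_0}(0)>0$. The formula for $p_0$ and the covariance under $\mu$ are comparatively routine once the tridiagonal structure is in place.
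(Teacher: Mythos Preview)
Your proposal is correct and follows exactly the approach the paper indicates: the paper's proof consists of the single sentence ``The exact formulas for coefficients are obtained by expansion of the involved functions in the origin,'' and you have carried out precisely that expansion. The tridiagonal structure and the covariance $p_n(\alpha)=\cP(\alpha\mu^{-n})$, $q_n(\alpha)=\cQ(\alpha\mu^{-n})$ are implicit in the construction leading up to the theorem (in particular \eqref{00hs21} and the iteration of \eqref{hs22}), so your explicit verification of these is a welcome elaboration; the cleanest way to extract $q_0$ is to multiply \eqref{00hs21} by $b$ and differentiate once at the origin (using $(be_1)'(0)=0$, $(be_0)'(0)=b'(0)\sqrt{k^\alpha(0)}$, and $(be_{-1})'(0)=(k^{\alpha\mu})'(0)/\sqrt{k^{\alpha\mu}(0)}$ from \eqref{00hs22}), which produces \eqref{00hs25} after substituting the formula for $p_0$.
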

\begin{proof}
The exact formulas for coefficients are obtained  by 
expansion of the involved functions in the origin.
\end{proof}

\subsection{The dual basis and reflectionless property}
By Theorem \ref{th44} the inclusion \eqref{hs16} implies
$$
\check H^2_0(\alpha^{-1}\nu)\subseteq \Delta \overline{L^2(\alpha)\ominus H^2(\alpha)}
\subseteq \hat H^2_0(\alpha^{-1}\nu).
$$
So, if we define the dual Hardy  space
$$
b\tilde H^2(\mu^{-1}\alpha^{-1}\nu):=\Delta \overline{L^2(\alpha)\ominus H^2(\alpha)}
$$
then in the dual basis
\begin{equation}
\label{hsd0}
b(\z)\tilde e_n(\z)=\Delta(\z) \overline {e_{-n-1}^\alpha(\z)},\quad \z\in\bbT,
\end{equation}
we obtain the Jacobi matrix $\tau J=J(\tilde H^2(\mu^{-1}\alpha^{-1}\nu))$, which entries are related to the initial matrix $J=J(H^2(\alpha))$ by
$$
\tau p_n=p_{-n},\quad \tau q_n=q_{-n-1}.
$$

\begin{theorem}
The Jacobi matrix $J=J(H^2(\alpha))$ is reflectionless. Moreover its resolvent functions $r_\pm(z)$, see \eqref{kli3}, are of the form
\begin{equation}\label{hsd1}
r_+(\fz(\z))=-\frac{e_0(\z)}{p_0 e_{-1}(\z)}, \quad r_-(\fz(\z))=-\frac{\tilde e_0(\z)}{p_0 \tilde e_{-1}(\z)}.
\end{equation}

\end{theorem}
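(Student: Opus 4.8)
The plan is to establish the two facts separately: first that $J=J(H^2(\alpha))$ is reflectionless on $E$, and second the explicit formulas \eqref{hsd1} for the resolvent functions $r_\pm$. I would begin with the half-line resolvent functions. Recall that $r_+(z)$ is the $(0,0)$ entry of $(J_+-z)^{-1}$ where $J_+=P_{\l^2_+}J|\l^2_+$; in the model $L^2(\alpha)$ with the basis \eqref{hs23}, the subspace $\l^2_+$ corresponds to $H^2(\alpha)=\span_{n\ge 0}\{e_n\}$ and $\l^2_-$ to $L^2(\alpha)\ominus H^2(\alpha)=b^{-1}\tilde H^2(\mu^{-1}\alpha^{-1}\nu)$ via the dual basis \eqref{hsd0}. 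So $J_+$ is multiplication by $\fz$ compressed to $H^2(\alpha)$. The standard computation of a resolvent matrix element in terms of the reproducing kernel then applies: for $z=\fz(\z)$, the vector $(J_+-z)^{-1}e_0$ should be (up to normalization) the kernel $k^\alpha(\cdot,\z)/(\fz(\cdot)-\fz(\z))$ appropriately corrected so as to lie in $H^2(\alpha)$, and pairing against $e_0$ and carefully tracking the normalizations gives $r_+(\fz(\z))=-e_0(\z)/(p_0 e_{-1}(\z))$. Here the key identity \eqref{00hs21}, $\fz e_0 = p_0 e_{-1}+q_0 e_0+p_1 e_1$, is exactly what produces the factor $p_0 e_{-1}(\z)$ in the denominator: the function $\fz e_0$ viewed inside $H^2(\alpha)$ differs from its genuine $L^2(\alpha)$ value by the component $p_0 e_{-1}$ lying outside, and this ``defect'' governs $(J_+ -\fz(\z))^{-1}$.

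Second, I would get $r_-$ by the symmetry already built in the previous subsection. By \eqref{hsd0} the dual basis $\tilde e_n$ realizes the matrix $\tau J = J(\tilde H^2(\mu^{-1}\alpha^{-1}\nu))$ with $\tau p_n = p_{-n}$, $\tau q_n = q_{-n-1}$; geometrically, $\tau J$ is $J$ read from the other end, so $J_-$ for $J$ corresponds to $(\tau J)_+$ for $\tau J$. Applying the formula for $r_+$ just established to $\tau J$ in the space $\tilde H^2(\mu^{-1}\alpha^{-1}\nu)$ with its basis $\tilde e_n$, and translating back through $\tau p_0 = p_0$, yields $r_-(\fz(\z)) = -\tilde e_0(\z)/(p_0\tilde e_{-1}(\z))$. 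This step is essentially bookkeeping once the $r_+$ formula is in hand.

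Third, for the reflectionless property I would combine the two formulas with the duality identity \eqref{hsd0} on the circle, $b\tilde e_n = \Delta \overline{e_{-n-1}^\alpha}$ on $\bbT$. From this, $\tilde e_0 = \Delta \overline{e_{-1}}/b$ and $\tilde e_{-1} = \Delta\overline{e_0}/b$ on $\bbT$, hence on $\bbT$
\begin{equation*}
r_-(\fz(\z)) = -\frac{\tilde e_0(\z)}{p_0\tilde e_{-1}(\z)} = -\frac{\overline{e_{-1}(\z)}}{p_0\overline{e_0(\z)}} = \overline{\left(-\frac{e_{-1}(\z)}{p_0 e_0(\z)}\right)} = \overline{\frac{1}{p_0^2}\cdot\frac{p_0 e_0(\z)}{ (-e_0(\z)) }\cdot\frac{-e_0(\z)}{p_0 e_{-1}(\z)} }.
\end{equation*}
More cleanly: from the two displayed formulas, $p_0^2 r_+(\fz(\z)) r_-(\fz(\z)) = \dfrac{e_0(\z)\tilde e_0(\z)}{e_{-1}(\z)\tilde e_{-1}(\z)}$, and substituting the boundary relations gives $p_0^2 r_+ r_- = \dfrac{e_0 \overline{e_{-1}}}{e_{-1}\overline{e_0}}$, a unimodular quantity, so that $\overline{p_0^2 r_-(x+i0)} = 1/r_+(x+i0)$ for a.e. $x\in E$, which is precisely \eqref{kli4}. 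Since the boundary values of $\fz$ on $\bbT$ parametrize $E$ (with $d\omega$ the pushforward of Lebesgue measure, mutually a.c. with $dx$), this holds for a.e. $x\in E$ in the Lebesgue sense. One still needs that the spectrum of $J$ equals $E$, which follows from the construction: $\fz$ maps $\bbD$ onto $\bar\bbC\setminus E$, multiplication by $\fz$ in $L^2(\alpha)$ has spectrum the essential range of the boundary values of $\fz$, which is $E$.

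The main obstacle I anticipate is the first step — pinning down exactly how $(J_+-\fz(\z))^{-1}e_0$ is represented in $H^2(\alpha)$ and extracting the clean quotient $-e_0/(p_0 e_{-1})$ with the correct sign and normalization. This requires using \eqref{00hs21} together with the reproducing property \eqref{hs13} and the fact, from Corollary \ref{cor33} and \eqref{hs15}, that multiplying by $\fz$ and projecting back into $H^2(\alpha)$ is controlled by the reproducing kernel at the origin; the bookkeeping of the character twists $\alpha\mu^{\pm n}$ and of the normalizations $\sqrt{k^{\alpha\mu^{-n}}(0)}$ must be done with care. Once that identity is secured, the rest — the $r_-$ formula via $\tau J$ and the reflectionless relation via \eqref{hsd0} — is a short and essentially formal computation.
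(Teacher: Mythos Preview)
Your plan is sound and would go through, but the paper takes a shorter path to the first formula. Rather than computing $(J_+-z)^{-1}e_0$ explicitly via reproducing kernels, the paper simply observes that $r_+$ is given by the standard continued fraction
\[
r_+(z)=-\cfrac{1}{z-q_0-\cfrac{p_1^2}{z-q_1-\dots}},
\]
and that the three-term recurrence \eqref{hs24} immediately yields the \emph{same} continued fraction for the ratio $-e_0(\z)/(p_0 e_{-1}(\z))$: dividing $\fz e_0=p_0 e_{-1}+q_0 e_0+p_1 e_1$ by $e_0$ gives
\[
-\frac{p_0 e_{-1}}{e_0}=\fz-q_0-p_1^2\Bigl(-\frac{e_1}{p_1 e_0}\Bigr),
\]
and iterating produces the identical expansion. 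Since the continued fraction for a bounded Jacobi matrix converges, the two functions coincide. This bypasses entirely the ``main obstacle'' you flag (tracking normalizations, character twists, behavior at the origin where $\fz$ has a pole). Your resolvent-kernel computation is correct in principle---one shows $(\fz-z)f-e_0\in\{e_{-1}\}$, forces the numerator to vanish at $\z$, and then evaluates at the origin using $p_0=(\fz b)(0)\sqrt{k^\alpha(0)/k^{\alpha\mu}(0)}$ and $(be_{-1})(0)=\sqrt{k^{\alpha\mu}(0)}$---but it is noticeably more laborious.

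For the remaining steps (the $r_-$ formula via $\tau J$, the reflectionless identity via the boundary relation \eqref{hsd0}, and the identification of the spectrum with $E$ via the multiplication-operator model) your approach and the paper's are the same.
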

\begin{proof}
First of all $J$ is generated by the multiplication operator \eqref{hs24} and therefore its spectrum is $E$.

It is well known that
$$
r_+(z)=-\cfrac{1}{z-q_0-\cfrac{p_1^2}{z-q_1-\dots}}
$$
The same expansion we obtain from the recurrence relation \eqref{hs24} for the ratio
$-\frac{e_0(\z)}{p_0 e_{-1}(\z)}$. We apply this argument to the dual matrix $\tau J$ to get the second formula in \eqref{hsd1}. Now we use \eqref{hsd0} to prove \eqref{kli4}.
\end{proof}

\section{Parametrization of $J(E)$ by collections of \\ spectral data: divisors $D(E)$}

In this section we do not assume that $\Omega:=\Om$ is of Widom type, that is, here $E$ is an arbitrary closed set without isolating points, and $|E|>0$.

We start with the formula, which connects $r_\pm$ with the resolvent of the whole matrix $J$. The vectors $e_{-1}$ and $e_0$ form the cyclic subspace for $J$. The so called matrix resolvent function is defined by
\begin{equation}\label{kli5}
\begin{bmatrix}
R_{-1,-1}& R_{-1,0}\\
R_{0,-1}&R_{0,0}
\end{bmatrix}(z)=\cE^*(J-z)^{-1}\cE,\quad \cE\begin{bmatrix} c_{-1}\\ c_0
\end{bmatrix}= c_{-1}e_{-1}+c_{0}e_{0},
\end{equation}
for $\cE:\bbC^2\to \l^2$. Then
\begin{equation}\label{kli6}
\begin{bmatrix}
R_{-1,-1}& R_{-1,0}\\
R_{0,-1}&R_{0,0}
\end{bmatrix}(z)=\begin{bmatrix} r^{-1}_{-}(z)& p_0\\ p_0&r_+^{-1}(z)
\end{bmatrix}^{-1}.
\end{equation}
In particular,
\begin{equation}\label{kli7}
-\frac{1}{R_{0,0}(z)}=-\frac 1 {r_+(z)}+p_0^2 r_-(z).
\end{equation}
Notice that $R_{0,0}(z)$ is of Nevanlinna class, moreover, it is analytic in $\bar \bbC\setminus E$ and real in
$\bbR\setminus E$. By \eqref{kli4} it assumes pure imaginary values (a.e.) in $E$. Due to well known formula we can restore $R_{0,0}(z)$ by its argument on the real axis
\begin{equation}\label{kli8}
R_{0,0}(z)=\frac{-1} {\sqrt{(z-a_0)(z-b_0)}}\prod_{j\ge 1}\frac{z-x_j} {\sqrt{(z-a_j)(z-b_j)}},
\end{equation}
where $x_j\in[a_j,b_j]$ is the unique point in which the argument of $R_{0,0}(x)$ switches from 
$\pi$ to $0$.

An element of $D(E)$ is a sequence $D=\{(x_j,\epsilon_j)\}_{j\ge 1}$, where $x_j\in[a_j,b_j]$ and $\epsilon_j=\pm 1$. By definition we identify the points $(a_j,1)=(a_j,-1)$ and 
$(b_j,1)=(b_j,-1)$.

\begin{definition}\label{def31} The map from $J(E)$ to $D(E)$ is defined as follows:  $x_j$'s are defined by \eqref{kli8}. We have to define $\epsilon_j$ only for $x_j\in (a_j,b_j)$: $\epsilon_j=1$ if $x_j$ is a pole of $1/r_+(z)$ and $\epsilon_j=-1$ if $x_j$ is a pole of $p_0^2 r_-$.
\end{definition}

 Notice that by \eqref{kli7} at least one of these functions must have a pole at $x_j$. On the other hand, still by \eqref{kli6},
$$
R_{-1,-1}(z)=r_+^{-1}(z) r_-(z) R_{0,0}(z)
$$
and analyticity of $R_{-1,-1}$ at $x_j$ implies that only one of them may have a pole at this point.

\begin{lemma}
Assume that $J(E)$ consists of matrices with absolutely continuous spectrum, then the map $J(E)\to D(E)$ is one to one.
\end{lemma}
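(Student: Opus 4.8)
The plan is to show that the divisor data $D = \{(x_j,\epsilon_j)\}_{j\ge 1}$ together with the absolute continuity hypothesis determine the pair $(r_+, r_-)$, and hence the whole matrix $J$, uniquely. The starting point is formula \eqref{kli8}, which already shows that $R_{0,0}(z)$ is completely determined by the points $x_j$ alone (the branch of each square root being fixed by the requirement that $R_{0,0}$ be of Nevanlinna class, analytic on $\bar\bbC\setminus E$ and real on $\bbR\setminus E$). So the issue is to recover $r_+$ and $r_-$ separately from $R_{0,0}$ and the signs $\epsilon_j$. First I would use \eqref{kli7}, which expresses $-1/R_{0,0} = -1/r_+ + p_0^2 r_-$; the two summands are Nevanlinna functions, $-1/r_+$ analytic off $E$ with its poles exactly at the $x_j$ with $\epsilon_j = +1$, and $p_0^2 r_-$ analytic off $E$ with poles exactly at the $x_j$ with $\epsilon_j = -1$ (here I use the remark after Definition \ref{def31}: at each $x_j\in(a_j,b_j)$ precisely one of the two functions has a pole, because $R_{-1,-1} = r_+^{-1} r_- R_{0,0}$ is analytic there). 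Thus the decomposition of $-1/R_{0,0}$ into ``$E$-part with $\epsilon_j=+1$ poles'' plus ``$E$-part with $\epsilon_j=-1$ poles'' is what we must pin down.

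The key input is the reflectionless condition \eqref{kli4}, i.e.\ $1/r_+(x+i0) = \overline{p_0^2 r_-(x+i0)}$ for a.e.\ $x\in E$. Combined with absolute continuity of the spectrum this gives, for a.e.\ $x\in E$,
\[
\Im\!\left(-\frac1{r_+(x+i0)}\right) = \Im\bigl(p_0^2 r_-(x+i0)\bigr),
\]
so the two Nevanlinna summands in \eqref{kli7} have the \emph{same} imaginary part a.e.\ on $E$. Since a Nevanlinna function is determined up to an additive real constant by its imaginary part on $\bbR$ (its Herglotz representation), and since on $\bbR\setminus E$ the function $-1/r_+$ has a prescribed sign of imaginary part $0$ with poles only at the prescribed $x_j$'s, I would argue: write $-1/r_+ = \tfrac12(-1/R_{0,0}) + g$ where $g$ is the odd part, a function real a.e.\ on $E$, real on $\bbR\setminus E$ away from the $x_j$'s, with simple poles at the $x_j$ with $\epsilon_j=+1$ of positive residue and with a definite sign behaviour forcing a product formula of the same shape as \eqref{kli8}. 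The residues and the multiplicative normalization of $g$ are then fixed by $g(\infty) = 0$ together with the known value $p_0^2 = -\lim_{z\to\infty} z\,(p_0^2 r_-(z))\cdot(\dots)$, i.e.\ by the asymptotics at $\infty$ which are already encoded in $R_{0,0}$ (whose expansion at $\infty$ starts $-1/z + \dots$). This recovers $-1/r_+$, hence $r_+$, hence $p_0^2 r_- = -1/R_{0,0} + 1/r_+$, hence $p_0^2$ and $r_-$; and a Jacobi matrix in $J(E)$ is determined by $r_+$ and $r_-$ (they give the whole cyclic resolvent \eqref{kli5}–\eqref{kli6}, and $e_{-1}, e_0$ form a cyclic subspace), so $J$ is recovered uniquely.

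The main obstacle, and the place to be careful, is exactly the use of absolute continuity: without it, \eqref{kli4} only constrains the boundary values on $E$ up to the singular parts of $r_\pm$ supported on $E$, and the reconstruction of a Nevanlinna function from its boundary imaginary part would lose that singular mass, so two different matrices could share the same divisor. With a.c.\ spectrum, $r_\pm$ have no singular part on $E$, so the Herglotz measure of each summand in \eqref{kli7} splits as (a.c.\ part on $E$, common to both and read off from $\Im(-1/R_{0,0})$ on $E$) plus (pure point part off $E$, i.e.\ at the $x_j$'s, assigned to one summand or the other by $\epsilon_j$) plus (no singular continuous part); this clean splitting is what makes the divisor $\mapsto (r_+,r_-)$ map well-defined. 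The remaining work is bookkeeping: check that the residue at each $x_j$ is strictly positive (so that $x_j$ is genuinely a pole, not a removable point), that the assembled $r_+$ indeed has positive imaginary part in the upper half-plane (this follows since its representing measure is positive by construction), and that the endpoints $x_j\in\{a_j,b_j\}$ are handled by the identification $(a_j,1)=(a_j,-1)$ already built into the definition of $D(E)$ — in that degenerate case there is simply no pole to assign and the product formula \eqref{kli8} is continuous across it, so no ambiguity arises.
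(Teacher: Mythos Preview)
Your approach is essentially that of the paper: recover $R_{0,0}$ from \eqref{kli8}, then use \eqref{kli7} together with the reflectionless identity \eqref{kli4} (which forces the a.c.\ densities of $-1/r_+$ and $p_0^2 r_-$ on $E$ to coincide) and the signs $\epsilon_j$ to split the Herglotz data of $-1/R_{0,0}$ between the two summands. The paper's write-up is more terse and invokes \cite[Theorem~1.3]{PR09} for the one point you assert without proof, namely that the representing measures of $-1/r_+$ and $p_0^2 r_-$ restricted to $E$ are absolutely continuous; this does not follow for free from the absolute continuity of the spectrum of $J$, since passing from $R_{0,0}$ to $-1/R_{0,0}$ need not preserve a.c.\ of the associated measure, and it is here that the Poltoratski--Remling input is used. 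One small slip in your outline: the ``odd part'' $g=-1/r_+-\tfrac12(-1/R_{0,0})=\tfrac12(-1/r_+-p_0^2 r_-)$ in fact has simple poles at \emph{all} the $x_j$, with residue $\tfrac{\epsilon_j}{2}\operatorname{Res}_{x_j}(-1/R_{0,0})$, not only at those with $\epsilon_j=+1$; this does not affect the conclusion, since $g$ is still determined by the divisor, but the cleaner route is the one the paper takes implicitly---reconstruct the Herglotz measure of $-1/r_+$ directly as ``half the a.c.\ part of $-1/R_{0,0}$ on $E$'' plus ``the full point mass at each $x_j$ with $\epsilon_j=+1$'', and fix the remaining additive constant from the known asymptotics of $-1/R_{0,0}$ at infinity.
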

\begin{proof}
This is an easy consequence of \cite[Theorem 1.3]{PR09}.
In particular, it says, that the measure related to the Nevanlinna function
$-1/R_{0,0}$ is absolutely continuous on $E$. Therefore both Nevanlinna functions
$-1/r_{+}$ and $p_0^2r_-$ have absolutely continuous measure on $E$, see \eqref{kli7}. By
\eqref{kli4} they are equal. Thus both functions are uniquely determine by the function
$-1/R_{0,0}$ and the collection $\{\e_j\}_{j\ge 1}$.
\end{proof}

We define on  $D(E)$ the topology of the direct product of $I_j=\{(x_j,\epsilon_j): x_j\in [a_j,b_j], \epsilon_j=\pm1\}$, and assume that each $I_j$ is homeomorphic to the circle $\bbT$.
Hence, $J(E)$ and $D(E)$ are homeomorphic.

\section{Generalized Abel map}

In this section we will use actively certain analytic fact: 
\begin{theorem}[Sodin, Yuditskii \cite{SY}]
\label{thSY}
 Let $\Omega$ be a Widom domain. Let $F$ be a Nevanlinna class  function in the upper half-plane and meromorphic in $\Omega$, and let its poles satisfy the Blaschke condition in $\Omega$.
 Then $F\circ \fz$ is of bounded characteristic in $\bbD$, whose inner component is only ratio of two Blaschke products.
\end{theorem}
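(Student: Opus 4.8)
The plan is to pass from the upper half-plane to the disc via $\fz$ and show that $f := F\circ\fz$ admits a Nevanlinna (bounded characteristic) representation whose inner factor contains no singular-inner component and no Blaschke product besides those forced by the zeros and poles of $F$. First I would record that $F$ is of Nevanlinna class in the upper half-plane, hence has a canonical Herglotz-type representation; in particular $\Im F \ge 0$ there (or $F$ is a ratio of two such, up to a sign change we may assume $\Im F\ge 0$). Composing with $\fz$, which maps $\bbD$ onto $\Om$, the function $f$ is meromorphic in $\bbD$ and takes values in the closed upper half-plane away from its poles. The zeros and poles of $f$ in $\bbD$ are exactly the $\fz$-preimages of those of $F$ in $\Om$; by hypothesis the poles of $F$ satisfy the Blaschke condition in $\Omega$, which — because the Blaschke condition $\sum G(\cdot, w_k) < \infty$ in $\Omega$ is precisely the condition that the $\Gamma$-orbit of $\fz^{-1}(w_k)$ be a Blaschke sequence in $\bbD$ — means the poles of $f$ form a Blaschke sequence in $\bbD$. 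Dividing $f$ by the Blaschke product $B_p$ over its poles produces a function $g := f/B_p$ that is holomorphic in $\bbD$ and of bounded characteristic (a ratio of a bounded-type function by a Blaschke product is bounded type), so it has its own inner-outer-singular factorization $g = B_z \cdot S_\mu^{-1}\cdot\Theta\cdot (\text{outer})$ type decomposition; the real content is to show the singular-inner measure $\mu$ is absent.

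The key step is therefore to rule out a singular-inner factor. Here I would exploit the Herglotz representation of $F$ (or of its numerator and denominator) on the half-plane: a Nevanlinna function in the upper half-plane has the form $F(\lambda) = a\lambda + b + \int \left(\frac{1}{t-\lambda} - \frac{t}{1+t^2}\right) d\sigma(t)$ with $\sigma \ge 0$, and it has no singular-inner factor in the half-plane sense — the boundary behaviour is controlled by $\sigma$ and $\log|F|$ is integrable against the Poisson kernel. The subtlety is that the change of variables $\fz$ could in principle manufacture a singular inner factor concentrated on the part of $\bbT$ that $\fz$ collapses; but $\fz$ maps $\bbT$ onto $E$ with Lebesgue measure going to harmonic measure $d\omega$ (equivalently, $E$ carries no $\fz$-image of positive-capacity boundary that would support such a measure, since $\Om$ is a Widom — hence Parreau–Widom — domain and the Green function governs everything). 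Concretely I would argue that for $f = F\circ\fz$ the function $\log^+|f|$ and $\log^+|1/f|$ have integrable nontangential boundary values against $dm$ on $\bbT$, using that on $E$ we control $\log|F|$ by the density-of-states/harmonic-measure identity, and that away from $E$ the function $F$ is analytic and real. This forces the singular measure in the canonical factorization of $g$ to vanish, so $g$ equals a Blaschke product $B_z$ (over the zeros of $F$, pulled back) times an outer function.

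Putting the pieces together: $f = (B_z/B_p)\cdot(\text{outer})$, i.e. the inner component of $f = F\circ\fz$ is exactly the ratio $B_z/B_p$ of two Blaschke products, as claimed. The main obstacle I expect is precisely the second step — verifying that no singular-inner mass is created on $\bbT$ by the covering map. This is where the Widom (Parreau–Widom) hypothesis is essential: it guarantees a measurable fundamental domain for $\Gamma$ on $\bbT$ and that harmonic measure on $E$ is mutually absolutely continuous with $dx$, so that the half-plane Herglotz structure of $F$ (which has no singular inner part) transfers to the disc. Everything else — the Blaschke-sequence bookkeeping for poles, and the general fact that "bounded type divided by Blaschke is bounded type" — is routine function theory on $\bbD$. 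In the write-up I would isolate the singular-factor-free claim as the crux and handle the rest in a sentence or two.
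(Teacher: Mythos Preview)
The paper does not prove Theorem~\ref{thSY}; it is quoted from \cite{SY} and used as a black box, so there is no in-paper argument to compare against. Your sketch, however, has a concrete error and a genuine gap. The error is the claim that $f=F\circ\fz$ ``takes values in the closed upper half-plane away from its poles'': the covering $\fz$ maps $\bbD$ onto all of $\Omega=\Om$, including the lower half-plane, and on the preimage of $\bbC_-$ the reflection $F(\bar z)=\overline{F(z)}$ forces $\Im f\le 0$ there. So $f$ is not Herglotz on $\bbD$, and you cannot import half-plane Herglotz machinery to $f$ directly. This also leaves the bounded-characteristic claim unproved --- you assert $g=f/B_p$ is of bounded type as ``a ratio of a bounded-type function by a Blaschke product,'' but that $f$ itself is of bounded type in $\bbD$ is precisely what remains to be shown.

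The step you correctly flag as the crux --- excluding a singular inner factor --- is where the real gap lies. The mechanism you propose (Widom $\Rightarrow$ $d\omega\ll dx$ on $E$ $\Rightarrow$ no singular mass created on $\bbT$) does not work: absolute continuity of harmonic measure is not a consequence of the Widom condition (in this paper it comes from the comb parametrization, not from Widom), and even where it holds it says nothing about singular inner factors of $F\circ\fz$. What the Sodin--Yuditskii argument actually exploits is that $\arg F\in(0,\pi)$ throughout $\bbC_+$, so $\log F$ has \emph{bounded imaginary part} there and admits the representation $\log F(z)=c+\int\big(\tfrac{1}{t-z}-\tfrac{t}{1+t^2}\big)\psi(t)\,dt$ with $\psi(t)=\tfrac1\pi\arg F(t+i0)\in[0,1]$. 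This exhibits $F$ as outer in $\bbC_+$, and by reflection in $\bbC_-$; consequently all zeros and poles of $F$ in $\Omega$ lie in the real gaps, where $F$ is monotone and they interlace. The Widom condition $\sum_j G(c_j)<\infty$ enters exactly as the estimate guaranteeing that the Blaschke product over the zeros converges once the one over the poles does --- not via any statement about harmonic measure.
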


\subsection{Formula for reproducing kernel $k^{\alpha}$ and the divisor. \\ Wronski's formula}
\label{wronski}
Based on the fact that $J=J(H^2(\alpha))$ is reflectionless we prove two lemmas describing properties of reproducing kernels. But first we formulate a certain corollary of Theorem \ref{thSY}.
\begin{corollary}
For a collection $\{x_j\}_{j\ge 1}$, $x_j\in [a_j,b_j]$, let 
\begin{equation}\label{abm1}
W(z)=\prod_{j\ge 1}\frac{z-x_j}{z-c_j},
\end{equation}
where $c_j$ are the critical points. Then the inner part of $W\circ\fz$ is the ratio $\prod_{j\ge}b_{x_j}/\Delta$, and $\Delta (W\circ\fz)/\prod_{j\ge}b_{x_j}$ is its outer part.
\end{corollary}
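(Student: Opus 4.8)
The plan is to apply Theorem \ref{thSY} to the globally single-valued rational function $W$ and then to read off the two Blaschke products appearing in the inner component from the zeros and poles of $W\circ\fz$ in $\bbD$.

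First I would check that $W$ meets the hypotheses of Theorem \ref{thSY}. As a function on $\bar\bbC$, $W$ is meromorphic with its only poles the critical points $c_j\in(a_j,b_j)\subset\Omega$; and since $G(c_j,\infty)=h_j$ with $\sum_j h_j<\infty$ (this is precisely the Widom condition \eqref{hs1}), these poles satisfy the Blaschke condition in $\Omega$. That $W$ is of Nevanlinna class in the upper half-plane follows by writing $W$ as the ratio of the two functions of the form \eqref{kli8} associated to the collections $\{x_j\}$ and $\{c_j\}$ respectively: each of these is the function $R_{0,0}$ of an element of $J(E)$ (the one with the corresponding divisor, which exists by the parametrization of $J(E)$ by $D(E)$) and hence is of Nevanlinna class, so $W$, being a ratio of two Nevanlinna functions, is of bounded characteristic in $\bbC_+$. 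Theorem \ref{thSY} then gives that $W\circ\fz$ is of bounded characteristic in $\bbD$ with inner component equal to a ratio $B_1/B_2$ of two Blaschke products.

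It remains to identify $B_1$ and $B_2$. Since $W$ is single-valued on $\Omega$, the function $W\circ\fz$ is $\Gamma$-automorphic; its poles in $\bbD$ are exactly the orbits $\fz^{-1}(c_j)$ with the correct multiplicities, so $B_2$ is $\Delta$ up to a unimodular constant. Its zeros in $\bbD$ are the orbits $\fz^{-1}(x_j)$ over those $x_j$ lying in the open gaps (for $x_j\in\{a_j,b_j\}$ the factor $b_{x_j}$ is constant, consistent with $a_j,b_j\notin\Omega$), and the Blaschke product over $\bigcup_j\fz^{-1}(x_j)$ is $\prod_{j\ge1}b_{x_j}$, which converges because $G(x_j,\infty)\le\max_{(a_j,b_j)}G(\cdot,\infty)=h_j$ gives $\sum_j G(x_j,\infty)\le\sum_j h_j<\infty$; hence $B_1$ is $\prod_{j\ge1}b_{x_j}$ up to a unimodular constant. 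Therefore the inner component of $W\circ\fz$ equals $c\prod_{j\ge1}b_{x_j}/\Delta$ with $|c|=1$, and comparing values at the origin (using $b_{x_j}(0)>0$, $\Delta(0)>0$ and $(W\circ\fz)(0)=W(\infty)=1>0$) forces $c=1$. The outer component is then $W\circ\fz$ divided by this inner component, i.e.\ $\Delta(W\circ\fz)/\prod_{j\ge1}b_{x_j}$.

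The only genuine content, and hence the main obstacle, is the verification that $W$ really satisfies the hypotheses of Theorem \ref{thSY}: the bounded-characteristic property in $\bbC_+$ (the ratio-of-Herglotz-functions observation) together with the Blaschke condition on the poles $\{c_j\}$, which is exactly where the Widom hypothesis \eqref{hs1} is used. Once Theorem \ref{thSY} has been applied, the fact that no hidden singular inner factor is present is immediate, and the identification of $B_1$ and $B_2$ by their zero sets is routine bookkeeping.
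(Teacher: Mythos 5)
Your proof follows essentially the same route as the paper: represent $W$ as a ratio of two Nevanlinna-class functions and invoke Theorem \ref{thSY}. The paper uses the pair $\prod_{j\ge1}\frac{z-x_j}{z-a_j}$ and $\prod_{j\ge1}\frac{z-c_j}{z-a_j}$, whereas you use the two $R_{0,0}$-type products of the form \eqref{kli8} attached to the divisors $\{x_j\}$ and $\{c_j\}$; either pair works. One point needs tightening: Theorem \ref{thSY} is stated for functions of Nevanlinna class in the sense fixed earlier in the paper, i.e.\ functions with positive imaginary part in $\bbC_+$, and $W$ itself is not such a function (it is real and changes sign on $\bbR$ between $x_j$ and $c_j$ in each gap). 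So the sentence ``$W$ is of bounded characteristic in $\bbC_+$; Theorem \ref{thSY} then gives\dots'' applies the theorem to a function that does not satisfy its hypothesis. The repair is exactly what the paper does: apply Theorem \ref{thSY} to each Herglotz factor of your ratio separately, and then divide, using that a quotient of two functions whose inner components are ratios of Blaschke products again has inner component a ratio of Blaschke products. With that adjustment --- which your own decomposition already makes available --- the remainder of your argument (identification of the zero set $\bigcup_j\fz^{-1}(x_j)$ and pole set $\bigcup_j\fz^{-1}(c_j)$, and convergence of $\prod_{j\ge1} b_{x_j}$ from $G(x_j)\le h_j$ together with the Widom condition \eqref{hs1}) is correct and in fact supplies the bookkeeping that the paper's one-line proof leaves implicit.
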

\begin{proof}
We can represent $W$ as a ratio of two Nevanlinna functions, say 
$$
\prod_{j\ge 1}\frac{z-x_j}{z-a_j}\ \text{and }\prod_{j\ge 1}\frac{z-c_j}{z-a_j},
$$
and then apply Theorem \ref{thSY} to each of them.
\end{proof}

\begin{lemma}\label{le43}
Let $k^{\alpha}(\z)=e_0(\z)e_0(0)$ be the reproducing kernel. Then there exists a unique divisor $D=\{(x_j,\epsilon_j)\}\in D(E)$ such that, see \eqref{abm1},
\begin{equation}
\label{abm2}
e_0(\z)=\prod_{j\ge 1}b_{x_j}^{\frac{1+\epsilon_j} 2}
\sqrt{\frac{(W\circ\fz) \Delta(\z)}{\prod_{j\ge 1}b_{x_j}(\z)}}.
\end{equation}
In particular,
\begin{equation}\label{abm3}
{k^\alpha(0)}=e^2_0(0)=\Delta(0)\prod_{j\ge 1}b_{x_j}^{\epsilon_j}(0).
\end{equation}
\end{lemma}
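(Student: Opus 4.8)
\textbf{Proof proposal for Lemma \ref{le43}.}

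The plan is to identify $e_0 = e_0^\alpha$ as a reproducing kernel (up to the factor $e_0(0)$) and to read off its inner--outer factorization from the reflectionless property together with the explicit formula \eqref{kli8} for $R_{0,0}$. First I would recall from \eqref{hsd1} that $r_+(\fz(\z)) = -e_0(\z)/(p_0 e_{-1}(\z))$, and likewise the dual formula for $r_-$; feeding these into \eqref{kli7} expresses $-1/R_{0,0}(\fz(\z))$ in terms of $e_0,e_{-1},\tilde e_0,\tilde e_{-1}$. The point is that $R_{0,0}$ is a single-valued meromorphic function on $\Omega$ given by the product \eqref{kli8}, whose zeros are exactly the points $x_j\in[a_j,b_j]$ appearing in the divisor $D\in D(E)$ attached to $J = J(H^2(\alpha))$ via Definition \ref{def31}. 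So the zeros and poles of $e_0$ inside $\Omega$ are forced: $e_0$ has zeros precisely at those $x_j$ with $\epsilon_j=1$ (where $1/r_+$ has a pole, so $e_{-1}$ cannot vanish and the zero must sit on $e_0$), and $e_0$ is zero-free at those $x_j$ with $\epsilon_j=-1$.

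Next I would pin down the modulus of $e_0$ on $\bbT$. Since $k^\alpha(\z) = e_0(\z)e_0(0)$ is the reproducing kernel of $H^2(\alpha)$ and $1\in L^2(\alpha)$ projects onto it, one has $\|e_0\|^2 = 1$; but more to the point, from the decomposition \eqref{00hs14} and the chain \eqref{hs16} the function $e_0$ is the (normalized) reproducing kernel, and its boundary modulus is governed by the Widom function $\Delta$ via the duality \eqref{hs14}–\eqref{bishs14}. Concretely, $|e_0|^2$ on $\bbT$ should equal $|\Delta|\cdot |W\circ\fz|/\prod_j |b_{x_j}|$ up to the unimodular Blaschke factors: the Corollary just before the lemma tells us that the outer part of $W\circ\fz$ is $\Delta(W\circ\fz)/\prod_{j\ge1} b_{x_j}$ and its inner part is $\prod_{j\ge1}b_{x_j}/\Delta$. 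Multiplying by the inner factor $\prod_j b_{x_j}^{(1+\epsilon_j)/2}$ (which contributes $b_{x_j}$ exactly when $\epsilon_j=1$, matching the zero structure found above) produces a function whose modulus on $\bbT$, whose zero divisor in $\Omega$, and whose character all agree with those of $e_0$. Since $e_0$ is in the Smirnov class of $\Omega$ with the prescribed character and is zero-free apart from the listed zeros, the ratio of $e_0$ to the right-hand side of \eqref{abm2} is a character-automorphic inner function with trivial character and no zeros, hence a unimodular constant; the normalization $e_0(0)>0$ (inherited from $(be_{-1})(0)>0$ and the construction, or directly $k^\alpha(0)>0$) fixes it to be $1$. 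Then \eqref{abm3} follows by evaluating \eqref{abm2} at $\z=0$, using $\fz(0)=\infty$ so that $(W\circ\fz)(0) = \prod_j (z-x_j)/(z-c_j)\big|_{z=\infty} = 1$, together with $b_{x_j}^{(1+\epsilon_j)/2}(0)\cdot$(the $b_{x_j}$ under the square root) giving $b_{x_j}^{\epsilon_j}(0)$ after taking the square root.

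The main obstacle is the square-root: one must check that $(W\circ\fz)\,\Delta/\prod_j b_{x_j}$ is genuinely an outer function (so that its square root is well defined and again outer, of Smirnov class), and that the branch can be chosen character-automorphically so that $\prod_j b_{x_j}^{(1+\epsilon_j)/2}\sqrt{(W\circ\fz)\Delta/\prod_j b_{x_j}}$ is single-valued with the correct character $\alpha$. Outerness is exactly the content of the Corollary preceding the lemma (a consequence of Theorem \ref{thSY}); the half-integer powers of the Blaschke products $b_{x_j}$ require care but are harmless because each $b_{x_j}$ is itself a single character-automorphic inner function, so $b_{x_j}^{1/2}$ is character-automorphic with a square root of the character, and the product structure together with the reflectionless relation \eqref{kli4} forces the total character to come out right. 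Uniqueness of the divisor $D$ is immediate from the already-established bijection $J(E)\leftrightarrow D(E)$ (the Lemma in Section 3), since $D$ is the divisor of the reflectionless matrix $J(H^2(\alpha))$.
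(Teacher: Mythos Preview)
There is a genuine gap: you never actually prove the key boundary identity $|e_0(\z)|^2 = W(\fz(\z))$ for $\z\in\bbT$. Your second paragraph asserts that ``$|e_0|^2$ on $\bbT$ should equal $|\Delta|\cdot |W\circ\fz|/\prod_j |b_{x_j}|$'' and then appeals to the Corollary about the inner--outer factorization of $W\circ\fz$ and to the duality \eqref{hs14}--\eqref{bishs14}. But the Corollary only tells you how $W\circ\fz$ factors; it says nothing about $e_0$. And the duality \eqref{hs14} relates $\hat e^\alpha$ to $\check e^{\alpha^{-1}\nu}$, not $e_0$ to $W$. The link between $e_0$ and $W$ is precisely what has to be established, and your argument is circular at this point.

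The paper supplies this link by a spectral computation: on the one hand the functional model gives
\[
R_{0,0}(z)=\langle(J-z)^{-1}e_0,e_0\rangle=\int_E |e_0(\fz^{-1}(x))|^2\,\frac{d\omega(x)}{x-z},
\]
and on the other hand the reflectionless product \eqref{kli8} rewrites as
\[
R_{0,0}(z)=\int_E W(x)\,\frac{d\omega(x)}{x-z}.
\]
Comparing the two Cauchy integrals yields $|e_0|^2=W\circ\fz$ a.e.\ on $\bbT$, which pins down the outer part of $e_0$ as $\sqrt{(W\circ\fz)\Delta/\prod_j b_{x_j}}$. There is also a secondary gap in your treatment of the inner part: knowing where $e_0$ vanishes does not by itself exclude a singular inner factor, so your claim that the ratio of $e_0$ to the right-hand side of \eqref{abm2} is ``inner with no zeros, hence constant'' is unjustified. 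The paper closes this by observing from \eqref{hsd0} and $|e_0|^2=W\circ\fz$ that $e_0(\z)\,b(\z)\tilde e_{-1}(\z)=\Delta(\z)\,(W\circ\fz)(\z)$ on $\bbT$, hence in $\bbD$; since the inner part of the right-hand side is exactly $\prod_j b_{x_j}$, the inner part of $e_0$ must divide this Blaschke product, ruling out singular factors. Your identification of which $b_{x_j}$ actually occur (via the poles of $r_+^{-1}$ and Definition \ref{def31}) and your evaluation at $\z=0$ for \eqref{abm3} are fine once these two points are in place.
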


\begin{lemma}\label{le44}
Let $e_{-1}$, $e_0$ be defined by \eqref{hs23} and $\tilde e_{-1}$, $\tilde e_0$ by \eqref{hsd0}.
Then
\begin{eqnarray}\label{abm4}
\lefteqn{p_0(e_{-1}\tilde e_{-1}(\z)-e_0(\z)\tilde e_0(\z))} \nonumber
\\&=&\frac{\Delta(\z)}{b(\z)}\sqrt{(\fz-a_0)(\fz-b_0)}
\prod_{j\ge 1}\frac{\sqrt{(\fz-a_j)(\fz-b_j)}}{\fz-c_j}.
\end{eqnarray}

\end{lemma}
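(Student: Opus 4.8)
\textbf{Proof proposal for Lemma \ref{le44}.}

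The plan is to identify the left-hand side as a concrete automorphic function and then pin down its divisor and outer part via the reflectionless structure, exactly the way Lemma \ref{le43} treats $e_0$ alone. First I would recall that $J=J(H^2(\alpha))$ is reflectionless with resolvent functions given by \eqref{hsd1}: $r_+(\fz(\z))=-e_0(\z)/(p_0e_{-1}(\z))$ and $r_-(\fz(\z))=-\tilde e_0(\z)/(p_0\tilde e_{-1}(\z))$. The combination $p_0(e_{-1}\tilde e_{-1}-e_0\tilde e_0)$ is, up to the obvious product $p_0 e_{-1}\tilde e_{-1}$, a multiple of $1-r_+ r_- p_0^2$; and using the reflectionless identity together with \eqref{kli6}--\eqref{kli8} one recognizes this quantity as being collinear (as an $\Om$-meromorphic function pulled back to $\bbD$) with the Wronskian-type object
\begin{equation*}
\sqrt{(\fz-a_0)(\fz-b_0)}\prod_{j\ge 1}\frac{\sqrt{(\fz-a_j)(\fz-b_j)}}{\fz-c_j},
\end{equation*}
since $R_{0,0}(z)$ is precisely the reciprocal of $-1/r_++p_0^2r_-$ and is given by the explicit product \eqref{kli8} with the $x_j$'s being the divisor points. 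The factor $\Delta(\z)/b(\z)$ should appear for the same reason it appears in \eqref{abm2}: $\Delta$ is the inner part of $b'$ (the Widom function), and dividing by $b$ accounts for the simple pole that $\fz e_0$, hence $e_{-1}$, picks up at the origin.

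Concretely, the key steps in order: (1) Observe that both sides of \eqref{abm4} are character automorphic with the same character — the right-hand side has character $\nu\mu^{-1}$ times the characters coming from the square roots, and one checks this matches the character of $p_0(e_{-1}\tilde e_{-1}-e_0\tilde e_0)$, which follows from $e_{-1}\in b^{-1}H^2(\alpha\mu)$, $\tilde e_{-1}\in b^{-1}\tilde H^2(\mu^{-1}\alpha^{-1}\nu)$ and \eqref{hsd0}. (2) Identify the divisor: the left side, written via \eqref{hsd1} as $p_0 e_{-1}\tilde e_{-1}(1-r_+r_-p_0^2)$, vanishes exactly where $1-r_+r_-p_0^2$ does; because $J$ is reflectionless, $r_+(x+i0)\overline{r_-(x+i0)}p_0^2=1$ a.e. on $E$, so on $\bbT$ the left side reduces to the modulus-type expression that matches the right side, and in the gaps the zeros at $x_j$ cancel against the single pole of $r_+$ or $r_-$ so that only the branch-point factors $\sqrt{(\fz-a_j)(\fz-b_j)}/(\fz-c_j)$ survive — note that the $\epsilon_j$-dependence disappears, which is the whole point of taking this particular antisymmetric combination. (3) Compare with the explicit formula \eqref{kli8} for $R_{0,0}$, using $-1/R_{0,0}=-1/r_++p_0^2 r_-$ and the reflectionless relation to write $-1/r_++p_0^2r_- = (p_0 r_- \cdot(-1/r_+))\cdot(\text{something})$; more cleanly, use $R_{-1,-1}=r_+^{-1}r_-R_{0,0}$ from Section 3 together with the identity $R_{-1,-1}R_{0,0}-R_{-1,0}R_{0,-1}=\det(\cdots)$, to express the Wronskian. (4) Both sides are then outer functions (in the sense of $N_+(\Omega)$, having absorbed all inner factors into the square roots and the $\Delta/b$ prefactor) with the same modulus on $\bbT$ and the same divisor, hence equal up to a unimodular constant; fixing the sign/normalization at the origin via $(be_{-1})(0)>0$, $(b\tilde e_{-1})(0)>0$, $e_0(0)>0$, $\tilde e_0(0)>0$ and $(\z\fz)(0)>0$ forces the constant to be $1$.

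The main obstacle I anticipate is step (2)--(3): carefully tracking which inner factors (Blaschke products $b_{x_j}$ from the divisor, the Widom factor $\Delta$, and the pole factor $b^{-1}$) appear on the left-hand side and showing they combine so that the $\epsilon_j$'s drop out and one is left precisely with $\sqrt{(\fz-a_j)(\fz-b_j)}/(\fz-c_j)$. This requires using Lemma \ref{le43} for $e_0$ (and its dual analogue for $\tilde e_0$, $e_{-1}$, $\tilde e_{-1}$, obtained by applying the same reasoning to the matrices $\tau J$ and to shifts) to substitute the product representations \eqref{abm2} into $e_{-1}\tilde e_{-1}-e_0\tilde e_0$ and then verifying the algebraic cancellation $b_{x_j}^{\epsilon_j}\cdot(\text{dual factor}) - b_{x_j}^{-\epsilon_j}\cdot(\text{dual factor}) $ collapses to a factor independent of $\epsilon_j$ — this is essentially the classical Wronskian/Hermite-type identity for the Baker-Akhiezer functions on the hyperelliptic-like surface, rephrased in the Hardy-space language, and the bookkeeping of characters and normalizations is where the care is needed. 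Once the modulus on $\bbT$ and the divisor in $\Omega$ are matched, the identification is immediate from uniqueness of outer functions with prescribed modulus.
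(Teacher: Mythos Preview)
Your plan is viable in spirit but takes a much harder road than the paper, and the ``main obstacle'' you flag --- tracking how the Blaschke factors $b_{x_j}^{\pm\epsilon_j}$ from $e_0,e_{-1},\tilde e_0,\tilde e_{-1}$ conspire to cancel the $\epsilon_j$-dependence --- is exactly what the paper bypasses. The paper does \emph{not} substitute the product formulas \eqref{abm2} for the four functions and chase inner/outer factorizations. Instead it works entirely on $\bbT$ via the off-diagonal resolvent entry $R_{-1,0}$.

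Here is the paper's route. From \eqref{kli6} one has $R_{-1,0}=-p_0/(r_+^{-1}r_-^{-1}-p_0^2)$; plugging in \eqref{hsd1} this becomes, on $\bbT$, $R_{-1,0}\circ\fz=\frac{1}{p_0}\,\frac{e_0\bar e_{-1}}{e_0\bar e_{-1}-e_{-1}\bar e_0}$ (using the duality $b\tilde e_n=\Delta\overline{e_{-n-1}}$ once). On the other hand the functional model gives $R_{-1,0}(z)=\int_E \Re(e_0\bar e_{-1})\,\frac{d\omega(x)}{x-z}$, so the density $\Re(e_0\bar e_{-1})\,\omega'$ equals $\frac1\pi\Im R_{-1,0}$ on $E$. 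Comparing the two expressions yields $\frac{1}{\pi p_0}\frac{1}{\Im(\bar e_0 e_{-1})}=\omega'\circ\fz$ on $\bbT$. Now one simply inverts the explicit product formula \eqref{00abm4} for $\omega'$ and multiplies by $\Delta/b$ (since on $\bbT$ the left side of \eqref{abm4} equals $p_0\frac{\Delta}{b}(e_{-1}\bar e_0-e_0\bar e_{-1})$ by duality). Equality of Smirnov-class functions on $\bbT$ then gives equality in $\bbD$.

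So the contrast: you propose to match inner and outer parts separately by divisor bookkeeping (four applications of Lemma \ref{le43}), while the paper computes a single boundary density via $R_{-1,0}$ and the harmonic-measure formula, which automatically handles all the inner factors at once. Your approach would work if carried through, but the algebraic cancellation you worry about is genuine drudgery; the paper's trick of going through $R_{-1,0}$ and $\omega'$ makes it a two-line computation.
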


\begin{proof}[Proof of Lemma \ref{le43}]
The Cauchy transform of the harmonic measure is of the form
\begin{equation*}
\int_E\frac{d\omega(x)}{x-z}=\frac{-1}{\sqrt{(z-a_0)(z-b_0)}}\prod_{j\ge 1}\frac{(z-c_j)}{\sqrt{(z-a_j)(z-b_j)}}.
\end{equation*}
It is absolutely continuous, and its density is
\begin{equation}\label{00abm4}
d\omega(x)=\frac 1\pi \prod_{j\ge 1}\frac{(x-c_j)}{\sqrt{(x-a_j)(x-b_j)}}\,
\frac{dx}{\sqrt{(x-a_0)(b_0-x)}}.
\end{equation}
Since $J=J(H^2(\alpha))$ is reflectionless we have by \eqref{kli8}
\begin{eqnarray}\label{abm5}
R_{0,0}(z)=\frac{-1}{\sqrt{(z-a_0)(z-b_0)}}\prod_{j\ge 1}\frac{z-x_j}{z-c_j}\frac{(z-c_j)}{\sqrt{(z-a_j)(z-b_j)}}.\nonumber
\\
=\int_E  \prod_{j\ge 1}\frac{x-x_j}{x-c_j}\frac{d\omega(x)}{x-z}.
\end{eqnarray}

On the other hand, due to the functional model
\begin{equation*}
R_{0,0}(z)=\langle (J-z)^{-1} e_0,e_0\rangle=\int_{\bbT}\frac{|e_0(\z)|^2}{\fz-z}dm(\z),
\end{equation*}
or, by \eqref{hs6},
\begin{equation}\label{abm6}
R_{0,0}(z)=\int_{E}|e_0(\fz^{-1}(x))|^2\frac{d\omega(x)}{x-z}.
\end{equation}

Compare \eqref{abm5} and \eqref{abm6} to get
\begin{equation}\label{abm7}
|e_0(\z)|^2=W(\fz(\z)), \  \z \in\bbT.
\end{equation}
Therefore the outer part of $e_0(\z)$ is
$\sqrt{\frac{(W\circ\fz) \Delta(\z)}{\prod_{j\ge 1}b_{x_j}(\z)}}$.

Now, by \eqref{hsd0} and \eqref{abm7}, we have
\begin{equation*}\label{00abm8}
e_0(\z) b(\z)\tilde e_{-1}(\z)=\Delta(\z)W(\fz(\z)), \  \z \in\bbT.
\end{equation*}
Since these functions of Smirnov's class coincide on $\bbT$ they are equal in $\bbD$. That is, the inner part of $e_0(\z)$ is a divisor of $\prod_{j\ge 1}b_{x_j}$. Therefore we obtain the formula \eqref{abm2} with a certain choice of $\e_j=\pm 1$.

Let us note that poles of $r_+^{-1}(\fz(\z))$ are zeros of $e_0(\z)$, see \eqref{hsd1}. That is, our choice of $\epsilon_j$'s is exactly  the same as in Definition \ref{def31}.
\end{proof}

\begin{proof}[Proof of Lemma \ref{le44}]
We use the second relation in \eqref{kli6}:
\begin{equation*}
R_{-1,0}(z)=\frac{-p_0}{r^{-1}_+(z)r^{-1}_-(z)-p_0^2},
\end{equation*}
or, by \eqref{hsd1}
\begin{equation}\label{abm8}
\displaystyle
R_{-1,0}\circ\fz=\begin{cases}\frac 1{p_0}\frac{e_0\tilde e_0}{e_0\tilde e_0-e_{-1}\tilde e_{-1}}(\z),&\z\in \bbD\\
\frac 1{p_0}\frac{e_0\bar e_{-1}}{e_0 \bar e_{-1}-e_{-1}\bar e_{0}}(\z), &\z\in\bbT
\end{cases}
\end{equation}

Again, based on the functional model we have
\begin{equation*}
R_{-1,0}(z)=R_{0,-1}(z)= \int_E e_0 \bar e_{-1}\frac{d\omega(x)}{x-z}
=\int_E\bar e_0  e_{-1}\frac{d\omega(x)}{x-z},
\end{equation*}
or
\begin{equation}\label{abm9}
R_{-1,0}(z)=\int_E \Re(e_0 \bar e_{-1})\frac{d\omega(x)}{x-z}
\end{equation}
\end{proof}
Since the density in \eqref{abm9} corresponds to the imaginary part of $R_{-1,0}$ in \eqref{abm8} on $\bbT$ we have
$$
\frac{1}{\pi p_0}\frac {1}{\Im \bar e_0  e_{-1}}=\omega'\circ\fz.
$$
Now we use \eqref{00abm4}. Taking reciprocal and multiplying by $\Delta$ we get \eqref{abm4}.

\subsection{Canonical factorization and the Abel map $\pi: J(E)\to \Gamma^*$}
\label{canonical}
The following analytic lemma plays the key role in a definition of the generalized Abel map for reflectionless matrices with a Widom resolvent domain.

\begin{lemma}
\label{lemma45}
Let $\Omega=\Om$ be a Widom domain and $J$ be a reflectionless matrix with the spectrum $E$.
Then there exists a unique factorization
\begin{equation}\label{abm10}
r_+\circ\fz=-\frac 1{p_0}\frac{e_0}{e_{-1}}
\end{equation}
such that
\begin{eqnarray}\label{abm11}
\lefteqn{p_0(e_{-1}(\z)\overline{ e_{0}(\z)}-e_0(\z)\overline{ e_{-1}(\z)})} \nonumber
\\&=&\sqrt{(\fz-a_0)(\fz-b_0)}
\prod_{j\ge 1}\frac{\sqrt{(\fz-a_j)(\fz-b_j)}}{\fz-c_j},\  \fz\in \bbT,
\end{eqnarray}
where $e_{0}$ and $b e_{-1}$ are of Smirnov class with mutually simple singular parts and $e_0(0)>0$.
Moreover $e_0(\z)$ is of the form \eqref{abm2} for the divisor $D$ given in Definition \ref{def31}.
\end{lemma}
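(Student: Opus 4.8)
\textbf{Proof proposal for Lemma \ref{lemma45}.}

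The plan is to reduce the general reflectionless case to the already-analyzed situation of Lemma \ref{le43} and Lemma \ref{le44} by realizing $J$ as $J(H^2(\alpha))$ for an appropriate Hardy space and character. First, I would observe that a reflectionless matrix $J$ with spectrum $E$ determines, via \eqref{kli3}, \eqref{kli6}, \eqref{kli7}, \eqref{kli8}, a divisor $D=\{(x_j,\epsilon_j)\}\in D(E)$ exactly as in Definition \ref{def31}; and conversely, the functions $r_\pm$ are reconstructed from $R_{0,0}$ (which is the canonical product determined by $E$ and the $x_j$'s) together with the signs $\epsilon_j$ and the reflectionless relation \eqref{kli4}. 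So the data needed to produce the factorization \eqref{abm10} is precisely $D$. Now I would invoke the construction of Section \ref{subsec25}: starting from any intermediate Hardy space $H^2(\alpha)$ satisfying \eqref{hs16}--\eqref{hs17}, Theorem \ref{th44} produces a reflectionless Jacobi matrix, and by the theorem following it (the reflectionless theorem) its resolvent functions are $r_+\circ\fz=-e_0/(p_0e_{-1})$, with $e_0$ of the explicit form \eqref{abm2} for the divisor attached to $H^2(\alpha)$ by Lemma \ref{le43}. The point is that this divisor, read off from \eqref{abm3} and the inner part of $e_0$, is an \emph{arbitrary} element of $D(E)$: one can choose $\alpha$ and the intermediate space so that the $x_j$'s and $\epsilon_j$'s match the given $D$. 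Hence the given $J$ coincides with some $J(H^2(\alpha))$, and the desired $e_0,e_{-1}$ are the reproducing-kernel normalized vectors of that space.

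Granting this identification, existence of the factorization with the stated properties is immediate: \eqref{abm10} is \eqref{hsd1}; the Wronski-type identity \eqref{abm11} follows from Lemma \ref{le44} upon noting that on $\bbT$ one has $b\tilde e_{-1}=\Delta\overline{e_0}$ and $b\tilde e_0 = \Delta\overline{e_{-1}}$ by \eqref{hsd0}, so that $p_0(e_{-1}\tilde e_{-1}-e_0\tilde e_0)$ becomes $\frac{\Delta}{b}$ times $p_0(e_{-1}\overline{e_0}-e_0\overline{e_{-1}})$ after multiplying by $\bar b b=1$ and using $|\Delta|=|\Delta|$; cancelling the explicit factor $\Delta/b$ times $\prod_{j}b_{x_j}^{\cdots}$ and the outer parts turns \eqref{abm4} into \eqref{abm11}. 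That $e_0$ and $be_{-1}$ are of Smirnov class with mutually simple singular (inner) parts is exactly the content of Theorem \ref{thSY} applied to $r_+\circ\fz$ together with the product structure in Lemma \ref{le43}: the singular inner part of $e_0$ divides $\prod_j b_{x_j}$, that of $be_{-1}$ divides $\prod_j b_{x_j}$ over the complementary index set (the zeros of $e_{-1}$ being the poles of $r_+^{-1}$), and these Blaschke factors share no common zero since a given $x_j$ is a pole of at most one of $1/r_+$, $p_0^2r_-$ by the analyticity-of-$R_{-1,-1}$ argument used after Definition \ref{def31}.

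For uniqueness, suppose $(e_0,e_{-1})$ and $(\tilde e_0^{\,\prime},\tilde e_{-1}^{\,\prime})$ are two factorizations of $r_+\circ\fz$ meeting all the requirements. Then $e_0/\tilde e_0^{\,\prime}=e_{-1}/\tilde e_{-1}^{\,\prime}=:\phi$ is a character automorphic function (same character on numerator and denominator since $r_+\circ\fz$ is honestly automorphic in modulus), and \eqref{abm11} forces $|\phi|^2(e_{-1}\overline{e_0}-e_0\overline{e_{-1}})=(e_{-1}\overline{e_0}-e_0\overline{e_{-1}})$ on $\bbT$, hence $|\phi|=1$ a.e. on $\bbT$; combined with $\phi=e_0/\tilde e_0^{\,\prime}$ being a ratio of Smirnov functions with coinciding singular parts (by hypothesis the singular parts are the prescribed ones, read off from the divisor), $\phi$ is a unimodular constant, and the normalizations $e_0(0)>0$, $\tilde e_0^{\,\prime}(0)>0$ pin $\phi=1$. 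The last assertion, that $e_0$ has the form \eqref{abm2} for the divisor of Definition \ref{def31}, then follows from Lemma \ref{le43} together with the remark at the end of its proof that the $\epsilon_j$'s produced there agree with Definition \ref{def31} (poles of $r_+^{-1}\circ\fz$ are zeros of $e_0$). The main obstacle I anticipate is the first step --- rigorously matching the abstractly constructed family $\{J(H^2(\alpha))\}$ to a \emph{prescribed} reflectionless $J$, i.e. showing the map from intermediate Hardy spaces to divisors $D(E)$ is onto; this is really where the Widom hypothesis and Theorem \ref{thSY} do their work, controlling the inner parts so that the signs $\epsilon_j$ are free parameters rather than being constrained.
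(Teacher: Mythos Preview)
Your approach has a genuine circularity problem. You propose to establish the factorization by first realizing the given reflectionless $J$ as $J(H^2(\alpha))$ for some intermediate Hardy space, and then reading off $e_0,e_{-1}$ from Lemmas \ref{le43}--\ref{le44}. But the paper's route to such a realization is Theorem \ref{thspectral10}, and its proof \emph{uses} the canonical factorization of Lemma \ref{lemma45}: the map \eqref{amb16} is built from the very $e_{-1},e_0$ whose existence Lemma \ref{lemma45} supplies. So invoking the functional model to prove Lemma \ref{lemma45} is circular. You yourself flag as the ``main obstacle'' the surjectivity of the intermediate-Hardy-space-to-divisor map; this surjectivity is not established anywhere prior to Lemma \ref{lemma45} --- it is a consequence of the lemma (and ultimately of Theorem \ref{mainj}), not an input to it. A related point: Theorem \ref{thspectral10} also requires the extra hypothesis that $J$ has absolutely continuous spectrum, whereas Lemma \ref{lemma45} does not (cf.\ the remark immediately following its proof in the paper).

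The paper's own proof is direct and bypasses the functional model entirely. For uniqueness: from \eqref{abm10} and reflectionlessness one has $-\Im(1/R_{0,0}) = -2\Im(1/r_+)$ on $E$, and combining this with the imposed Wronski identity \eqref{abm11} forces $|e_0|^2 = W\circ\fz$ on $\bbT$ with $W$ as in \eqref{abm1}; this pins down the outer part of $e_0$. Theorem \ref{thSY} applied to $r_+$ says the inner part of $r_+\circ\fz$ is a ratio of Blaschke products, and since the singular parts of $e_0$ and $be_{-1}$ are required to be coprime, the inner part of $e_0$ must be the numerator of that ratio --- giving $e_0$ in the form \eqref{abm2}. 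For existence: one simply \emph{defines} $e_0$ by \eqref{abm2} using the divisor $D$ attached to $J$ via Definition \ref{def31}, sets $p_0e_{-1}:=-e_0/(r_+\circ\fz)$, and checks \eqref{abm11} by the same boundary computation $|e_0|^2=W\circ\fz$. Your uniqueness paragraph (the unimodular-quotient argument) is essentially fine and equivalent to the paper's; it is your existence argument that does not stand on its own.
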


Of course, the condition \eqref{abm11} here reflects the property \eqref{abm4}, which was established in Lemma \ref{le44} for the reproducing kernels in $H^2(\alpha)$.

\begin{proof}
First assume that $r_+$ is of the form \eqref{abm10} and \eqref{abm11} holds. Since
\begin{equation}\label{dlyasasha}
-\Im \frac 1{R_{0,0}}=-2\Im\frac 1{r_+}=\frac{p_0(e_{-1}(\z)\overline{ e_{0}(\z)}-e_0(\z)\overline{ e_{-1}(\z)})}{i|e_0(\z)|^2}
\end{equation}
we have  $|e_0(\z)|^2=W\circ\fz$, where $W$ is of the form \eqref{abm1}. This defines uniquely the outer part of $e_0(\z)$. Further, the function $r_+$ is of the Nevanlinna class function in $\mathbb{C}_+$ and all its poles lie in the lacunas, not more than one in a gap $(a_j,b_j)$. Therefore we can apply Theorem \ref{thSY}, according to which the inner part of $r_+$ is the ratio of two Blaschke products. Thus,
by \eqref{abm10}, the numerator of this ratio  is the inner part of $e_0(\z)$. As the result we obtain $e_0(\z)$ in the form 
\eqref{abm2}. Then $p_0e_{-1}$ is defined uniquely by $p_0e_{-1}(\z)=-e_0(\z)/r_+\circ\fz$.

 Conversely, we define $e_0(\z)$ by \eqref{abm2} and then $p_0e_{-1}(\z)$ by \eqref{abm10}. It remains to check 
 \eqref{abm11}. But it follows from \eqref{dlyasasha} and $|e_0(\z)|^2=W\circ\fz$, which holds for the given $e_0(\z)$.
\end{proof}

\begin{definition}
The Abel map $\pi: J(E)\to \Gamma^*$ is defined for Widom domain as the character $\alpha\in\Gamma^*$ of the function
$e_0(\z)$ in the representation \eqref{abm2}.
\end{definition}

\begin{remark} Note that by \eqref{kli6}
\begin{equation*}\label{dlyasashb}
-\Im \frac 1{R_{-1,-1}}=- {|p_0 r_+|^2}\Im \frac 1{R_{0,0}}=\frac{p_0(e_{-1}(\z)\overline{ e_{0}(\z)}-e_0(\z)\overline{ e_{-1}(\z)})}{i|e_{-1}(\z)|^2}.
\end{equation*}
Therefore $e_{-1}(\z)$ is also of the form \eqref{abm2}. Thus both functions allow to  define the dual functions 
$\tilde e_0 (\z)$ and $(b\tilde e_{-1})(\z)$ of  the Smirnov class   by
\begin{equation*}
\label{hsd0again}
b(\z)\tilde e_n(\z)=\Delta(\z) \overline {e_{-n-1}(\z)},\quad \z\in\bbT, \, n=-1, 0\,.
\end{equation*}
Then, by reflectionless, 
\begin{equation}
\label{rminus}
r_-\circ \fz = -\frac{\tilde e_0}{p_0 \tilde e_{-1}}\,.
\end{equation}
\end{remark}

\begin{remark}
The absolutely continuity of the spectrum  $J\in J(E)$ was not used so far.
\end{remark}

\subsection{Spectral Theory and the Hardy space related to $J\in J(E)$}

For an arbitrary two-sided Jacobi matrix $J$ span$\{e_{-1},e_0\}$ is a cyclic subspace. The spectral  $2\times 2$ matrix  measure $d\sigma$ is defined by
$$
R(z)=\cE^*(J-z)^{-1}\cE=\int\frac{d\sigma(x)}{x-z},
$$
see \eqref{kli5}. Then the standard basis vectors of $\l^2$ corresponds to
\begin{equation}\label{abm12}
e_n\to \begin{bmatrix}
-p_0 Q^+_n\\P_n^+
\end{bmatrix}, \quad 
e_{-n-1}\to \begin{bmatrix}
P_n^{-}\\-p_0 Q^-_n
\end{bmatrix},
\end{equation}
where $P_n^\pm$ and $Q^\pm_n$ are orthogonal polynomials of the first and second kind generated by $J_{\pm}$.
The operator $J$ becomes the operator multiplication by independent variable in $L^2_{d\sigma}$.

Let $J$ be reflectionless with the Widom resolvent domain. We use the general formula \eqref{kli6} and plug in the representation
\eqref{abm10} and \eqref{rminus} to get
\begin{equation}\label{abm13}
R\circ\fz=\Psi\Phi^{-1},
\end{equation}
where
\begin{equation}\label{abm14}
\Psi=-p_0\begin{bmatrix}\tilde e_0&0\\
0& e_0
\end{bmatrix},\quad 
\Phi=\begin{bmatrix} \tilde e_{-1}&-e_0\\
-\tilde e_0& e_{-1}
\end{bmatrix}.
\end{equation}

\begin{lemma}
Let $J\in J(E)$ and $\Om$ be of Widom type. Then the matrix spectral density $\sigma'_{a.c.}$ is of the form
\begin{equation}\label{abm15}
\sigma_{a.c.}'=\frac{1}{2 \omega'}(\Phi^{-1})^*\Phi^{-1}.
\end{equation}
\end{lemma}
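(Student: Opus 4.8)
The plan is to read off $\sigma'_{a.c.}$ from the general Cauchy‐transform identity \eqref{abm13}, $R\circ\fz = \Psi\Phi^{-1}$, by taking boundary values on $\bbT$ and extracting the imaginary part. Since the multiplication operator $J$ is unitarily equivalent to multiplication by $x$ in $L^2_{d\sigma}$ and $J\in J(E)$ has absolutely continuous spectrum (so the singular part of $d\sigma$ on $E$ is absent by the reflectionless property together with \cite[Theorem 1.3]{PR09}), the Stieltjes inversion formula gives, for a.e.\ $x\in E$,
\begin{equation*}
\pi\,\sigma'_{a.c.}(x) = \Im R(x+i0) = \frac{1}{2i}\bigl(R(x+i0) - R(x+i0)^*\bigr),
\end{equation*}
where $R(x+i0)^*$ is the Hermitian conjugate. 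Transplanting to $\bbT$ via $\fz$ and using $d\omega(x) = \omega'(x)\,dx$ with $dm(\z)$ the Lebesgue measure on $\bbT$, the change of variables converts $\pi\sigma'_{a.c.}(x)\,dx$ into $\sigma'_{a.c.}\,\frac{dm}{\omega'}$ up to the harmonic‐measure Jacobian; so it suffices to compute $\Im(\Psi\Phi^{-1})$ on $\bbT$ in terms of $\Phi$ and $\Delta$.

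The key algebraic step is the following. On $\bbT$ one has $\det\Phi = \tilde e_{-1}e_{-1} - e_0\tilde e_0$, and by \eqref{hsd0again}, $b\tilde e_n = \Delta\overline{e_{-n-1}}$, so $\tilde e_0 = \Delta\overline{e_{-1}}/b$ and $b\tilde e_{-1} = \Delta\overline{e_0}$. Substituting, and using $|b|=1$, $|\Delta|$‐factors, one rewrites $\Phi$ as $\Delta/b$ times a matrix built from $e_0, e_{-1}$ and their conjugates; the combination $\Phi^*\Phi$ then becomes, up to the scalar $|\Delta|^2$, essentially $\begin{bmatrix} |e_{-1}|^2 + |e_0|^2 & \cdots\\ \cdots & \cdots\end{bmatrix}$‐type, but the decisive identity is that $\Psi\Phi^{-1} - (\Psi\Phi^{-1})^*$ simplifies, via $\Psi\Phi^{-1} = \Psi\,\mathrm{adj}(\Phi)/\det\Phi$, to something proportional to $(\Phi^{-1})^*\bigl(\Psi^*\,\mathrm{adj}(\Phi)^* - \mathrm{adj}(\Phi)\Psi\bigr)\Phi^{-1}$. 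The quantity in parentheses is a $2\times2$ matrix whose entries are exactly the Wronski‐type expressions $p_0(e_{-1}\overline{e_0} - e_0\overline{e_{-1}})$ and their duals appearing in \eqref{abm11} of Lemma \ref{lemma45} (and in \eqref{abm4} of Lemma \ref{le44}). Plugging in \eqref{abm11}, that middle matrix collapses to a scalar multiple of the identity, namely $i\sqrt{(\fz - a_0)(\fz - b_0)}\prod_{j\ge1}\frac{\sqrt{(\fz - a_j)(\fz - b_j)}}{\fz - c_j}\cdot I$, and comparing with the density \eqref{00abm4} of $d\omega$ this scalar is precisely $i\pi\,\omega'\circ\fz$. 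Collecting the constants yields $\sigma'_{a.c.} = \frac{1}{2\omega'}(\Phi^{-1})^*\Phi^{-1}$.

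The main obstacle is the bookkeeping in the second paragraph: one must keep straight which of the four functions $e_0, e_{-1}, \tilde e_0, \tilde e_{-1}$ is being conjugated (the duality \eqref{hsd0again} holds only on $\bbT$, not in $\bbD$), make sure the scalar singular parts coming from $\Delta$ cancel correctly between $\Psi$ and $\Phi^{-1}$ — they do, because $\Delta/b$ factors out of every entry of $\Phi$ by Lemma \ref{le43}, so it does not survive in $\Phi^{-1}(\cdots)\Phi^{-1}$ — and verify that the Wronskian identity \eqref{abm11} is genuinely the full off‐ and on‐diagonal content of the matrix $\Psi^*\mathrm{adj}(\Phi)^* - \mathrm{adj}(\Phi)\Psi$ rather than just one entry. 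Once the reflectionless relation \eqref{kli4} (built into the form \eqref{abm14} of $\Psi$ and $\Phi$) and the normalization $|e_0|^2 = W\circ\fz$ are invoked, the computation is forced; the absolute continuity hypothesis enters only to guarantee that $d\sigma$ has no singular component on $E$, so that $\sigma'_{a.c.}$ really is the full spectral density there.
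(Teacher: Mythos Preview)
Your overall strategy is exactly the paper's: take boundary values, write $\sigma'_{a.c.} = \frac{1}{2\pi i}(R-R^*)$, substitute $R = \Psi\Phi^{-1}$, and recognize the middle factor as a scalar multiple of the identity via the Wronski identity \eqref{abm11}, the scalar being essentially $1/(\pi\omega')$ by \eqref{00abm4}. That is the whole proof.

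However, your specific algebraic step is wrong as written. The expression
\[
(\Phi^{-1})^*\bigl(\Psi^*\,\mathrm{adj}(\Phi)^* - \mathrm{adj}(\Phi)\,\Psi\bigr)\Phi^{-1}
\]
is \emph{not} $R-R^*$ (or a scalar multiple of it); the classical adjugate $\mathrm{adj}(\Phi)$ and the Hermitian conjugate $\Phi^*$ are different objects, and substituting $\mathrm{adj}(\Phi)=\det(\Phi)\,\Phi^{-1}$ shows your expression does not collapse to $\Psi\Phi^{-1}-(\Phi^{-1})^*\Psi^*$. The correct and much simpler factoring (which the paper uses) is
\[
R-R^* \;=\; \Psi\Phi^{-1} - (\Phi^{-1})^*\Psi^* \;=\; (\Phi^{-1})^*\bigl(\Phi^*\Psi - \Psi^*\Phi\bigr)\Phi^{-1}.
\]
A direct computation from \eqref{abm14} then shows $\Phi^*\Psi - \Psi^*\Phi$ is the diagonal matrix with entries $p_0(e_{-1}\overline{e_0}-e_0\overline{e_{-1}})$ and its tilde counterpart, each equal to the right side of \eqref{abm11}, hence a scalar times $I$.

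Two of your side remarks are also unnecessary here. First, you do not need absolute continuity of $J$ for this lemma: the formula $\sigma'_{a.c.}=\frac{1}{2\pi i}(R-R^*)$ gives the a.c.\ density regardless of any singular component, and the lemma claims nothing more. Second, there is no need to factor $\Delta/b$ out of $\Phi$ or worry about cancellation of inner parts; the identity is a purely formal $2\times 2$ computation on $\bbT$ once \eqref{abm11} is in hand.
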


\begin{proof}
We use $\sigma_{a.c.}'=\frac 1{2\pi i}(R-R^*)$. We plug in \eqref{abm13} 
$$
\sigma_{a.c.}'=\frac 1{2\pi i}(\Phi^{-1})^*(\Phi^*\Psi-\Psi^*\Phi)\Phi^{-1},
$$
and use \eqref{abm11}.
\end{proof}

\begin{theorem}\label{thspectral10}
Assume, in addition,  that $J\in J(E)$ has absolutely continuous spectrum. Then
the map 
\begin{equation}\label{amb16}
\begin{bmatrix} P_{-1}(x)\\ P_0(x)
\end{bmatrix}\to f(\z)=e_{-1}(\z) P_{-1}\circ\fz+e_0(\z) P_0\circ\fz, \quad \begin{bmatrix} P_{-1}(x)\\ P_0(x)
\end{bmatrix}\in L^2_{d\sigma},
\end{equation}
acts unitary from $L^2_{d\sigma}$ to $L^2(\alpha)$, $\alpha=\pi(J)$. Moreover, the composition map 
\begin{equation}\label{amb17}
\cF: \l^2\to L^2_{d\sigma}\to L^2(\alpha)
\end{equation}
is such that $H^2_J:=\cF(\l^2_+)$ possesses the properties
\begin{equation}\label{amb18}
\check H^2(\alpha)\subseteq H^2_J\subseteq \hat H^2(\alpha),
\quad \fz (H^2_J)_0\subset H^2_J.
\end{equation}
In other words, this $J$ is one of those built in Subsection \ref{subsec25}.
\end{theorem}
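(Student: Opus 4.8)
The plan is to build the unitary $L^2_{d\sigma}\to L^2(\alpha)$ directly from the factorization $R\circ\fz=\Psi\Phi^{-1}$ established in \eqref{abm13}--\eqref{abm14}, together with the spectral density formula \eqref{abm15}. First I would recall that under the absolute continuity hypothesis the whole measure $d\sigma$ equals its a.c.\ part $\sigma'_{a.c.}\,dx$, and that $dx$ is mutually absolutely continuous with $d\omega$ on $E$. Using \eqref{abm15}, for a vector polynomial $\binom{P_{-1}}{P_0}\in L^2_{d\sigma}$ one computes
\begin{equation*}
\left\| \binom{P_{-1}}{P_0}\right\|^2_{L^2_{d\sigma}}
=\int_E \binom{P_{-1}}{P_0}^* \sigma'_{a.c.}\binom{P_{-1}}{P_0}\,dx
=\int_E \frac{1}{2\omega'}\left| \Phi^{-1}\binom{P_{-1}}{P_0}\right|^2 dx .
\end{equation*}
Here the crucial observation is that $\Phi^{-1}\binom{P_{-1}}{P_0}$, when paired against the row $(e_{-1},e_0)$ — equivalently, looking at the second component structure of $\Phi$ in \eqref{abm14} — produces exactly $f(\z)=e_{-1}(\z)P_{-1}(\fz(\z))+e_0(\z)P_0(\fz(\z))$ up to the scalar $\Delta$ and the passage from $dx$ to $dm$ via \eqref{hs6}. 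So the first step is the computation showing that the map in \eqref{amb16} is an isometry onto its image, transforming $\|\cdot\|_{L^2_{d\sigma}}$ into $\int_\bbT|f|^2\,dm$; this uses \eqref{abm11} to identify the scalar $\Phi^*\Psi-\Psi^*\Phi$ with $i|e_0|^2$-type quantities and \eqref{00abm4} to cancel $\omega'$.

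Second, I would check that $f$ genuinely lies in $L^2(\alpha)$: it is character automorphic with character $\alpha=\pi(J)$ since both $e_{-1}$ and $e_0$ carry the character $\alpha$ (as noted in the remark following Lemma \ref{lemma45}, $e_{-1}$ is also of the form \eqref{abm2}), and $P_{-1}\circ\fz$, $P_0\circ\fz$ are $\Gamma$-automorphic. Surjectivity onto $L^2(\alpha)$ then follows because the map intertwines multiplication by $x$ on $L^2_{d\sigma}$ with multiplication by $\fz$ on $L^2(\alpha)$ (immediate from \eqref{kli1bis} and the three-term recurrence for $P_n^\pm$), and because $\binom{e_{-1},e_0}$ is a cyclic generating pair — the images of $e_{-1},e_0$ under $\cF$ generate $L^2(\alpha)$ as a module over the multiplication algebra, which is dense. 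Combined with the isometry, this gives the unitarity claim and defines $\cF$ in \eqref{amb17}.

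Third, I would identify $H^2_J=\cF(\l^2_+)$. By \eqref{abm12}, $\l^2_+$ is spanned by vectors $\binom{-p_0Q_n^+}{P_n^+}$, $n\ge 0$, so under \eqref{amb16} their images are $-p_0 Q_n^+\circ\fz\cdot e_{-1}+P_n^+\circ\fz\cdot e_0$. Using the representation \eqref{abm2} for $e_0$ (and its analog for $e_{-1}$) one sees these images are functions of Smirnov class in $\Omega$ lying between the check- and hat-spaces; concretely, $\cF(e_0)=e_0$ itself belongs to $\hat H^2(\alpha)$ with value $e_0(0)>0$ at the origin, while the full chain of images spans a space containing $\check H^2(\alpha)$ by the duality \eqref{hs14} and the reflectionless identity \eqref{kli4}. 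The inclusion $\fz(H^2_J)_0\subset H^2_J$ is automatic: multiplication by $\fz$ corresponds to multiplication by $x$, which maps $\l^2_+\ominus\{e_0\}$ — i.e.\ $\span_{n\ge 1}\{e_n\}$ — back inside $\l^2_+$ (since $xe_n$ involves only $e_{n-1},e_n,e_{n+1}$ with $n-1\ge 0$). This verifies \eqref{amb18}, so $H^2_J$ satisfies Definition \ref{def41} and $J=J(H^2_J)$ by the construction of Subsection \ref{subsec25}.

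\textbf{Main obstacle.} The routine part is the intertwining with multiplication and the inclusion $\fz(H^2_J)_0\subset H^2_J$. The real work is the first step: proving that \eqref{amb16} is isometric, which hinges on getting the scalar identity \eqref{abm11} to collapse the $2\times2$ density \eqref{abm15} into the single positive weight $|e_0|^2\,d\omega$ matching \eqref{hs6}, and — more delicately — making sure the Smirnov-class functions $e_0$, $be_{-1}$ produced by Lemma \ref{lemma45} have no extra singular inner factors that would spoil the norm identity when the a.c.\ hypothesis is invoked. I expect this reconciliation of \eqref{abm15} with the Hardy-space norm, i.e.\ verifying that absolute continuity of $\sigma$ forces the would-be singular parts of $e_0$ and $be_{-1}$ to be mutually prime Blaschke factors (via Theorem \ref{thSY} and Lemma \ref{lemma45}) rather than genuine singular measures, to be the crux of the argument.
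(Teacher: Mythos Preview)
Your plan is broadly along the paper's lines, but the emphasis is off in two places and there is one genuine gap.

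\textbf{The isometry step.} You correctly start from \eqref{abm15}, but the computation you sketch only tracks the \emph{first} component of $\Phi^{-1}\binom{P_{-1}}{P_0}$. The paper's clean trick is to write both components at once:
\[
\begin{bmatrix} f(\z)\\ \Delta(\z) f(\bar \z)/b(\z)\end{bmatrix}
=\begin{bmatrix} e_{-1}&e_0\\ \tilde e_0&\tilde e_{-1}\end{bmatrix}
\begin{bmatrix} P_{-1}\\ P_0\end{bmatrix}\circ\fz
= \Phi^{-1}\begin{bmatrix} P_{-1}\\ P_0\end{bmatrix}\circ\fz\cdot\det\Phi,
\]
using the duality $b\tilde e_k=\Delta\overline{e_{-1-k}}$. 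Then $|f(\z)|^2+|f(\bar\z)|^2$ falls out of $(\Phi^{-1})^*\Phi^{-1}$, and the Wronski identity \eqref{abm4} (which gives $\det\Phi$) together with \eqref{00abm4} turns $\frac{1}{2\omega'}|\Phi^{-1}(\cdot)|^2dx$ into $|f|^2\,dm$ on $\bbT$. This is purely algebraic once \eqref{abm15} and \eqref{abm4} are in hand; there is no subtlety about singular inner factors here.

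\textbf{Your ``main obstacle'' is a non-issue.} The possibility that $e_0$ or $be_{-1}$ carries a genuine singular inner factor is already excluded by Lemma~\ref{lemma45}, which rests on Theorem~\ref{thSY}: for a Nevanlinna function meromorphic in a Widom domain with Blaschke-summable poles, the inner part is a ratio of Blaschke products only. So the canonical factorization \eqref{abm10}--\eqref{abm2} already forces the inner parts to be mutually prime Blaschke products, independently of the absolute continuity hypothesis. The a.c.\ hypothesis is used only to write $d\sigma=\sigma'_{a.c.}\,dx$.

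\textbf{The actual gap: $H^2_J\subseteq\hat H^2(\alpha)$.} You assert that the images
\[
\cF(e_n)=-p_0\,e_{-1}(\z)\,Q_n^+(\fz)+e_0(\z)\,P_n^+(\fz),\qquad n\ge 0,
\]
are of Smirnov class, but $e_{-1}$ has a simple pole at the origin (only $be_{-1}$ is Smirnov). So you must explain why the pole cancels. The paper's argument is that $-Q_n^+/P_n^+$ is the $n$-th Pad\'e approximant to $r_+=-e_0/(p_0e_{-1})$, hence $-p_0e_{-1}Q_n^++e_0P_n^+$ vanishes to exact order $n$ at the origin; in particular it is regular there and lies in $\hat H^2(\alpha)$. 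This same observation gives $\fz(H^2_J)_0\subset H^2_J$ simultaneously. For the inclusion $\check H^2(\alpha)\subseteq H^2_J$ one runs the identical argument on the dual side, using $\Delta(\z)e_{-n-1}(\bar\z)/b(\z)=-p_0\tilde e_{-1}Q_n^-(\fz)+\tilde e_0 P_n^-(\fz)$, which places $L^2(\alpha)\ominus H^2_J$ inside $\Delta\overline{\hat H^2_0(\alpha^{-1}\nu)}$.
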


\begin{proof}
First of all we note that if $f(\z)$ is of the form \eqref{amb16} then
\begin{equation}\label{amb19}
\begin{bmatrix}
f(\z)\\
\Delta(\z) f(\bar \z)/b(\z)
\end{bmatrix}=\begin{bmatrix}
e_{-1}&e_0\\
\tilde e_0&\tilde e_{-1}
\end{bmatrix}\begin{bmatrix} P_{-1}\\ P_0
\end{bmatrix}\circ\fz= \Phi^{-1}
\begin{bmatrix} P_{-1}\\ P_0
\end{bmatrix}\circ\fz\cdot \det \Phi.
\end{equation}
Therefore, by \eqref{abm15} and \eqref{abm4}, we have
$$
\frac 1 2 (\int_\bbT |f(\z)|^2 dm(\z)+\int_\bbT |f(\bar \z)|^2 dm(\z))=
\int_E\begin{bmatrix} P_{-1}\\ P_0
\end{bmatrix}^*(x) d\sigma(x) \begin{bmatrix} P_{-1}\\ P_0
\end{bmatrix}(x).
$$
That is, this map is an isometry.

Let 
$$
e_n(\z)=\cF (e_n)=-p_0 e_{-1}(\z) Q_n^+(\fz)+e_0(\z)P_n^+(\fz),\  n\ge 0.
$$
Since $-Q_n^+/P_n^+$ is the Pade approximation for $r_+$, this function has zero of exact  multiplicity $n$ at the origin. Thus, it is of Smirnov class, and therefore belongs to $\hat H^2(\alpha)$. It proves
$H^2_J\subseteq H^2(\alpha)$ and $\fz (H^2_J)_0\subset H^2_J$.

To show that $\check H^2(\alpha)\subset H^2_J$ we pass to the dual representation for the orthogonal complement
$$
\Delta(\z) e_{-n-1}(\bar \z)/b(\z)=-p_0 \tilde e_{-1}(\z) Q_n^-(\fz)+\tilde e_0(\z)P_n^-(\fz),\  n\ge 0.
$$

\end{proof}

\section{$j$-inner matrix functions and hidden singularity }

In what follows we always assume that all three assumptions $(i-iii)$ of Theorem \ref{thmain} on the set $E$ are fulfilled.

Recall that the orthogonal polynomials of the first $P_k(z)$ and second $Q_k(z)$ kind are given by the recurrence relations:
\begin{equation}\label{transfer1}
\begin{bmatrix}
p_0&      & & & \\
q_0-z&p_1& & & \\
p_1& q_1-z& p_2 & & \\
    & \ddots&\ddots&\ddots&\\
    &          &    p_n& q_n-z&p_{n+1}
\end{bmatrix}
\begin{bmatrix}
P_0&Q_0\\
P_1&Q_1\\
P_2&Q_2\\
\vdots&\vdots\\
P_{n+1}& Q_{n+1}
\end{bmatrix}
=
\begin{bmatrix}
p_0 &0\\
0&1\\
0&0\\
\vdots&\vdots\\
0& 0
\end{bmatrix}
\end{equation}
Therefore the so called transfer matrix
\begin{equation*}
\fA(z):=\begin{bmatrix}
P_n(z)&Q_n(z)\\
p_{n+1}P_{n+1}(z)&p_{n+1} Q_{n+1}(z)
\end{bmatrix}
\end{equation*}
has a kind of resolvent representation
\begin{equation}\label{transfer2}
\fA(z)=
\begin{bmatrix}
0 &\hdots&0&1&0\\
0 &\hdots&0&0&p_{n+1}
\end{bmatrix}
(T-zS)^{-1}
\begin{bmatrix}
p_0 &0\\
0&1\\
0&0\\
\vdots&\vdots\\
0& 0
\end{bmatrix},
\end{equation}
where $(T-zS)$ is the matrix of the system \eqref{transfer1}.

The Christoffel-Darboux  identities
\begin{equation}\label{transfer3}
\frac{\fA(z)^*j\fA(z)-j}{z-\bar z}
=\sum_{k=1}^{n}\begin{bmatrix} \overline{P_k(z)}\\
 \overline{Q_k(z)}
 \end{bmatrix}
 \begin{bmatrix}{P_k(z)}&
 {Q_k(z)}
 \end{bmatrix}, \quad
j=\begin{bmatrix}
0&1\\
-1&0
\end{bmatrix}.
\end{equation}
shows that $\fA(z)$ is $j$-expanding in the upper half-plane, i.e., the given in \eqref{transfer3} expression  is a positive matrix, and it is $j$-unitary on the real axis, i.e.,  $\fA(z)^*j\fA(z)-j=0$, $z\in \bbR$.

Let us  define the transfer matrix between $\hat H^2(\alpha)$ and $\check H^2(\alpha)$ spaces by the relation 
\begin{equation}\label{transfer4}
\begin{bmatrix} \hat p_0\hat e_{-1}(\z)& \hat e_{0}(\z)
\end{bmatrix}=\begin{bmatrix} \check p_0 \check e_{-1}(\z)& \check e_{0}(\z)
\end{bmatrix}
\fA(\fz(\z)).
\end{equation}
Note that by the duality \eqref{hsd0}
$$
\begin{bmatrix}\hat p_0 \hat e_{-1}(\z)&\hat e_{0}(\z)\\
\hat p_0 \widetilde{\hat e}_{0}(\z)& \widetilde{\hat e}_{-1}(\z)
\end{bmatrix}
=\begin{bmatrix} \check p_0\check e_{-1}(\z)& \check e_{0}(\z)\\
\check p_0 \widetilde{\check e}_{0}(\z)& \widetilde{\check e}_{-1}(\z)
\end{bmatrix}
\fA\circ\fz
$$
and therefore due to the Wronski's formula \eqref{abm4} it is analytic in $\Omega$.  
\begin{remark}
Note that $\fA(z)$ meets at the infinity the following normalization
\begin{equation}\label{00transfer1}
\fA (z)=\begin{bmatrix}
1/\lambda+\dots &(1/\lambda -\lambda)\frac 1 z+\dots \\
0+\dots &\lambda+\dots
\end{bmatrix},
\end{equation}
where $\lambda=\langle \hat e_0,\check e_0\rangle=\frac{\check e_0(0)}{\hat e_0(0)}$. Indeed, due to \eqref{transfer4}
$$
(\frac{\hat e_0}{\check e_0}\circ\fz^{-1})(z)\begin{bmatrix} z+\dots & 1\end{bmatrix}=
\begin{bmatrix} z+\dots & 1\end{bmatrix}\fA(z).
$$
Since $\fA$ is regular, moreover the leading term is diagonal,  and $\det \fA(z)=1$ from this asymptotics  we get \eqref{00transfer1}. In addition,  the first non-vanishing term in the low-left corner is of the form
\begin{equation}\label{000transfer1}
\begin{bmatrix}
0&1
\end{bmatrix}
\fA\begin{bmatrix}
1\\0
\end{bmatrix}=-(\check p_0^2/\lambda-\lambda \hat p_0^2)\frac 1 z+\dots,
\end{equation}
which completes the duality generated by the tilde-involution. 
\end{remark}

Further, by 
\eqref{abm11} $\fA(z)$ is $j$-unitary 
\begin{equation}\label{transfer5}
j= \fA j\fA^*\quad \text{a.e. on} \ E.
\end{equation}
One can prove that it is $j$-inner, that is, in addition to \eqref{transfer5} one has
$$
\frac{\fA(z)j\fA^*(z)-j}{z-\bar z}\ge 0, \ z\in \Omega.
$$
To this end  we prove analogs of the representation \eqref{transfer2} and of the  Christoffel-Darboux  identity \eqref{transfer3}, see Appendix.

The main result of this section is the following theorem. It reveals  the structure of all possible subspaces $H^2(\alpha)$. They are related to $j$-inner divisors of $j$-inner matrix function $\fA$.

\begin{theorem}\label{mainj}
Let $\check H^2(\alpha)\subset H^2(\alpha)\subset \hat H^2(\alpha)$ and $\fz H_0^2(\alpha)\subset H^2(\alpha)$. Similarly to \eqref{transfer4} we define 
\begin{eqnarray}
\begin{bmatrix} \hat p_0\hat e_{-1}(\z)& \hat e_{0}(\z)
\end{bmatrix}&=&\begin{bmatrix}  p_0  e_{-1}(\z)&  e_{0}(\z)
\end{bmatrix}
\fA_1(\fz(\z)),\label{transfer14}
\\
\begin{bmatrix}  p_0 e_{-1}(\z)& e_{0}(\z)
\end{bmatrix}&=&\begin{bmatrix} \check p_0 \check e_{-1}(\z)& \check e_{0}(\z)
\end{bmatrix}
\fA_2(\fz(\z)).\label{transfer15}
\end{eqnarray}
Then $\fA(z)=\fA_2(z)\fA_1(z)$ and both matrices are $j$-inner. Conversely, let $\fA_1(z)$ be a 
$j$-inner divisor of $\fA(z)$, which meets the normalization
\begin{equation}\label{transfer16}
\fA_1(\infty)=\begin{bmatrix}
1/\lambda_1&0\\0&\lambda_1
\end{bmatrix}, \quad 
\left(\begin{bmatrix}1 &0\end{bmatrix} (z\fA_1)\begin{bmatrix}0\\ 1\end{bmatrix}
\right)(\infty)=\frac 1 {\lambda_1} -\lambda_1
\end{equation}
Then
\begin{equation}\label{transfer17}
r_+\circ\fz=-\frac{e_0(\z)}{p_0e_{-1}(\z)}, \ r_-\circ\fz=-\frac{\tilde e_0(\z)}{p_0\tilde e_{-1}(\z)},\ 
\Delta \bar e_{-1-k}=b\tilde e_k,
\end{equation}
are the resolvent functions related to a reflectionless matrix $J=J(H^2(\alpha))$. Moreover,
\begin{equation}\label{transfer18}
k^{\alpha}(\z,\z_0)=
\frac{\begin{bmatrix} p_0 e_{-1}(\z)&e_0(\z)\end{bmatrix}
\begin{bmatrix}0&1\\-1&0\end{bmatrix}
\begin{bmatrix} p_0\overline{e_{-1}(\z_0)}\\
\overline{e_0(\z_0)}\end{bmatrix}
}
{\fz(\z)-\bar z_0}
\end{equation}
is the reproducing kernel of the corresponding $H^2(\alpha)$.
\end{theorem}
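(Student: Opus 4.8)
\textbf{Proof proposal for Theorem \ref{mainj}.}

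The plan is to treat the two directions separately, since the ingredients are different. For the forward direction, observe that $\fA_1$ and $\fA_2$ are defined exactly as $\fA$ was in \eqref{transfer4}, merely factoring the passage $\check H^2(\alpha)\to \hat H^2(\alpha)$ through the intermediate space $H^2(\alpha)$; hence $\fA = \fA_2\fA_1$ is immediate from \eqref{transfer4}, \eqref{transfer14}, \eqref{transfer15}. To see that each $\fA_i$ is $j$-inner I would verify the three defining properties. Analyticity in $\Omega$: exactly as in the discussion after \eqref{transfer4}, one augments the $1\times 2$ row relation \eqref{transfer14} to a $2\times 2$ matrix relation using the tilde-duals $\widetilde{\hat e}_0,\widetilde{\hat e}_{-1}$ and the $e_{-1},e_0$ for $H^2(\alpha)$ (which are of the form \eqref{abm2} by Lemma \ref{lemma45} and the remark following it), so that the Wronskian denominator \eqref{abm4} clears all the poles, leaving a function analytic in $\Omega$ that descends from $\bbD$ to $\Omega$. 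The $j$-unitarity on $E$ \eqref{transfer5} follows because both the hat-data and the intermediate data satisfy the Wronski identity \eqref{abm4}/\eqref{abm11} with the same right-hand side, so forming $\fA_1^* j \fA_1$ on $\bbT$ reduces to a ratio of identical Wronskians. The positivity $\frac{\fA_1 j \fA_1^* - j}{z-\bar z}\ge 0$ in $\Omega$ is the genuine work: I would establish it by proving, exactly as announced in the text for $\fA$ itself, an analog of the Christoffel--Darboux identity \eqref{transfer3} together with a resolvent-type representation \eqref{transfer2} for the ``truncated'' transfer matrix associated to the chain $e_0, be_{-1}, \dots$ built in Subsection \ref{subsec25}; this is the content deferred to the Appendix, and it is the main obstacle of this half.

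For the converse direction, suppose $\fA_1$ is a $j$-inner divisor of $\fA$ with the normalization \eqref{transfer16}. I would first \emph{define} the candidate functions by \eqref{transfer14}, i.e. set $\begin{bmatrix} p_0 e_{-1} & e_0\end{bmatrix} := \begin{bmatrix}\hat p_0 \hat e_{-1} & \hat e_0\end{bmatrix}\fA_1(\fz)^{-1}$ (a $j$-inner matrix is invertible with $\det \fA_1 = 1$), and $\begin{bmatrix}\check p_0\check e_{-1} & \check e_0\end{bmatrix} := \begin{bmatrix}p_0 e_{-1} & e_0\end{bmatrix}\fA_2(\fz)$ with $\fA_2 := \fA\fA_1^{-1}$, which is also $j$-inner by the elementary theory of $j$-inner divisors (if $\fA_1^{-1}\fA$ were not $j$-contractive the product representation would fail). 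One then reads off $r_\pm$ by \eqref{transfer17}: the ratio $-e_0/(p_0 e_{-1})$ is of Nevanlinna class and meromorphic in $\Omega$ precisely because the numerator and denominator are of Smirnov class (being obtained from $\hat e_0, \hat e_{-1}$ — which are of the form \eqref{abm2} — by multiplication by the entries of $\fA_1^{-1}$, and $\fA_1$ being $j$-inner forces these entries into the Smirnov class), and the reflectionless condition \eqref{kli4} is exactly the $j$-unitarity \eqref{transfer5} of $\fA_1$ (and hence of $\fA_2$) on $E$, transported through \eqref{transfer17} via the computation in \eqref{dlyasasha}. The divisor $D$ in Definition \ref{def31} is recovered as the zero divisor of $e_0$, which by Lemma \ref{lemma45} must be of the form \eqref{abm2}; that $e_0$ is genuinely character automorphic with a well-defined character $\alpha$ follows because $\fA_1$, being analytic and single-valued in $\Omega$ (its lift to $\bbD$ satisfies $\fA_1\circ\gamma = \fA_1$), intertwines the characters of the hat-data with those of the new data.

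It remains to identify $H^2(\alpha) := \{e^\alpha_n\}_{n\ge 0}$-span with the right intermediate space and to verify \eqref{transfer18}. For the reproducing-kernel formula I would argue as follows: the right-hand side of \eqref{transfer18}, call it $\kappa(\z,\z_0)$, is the standard de Branges--Rovnyak / Christoffel--Darboux kernel attached to the $j$-inner function built from the column $\begin{bmatrix} p_0 e_{-1} & e_0\end{bmatrix}$; using $\fz e^\alpha_n = p_n e^\alpha_{n-1} + q_n e^\alpha_n + p_{n+1}e^\alpha_{n+1}$ from \eqref{hs24}, a telescoping Christoffel--Darboux summation (the analog of \eqref{transfer3}) shows $\kappa(\z,\z_0) = \sum_{n\ge 0}\overline{e^\alpha_n(\z_0)}\,e^\alpha_n(\z)$, which is the reproducing kernel of $H^2(\alpha)$; the normalization \eqref{transfer16} is exactly what pins down the indexing so that the sum starts at $n=0$ rather than being shifted. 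The inclusions \eqref{hs16}--\eqref{hs17} for this $H^2(\alpha)$ then hold because $\fA_2$ and $\fA_1$ are themselves $j$-inner (so the chain $\check H^2\subseteq H^2\subseteq \hat H^2$ corresponds to a factorization of $\fA$ through $j$-contractive factors), placing this $J$ among those constructed in Subsection \ref{subsec25} via Theorem \ref{th44}. The main obstacle in this half is the bookkeeping of Smirnov-class membership and mutual simplicity of the singular inner parts of $e_0$ and $be_{-1}$ under the $\fA_1^{-1}$ transformation; this is precisely where one invokes that $\fA_1$ is $j$-\emph{inner} (not merely $j$-unitary on $E$) together with the uniqueness clause of Lemma \ref{lemma45}.
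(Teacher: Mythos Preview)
Your forward direction is essentially the paper's argument: the Appendix supplies the resolvent representation and Christoffel--Darboux identity for each of the two pairs, which yields $j$-innerness of $\fA_1,\fA_2$ just as for $\fA$.

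The converse direction, however, has real gaps. First, your claim that $r_+=-e_0/(p_0e_{-1})$ is Nevanlinna ``because the numerator and denominator are of Smirnov class'' does not give $\Im r_+\ge 0$; a ratio of Smirnov functions need not map the upper half-plane to itself. The paper obtains this from the \emph{kernel inequality}: writing $k^\alpha(\zeta,\zeta)$ via \eqref{transfer18} and using that $\fA_2$ is $j$-inner in \eqref{transfer15} gives $k^\alpha(\zeta,\zeta)\ge \check k^\alpha(\zeta,\zeta)\ge 0$, which is equivalent to $\Im r_+(z)/\Im z\ge 0$. Second, you never show that $e_0$, $be_{-1}$ (or $k^\alpha_{\zeta_0}$) actually lie in $\hat H^2(\alpha)$; the paper does this by combining the kernel domination $k^\alpha\le \hat k^\alpha$ (from $j$-innerness of $\fA_1$) with Lemma~\ref{lark}, an abstract reproducing-kernel extension lemma you do not invoke.

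The most serious omission is the endgame. Your telescoping argument defines $H^2(\alpha)$ as the span of the $e^\alpha_n$ and then asserts the inclusions \eqref{hs16}--\eqref{hs17} ``because $\fA_1,\fA_2$ are $j$-inner''; this is circular, since the $e^\alpha_n$ are generated by the Jacobi recursion for a matrix $J$ whose very existence as an element of $J(E)$ has not yet been secured, and $j$-innerness of the factors does not by itself produce an intermediate closed invariant subspace. The paper instead (i) constructs $J$ from $r_\pm$ and checks via the Wronski identity that its spectrum is $E$, so $J\in J(E)$; and (ii) \emph{crucially invokes the standing hypothesis} (assumption~(i) of Theorem~\ref{thmain}) that every $J\in J(E)$ has purely a.c.\ spectrum, which lets Theorem~\ref{thspectral10} furnish the functional model $J=J(H^2(\alpha))$. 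Only then does Lemma~\ref{lemma45} pin the $e_0,e_{-1}$ you started with to the reproducing kernels of that space. You never use the a.c.\ assumption, and without it the converse does not close.
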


In the proof we will use a well known general property of  Hilbert spaces of analytic functions with reproducing kernels, which we give here for a reader convenience.

\begin{lemma}\label{lark}
Let $H_O$ be a Hilbert space of analytic in a domain $O$ functions and $k_{\z_0}=k(\z,\z_0)$ be the reproducing kernel. Assume that an analytic  function $f$in $O_1\subset O$ is such that
\begin{equation}\label{transfer19}
|\sum f(\z_n)\bar \xi_n|^2\le C\sum_{n,m}k(\z_n,\z_m)\bar \xi_n\xi_m,
\end{equation}
 for all finite collections $\{\z_n\}$ in $O_1$. Then this function has an analytic extension in $O$, 
 $f(\z)=g(\z)$, $\z\in O_1$. Moreover, $g\in H_O$, $\|g\|^2\le C$.
\end{lemma}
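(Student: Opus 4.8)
The plan is to prove Lemma \ref{lark} by a standard reproducing-kernel duality argument. First I would observe that the hypothesis \eqref{transfer19} is precisely the statement that the linear functional $L$ defined on the dense subspace $\operatorname{span}\{k_{\z_n}:\z_n\in O_1\}\subset H_O$ by $L(k_{\z_n})=\overline{f(\z_n)}$ is bounded with norm $\le\sqrt C$. Indeed, if $v=\sum\bar\xi_n k_{\z_n}$ then $\|v\|^2=\sum_{n,m}k(\z_n,\z_m)\bar\xi_n\xi_m$ and $L(v)=\sum\bar\xi_n\overline{f(\z_n)}$, so \eqref{transfer19} reads $|L(v)|^2\le C\|v\|^2$. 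Since the kernels at points of $O_1$ span a dense subspace of $H_O$ (this is automatic for a reproducing-kernel Hilbert space of analytic functions on a connected domain, because a function orthogonal to all $k_{\z_0}$, $\z_0\in O_1$, vanishes on $O_1$ and hence, being analytic, vanishes identically), the functional $L$ extends uniquely to a bounded linear functional on all of $H_O$ with the same norm.

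Next I would invoke the Riesz representation theorem: there is a unique $g\in H_O$ with $L(h)=\langle h,g\rangle$ for all $h\in H_O$ and $\|g\|=\|L\|\le\sqrt C$. Applying this to $h=k_{\z_0}$ for an arbitrary $\z_0\in O_1$ gives
\begin{equation*}
\overline{f(\z_0)}=L(k_{\z_0})=\langle k_{\z_0},g\rangle=\overline{\langle g,k_{\z_0}\rangle}=\overline{g(\z_0)},
\end{equation*}
using the reproducing property $\langle g,k_{\z_0}\rangle=g(\z_0)$. Hence $g(\z_0)=f(\z_0)$ for every $\z_0\in O_1$, so $g$ is an analytic extension of $f$ from $O_1$ to all of $O$, and $\|g\|^2=\|L\|^2\le C$. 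This is exactly the asserted conclusion.

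The only point that requires a word of care — and the closest thing to an obstacle — is the density of $\operatorname{span}\{k_{\z_0}:\z_0\in O_1\}$ in $H_O$; this is where one uses that the functions of $H_O$ are analytic on the connected domain $O$ and that $O_1$, being a nonempty open subset, is a set of uniqueness for analytic functions on $O$. Everything else is the routine Riesz-representation machinery. I would present the argument in the three short steps above: (1) reinterpret \eqref{transfer19} as boundedness of $L$ on the dense span of kernels; (2) extend $L$ to $H_O$ and represent it by some $g$ with $\|g\|\le\sqrt C$; (3) evaluate against $k_{\z_0}$ to identify $g|_{O_1}=f$, so $g$ is the desired extension.
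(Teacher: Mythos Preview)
Your proof is correct and follows essentially the same route as the paper: define a bounded linear functional on the span of reproducing kernels by $k_{\z_0}\mapsto\overline{f(\z_0)}$, invoke Riesz representation to obtain $g\in H_O$ with $\|g\|^2\le C$, and then use the reproducing property to identify $g|_{O_1}=f$. The paper's argument is the same three-line computation, only more terse; your added remark on density of $\operatorname{span}\{k_{\z_0}:\z_0\in O_1\}$ (via analyticity on a connected domain) makes explicit what the paper simply asserts with the phrase ``densely defined''.
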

\begin{proof}
Consider the densely defined functional
$$
\Lambda(\sum_m k_{\z_m}\xi_m)=\sum_m\overline{f(\z_m)}\xi_m
$$
By \eqref{transfer19} it is bounded. Therefore there exits $g\in H_O$, $\|g\|^2\le C$, such that
$\overline{f(\z)}=\Lambda(k_\z)=\langle  k_{\z}, g\rangle$, $\z\in O_1$. But $k_\z$ is the reproducing kernel. Thus 
$f(\z)=g(\z)$.
\end{proof}

\begin{proof}[Proof of Theorem \ref{mainj}]
For both pairs  $H^2(\alpha), \hat H^2(\alpha)$ and  $\check H^2(\alpha),  H^2(\alpha)$ we can define the transfer matrices,  find their representations in the form
\eqref{transfer8} and, therefore, prove that they are  $j$-inner by  \eqref{0transfer12}.

Conversely, let $\fA=\fA_2\fA_1$. 
We define $e_0, p_0 e_{-1}$ and then $k_{\z}^{\alpha}$ by \eqref{transfer18}.
Let us  note that by \eqref{transfer15} 
$$
k^\alpha(\z,\z)\ge \check k^\alpha(\z,\z)\ge 0.
$$
 That is,
$\Im r_+(z)\ge 0$ in the upper half-plane, and the normalization \eqref{transfer16} implies
$(zr_+)(\infty)=-1$, see \eqref{transfer17}. 

Our first goal is to show that $k^\alpha_{\z_0}$ is at least in $\hat H^2(\alpha)$. Note that $k^\alpha(\z,\z_0)$ is a positive definite kernel. In particular,
\begin{equation}\label{transfer20}
|\sum k_{\z_0}(\z_n)\bar \xi_n|^2\le k(\zeta_0,\z_0)\sum_{n,m}k(\z_n,\z_m)\bar \xi_n\xi_m,
\end{equation}
By \eqref{transfer18} the kernel $\hat k(\z,\z_0)$ dominates the kernel  $k(\z,\z_0)$,
\begin{equation}\label{transfer21}
\sum_{n,m}k(\z_n,\z_m)\bar \xi_n\xi_m \le \sum_{n,m}\hat k(\z_n,\z_m)\bar \xi_n\xi_m.
\end{equation}
 Thus
\eqref{transfer20}, \eqref{transfer21} and Lemma \ref{lark} imply that $k_{\z_0}\in \hat H^2(\alpha)$.
In particular,
$$
e_0 e_0(0)=k^\alpha_0\in \hat H^2(\alpha)\ \text{and}\ e_{-1}\overline{e_0(\zeta_0)}=\frac{\fz-\bar z_0}{p_0}k^{\alpha}_{\z_0}+e_0 \overline{e_{-1}(\z_0)}\in\frac 1 b \hat H^2(\alpha\mu).
$$
Note, by the way,  that  with respect to the norm we have just an estimation $\|k_{\z_0}\|_{\hat H^2(\alpha)}^2\le k(\z_0,\z_0)$.

In the same way we obtain similar properties for the dual functions
$\tilde e_0\in\hat H^2(\alpha^{-1}\mu^{-1}\nu)$ and $b\tilde e_{-1}\in \hat H^2(\alpha^{-1}\nu)$;
the function $r_-=-\tilde e_0/{p_0\tilde e_{-1} }$ is of Nevanlinna class.

Let us point out that \eqref{transfer14} defines the product $p_0e_{-1}(\z)$, but not the constant $p_0 $ and the function $e_{-1}(\z)$ along. The value of $p_0$ is fixed by the condition $(zr_-)(\infty)=-1$. 
Now, to the given Nevanlinna class functions $r_\pm$ we associate the one-sided Jacobi matrices
$J_\pm$. Together with the constant $p_0$ they form the two sided matrix $J$. 

For this $J$, the related resolvent function $R_{-1,-1}$ and $R_{0,0}$ are given by \eqref{kli6}. Since $\det\fA_1(z)=1$ the  functions $p_0e_{-1}, e_0, p_0\tilde e_0, \tilde e_{-1}$ satisfy  the Wronski's identity \eqref{abm4}. As it was shown these functions are analytic in $\Omega$, therefore the measures, corresponding the Nevanlinna class functions $R_{-1,-1}$ and $R_{0,0}$,  are supported on $E$. Thus the spectrum of $J$ is $E$, and $J\in J(E)$.

We still have $\| e_0\|_{L^2(\alpha)}\le 1$ and $\|e_{-1}\|_{L^2(\alpha)}\le 1$. However, the assumption that all Jacobi matrices of $J(E)$ have absolutely continuous spectrum implies that all of them are marked by $H^2(\alpha)$'s. That is, $J=J(H^2(\alpha))$.
Note that  $e_0$ and $be_{-1}$ are of Smirnov class, and due to the Wronski's identity  have mutually simple inner parts.
By Lemma \ref{lemma45} they correspond to the  reproducing kernels in $H^2(\alpha)$ and this finishes the proof. By the way we get
$\| e_0\|_{L^2(\alpha)}=\|e_{-1}\|_{L^2(\alpha)}= 1$.

\end{proof}

We finish this section with the following lemma.

\begin{lemma}
Let $\hat e^\alpha_0(0)=\check e^\alpha_0(0)$. Then $\hat H^2(\alpha)=\check H^2(\alpha)$.
\end{lemma}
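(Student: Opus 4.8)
The idea is to reduce the equality of the two Hardy spaces to the equality of two one-dimensional ``defect'' subspaces, using the decomposition \eqref{00hs14} together with the normalization \eqref{bishs14}. Recall from \eqref{01hs14} that for every $\alpha$ we have $\Delta^2(0)\le\check k^\alpha(0)\le\hat k^\alpha(0)\le 1$, and from \eqref{bishs14} that $\hat k^\alpha(0)\,\check k^{\alpha^{-1}\nu}(0)=\Delta^2(0)$. Applying the latter with $\alpha$ replaced by $\alpha^{-1}\nu$ gives $\hat k^{\alpha^{-1}\nu}(0)\,\check k^{\alpha}(0)=\Delta^2(0)$ as well. First I would observe that the hypothesis $\hat e^\alpha_0(0)=\check e^\alpha_0(0)$ is, by the definitions $\hat e^\alpha=\hat k^\alpha/\|\hat k^\alpha\|$ and $\check e^\alpha=\check k^\alpha/\|\check k^\alpha\|$ together with $\|\hat k^\alpha\|^2=\hat k^\alpha(0)$, $\|\check k^\alpha\|^2=\check k^\alpha(0)$, simply the statement $\sqrt{\hat k^\alpha(0)}=\sqrt{\check k^\alpha(0)}$, i.e. $\hat k^\alpha(0)=\check k^\alpha(0)$. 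Combining this with the two displayed identities forces $\hat k^\alpha(0)=\check k^\alpha(0)=\Delta(0)$, and also $\hat k^{\alpha^{-1}\nu}(0)=\check k^{\alpha^{-1}\nu}(0)=\Delta(0)$ (since $\hat k^{\alpha^{-1}\nu}(0)=\Delta^2(0)/\check k^\alpha(0)=\Delta(0)$).

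Next I would use these equalities of norms to upgrade to equality of spaces. Since $\check H^2(\alpha)\subseteq\hat H^2(\alpha)$ is a closed subspace, $\check H^2(\alpha)=\hat H^2(\alpha)$ is equivalent to the orthogonal complement $\hat H^2(\alpha)\ominus\check H^2(\alpha)$ being $\{0\}$. From \eqref{00hs14}, the annihilator of $\hat H^2(\alpha)$ in $L^2(\alpha)$ is $\Delta\overline{\check H^2_0(\alpha^{-1}\nu)}$, while the annihilator of $\check H^2(\alpha)$ is $\Delta\overline{\hat H^2_0(\alpha^{-1}\nu)}$; hence $\hat H^2(\alpha)\ominus\check H^2(\alpha)$ is the orthogonal complement (inside $L^2(\alpha)$) of the $L^2$-closure of $\Delta\overline{\check H^2_0(\alpha^{-1}\nu)}+\hat H^2(\alpha)$, equivalently it is unitarily equivalent (via $g\mapsto\Delta\bar g$) to $\hat H^2_0(\alpha^{-1}\nu)\ominus\check H^2_0(\alpha^{-1}\nu)$. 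So it suffices to show $\hat H^2_0(\beta)=\check H^2_0(\beta)$ for $\beta=\alpha^{-1}\nu$, and by Corollary \ref{cor33} this is the same as $\hat H^2(\beta\mu^{-1})=\check H^2(\beta\mu^{-1})$. But a cleaner route is: $\check H^2(\alpha)=\{\check e^\alpha\}\oplus\check H^2_0(\alpha)$ and $\hat H^2(\alpha)=\{\hat e^\alpha\}\oplus\hat H^2_0(\alpha)$, and since $\check H^2_0(\alpha)=b\check H^2(\alpha\mu^{-1})\subseteq b\hat H^2(\alpha\mu^{-1})=\hat H^2_0(\alpha)$, the containment $\check H^2(\alpha)\subsetneq\hat H^2(\alpha)$ can only fail to be equality if either $\check e^\alpha\notin\mathrm{span}\{\hat e^\alpha\}\oplus\hat H^2_0(\alpha)$ in a nontrivial way or $\check H^2_0(\alpha)\subsetneq\hat H^2_0(\alpha)$; iterating the latter down the tower $\alpha\mu^{-n}$ and using that $\check k^{\alpha\mu^{-n}}(0)=\hat k^{\alpha\mu^{-n}}(0)$ propagates (this is where the character-shift invariance of the relevant identity must be checked), one reduces to the single defect vector.

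Concretely, the main computation I would carry out is this: write $\hat k^\alpha = c\,\check k^\alpha + w$ with $w\perp\check H^2(\alpha)$, $w\in\hat H^2(\alpha)\ominus\check H^2(\alpha)$. Evaluating the reproducing property at $0$ and using $\check k^\alpha(0)=\hat k^\alpha(0)=\Delta(0)$ together with $\|\hat k^\alpha\|^2=\hat k^\alpha(0)$, $\|\check k^\alpha\|^2=\check k^\alpha(0)$, a short pairing argument gives $\|w\|^2 = \hat k^\alpha(0) - \hat k^\alpha(0)^2/\check k^\alpha(0) = \Delta(0)-\Delta(0)^2/\Delta(0)=0$, so $w=0$ and $\hat k^\alpha\in\check H^2(\alpha)$. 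Since $\hat k^\alpha$ spans the orthogonal complement of $\hat H^2_0(\alpha)$ in $\hat H^2(\alpha)$, and $\hat H^2_0(\alpha)=b\hat H^2(\alpha\mu^{-1})$, we now deduce $\hat H^2(\alpha)=\{\hat k^\alpha\}\oplus b\hat H^2(\alpha\mu^{-1})$ is contained in $\check H^2(\alpha)$ once we know $b\hat H^2(\alpha\mu^{-1})\subseteq\check H^2(\alpha)$; but $\check k^{\alpha\mu^{-1}}(0)=\hat k^{\alpha\mu^{-1}}(0)$ follows because $\hat k^{\alpha\mu^{-1}}(0)=\Delta^2(0)/\check k^{\alpha^{-1}\mu\nu}(0)$ and $\check k^{\alpha^{-1}\mu\nu}(0)\ge\Delta^2(0)/\hat k^{\alpha^{-1}\mu\nu}(0)$, combined with $\hat k^{\alpha^{-1}\nu}(0)=\Delta(0)$ and the monotonicity properties — here one has to be a little careful, and the honest argument is to run the same $w=0$ computation one level down after establishing $\hat k^{\alpha\mu^{-1}}(0)=\check k^{\alpha\mu^{-1}}(0)$ directly. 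The cleanest way to get that last equality, and the point I expect to be the main obstacle, is to show that the hypothesis ``$\hat k(0)=\check k(0)=\Delta(0)$'' is automatically inherited by the character $\alpha\mu^{-1}$ (and hence by all $\alpha\mu^{-n}$, $n\in\bbZ$): this should follow from \eqref{bishs14} applied twice as above, namely $\hat k^{\alpha\mu^{-1}}(0)=\Delta^2(0)/\check k^{\mu\alpha^{-1}\nu}(0)$ and the already-known $\hat k^{\alpha^{-1}\nu}(0)=\Delta(0)$, but reconciling the two requires the symmetry $\check k^{\mu\beta}(0)$ versus $\check k^{\beta}(0)$ which is not literally one of the displayed formulas; I would supply it via the tower decomposition following \eqref{hs22} or via the $j$-inner description of Section 5. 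Once all $\hat k^{\alpha\mu^{-n}}(0)=\Delta(0)$, the identity $\hat H^2(\alpha)=\{\hat k^\alpha\}\oplus b\hat H^2(\alpha\mu^{-1})=\dots$ telescopes and Lemma \ref{l42}-type density (orthogonality to all $e_n$ forces a zero of infinite order, hence the function is $0$) shows $\hat H^2(\alpha)$ and $\check H^2(\alpha)$ have the same orthonormal basis, so they coincide.
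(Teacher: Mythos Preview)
Your opening move --- passing from $\hat e^\alpha(0)=\check e^\alpha(0)$ to $\hat k^\alpha=\check k^\alpha$ as functions --- is correct and matches the paper's first step (the paper does it in one line: $\|\hat e_0-\check e_0\|^2=2-2\check e_0(0)/\hat e_0(0)=0$). But from there the two proofs diverge, and yours has a real gap.

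The claim that the hypothesis together with \eqref{bishs14} forces $\hat k^\alpha(0)=\check k^\alpha(0)=\Delta(0)$ is false. Writing $K=\hat k^\alpha(0)=\check k^\alpha(0)$, the duality \eqref{bishs14} gives only $\hat k^{\alpha^{-1}\nu}(0)=\check k^{\alpha^{-1}\nu}(0)=\Delta^2(0)/K$; nothing pins $K$ to $\Delta(0)$. Consequently your attempt to propagate $\hat k^{\alpha\mu^{-1}}(0)=\check k^{\alpha\mu^{-1}}(0)$ along the tower has no starting input, and indeed this propagation does not follow from Hardy-space bookkeeping alone. You correctly identify this as ``the main obstacle,'' but the suggested fixes (the tower \eqref{hs22} or Section~5) are not made concrete, and the Section~5 route is exactly what the paper uses --- and it requires an ingredient you never invoke.

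That ingredient is assumption (i) of Theorem~\ref{thmain}: every $J\in J(E)$ has \emph{absolutely continuous} spectrum. The paper argues as follows. From $\hat e_0=\check e_0$ one gets $\fA\bigl(\begin{smallmatrix}0\\1\end{smallmatrix}\bigr)=\bigl(\begin{smallmatrix}0\\1\end{smallmatrix}\bigr)$, so the transfer matrix has the triangular form $\fA=\bigl(\begin{smallmatrix}1&0\\B(z)&1\end{smallmatrix}\bigr)$ with $B$ analytic in $\Omega$, of Nevanlinna class, and with $\Im B=0$ a.e.\ on $E$; hence the Herglotz measure of $B$ is singular and supported on $E$. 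On the other hand $-\hat r_+^{-1}=-\check r_+^{-1}+B$, and assumption (i) forces both $-\hat r_+^{-1}$ and $-\check r_+^{-1}$ to have absolutely continuous measures on $E$. A singular measure which is the difference of two absolutely continuous ones is zero, so $B\equiv 0$, $\fA\equiv I$, and $\hat H^2(\alpha)=\check H^2(\alpha)$. The remark following the lemma in the paper stresses precisely this: ``it is important that the reflectionless measures are absolutely continuous.'' Your purely inner-function/tower argument tries to do without (i), and that is where it breaks.
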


\begin{proof}
Since $\check H^2(\alpha)\subseteq \hat H^2(\alpha)$, and $\hat k^{\alpha}(\z)=\hat e(\z)\hat e(0)$ is the reproducing kernel in it, we have
$$
\|\hat e_0-\check e_0\|^2=2-2\frac{\check e_0(0)}{\hat e_0(0)}=0.
$$
That is, $\hat e_0(\z)=\check e_0(\z)$. Then $\fA\begin{bmatrix}0\\1\end{bmatrix}=\begin{bmatrix}0\\1\end{bmatrix}$. Since $\det \fA=1$ this matrix is of the form
$$
\fA(z)=\begin{bmatrix}1&0\\B(z)& 1\end{bmatrix}.
$$

Since $\fA(z)$ is $j$-inner in the domain $\Omega$ we get that $B(z)$ is analytic in $\Omega$, belongs to the Nevanlinna class, $\Im B(z)/\Im z\ge 0$, and $\Im B(z)=0$ a.e. on $E$. Thus the corresponding   measure is singular and supported on $E$. 

On the other hand,
$$
-\hat r_+^{-1}(z)=-\check r_+^{-1}(z)+B(z).
$$
Since both functions $-\hat r_+^{-1}(z)$ and $-\check r^{-1}_+(z)$ correspond to the absolutely continuous measures on $E$, $B(z)=0$. Therefore $\hat J=\check J$ and $\hat H^2(\alpha)=\check H^2(\alpha)$.
\end{proof}

\begin{remark}
The same proof is valid in the case $H_1^2(\alpha)\subseteq H_2^2(\alpha)$ and $k^\alpha_1(0)=k^{\alpha}_2(0)$. Of course, it is important that the reflectionless measures are absolutely continuous. 
\end{remark}

\section{Shift invariant measure on $J(E)$. Ergodicity}

As it was shown in the previous section $\hat H^2(\alpha)=\check H^2(\alpha)$ if and only if
$\hat k^\alpha(0)=\check k^\alpha(0)$.

\begin{lemma}
$\hat k^{\alpha}(0)$ is an upper semi-continuous function on $\Gamma^*$.
\end{lemma}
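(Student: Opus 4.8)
The plan is to show upper semi-continuity of $\alpha\mapsto \hat k^\alpha(0)$ directly from the extremal characterization of the reproducing kernel value. Recall that $\hat k^\alpha(0)=\sup\{|f(0)|^2: f\in\hat H^2(\alpha),\ \|f\|_{L^2}\le 1\}$; equivalently $\sqrt{\hat k^\alpha(0)}=\|\hat k^\alpha\|=\sup_{\|f\|\le 1}|\langle f,\hat k^\alpha\rangle| = \sup_{\|f\|\le1}|f(0)|$. So the value is a supremum of evaluation functionals over the unit ball of $\hat H^2(\alpha)$, and the natural route is: take $\alpha_n\to\alpha$ in $\Gamma^*$, choose near-extremal $f_n\in\hat H^2(\alpha_n)$ with $\|f_n\|\le1$ and $|f_n(0)|^2\ge \hat k^{\alpha_n}(0)-\varepsilon_n$, extract a weakly convergent subsequence $f_n\rightharpoonup f$ in $L^2(\bbT)$, verify that the limit $f$ lies in $\hat H^2(\alpha)$, and conclude $|f(0)|^2\le \hat k^\alpha(0)$ while $|f(0)|^2 \ge \limsup_n \hat k^{\alpha_n}(0)$ along the subsequence. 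This yields $\limsup_{\alpha_n\to\alpha}\hat k^{\alpha_n}(0)\le \hat k^\alpha(0)$, which is exactly upper semi-continuity.

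First I would set up the compactness: since $\{f_n\}$ is bounded in the Hilbert space $L^2(\bbT)$ (indeed in $H^2$, a weakly closed subspace), a subsequence converges weakly to some $f\in H^2$. Weak $L^2$ convergence gives pointwise convergence of $f_n(\z)\to f(\z)$ at every interior point $\z\in\bbD$, because evaluation at an interior point is a bounded linear functional on $H^2$ (pairing against the Szegő kernel $(1-\bar\z\cdot)^{-1}$); in particular $f_n(0)\to f(0)$, so $|f(0)|^2=\lim |f_n(0)|^2 \ge \limsup_n\hat k^{\alpha_n}(0)$. Next I must check the character-automorphy: for each fixed $\gamma\in\Gamma$ we have $f_n\circ\gamma=\alpha_n(\gamma)f_n$. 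Composition with the fixed Möbius transformation $\gamma$ is a bounded operator on $L^2(\bbT)$ (with Jacobian $|\gamma'|$ bounded above and below on $\bbT$), hence weakly continuous, so $f_n\circ\gamma\rightharpoonup f\circ\gamma$; on the other hand $\alpha_n(\gamma)\to\alpha(\gamma)$ by definition of the topology on $\Gamma^*$ (convergence on each element of $\Gamma$), and $\alpha_n(\gamma)f_n\rightharpoonup \alpha(\gamma)f$. Comparing the two weak limits gives $f\circ\gamma=\alpha(\gamma)f$ for all $\gamma\in\Gamma$, so $f\in\hat H^2(\alpha)$. Finally $\|f\|_{L^2}\le\liminf\|f_n\|_{L^2}\le 1$ by weak lower semi-continuity of the norm, so $|f(0)|^2\le\hat k^\alpha(0)$ by definition of the reproducing kernel value, and combining the two inequalities completes the argument.

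The step I expect to be the only real subtlety is the passage $f_n\circ\gamma\rightharpoonup f\circ\gamma$, i.e.\ the weak continuity of precomposition by a fixed $\gamma\in\Gamma$ on $L^2(\bbT)$: one needs that the change-of-variables operator $g\mapsto \sqrt{|\gamma'|}\,(g\circ\gamma)$ is unitary (or at least bounded with bounded inverse) on $L^2(\bbT,dm)$, which is standard since $\gamma$ is a Möbius automorphism of $\bbD$ extending smoothly to $\bbT$ with nonvanishing derivative, but it does require that one works with the correct density weight $|\gamma'|$ — this is exactly the weight appearing in the operator $P^\alpha$ in \eqref{hs7}. Everything else — boundedness of point evaluations on $H^2$, weak lower semi-continuity of the norm, the extremal description of $\hat k^\alpha(0)$ — is routine Hilbert-space reproducing-kernel theory. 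One can also phrase the whole proof sequentially rather than along subsequences by noting that upper semi-continuity is equivalent to: for every $c$, the set $\{\alpha:\hat k^\alpha(0)\ge c\}$ is closed; the subsequential argument above shows any limit point of this set is again in it.

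\medskip

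\noindent\emph{Remark on why lower semi-continuity can fail.} It is worth noting that the corresponding lower semi-continuity is false in general: the map $\alpha\mapsto\check k^\alpha(0)$ need not be continuous, and the discrepancy between $\hat k^\alpha$ and $\check k^\alpha$ is precisely the obstruction measured by the failure of DCT. The weak-limit argument only controls one direction because the limit function $f$ automatically satisfies the norm bound but may fail to be extremal for the limit space $\hat H^2(\alpha)$ — the extremal function for $\hat H^2(\alpha)$ could have strictly larger value at $0$ than any weak limit of extremal functions for nearby spaces, since the spaces $\hat H^2(\alpha_n)$ can be ``smaller'' in a way that does not persist in the limit.
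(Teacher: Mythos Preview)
Your proof is correct and follows essentially the same approach as the paper: take (near-)extremizers along a sequence $\alpha_n\to\alpha$, extract a weak $H^2$ limit, and observe it lies in $\hat H^2(\alpha)$ with norm at most $1$. The paper is terser---it uses the normalized reproducing kernels $\hat e^{\alpha_n}$ directly as extremizers and simply asserts the weak limit lies in $\hat H^2(\beta)$---while you supply the character-automorphy verification; note that this step can be done a bit more simply via pointwise convergence in $\bbD$ (which weak $H^2$ convergence already gives) by checking $f(\gamma(\zeta))=\alpha(\gamma)f(\zeta)$ at each interior point, rather than via the $L^2(\bbT)$ composition operator.
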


\begin{proof}
Consider the system of normalized functions
$
\hat e^\alpha=\frac{\hat k^\alpha}{\sqrt{\hat k^\alpha(0)}}
$
and let $\alpha_n\to\beta$ be the extremal sequence, i.e.:
$$
\lim_{n\to \infty} \hat e^{\alpha_n}(0)=\limsup_{\alpha\to \beta} \hat e^\alpha(0).
$$
If needed we chose a subsequence, which converges in $H^2$ weakly,
$f=\lim_{n\to \infty} \hat e^{\alpha_n}$. The limit function belongs to $H^2(\beta)$ and its norm is less or equal than $1$. Therefore,
$$
\lim_{n\to \infty} \sqrt{\hat k^{\alpha_n}(0)}=f(0)=\langle f,\hat k^{\beta}\rangle\le \|f\|\|\hat k^{\beta}\|
\le \sqrt{\hat k^{\beta}(0)}.
$$
\end{proof}

Thus $\hat k^\alpha(0)$ and $\check k^{\alpha}(0)$ (it is lower semi-continuous, see \eqref{bishs14})
are measurable functions with respect to the Haar measure $d\alpha$ on $\Gamma^*$.
Let us point out that they are uniformly bounded, see \eqref{01hs14}. 

For a measurable set $X\subset\Gamma^*$ by $|X|_{d\alpha}$ we denote its Haar measure.

\begin{theorem}\label{th62}
Let
$$
\Theta=\{\alpha: \hat H^2(\alpha)\not=\check H^2(\alpha)\}.
$$
Then $|\Theta|_{d\alpha}=0$.
\end{theorem}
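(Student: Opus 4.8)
The plan is to rephrase the assertion as an integral identity and to reach it through the divisor picture. Recall that $\alpha\notin\Theta$ exactly when $\hat k^{\alpha}(0)=\check k^{\alpha}(0)$, that $\hat k^{\alpha}(0)$ is upper semicontinuous and $\check k^{\alpha}(0)$ lower semicontinuous, that $\Delta^{2}(0)\le\check k^{\alpha}(0)\le\hat k^{\alpha}(0)\le 1$ by \eqref{01hs14}, and that $\hat k^{\alpha}(0)\,\check k^{\alpha^{-1}\nu}(0)=\Delta^{2}(0)$ by \eqref{bishs14}. Taking logarithms in the last identity and integrating over $\Gamma^{*}$, while using invariance of Haar measure under $\alpha\mapsto\alpha^{-1}\nu$, yields $\int_{\Gamma^{*}}\log\check k^{\alpha}(0)\,d\alpha=2\log\Delta(0)-\int_{\Gamma^{*}}\log\hat k^{\alpha}(0)\,d\alpha$, hence
\begin{equation*}
\int_{\Gamma^{*}}\bigl(\log\hat k^{\alpha}(0)-\log\check k^{\alpha}(0)\bigr)\,d\alpha=2\Bigl(\int_{\Gamma^{*}}\log\hat k^{\alpha}(0)\,d\alpha-\log\Delta(0)\Bigr).
\end{equation*}
Since the integrand on the left is nonnegative and vanishes precisely off $\Theta$, the statement $|\Theta|_{d\alpha}=0$ is \emph{equivalent} to
\begin{equation}\label{xtarget}
\int_{\Gamma^{*}}\log\hat k^{\alpha}(0)\,d\alpha=\log\Delta(0),
\end{equation}
and the inequality ``$\ge$'' here is automatic, so only ``$\le$'' has to be proved.

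First I would reduce the problem by ergodicity. From $\hat H^{2}_{0}(\alpha)=b\,\hat H^{2}(\alpha\mu^{-1})$ and $\check H^{2}_{0}(\alpha)=b\,\check H^{2}(\alpha\mu^{-1})$ (Corollary~\ref{cor33}) one sees that $\hat H^{2}(\alpha)=\check H^{2}(\alpha)$ if and only if $\hat H^{2}(\alpha\mu^{-1})=\check H^{2}(\alpha\mu^{-1})$; thus $\Theta$ is invariant under the translation $\alpha\mapsto\alpha\mu$ of $\Gamma^{*}$. By hypothesis (iii) this translation is ergodic for Haar measure — every nontrivial character of $\Gamma^{*}$, that is every $\gamma=\prod_{k}\gamma_{k}^{n_{k}}\ne e$, has $\mu(\gamma)=e^{-2\pi i\sum_{k}n_{k}\omega_{k}}\ne 1$ — so $|\Theta|_{d\alpha}\in\{0,1\}$ and it remains to exclude $|\Theta|_{d\alpha}=1$, i.e.\ to prove ``$\le$'' in \eqref{xtarget}.

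Now I would pass to divisors. Applying the construction of Subsection~\ref{subsec25} to the \emph{largest} admissible space $\hat H^{2}(\alpha)$ gives $\hat J(\alpha):=J(\hat H^{2}(\alpha))\in J(E)$, which is reflectionless with spectrum $E$ and, by hypothesis (i), has absolutely continuous spectrum; so by Definition~\ref{def31} and Lemma~\ref{le43} (eq.~\eqref{abm3}) its divisor $\hat D(\alpha)=\{(\hat x_{j}(\alpha),\hat\epsilon_{j}(\alpha))\}_{j\ge1}$ satisfies $\hat k^{\alpha}(0)=\Delta(0)\prod_{j\ge1}b_{\hat x_{j}(\alpha)}(0)^{\hat\epsilon_{j}(\alpha)}$. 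Using $\log b_{z_{0}}(0)=-G(z_{0})$ (from \eqref{hs2} and the normalization $b_{z_{0}}(0)>0$), together with $\log\Delta(0)=-\sum_{j}h_{j}$ and $G(c_{j})=h_{j}$, this reads
\begin{equation*}
\log\hat k^{\alpha}(0)=\log\Delta(0)-\sum_{j\ge1}\hat\epsilon_{j}(\alpha)\,G(\hat x_{j}(\alpha)),
\end{equation*}
and since $0\le G(\hat x_{j}(\alpha))\le h_{j}$ with $\sum_{j}h_{j}<\infty$, Fubini reduces ``$\le$'' in \eqref{xtarget} to
\begin{equation}\label{xtarget2}
\sum_{j\ge1}\int_{\Gamma^{*}}\hat\epsilon_{j}(\alpha)\,G(\hat x_{j}(\alpha))\,d\alpha\ \ge\ 0
\end{equation}
(and then, by what precedes, this average is in fact $=0$). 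The analogous quantity built from $\check J(\alpha)=J(\check H^{2}(\alpha))$ equals the negative of the one in \eqref{xtarget2}, as \eqref{bishs14} already forces, so a further ingredient is needed.

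The main obstacle is \eqref{xtarget2}. The natural symmetry is the sign-reversing involution $\iota$ of $J(E)\cong D(E)$: it leaves $R_{0,0}$, hence every gap point $x_{j}$, fixed and reverses every sign $\epsilon_{j}$ (explicitly it replaces $r_{\pm}$ by suitable reciprocals, which one checks preserves both the reflectionless condition and the spectrum $E$); a computation with characters — the character of $e_{0}$ satisfies $\pi(J)^{2}=\nu\prod_{j}\mu_{x_{j}}^{\epsilon_{j}}$ — shows that $\iota$ covers the Haar-preserving map $\alpha\mapsto\nu\alpha^{-1}$ of $\Gamma^{*}$ and, comparing the extremal values $\hat k$, $\check k$ over the fibres of $\pi$, that it carries $\hat J(\alpha)$ to $\check J(\nu\alpha^{-1})$. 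This, however, only reproduces the relation already noted; to upgrade it to \eqref{xtarget2} I would study the Haar-distribution of the single coordinate $\alpha\mapsto(\hat x_{j}(\alpha),\hat\epsilon_{j}(\alpha))\in I_{j}\cong\bbT$ — along the gap $(a_{j},b_{j})$ the character $\mu_{x}$ runs over a loop in $\Gamma^{*}$ generating the $\gamma_{j}$-direction, with the critical point $c_{j}$ as the switch between $\hat\epsilon_{j}=\pm1$ — and combine this with the upper semicontinuity of $\hat k^{\alpha}(0)$ and Widom's extremal characterization of $\Delta(0)$. Once \eqref{xtarget2} holds, \eqref{xtarget} follows and the ergodic reduction gives $|\Theta|_{d\alpha}=0$; this last estimate is where I expect the real work to lie, everything preceding it being essentially soft.
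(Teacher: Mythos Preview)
Your reduction to the identity $\int_{\Gamma^{*}}\log\hat k^{\alpha}(0)\,d\alpha=\log\Delta(0)$ is correct, and the divisor formula $\log\hat k^{\alpha}(0)=\log\Delta(0)-\sum_{j}\hat\epsilon_{j}(\alpha)G(\hat x_{j}(\alpha))$ is a clean restatement. But the proposal stops exactly at the point that carries all the content, and you say so yourself: the inequality \eqref{xtarget2} is never proved. The involution $\iota$ you describe does cover $\alpha\mapsto\nu\alpha^{-1}$ and does interchange $\hat J(\alpha)$ with $\check J(\nu\alpha^{-1})$, but as you observe this merely reproduces the duality \eqref{bishs14}; it gives $\sum_{j}\int\hat\epsilon_{j}G(\hat x_{j})\,d\alpha=-\sum_{j}\int\check\epsilon_{j}G(\check x_{j})\,d\alpha$, which is equivalent to what you already had, not a sign. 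The suggestion to study the Haar-law of $\alpha\mapsto(\hat x_{j}(\alpha),\hat\epsilon_{j}(\alpha))$ is natural, but that law is not available at this stage --- in the paper it is computed only in Section~7 (Theorem~\ref{thsid}), and that computation rests on the shift-invariant measure on $J(E)$, which in turn is built \emph{from} Theorem~\ref{th62}. So this route, as written, is circular. The ergodicity reduction to $|\Theta|\in\{0,1\}$ is fine but does not help: ruling out $|\Theta|=1$ is no easier than the full inequality.

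The paper closes the gap by an entirely different device. It proves a Main Lemma: for every $\beta\in\Gamma^{*}$ there is an inner function $w\in H^{\infty}(\beta)$, a Blaschke product $\prod_{j}b_{x_{j}}$ with at most one zero per gap, such that $w\,\hat H^{2}(\alpha)\subset\check H^{2}(\alpha\beta)$ for \emph{all} $\alpha$, and $w(0)^{2}=\inf_{\alpha}\check k^{\alpha\beta}(0)/\hat k^{\alpha}(0)$. This $w$ is obtained as the (normalized) extremal element representing the evaluation functional $f\mapsto f(0)$ on $\hat H^{1}(\nu\beta^{-1})$, via the Inverse Cauchy Theorem. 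Taking $\beta=\mathbf{1}$ and writing $w_{n}=\prod_{j>n}b_{x_{j}}$ with characters $\alpha_{n}\to\mathbf{1}$, $w_{n}(0)\to 1$, one gets the uniform bound $\hat k^{\beta}(0)/\check k^{\alpha_{n}\beta}(0)\le w_{n}(0)^{-2}$; then
\[
\int_{\Gamma^{*}}\frac{\hat k^{\beta}(0)}{\check k^{\beta}(0)}\,d\beta
=\int_{\Gamma^{*}}\frac{\hat k^{\beta}(0)}{\check k^{\alpha_{n}\beta}(0)}\cdot\frac{\hat k^{\alpha_{n}\beta}(0)}{\hat k^{\beta}(0)}\,d\beta
\le\frac{1}{w_{n}(0)^{2}}\int_{\Gamma^{*}}\frac{\hat k^{\alpha_{n}\beta}(0)}{\hat k^{\beta}(0)}\,d\beta,
\]
and upper semicontinuity of $\hat k^{\alpha}(0)$ together with monotone convergence forces the right side to tend to $1$. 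Hence $\int\hat k^{\alpha}(0)/\check k^{\alpha}(0)\,d\alpha\le1$, which gives $\hat k^{\alpha}(0)=\check k^{\alpha}(0)$ a.e. The key idea you are missing is precisely this multiplier $w$: it converts the qualitative inclusion $\check H^{2}\subset\hat H^{2}$ into a reverse inclusion up to a controlled inner factor, uniformly in the character, and that uniformity is what drives the integral estimate.
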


\begin{remark}
In the special case, when  the functional model for Jacobi matrices exists (see Section \ref{subsec25}) all main theorems for ergodic operators can be obtain from well known results in complex analysis, e.g. Thouless formula \eqref{thfo} can be obtained from the functional model. On the contrary  we were unable to derive Theorem \ref{th62} from general facts dealing with ergodic Jacobi matrices.
Our proof of this theorem is based on the following analytic  lemma. Note that by definition $\check H^2(\alpha)\subset
\hat H^2(\alpha)$. But it is far from being evident that there exists a multiplier $w$ such that
$w \hat H^2(\alpha)\subset \check H^2(\alpha)$, moreover that such a multiplier is uniform (does not depend on $\alpha$).
\end{remark}

\begin{lemma}[Main Lemma]
For an arbitrary $\beta\in\Gamma^*$ there exists a Blaschke product $w=\prod b_{x_j}$,
$x_j\in(a_j,b_j)$ such that
\begin{equation}\label{erg61}
w\hat H^2(\alpha)\subset \check H^2(\alpha\beta), \ \forall \alpha\in\Gamma^*.
\end{equation}
Also
\begin{equation}\label{erg62}
w(0)^2=\inf_{\alpha\in \Gamma^*}
\frac{\check k^{\alpha\beta}(0)}{\hat k^{\alpha}(0)}.
\end{equation}
\end{lemma}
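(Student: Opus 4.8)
The plan is to construct the Blaschke product $w$ with prescribed character $\beta^{-1}\nu$ (the correct character is forced by the inclusion \eqref{erg61}, since $\Delta\overline{\check{H}^2(\alpha\beta)}$ must contain $w\Delta\overline{\hat H^2(\alpha)}$, whose character is $\nu\alpha^{-1}\cdot\overline{(\text{char }w)}$, and this should lie in the annihilator considerations tied to $\alpha\beta$). First I would recall from \eqref{hs12} that $\check H^p(\alpha)=\mathrm{clos}_{L^p}P^\alpha(\Delta H^\infty)$, and from Theorem \ref{thICT} and the definition of the check-space that $g\in\check H^q(\alpha^{-1}\nu)$ exactly when $\Delta\bar g$ annihilates $\hat H^p_0(\alpha)$. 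So to prove $w\hat H^2(\alpha)\subset\check H^2(\alpha\beta)$ it suffices to show that for every $f\in\hat H^2(\alpha)$ the function $\Delta\overline{wf}$ annihilates $\hat H^2_0(\alpha\beta)$, i.e. $\langle h,\Delta\overline{wf}\rangle=0$ for all $h\in\hat H^2_0(\alpha\beta)$. This reduces to showing $\overline{w}\,\Delta h\bar f/\Delta$ integrates to zero, i.e. that $h\bar w f^{-1}\cdots$ — more cleanly: $w f h$ should land in a space on which the DCT-type pairing $\langle\cdot,\Delta\rangle$ vanishes; concretely, $\langle h, \Delta\overline{wf}\rangle = \int_{\bbT} w f \bar h\,\overline{\Delta}/\cdots$, and one wants $wfh/\Delta$ to be in $\hat H^1_0$ of the appropriate character so that \eqref{hs11} (DCT — but here we do NOT have DCT!) cannot be invoked. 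So instead the construction of $w$ must be done so that $wf\bar h\,\bar\Delta$ is literally, for structural reasons, a boundary value of an $H^1$ function vanishing at $0$.

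The key idea I would pursue: use the Widom property and the reproducing kernels. The quantity $\check k^{\alpha\beta}(0)/\hat k^\alpha(0)$ in \eqref{erg62} is, by \eqref{bishs14}, equal to $\Delta(0)^2/\big(\hat k^\alpha(0)\,\hat k^{(\alpha\beta)^{-1}\nu}(0)\big)$, so the infimum in \eqref{erg62} becomes $\Delta(0)^2/\sup_\alpha\big(\hat k^\alpha(0)\hat k^{(\alpha\beta)^{-1}\nu}(0)\big)$. By the fundamental Widom theorem (the displayed identity after \eqref{hs3}), $\sup_{f\in H^\infty(\gamma),\|f\|\le1}|f(0)|$ attains its infimum over $\gamma$ at value $\Delta(0)$; and products of such extremal functions for characters $\gamma$ and $\beta^{-1}\nu\gamma^{-1}$ are natural candidates for $w$ (with $\gamma$ chosen to make the character of the product equal to $\beta^{-1}\nu$ — note $\gamma\cdot(\beta^{-1}\nu\gamma^{-1})=\beta^{-1}\nu$ independent of $\gamma$, which is exactly the flexibility we need). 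I would take $w$ to be the Blaschke product whose divisor is chosen via the extremal problem $\sup\{|F(0)|: F=\prod b_{x_j},\ x_j\in(a_j,b_j),\ \text{char }F=\beta^{-1}\nu\}$; equivalently $w$ maximizes $w(0)=\prod b_{x_j}(0)$ subject to the character constraint. That this supremum is attained (so $w$ exists as an honest Blaschke product with $x_j\in(a_j,b_j)$, not degenerating to endpoints) is where the Widom condition \eqref{hs1} enters decisively, and I would verify attainment by a normal-families/semicontinuity argument on $D(E)\simeq\prod I_j$ analogous to the upper-semicontinuity lemma proved just above for $\hat k^\alpha(0)$.

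For the inclusion \eqref{erg61} itself, with $w$ so chosen, I would argue: given $f\in\hat H^2(\alpha)$, the product $wf$ has character $\alpha\beta^{-1}\nu$ and is in $\hat H^2$ of that character; I must show it lies in the smaller space $\check H^2$ of the character $\alpha\beta\cdot(\cdots)$ — wait, re-matching characters, $\check H^2(\alpha\beta)$ has character $\alpha\beta$, so I need $wf$ to have character $\alpha\beta$, forcing $\text{char }w=\beta^2\nu^0$... let me instead keep $\text{char }w=$ whatever makes this work and determine it at the end; the honest point is that the \emph{annihilator} condition $\Delta\overline{wf}\perp\hat H^2_0(\alpha\beta)$ means $\langle \Delta h,\overline{wf}\rangle=0$ for $h\in H^\infty_0$, i.e. $\int_{\bbT}\overline{wf}\,\Delta h\,dm=0$. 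Writing this as $\overline{\langle wf,\Delta h\rangle}$ and using $\langle P^{\alpha\beta}(\Delta h),wf\rangle=\langle\Delta h,wf\rangle$ together with $wf\cdot(\text{something})\in\hat H^1_0(\nu)$, the vanishing would follow \emph{if} $wf$ times the relevant factor lands in a space where the pairing against $\Delta$ is automatically zero — and the role of $w$ is precisely to supply the inner factor that kills the "hidden singular part" obstructing the DCT. I expect the main obstacle to be exactly this: showing that multiplication by the single, $\alpha$-independent Blaschke product $w$ forces every $f\in\hat H^2(\alpha)$ into $\check H^2$ \emph{uniformly in $\alpha$}; the uniformity is what the extremal/Widom characterization \eqref{erg62} is designed to give, and the proof of \eqref{erg62} (both directions) together with attainment of the extremum will be the technical heart. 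I would prove "$\le$" in \eqref{erg62} by evaluating the inclusion \eqref{erg61} at the reproducing kernels ($w\hat k^\alpha\in\check H^2(\alpha\beta)$ gives $w(0)\hat k^\alpha(0)=\langle w\hat k^\alpha,\check k^{\alpha\beta}\rangle\le\|w\hat k^\alpha\|\,\|\check k^{\alpha\beta}\|\le\sqrt{\hat k^\alpha(0)}\sqrt{\check k^{\alpha\beta}(0)}$, whence $w(0)^2\le\check k^{\alpha\beta}(0)/\hat k^\alpha(0)$ for all $\alpha$), and "$\ge$" by the Widom/extremal construction of $w$ realizing equality in the limit.
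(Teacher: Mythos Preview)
Your proposal has a genuine gap: you want to construct $w$ directly as an extremal Blaschke product $\prod b_{x_j}$ (maximizing $w(0)$ over divisors with prescribed character), and then \emph{deduce} the inclusion \eqref{erg61}. But you have no mechanism for that deduction. Nothing in your extremal problem over divisors forces the inclusion $w\hat H^2(\alpha)\subset\check H^2(\alpha\beta)$; you simply hope that ``the role of $w$ is precisely to supply the inner factor that kills the hidden singular part,'' which is a restatement of the goal, not an argument. Your character bookkeeping also goes astray: from $w\hat H^2(\alpha)\subset\check H^2(\alpha\beta)$ one reads off immediately that $w$ must have character $\beta$, not $\beta^{-1}\nu$.

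The paper proceeds in the opposite direction, and this reversal is the whole point. It starts from the \emph{dual} $L^1$ extremal problem: the evaluation functional $\Lambda(f)=f(0)$ on $\hat H^1(\nu\beta^{-1})$ is represented (via Hahn--Banach, averaging by $P^{\nu\beta^{-1}}$, and the Inverse Cauchy Theorem) by a function $w\in H^\infty(\beta)$, $\|w\|_\infty=1$, satisfying
\[
\int_{\bbT}\frac{w}{\Delta}\,f\,dm=\frac{w(0)}{\Delta(0)}\,f(0),\qquad\forall f\in\hat H^1(\nu\beta^{-1}).
\]
This identity is \emph{equivalent} to the inclusion \eqref{erg61} for all $\alpha$ (apply it to $f=h_1h_2$ with $h_1\in\hat H^2(\alpha)$, $h_2\in\hat H^2_0(\alpha^{-1}\beta^{-1}\nu)$ to see $\Delta\overline{wh_1}$ annihilates $\hat H^2_0(\alpha\beta)$). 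So the inclusion is built into the construction from the start. Only \emph{afterwards} does one show that $w$ is a Blaschke product of the claimed shape: using that the infimum in \eqref{erg62} is attained at some $\alpha_0$ (by the semicontinuity you mention), one computes
\[
\langle(1-|w|^2)\hat e^{\alpha_0},\hat e^{\alpha_0}\rangle+\|w\hat e^{\alpha_0}-\check e^{\alpha_0\beta}\|^2=0,
\]
forcing $|w|=1$ a.e.\ and $w=\check e^{\alpha_0\beta}/\hat e^{\alpha_0}$; since the inner parts of both $e$'s are Blaschke products over gap points (formula \eqref{abm2}), so is $w$. Your $\le$ argument for \eqref{erg62} via $w\hat k^\alpha\in\check H^2(\alpha\beta)$ and Cauchy--Schwarz is correct and appears separately in the paper (Lemma~\ref{incl}), but it presupposes \eqref{erg61}; the matching $\ge$ bound comes from the fact that $\|\Lambda\|=1/\|w_3\|_\infty$ together with the factorization $f=h_1h_2$ and the orthogonal decomposition of each $h_i$ along its reproducing kernel.
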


\begin{proof} We define a functional
$\Lambda(f)=f(0)$ on $\hat H^1(\nu\beta^{-1})$. It is bounded. Since $\hat H^1(\nu\beta^{-1})\subset L^1$ there exists $w_1\in L^\infty$ such that
$$
\Lambda (f)=\langle f, w_1\rangle\ \text{and}\ \|\Lambda\|=\|w_1\|_\infty.
$$
Using the averaging operator \eqref{hs7}, we get the same representation
with $w_2=P^{\nu\beta^{-1}} w_1$ of the same norm. Since $\Lambda$ annihilates $\hat H^1_0(\nu\beta^{-1})$, by the Inverse Cauchy Theorem (Theorem \ref{thICT}), $w_2=\Delta \bar w_3$, where $w_3\in H^\infty(\beta)$. Thus,
$$
\Lambda(f)=\int_\bbT\frac{w_3}\Delta f dm=f(0),\, f \in \hat H^1(\nu\beta^{-1})\,.
$$
Let $f=\Delta g$, $g\in H^\infty(\beta^{-1})$, $g(0)\not=0$. Then
$$
(\Delta g)(0)=\int_\bbT w_3 g dm=(w_3 g)(0),
$$
i.e., $w_3(0)=\Delta(0)$.

We set $w:=\frac{w_3}{\|w_3\|_\infty}$. We can summarize its properties as follows: it is a $H^\infty(\beta)$-function of the norm one, such that
\begin{equation}\label{1part2}
\int_\bbT\frac{w}\Delta f dm=\frac{w(0)}{\Delta(0)}f(0), \quad \forall f\in \hat H^1(\nu\beta^{-1}),
\end{equation}
with the largest possible (for such functions) value $w(0)>0$. 

Let us prove \eqref{erg62}. Let us fix $\alpha\in \Gamma^*$, and 
\begin{equation}
\label{short}
h_1=h_1(0)\frac{\hat k^\alpha}{\hat k^\alpha(0)},\,
h_2=h_2(0)\frac{\hat k^{\alpha^{-1}\beta^{-1}\nu}}{\hat k^{\alpha^{-1}\beta^{-1}\nu}(0)}.
\end{equation}
Since $f=h_1h_2\in \hat H^1(\beta^{-1}\nu)$ we get by \eqref{1part2}
\begin{eqnarray*}
\frac{w(0)}{\Delta(0)}|h_1(0) h_2(0)|&\le \|f\|_1\le \|h_1\|_2\|h_2\|_2=
\sqrt{\left(
\frac{|h_1(0)|^2}{\hat k^\alpha(0)}
\right)
\left(
\frac{|h_2(0)|^2}{\hat k^{\alpha^{-1}\beta^{-1}\nu}(0)}
\right)}.
\end{eqnarray*}
Therefore, due to \eqref{hs14},
$$
w^2(0)\le\frac{1}{\hat k^{\alpha}(0)}\frac{\Delta(0)}{\hat k^{\alpha^{-1}\beta^{-1}\nu}(0)}=
\frac{\check k^{\alpha\beta}(0)}{\hat k^{\alpha}(0)}
$$
and 
$$
w^2(0)\le \inf_{\alpha\in \Gamma^*}\frac{\check k^{\alpha\beta}(0)}{\hat k^{\alpha}(0)}.
$$

But we need to prove equality.
Choose arbitrary $f\in \hat H^1(\nu\beta^{-1})$. Factorize it to $h_1h_2$ , where for  some $\alpha$, $ h_1\in \hat H^2(\alpha)$, $h_2\in \hat H^2(\alpha^{-1} \beta^{-1}\nu)$. We can write
\begin{eqnarray}
h_1&=&h_1(0)\frac{\hat k^\alpha}{\hat k^\alpha(0)}+b\tilde h_1, \quad \tilde h_1\in \Hat H^2(\alpha\mu^{-1}),\label{2part2}\\
h_2&=&h_2(0)\frac{\hat k^{\alpha^{-1}\beta^{-1}\nu}}{\hat k^{\alpha^{-1}\beta^{-1}\nu}(0)}+b\tilde h_2, \quad \tilde h_2\in \Hat H^2({\alpha^{-1}\beta^{-1}\nu}\mu^{-1}).\label{3part2}
\end{eqnarray}
Then using \eqref{2part2} and \eqref{3part2}, we see that
$$
\| f\|_1\ge \frac{|h_1(0) h_2(0)|}{\sqrt{\hat k^{\alpha}(0){\hat k^{\alpha^{-1}\beta^{-1}\nu}(0)}}}
=\frac{|f(0)|}{\Delta(0)}\sqrt{\frac{\check k^{\alpha\beta}(0)}{\hat k^{\alpha}(0)}}.
$$

 Notice that the norm of the functional in  \eqref{1part2} is $1$. Therefore,
$$
1\le \sup_{\alpha} \frac{w(0) |f(0)| \Delta(0)^{-1}}{|f(0)|\Delta(0)^{-1} \sqrt{\check k^{\alpha\beta}(0)} \sqrt{\hat k^{\alpha}(0)^{-1}}}\,. 
$$
Hence,
$$
w^2(0)\ge \inf_{\alpha\in \Gamma^*}\frac{\check k^{\alpha\beta}(0)}{\hat k^{\alpha}(0)}.
$$



Now we prove the first claim of the lemma. Since $\hat k^\alpha(0)$ is upper semi-continuous and 
$\check k^{\alpha\beta}(0)$ is lower semi-continuous their ratio is lower semi-continuous and the infimum is attained. Let $\alpha_0$ be the character such that
\begin{equation}\label{5part2}
w^2(0)=\frac{\check k^{\alpha_0\beta}(0)}{\hat k^{\alpha_0}(0)}.
\end{equation}

We claim that
\begin{equation}\label{6part2}
\langle(1-|w|^2) \hat e^{\alpha_0},\hat e^{\alpha_0}\rangle+
\|w \hat e^{\alpha_0}-\check e^{\alpha_0\beta} \|^2=0.
\end{equation}
Indeed, it can be simplified to
$$
\| \hat e^{\alpha_0}\|^2+\|\check e^{\alpha_0\beta}\|^2-
2\Re \langle w\hat e^{\alpha_0},\check e^{\alpha_0\beta}\rangle=
2 -2\Re \int_\bbT w\hat e^{\alpha_0}\overline{ \check e^{\alpha_0\beta}}dm.
$$
We use again \eqref{hs14}, and \eqref{5part2}, \eqref{1part2}   to obtain
$$
\int_\bbT w\hat e^{\alpha_0}\overline{ \check e^{\alpha_0\beta}}dm=
\int_{\bbT}\frac{w}{\Delta}\hat e^{\alpha_0}\hat e^{\alpha_0^{-1}\beta^{-1}\nu}dm=
\left(\frac{w}{\Delta}\hat e^{\alpha_0}\hat e^{\alpha_0^{-1}\beta^{-1}\nu}\right)(0)=1,
$$
since
$$
\left(\hat e^{\alpha_0}\hat e^{\alpha_0^{-1}\beta^{-1}\nu}\right)(0)=\frac{\hat e^{\alpha_0}(0)\Delta(0)}{\check e^{\alpha_0\beta}(0)}=
\Delta (0)\sqrt{\frac{\hat k^{\alpha_0}(0)}{\check k^{\alpha_0\beta}(0)}}.
$$

Thus, both nonnegative summands  in \eqref{6part2} are equal to 0. Since $\hat e^{\alpha}\not=0$ a.e. on $\bbT$ we have $|w|^2=1$ (a.e.). The second term implies
$
w=\frac{\check e^{\alpha_0\beta}}{\hat e^{\alpha_0}}.
$
Since $w$ is unimodular it is the ratio of inner parts of $\check e^{\alpha_0\beta}$
and $\hat e^{\alpha_0}$. We know that their inner parts are Blaschke products (see \eqref{abm2}). But $w$ is bounded analytic. Therefore there is at most one $x_j$ in every open gap $(a_j,b_j)$ such that $w=\prod b_{x_j}$. 

It remains to show that $w \hat H^2(\alpha)\subset \check H^2(\alpha\beta)$ for all $\alpha\in\Gamma^*$. In other words we have to show that $w \hat H^2(\alpha)$ is annihilated by functions from $\Delta\overline{\hat H_0^2(\alpha^{-1}\beta^{-1}\nu)}$. By the key property of $w$ \eqref{1part2} we have
$$
\langle w h_1, \Delta \bar h_2\rangle=\int_\bbT\frac{w}{\Delta}h_1 h_2 dm=0,
$$
where $h_1\in \hat H^2(\alpha)$ and $h_2\in \hat H_0^2(\alpha^{-1}\beta^{-1}\nu)$. Actually, this is the starting idea to consider the functional $\Lambda$ on $\hat H^1(\beta^{-1}\nu)$.
\end{proof}

\begin{remark}
The reader can notice that we used repeatedly the following equivalence. Function $w\in H^\infty(\beta)$ satisfies
$$
\int_{\mathbb{T}}\frac{w}{\Delta}f dm  = \frac{w(0)}{\Delta(0)} f(0)\,,\,\,\forall f\in \hat H^1(\beta^{-1}\nu)
$$
if and only if $w \hat H^2(\alpha) \subset \check H^2(\alpha\beta)$ for all $\alpha\in \Gamma^*$.
\end{remark}

\begin{lemma}
\label{incl}
Let $w, w(0)>0,$ be an inner character automorphic function with a character $\alpha$ and $w \hat H^2(\beta) \subset \check H^2(\alpha \beta) $. Then
$
w(0)^2  \le \frac{\check k^{\alpha\beta}(0)}{\hat k^\beta (0)}\,.
$
\end{lemma}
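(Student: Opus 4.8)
The statement is the easy ``converse'' counterpart of the extremal identity \eqref{erg62} in the Main Lemma, and it follows from a one-line reproducing kernel estimate. The plan is as follows. Since $w$ is inner, $|w(\zeta)|=1$ for a.e. $\zeta\in\bbT$, so multiplication by $w$ is an isometry on $L^2(\bbT)$; since $w$ carries the character $\alpha$, it maps $\hat H^2(\beta)$ isometrically into the space of functions with character $\alpha\beta$, and by hypothesis the image lies in $\check H^2(\alpha\beta)$.

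First I would take the normalized reproducing kernel of $\hat H^2(\beta)$ at the origin, $\hat e^\beta=\hat k^\beta/\sqrt{\hat k^\beta(0)}$, which satisfies $\|\hat e^\beta\|=1$ and $\hat e^\beta(0)=\sqrt{\hat k^\beta(0)}$. Then $g:=w\,\hat e^\beta\in \check H^2(\alpha\beta)$ with $\|g\|=\|\hat e^\beta\|=1$ and $g(0)=w(0)\sqrt{\hat k^\beta(0)}$. Applying the Cauchy--Schwarz inequality against the reproducing kernel $\check k^{\alpha\beta}$ of $\check H^2(\alpha\beta)$, namely $|g(0)|=|\langle g,\check k^{\alpha\beta}_0\rangle|\le\|g\|\,\|\check k^{\alpha\beta}_0\|=\sqrt{\check k^{\alpha\beta}(0)}$, gives
$$
w(0)\sqrt{\hat k^\beta(0)}\le\sqrt{\check k^{\alpha\beta}(0)}.
$$
Squaring (recall $w(0)>0$) yields $w(0)^2\le \check k^{\alpha\beta}(0)/\hat k^\beta(0)$, as claimed.

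There is no real obstacle here: the only points to verify carefully are that $w$ inner implies multiplication by $w$ is norm-preserving on $L^2(\bbT)$, and that the character bookkeeping places $w\hat H^2(\beta)$ among the functions automorphic with character $\alpha\beta$ so that the inclusion hypothesis $w\hat H^2(\beta)\subset\check H^2(\alpha\beta)$ actually applies to $g$. Both are immediate. (Taken together with the Main Lemma, this lemma shows that the Blaschke product $w$ produced there is the extremal multiplier and that $w(0)^2$ equals precisely $\inf_{\alpha}\check k^{\alpha\beta}(0)/\hat k^{\alpha}(0)$.)
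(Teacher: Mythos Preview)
Your proof is correct and is essentially the same as the paper's argument. The paper writes the same reproducing-kernel inequality in the equivalent form $0\le\|w\hat e^{\beta}-\check e^{\alpha\beta}\|^2=2-2w(0)\|\hat k^{\beta}\|/\|\check k^{\alpha\beta}\|$, which unwinds to exactly your Cauchy--Schwarz step $|g(0)|\le\|g\|\,\|\check k^{\alpha\beta}_0\|$ applied to $g=w\hat e^{\beta}$.
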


\begin{proof}
As $w \hat H^2(\beta)\subset \check H^2(\alpha\beta)$, we get 
$$
0\le \left\|w \frac {\hat k^{\beta}}{\|\hat k^{\beta} \|}-
\frac {\check k^{\alpha\beta}}{\|\check k^{\alpha\beta} \|}\right\|^2=
2-2w(0)\frac {\hat k^{\beta}(0)}{\|\hat k^{\beta}\|\|\check k^{\alpha\beta} \|}=2-2w(0)\frac {\|\hat k^{\beta}\|}{\|\check k^{\alpha\beta} \|}.
$$
Therefore,
$$
\frac {\hat k^{\beta}(0)}{\check k^{\alpha\beta} (0)}=
\frac {\|\hat  k^{\beta}\|^2}{\|\check k^{\alpha\beta} \|^2}\le\frac 1{w^2(0)}.
$$
\end{proof}

\begin{corollary}
Let $w=\prod_{j\ge 1} b_{x_j}$ be the function from $H^\infty(\mathbf{1})$, i.e., $w\circ\gamma=w$
for all $\gamma\in \Gamma$,  such that
$
w \hat H^2(\beta)\subset \check H^2(\beta).
$
We define $w_n=\prod_{j\ge n+1} b_{x_j}$ and denote the corresponding character by $\alpha_n$. 
Then 
\begin{equation}\label{kappa007}
\lim_{n\to\infty}\alpha_n=\mathbf{1},\quad \lim_{n\to\infty} w_n(0)=1,
\end{equation}
and the following uniform estimate holds
\begin{equation}\label{kappa1}
\frac {\hat k^{\beta}(0)}{\check k^{\alpha_n\beta} (0)}\le\frac 1{w^2_n(0)}, \quad \forall \beta\in\Gamma^*.
\end{equation}
\end{corollary}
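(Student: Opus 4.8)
The plan is: parts one and two follow from the convergence of the Blaschke product $w$, and part three from Lemma~\ref{incl} applied to $w_n$.

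For \eqref{kappa007}, recall $\log|b_{x_j}(0)|^{-1}=G(\fz(0),x_j)=G(x_j,\infty)$ by \eqref{hs2} and symmetry of the Green function; since $x_j\in(a_j,b_j)$ one has $G(x_j,\infty)\le G(c_j)$, so $\sum_{j\ge1}G(x_j,\infty)<\infty$ by the Widom condition \eqref{hs1} (this is anyway implicit in $w\in H^\infty$). Hence $w_n(0)=\prod_{j>n}b_{x_j}(0)=\exp\!\big(-\sum_{j>n}G(x_j,\infty)\big)\to1$. By the definition of $w$ as a product, $\alpha_n(\gamma)=\prod_{j>n}\mu_{x_j}(\gamma)$ for every $\gamma\in\Gamma$, and since the full product is the character of $w$, namely $\mathbf1$, we get $\prod_{j\ge1}\mu_{x_j}(\gamma)=1$, whence $\alpha_n(\gamma)=\big(\prod_{j\le n}\mu_{x_j}(\gamma)\big)^{-1}\to1$; as $\Gamma^*$ carries the topology of pointwise convergence on $\Gamma$, this means $\alpha_n\to\mathbf1$.

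For \eqref{kappa1} the key point is the inclusion $w_n\hat H^2(\beta)\subseteq\check H^2(\alpha_n\beta)$ for all $\beta\in\Gamma^*$: granting it, Lemma~\ref{incl} (applied to the inner character-automorphic function $w_n$, of character $\alpha_n$ and with $w_n(0)>0$, with $\alpha=\alpha_n$) yields $w_n(0)^2\le\check k^{\alpha_n\beta}(0)/\hat k^\beta(0)$, which is \eqref{kappa1}. To prove the inclusion, write $w=W_nw_n$ with $W_n=\prod_{j\le n}b_{x_j}$, a finite Blaschke product of character $\alpha_n^{-1}$, and use two facts about check-spaces. First, $\check H^2$ is stable under multiplication by the elementary Blaschke factors: $b_x\check H^2(\gamma)\subseteq\check H^2(\gamma\mu_x)$. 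Indeed, if $g\in\check H^2(\gamma)$ then $\Delta\bar g$ annihilates $\hat H^2_0(\gamma^{-1}\nu)$ (Theorem~\ref{thICT} and the definition of $\check H^2$), and for $\phi\in\hat H^2_0(\gamma^{-1}\mu_x^{-1}\nu)$ one has, using $|b_x|=|\Delta|=1$ a.e.\ on $\bbT$, that $\langle\phi,\Delta\overline{b_xg}\rangle=\langle b_x\phi,\Delta\bar g\rangle=0$ since $b_x\phi\in\hat H^2_0(\gamma^{-1}\nu)$; hence $\Delta\overline{b_xg}$ annihilates $\hat H^2_0((\gamma\mu_x)^{-1}\nu)$, i.e.\ $b_xg\in\check H^2(\gamma\mu_x)$. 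Iterating, $W_n\check H^2(\alpha_n\beta)\subseteq\check H^2(\beta)$ and this map is an isometry. Second, this embedding is onto $\{g\in\check H^2(\beta):W_n\mid g\}$ — that is, a holomorphic quotient of a check-function by $W_n$ is again a check-function. Granting both, for $h\in\hat H^2(\beta)$ the hypothesis gives $W_n(w_nh)=wh\in\check H^2(\beta)$ with $W_n\mid wh$, so $wh=W_ng_0$ with $g_0\in\check H^2(\alpha_n\beta)$; injectivity of multiplication by $W_n$ forces $g_0=w_nh$, establishing the inclusion. (Integrating \eqref{kappa1} over $\Gamma^*$ against the Haar measure, using its translation invariance and then $w_n(0)\to1$, then also gives Theorem~\ref{th62}; that deduction is independent of the above.)

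The delicate step is the second fact, divisibility of check-functions by a single $b_x$ (the rest is an $n$-fold iteration). A priori $w_nh=(wh)/W_n$ lies only in $\hat H^2(\alpha_n\beta)$, and lying in $\check H^2(\alpha_n\beta)$ imposes finitely many extra conditions; equivalently, by the equivalence recorded in the remark after the Main Lemma, it amounts to $w_n$ inheriting from $w$ the identity $\int_\bbT\frac{w_n}{\Delta}f\,dm=\frac{w_n(0)}{\Delta(0)}f(0)$ on $\hat H^1(\alpha_n^{-1}\nu)$, given that $w$ satisfies the analogous identity on $\hat H^1(\nu)$. For $f$ vanishing on $\fz^{-1}(x_1)$ this is immediate (factor $f=b_{x_1}f'$ and apply the identity for $w$ to $f'$); the residual one-dimensional discrepancy is where I expect the real work to be. I would try to kill it either using the extremality of $w$ (it is the function of largest value at the origin produced by the Main Lemma), or, structurally, via the fact that the defect dimension $\dim\big(\hat H^2(\gamma)\ominus\check H^2(\gamma)\big)$ does not depend on $\gamma$ — which is the content of the $j$-inner description of Section~5 (Theorem~\ref{mainj}), the $j$-inner transfer matrix and its lattice of $j$-inner divisors being character-independent — since invariance of this dimension under $\gamma\mapsto\gamma\mu_x$ is precisely equivalent to divisibility of check-functions by $b_x$.
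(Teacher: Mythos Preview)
Your overall architecture matches the paper's exactly: prove \eqref{kappa007} from convergence of the Blaschke product, then establish $w_n\hat H^2(\beta)\subset\check H^2(\alpha_n\beta)$ by peeling off the finite factor $W_n=\prod_{j\le n}b_{x_j}$, and conclude \eqref{kappa1} via Lemma~\ref{incl}. The paper does precisely this; its proof reads: ``$b_{x_1}w_1\hat H^2(\beta)\subset\check H^2(\beta)$ and all these functions vanish on $\text{orb}(\zeta_{x_1})$, hence $w_1\hat H^2(\beta)\subset\check H^2(\beta\mu_{x_1}^{-1})$,'' then iterates.

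Where you diverge is in the step you flag as ``delicate'': the divisibility $\{g\in\check H^2(\alpha):g(\zeta_x)=0\}=b_x\check H^2(\alpha\mu_x^{-1})$. The paper treats this as a direct analogue of Corollary~\ref{cor33} (``Similarly, \ldots for all $x\in\Omega$''), and that is in fact all that is needed --- no appeal to extremality of $w$ or to the $j$-inner machinery of Section~5. The orthogonal-complement argument behind Corollary~\ref{cor33} adapts as follows. One has $(\check H^2(\gamma))^\perp=\Delta\overline{\hat H_0^2(\gamma^{-1}\nu)}$ in $L^2(\gamma)$ for every $\gamma$; hence
\[
\big(b_x\check H^2(\alpha\mu_x^{-1})\big)^\perp=b_x\,\Delta\overline{\hat H_0^2(\alpha^{-1}\mu_x\nu)}
\supset\Delta\overline{\hat H_0^2(\alpha^{-1}\nu)}=(\check H^2(\alpha))^\perp,
\]
and the inclusion has codimension one (since $b_x\hat H_0^2(\alpha^{-1}\nu)$ has codimension one in $\hat H_0^2(\alpha^{-1}\mu_x\nu)$). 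On the other hand $(\check H^2_x(\alpha))^\perp=(\check H^2(\alpha))^\perp\oplus\{\check k^\alpha_{\zeta_x}\}$. So it suffices to check that $\check k^\alpha_{\zeta_x}\in\big(b_x\check H^2(\alpha\mu_x^{-1})\big)^\perp$, and this is immediate: for $h\in\check H^2(\alpha\mu_x^{-1})$ one has $b_xh\in\check H^2(\alpha)$ (your ``easy direction''), whence $\langle b_xh,\check k^\alpha_{\zeta_x}\rangle=(b_xh)(\zeta_x)=0$ by the reproducing property and $b_x(\zeta_x)=0$. Thus both orthogonal complements coincide, giving the divisibility.

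So there is no genuine gap in your plan, but you overestimated the difficulty of the missing step and reached for tools (the Main Lemma's extremality, or the character-independence of $\dim(\hat H^2\ominus\check H^2)$ via Theorem~\ref{mainj}) that are much heavier than required. The one-line duality argument above is what the paper's ``similarly'' points to.
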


\begin{proof} Since the product converges, $w_n$ converges to 1 on compact sets. It implies \eqref{kappa007}.
Let us prove \eqref{kappa1}.  Recall that by Corollary \ref{cor33} 
$\check H_0^2(\alpha)=b \check H^2(\alpha\mu^{-1})$. Similarly, $\{f\in\check H^2(\alpha): f(x)=0\}=b_x \check H^2(\alpha\mu_x^{-1})$ for all $x\in \Omega$. Functions from $b_{x_1}w_1 \hat H^2 (\beta)$ are all in $\check H^2(\beta)$ and they all are zeros at zeros of the Blaschke product $b_{x_1}$ whose character is  $\mu_{x_1}$. Then $w_1 \hat H^2 (\beta)$ are all in $\check H^2(\beta\mu_{x_1}^{-1})$. This means that $b_{x_2} w_2 \hat H^2 (\beta)\subset \check H^2(\beta\mu_{x_1}^{-1})$ and so on...
Finally we conclude that $w_n \check H^2(\beta)\subset \check H^2(\beta\mu_{x_1}^{-1}\dots \mu_{x_n}^{-1})= 
\check H^2(\beta\alpha_n)$.
Now we use Lemma \ref{incl} to conclude \eqref{kappa1}.
\end{proof}

\begin{proof}[Proof of Theorem \ref{th62}]
We have to show that $\check k^{\alpha}(0)=\hat k^\alpha(0)$ for almost all $\alpha$ with respect to the Haar measure. We define
$$
\k(\alpha)=\frac{\hat k^{\alpha}(0)}{\check k^{\alpha}(0)}
$$
and will show that $\k(\alpha)=1$ for almost all  $\alpha\in\Gamma^*$.
Since $\k(\alpha)\ge 1$ it is enough to prove that
\begin{equation}\label{7part2}
\int_{\Gamma^*}\k(\alpha)d\alpha=1.
\end{equation}

Using \eqref{kappa1} we get
\begin{equation}\label{8part2}
\int_{\Gamma^*}\k(\beta)d\beta=\int_{\Gamma^*}\frac{\hat k^{\beta}(0)}{\check k^{\alpha_n\beta} (0)} 
\frac{\hat k^{\alpha_n\beta} (0)}{\hat k^{\beta}(0)}d\beta\le \frac 1{w^2_n(0)}
\int_{\Gamma^*}
\frac{\hat k^{\alpha_n\beta} (0)}{\hat k^{\beta}(0)}d\beta.
\end{equation}
Let
$$
g_n(\beta)=\sup_{j\ge n}\frac{\hat k^{\alpha_j\beta} (0)}{\hat k^{\beta}(0)}
$$
Then $g_n(\beta)\ge g_{n+1}(\beta)$,
$$
\int_{\Gamma^*}
\frac{\hat k^{\alpha_n\beta} (0)}{\hat k^{\beta}(0)}d\beta\le\int_{\Gamma^*}g_n(\beta) d\beta
$$
and by B. Levi theorem there exists $g=\lim_{n\to \infty} g_n$ such that
$$
\lim_{n\to\infty}\int_{\Gamma^*} g_n(\beta) d\beta=\int_{\Gamma^*} g(\beta) d\beta.
$$

It is important that  $\lim_{n\to\infty}\alpha_n=\mathbf{1}$, see \eqref{kappa007}. 
It implies that $g\le 1$ by semi-continuity of $\hat k^\alpha(0)$. 
Indeed, 
$$
\lim_{n\to\infty} g_n(\beta)= \lim_{n\to\infty}\sup_{j\ge n}\frac{\hat k^{\alpha_j\beta} (0)}{\hat k^{\beta}(0)}\le \frac{\hat k^{\beta}(0)}{\hat k^{\beta}(0)}=1.
$$
Therefore
$$
\limsup_{n\to\infty}\int_{\Gamma^*}
\frac{\hat k^{\alpha_n\beta} (0)}{\hat k^{\beta}(0)}d\beta\le\lim_{n\to\infty}\int_{\Gamma^*}g_n(\beta) d\beta =\int_{\Gamma^*}g(\beta)d\beta\le 1.
$$
Now, $\lim_{n\to \infty }w_n(0)=1$   together with \eqref{8part2} imply
$\int_{\Gamma^*}\k(\beta)d\beta\le 1$. Evidently, $\int_{\Gamma^*}\k(\beta)d\beta\ge 1$.
Thus \eqref{7part2}, and with this, the theorem is proved.
\end{proof}

By {\it shift invariant measure} on $J(E)$ we understand the measure invariant under the transformation $J\rightarrow S^{-1} JS$, $J\in J(E))$.
As a consequence of Theorem \ref{th62} we prove the first part of the main Theorem \ref{thmain}.

\begin{theorem}
Let every $J\in J(E)$ have absolutely continuous spectrum, the domain $\Omega=\Om$ be of Widom type and the frequencies $\{\omega_j\}$ be independent. Then on the compact $J(E)$ there exists  unique shift invariant measure $dJ=d S^{-1}J S$. Moreover,  the generalized Abel map $\pi: J(E)\to \Gamma^*$ is essentially one-to-one, i.e.: $|\pi^{-1}(\Theta) |_{dJ}=0$ and on the set of the full measure, where $\pi$ is invertible, 
\begin{equation}\label{9part2}
d\pi(J)=d\alpha, \quad \alpha\in\Gamma^*\setminus\Theta.
\end{equation}

\end{theorem}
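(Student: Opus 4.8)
**The plan is to transport the Haar measure from $\Gamma^*$ back to $J(E)$ via the Abel map $\pi$, using Theorem~\ref{th62} to see that $\pi$ is injective off a null set.** First I would observe that by Lemma~\ref{lemma45}, for \emph{every} reflectionless $J$ with Widom resolvent domain $\Omega$ the canonical factorization $r_+\circ\fz = -e_0/(p_0 e_{-1})$ is uniquely determined, the character $\alpha=\pi(J)$ of $e_0$ is well-defined, and $e_0$ has the explicit form \eqref{abm2} associated with the divisor $D\in D(E)$ of Definition~\ref{def31}. Thus $\pi$ factors through the homeomorphism $J(E)\xrightarrow{\sim} D(E)$ of Section~3, and the fiber $\pi^{-1}(\alpha)$ is exactly the set of divisors $D$ for which the character of the function \eqref{abm2} equals $\alpha$. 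By Theorem~\ref{mainj}, distinct points in a single fiber correspond to genuinely different $j$-inner divisors of $\fA$, i.e.\ to a proper chain $\check H^2(\alpha)\subsetneq H^2(\alpha)\subsetneq\hat H^2(\alpha)$; in particular a fiber is a singleton precisely when $\check H^2(\alpha)=\hat H^2(\alpha)$, equivalently (by the Lemma in Section~5 and the discussion opening Section~6) when $\hat k^\alpha(0)=\check k^\alpha(0)$, i.e.\ $\alpha\notin\Theta$. Hence $\pi$ is a continuous surjection which is a bijection over $\Gamma^*\setminus\Theta$.

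Next I would \textbf{define the measure}. Push forward: for a Borel set $A\subseteq J(E)$ set $dJ(A):=d\alpha(\pi(A)\cap(\Gamma^*\setminus\Theta))$; since $|\Theta|_{d\alpha}=0$ by Theorem~\ref{th62} and $\pi$ is a continuous bijection onto $\Gamma^*\setminus\Theta$ off the null set $\pi^{-1}(\Theta)$, this is a well-defined Borel probability measure on $J(E)$ with $\pi_*(dJ)=d\alpha$, which is \eqref{9part2}, and it satisfies $|\pi^{-1}(\Theta)|_{dJ}=0$ by construction. Shift invariance follows from the commuting square $\pi\circ S = \mu^{-1}\cdot\pi$: indeed by Theorem~\ref{th44} (extended to general $J\in J(E)$ via Theorem~\ref{thspectral10}), the shift $J\mapsto S^{-1}JS$ corresponds under $\pi$ to translation $\alpha\mapsto\alpha\mu$ (this is exactly the index shift $k^{\alpha\mu^{-n}}$ in \eqref{hs23}), and $d\alpha$ is translation-invariant on the compact abelian group $\Gamma^*$. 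Hence $dJ$ is shift-invariant.

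For \textbf{uniqueness}, suppose $d\tilde J$ is any shift-invariant Borel probability measure on $J(E)$. Its pushforward $\pi_*(d\tilde J)$ is a shift-invariant (hence translation-invariant) Borel probability measure on $\Gamma^*$, so by uniqueness of Haar measure $\pi_*(d\tilde J)=d\alpha$. In particular $d\tilde J(\pi^{-1}(\Theta))=d\alpha(\Theta)=0$, so $d\tilde J$ lives on the set where $\pi$ is a bijection; there $d\tilde J$ is forced to be the pullback of $d\alpha$, hence $d\tilde J=dJ$. This also gives ergodicity once one knows $T$ acts ergodically on $(\Gamma^*,d\alpha)$: the orbit $\{\mu^n\}$ is dense in $\Gamma^*$ — this is where hypothesis~(iii), independence of the $\{\omega_j\}$, enters, since it says the characters $\gamma\mapsto\mu(\gamma)^{\phantom{1}}$ generated by the $\omega_j$ separate points, so translation by $\mu$ is minimal hence uniquely ergodic on $\Gamma^*$ — and unique ergodicity of the factor together with the fact that $\pi$ is a measure-theoretic isomorphism transfers ergodicity back to $(J(E),S,dJ)$.

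\textbf{The main obstacle} is establishing that $\pi$ is genuinely a bijection — not merely surjective — over $\Gamma^*\setminus\Theta$, and that the induced map is bimeasurable rather than just measurable in one direction. Surjectivity of $\pi$ onto all of $\Gamma^*$ relies on the converse direction of Theorem~\ref{mainj} (every $j$-inner divisor normalized as in \eqref{transfer16} produces a genuine $J\in J(E)$), which in turn uses Widom's theorem that $H^\infty(\alpha)$ is nontrivial for every $\alpha$; injectivity over the complement of $\Theta$ is the content of the Section~5 lemma plus Theorem~\ref{th62}. The measurability of $\pi^{-1}$ on $\Gamma^*\setminus\Theta$ then follows because a continuous bijection between the ($\sigma$-compact, metrizable) spaces in play is a Borel isomorphism, but one must be slightly careful that $\Theta$ is itself Borel — which it is, being the set $\{\hat k^\alpha(0)>\check k^\alpha(0)\}$ with $\hat k^\alpha(0)$ upper semicontinuous and $\check k^\alpha(0)$ lower semicontinuous, hence both Borel.
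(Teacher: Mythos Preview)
Your argument is correct and uses the same ingredients as the paper: the commuting relation $\pi(S^{-1}JS)=\mu^{-1}\pi(J)$ (from Theorems~\ref{th44} and~\ref{thspectral10}), unique ergodicity of the $\mu$-translation on $\Gamma^*$ via hypothesis~(iii), Theorem~\ref{th62} to kill $\Theta$, and bijectivity of $\pi$ over $\Gamma^*\setminus\Theta$. The only organizational difference is that you first \emph{construct} $dJ$ as the pullback of Haar and then prove uniqueness, whereas the paper starts from an arbitrary shift-invariant measure (existence being implicit via Krylov--Bogoliubov on the compact $J(E)$), pushes it forward to Haar, and reads off both uniqueness and the identity \eqref{9part2} at once; your route has the small advantage of making existence explicit and of isolating the Borel-isomorphism bookkeeping, which the paper leaves tacit.
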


\begin{proof} Let $dJ$ be a shift invariant probability measure on the compact $J(E)$. Recall that $\pi$ is continuos. So, we can define $d\pi(J)$ in the standard way
\begin{equation}\label{10part2}
\int_O d\pi(J)=\int_{\pi^{-1}(O)}dJ
\end{equation}
for an open set $O\subset\Gamma^*$. 

According to Theorem \ref{th44} and Theorem \ref{thspectral10}, for a given $J\in J(E)$ we have a functional model related to a space $H^2(\alpha)$, $\alpha=\pi(J)$, with an evident corollary:
$\pi(S^{-1}J S)=\mu^{-1}\pi(J)$. This means that the measure $d\pi(J)$ is invariant with respect to 
shift by $\mu^{-1}$ in $\Gamma^*$:
\begin{eqnarray*}
\int_O d\,\mu^{-1}\pi(J)&=&\int_O d\pi(S^{-1}JS)=\int_{S^{-1}\pi^{-1}(O)S^{-1}}d J\\
&=&\int_{S^{-1}\pi^{-1}(O)S}dS^{-1}JS=\int_{\pi^{-1}(O)} dJ=\int_O d\pi(J).
\end{eqnarray*}

Let us point out that the frequencies are independent, and therefore
\begin{equation}\label{11part2}
\text{\rm clos}_{n\in \bbZ}\{\mu^n\alpha\}=\Gamma^*, \quad \text{for any}\ \alpha\in\Gamma^*.
\end{equation}
The independence of frequencies also implies immediately that there is only one $\mu$-shift invariant measure on $\Gamma^*$.
Thus $d\pi(J)=d\alpha$ is the unique Haar measure on $\Gamma^*$.

Then by Theorem \ref{th62} and the definition \eqref{10part2} $|\pi^{-1}(\Theta)|_{dJ}=0$.
On the remaining set $\pi^{-1}(\Gamma^*\setminus \Theta)$ we have $H^2_J=\hat H^2(\pi(J))=\check H^2(\pi(J))$,
that is, the map $\pi$ is one-to-one. Therefore $dJ$ is defined uniquely by \eqref{9part2} on the set 
$J(E)\setminus\pi^{-1}(\Theta)$ of full measure.

\end{proof}

\begin{corollary}
The shift invariant measure $dJ$ is ergodic, that is,  $\{J\}_{J(E)}$ is an ergodic family of Jacobi matrices.
\end{corollary}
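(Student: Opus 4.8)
The plan is to deduce ergodicity of $dJ$ by transporting it, via the generalized Abel map, to the elementary fact that an irrational-type translation of a compact abelian group is ergodic. By the theorem just proved, $\pi\colon J(E)\to\Gamma^*$ is a continuous bijection off the $dJ$-null set $\pi^{-1}(\Theta)$, it pushes $dJ$ forward to the Haar measure $d\alpha$, and it intertwines the shift with a rotation: $\pi(S^{-1}JS)=\mu^{-1}\pi(J)$. Since $J(E)$ and $\Gamma^*$ are compact metric spaces, the restriction of $\pi$ to the full-measure Borel set $J(E)\setminus\pi^{-1}(\Theta)$ is a Borel isomorphism onto a full-measure subset of $\Gamma^*$; hence the systems $\bigl(J(E),dJ,\,J\mapsto S^{-1}JS\bigr)$ and $\bigl(\Gamma^*,d\alpha,\,\alpha\mapsto\mu^{-1}\alpha\bigr)$ are isomorphic modulo null sets. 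As ergodicity is an isomorphism invariant, it suffices to show that translation by $\mu^{-1}$ on $(\Gamma^*,d\alpha)$ is ergodic.

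To do this I would use duality. The compact group $\Gamma^*$ has, as its dual, the discrete group generated by the generators $\gamma_j$ of $\Gamma$: every continuous character of $\Gamma^*$ is of the form $\chi\colon\alpha\mapsto\prod_{j\in I}\alpha(\gamma_j)^{n_j}$ for a finite index set $I$ and integers $n_j\in\bbZ$. Evaluating $\chi$ at $\mu$ and recalling that $\mu(\gamma_j)=e^{-2\pi i\omega_j}$ gives $\chi(\mu)=e^{-2\pi i\sum_{j\in I}\omega_j n_j}$, which equals $1$ exactly when $\sum_{j\in I}\omega_j n_j\equiv0\ \mathrm{mod}\ 1$. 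Assumption $(iii)$ — the independence of the frequencies $\{\omega_j\}$ — then forces $n_j=0$ for all $j\in I$, so the only character of $\Gamma^*$ that is trivial on $\mu$ is the trivial character; equivalently $\{\mu^n\}_{n\in\bbZ}$ is dense in $\Gamma^*$, which is precisely \eqref{11part2}.

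Finally I would run the standard Fourier argument: if $f\in L^2(\Gamma^*,d\alpha)$ satisfies $f(\mu^{-1}\alpha)=f(\alpha)$ a.e., expand $f=\sum_\chi\hat f(\chi)\chi$ over the characters of $\Gamma^*$; invariance yields $\hat f(\chi)\bigl(\chi(\mu^{-1})-1\bigr)=0$ for every $\chi$, and by the previous paragraph $\chi(\mu^{-1})\neq1$ unless $\chi$ is trivial, so $\hat f(\chi)=0$ for all non-trivial $\chi$ and $f$ is a.e.\ constant. Thus translation by $\mu^{-1}$ on $\Gamma^*$ is ergodic, and therefore so is $dJ$ on $J(E)$. (Alternatively, \eqref{11part2} shows the rotation has a dense orbit, hence is uniquely ergodic, and unique ergodicity implies ergodicity.) There is no genuine obstacle here, since the substantive work is contained in Theorem \ref{th62} and in the construction of $\pi$; the only point that deserves a line of care is the verification that $\pi$ is an isomorphism of measure spaces modulo null sets — that is, that its inverse on the full-measure set is measurable — which is routine given the continuity of $\pi$ and the compactness of $J(E)$ and $\Gamma^*$.
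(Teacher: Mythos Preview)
Your argument is correct and is essentially what the paper's one-line proof ``By \eqref{9part2}'' leaves implicit: the relation \eqref{9part2} identifies $(J(E),dJ,J\mapsto S^{-1}JS)$, modulo null sets, with $(\Gamma^*,d\alpha,\alpha\mapsto\mu^{-1}\alpha)$, and the latter is ergodic because the orbit of $\mu$ is dense (this is \eqref{11part2}, a direct consequence of assumption $(iii)$). Your Fourier expansion is a clean explicit verification of this last step; alternatively, as you also observe, the uniqueness of the shift-invariant measure established in the preceding theorem already gives unique ergodicity and hence ergodicity directly.
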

\begin{proof}
By \eqref{9part2}.
\end{proof}

\section{Finite band approximation and \\ shift invariant  measure on $D(E)$}
\subsection{Finite band approximation}
We introduce a family of combs $\Pi_n$ with a finite number of slits of the form
\begin{equation}\label{fba1}
\Pi_n=\left\{z=w-\frac{i}{n}: w\in \Pi,\ \Im w>\frac 1 n\right\}.
\end{equation}
Let $\theta_n$ be the conformal mapping of the upper half-plane to $\Pi_n$ with the same normalization
$$
\theta_n(b_0)=-\pi, \quad \theta_n(a_0)=0,\quad \theta_n(\infty)=\infty.
$$
The set $E_n:=\theta_n^{-1}([-\pi,0])$ is a finite union of intervals. The covering mapping from $\bbD/\Gamma_n$ to $\Omega^n=\bar\bbC\setminus E_n$ is denoted by $\fz_n$.

Let $\cD_n$ be the connected component in the open set
$$
\{\zeta\in\bbD: |b(\zeta)|<e^{-\frac 1 n}\}
$$
which contains origin. Then, for the uniformization map $\phi_n:\bbD\to \cD_n$, $\phi_n(0)=0$, 
$\phi_n'(0)>0$, we have
\begin{equation}\label{0fba1}
b_n(\z)=e^{\frac 1 n}b(\phi_n(\z)),
\end{equation}
where $b_n=e^{i\theta_n}\circ \fz_n$ is the complex Green function of $\Omega^n$.

For a reader convenience, we formulate the standard relation between Hardy spaces in $\Omega$ and $\Omega^n$  in the following lemma.

\begin{lemma}\label{lfba1}
The superposition $f\circ\phi_n$ generates  
the map $\phi_n^*:\Gamma^*\to\Gamma_n^*$, moreover
\begin{equation}\label{fba2}
f\circ\phi_n\in H^2(\phi_n^*(\alpha)), \quad \| f\circ \phi_n\|_{H^2(\phi_n^*(\alpha))}\le \| f\|_{H^2(\alpha)}
\end{equation}
\end{lemma}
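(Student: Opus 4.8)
The plan is to turn $\phi_n$ into a covering map that intertwines $\Gamma_n$ with a quotient of $\Gamma$, and then to read the norm inequality off from Littlewood's subordination principle. First I would identify $\phi_n(\bbD)=\cD_n$ with a level domain of the Green function. By \eqref{hs2} (applied to $z_0=\infty$, $b=b_\infty$) one has $|b(\z)|=e^{-G(\fz(\z))}$, so $\cD_n$ is the connected component containing the origin of $\fz^{-1}(\Omega_n')$, where $\Omega_n':=\{z\in\Omega:\ G(z)>1/n\}$; consequently $\fz|_{\cD_n}\colon\cD_n\to\Omega_n'$ is a covering and $\fz\circ\phi_n\colon\bbD\to\Omega_n'$ is its universal covering. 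On the other hand $\Psi:=\theta^{-1}\bigl(\theta_n(\cdot)+\tfrac in\bigr)$ maps $\Omega^n=\bar\bbC\setminus E_n$ conformally onto $\Omega_n'$ and fixes $\infty$: on the upper half-plane it is the translation $\Pi_n\to\Pi\cap\{\Im w>1/n\}$ read in the coordinates $\theta_n,\theta^{-1}$, and the reflection principle extends it across $\bbR\setminus E_n$. The identity \eqref{0fba1} is precisely the statement $e^{i\theta}\circ\Psi=e^{1/n}e^{i\theta_n}$ transported by $\fz_n$; since $\Psi\circ\fz_n$ and $\fz\circ\phi_n$ are then two universal coverings of $\Omega_n'$ sending $0$ to $\infty$ with the normalization built into the definitions of $\fz_n$ and $\phi_n$, they coincide, and in particular $\Gamma_n$, the deck group of $\fz_n$, is exactly the deck group of $\fz\circ\phi_n$.

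Second I would construct the character map. For $\gamma_n\in\Gamma_n$ one has $\fz(\phi_n(\gamma_n\z))=\fz(\phi_n(\z))$, so the covering property of $\fz$ yields a unique $\rho(\gamma_n)\in\Gamma$ with $\phi_n\circ\gamma_n=\rho(\gamma_n)\circ\phi_n$; uniqueness forces $\rho\colon\Gamma_n\to\Gamma$ to be a group homomorphism, and I would put $\phi_n^*(\alpha):=\alpha\circ\rho$, a continuous homomorphism $\Gamma^*\to\Gamma_n^*$. Then for $f\in\hat H^2(\alpha)$,
\[
(f\circ\phi_n)\circ\gamma_n=f\circ\rho(\gamma_n)\circ\phi_n=\alpha(\rho(\gamma_n))\,(f\circ\phi_n)=\phi_n^*(\alpha)(\gamma_n)\,(f\circ\phi_n),
\]
so $f\circ\phi_n$ is $\Gamma_n$-character automorphic with character $\phi_n^*(\alpha)$, once we know it is a Hardy function at all. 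Nothing in this computation depends on the decoration, so $\check H^2$ and, with the image taken as the definition downstairs, the intermediate spaces $H^2(\alpha)$ are handled the same way.

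Third, the norm bound comes from $\phi_n\colon\bbD\to\cD_n\subseteq\bbD$ being holomorphic with $\phi_n(0)=0$: by Littlewood's subordination principle, if $U$ is the least harmonic majorant of the subharmonic function $|f|^2$ — the Poisson integral of $|f|^2|_{\bbT}$, so that $U(0)=\|f\|_{\hat H^2(\alpha)}^2$ — then $|f\circ\phi_n|^2\le U\circ\phi_n$, and $U\circ\phi_n$ is a nonnegative harmonic function with $(U\circ\phi_n)(0)=U(0)$; integrating over $\bbT$ and letting the radius tend to $1$ gives $\int_{\bbT}|f\circ\phi_n|^2\,dm\le U(0)$. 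Hence $f\circ\phi_n\in H^2(\bbD)$ and $\|f\circ\phi_n\|_{\hat H^2(\phi_n^*(\alpha))}\le\|f\|_{\hat H^2(\alpha)}$, which together with the second step proves the lemma. The only ingredient that is more than bookkeeping is Littlewood subordination in the last step; the real care goes into the first step, namely pinning down the conformal equivalence $\Omega^n\simeq\Omega_n'$ and the resulting identity $\fz\circ\phi_n=\Psi\circ\fz_n$ — which, fortunately, \eqref{0fba1} has already recorded.
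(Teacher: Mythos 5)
Your proof is correct and follows essentially the same route as the paper: the character map $\phi_n^*$ is obtained from the induced homomorphism $\Gamma_n\to\Gamma$ (the paper phrases it via $\pi_1(\Omega^n)\to\pi_1(\Omega)$, you via deck transformations and the relation $\phi_n\circ\gamma_n=\rho(\gamma_n)\circ\phi_n$, which is the same thing), and the norm bound is the identical least-harmonic-majorant argument, merely relabelled as Littlewood subordination. Your first step spelling out $\fz\circ\phi_n=\Psi\circ\fz_n$ is extra care the paper absorbs into the definitions preceding the lemma; note only the harmless sign slip there, since $\Im\theta\circ\Psi=\Im\theta_n+\tfrac1n$ gives $e^{i\theta}\circ\Psi=e^{-1/n}e^{i\theta_n}$ (up to a unimodular constant), consistent with \eqref{0fba1}.
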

\begin{proof}
For an arbitrary closed curve $\gamma\in\pi_1(\Omega^n)$ we assign the closed curve $\phi_n(\gamma)\in\pi_1(\Omega)$ and this generates the map $\phi_n^*$.

If $u(\zeta)$ is the best harmonic majorant of $|f(\z)|^2$ then  $u(\phi_n(\z))$ is a harmonic majorant of $f\circ\phi_n$. If $u_n(\z)$ is the best majorant for this function we have $u_n(0)\le u(\phi_n(0))=u(0)$, which means \eqref{fba2}
\end{proof}

\begin{remark}
$\phi_n^*(\mu)=\mu_n^*$, where $\mu_n^*$ is the character of $b_n$.
\end{remark}
\begin{remark}\label{remfba3}
According to the definition of the domain $\Omega^n$ the critical points of the Green function in it correspond to the critical point $c_j$'s in $\Omega$, that is, if $\z_{c_j}\in \cD_n$ and $\fz(\z_{c_j})=c_j$, then $b'_n(\phi_n^{-1}(\z_{c_j}))=0$.
\end{remark}
\begin{remark}
Since $\Omega^n$ is finitely connected the hat- and check-Hardy spaces in it always coincide.
\end{remark}

We have the following important relation for the Widom functions. 
\begin{lemma}\label{lfba2}
Let $\Delta_n$ and $\Delta$ be the Widom functions in $\Omega^n$ and $\Omega$ respectively.
Then
\begin{equation}\label{fba3}
\Delta(\z)=\lim_{n\to\infty}\Delta_n(\phi_n^{-1}(\z)).
\end{equation}
\end{lemma}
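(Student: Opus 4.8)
The plan is to pass to the additive (Green--function) form and then exploit that $\cD_n$ exhausts $\bbD$. Write $G(z,z_0)$, $G_n(z,z_0)$ for the Green functions of $\Omega$ and $\Omega^n$, and recall $G(z)=G(z,\infty)$. By \eqref{hs2}--\eqref{hs3},
\[
\log\frac1{|\Delta(\z)|}=\sum_{j\ge 1}G(\fz(\z),c_j).
\]
By \eqref{0fba1} and Remark~\ref{remfba3} the critical points of $b_n$ are precisely the pull--backs $\phi_n^{-1}(\z_{c_j})$ with $\z_{c_j}\in\cD_n$, i.e.\ with $G(c_j)=h_j>1/n$; setting $c_j^{(n)}=\fz_n(\phi_n^{-1}(\z_{c_j}))$ for the corresponding critical point of $G_n$, \eqref{0fba1} also gives $G_n(c_j^{(n)})=h_j-\tfrac1n$ (so the $j$-th slit of $\Pi_n$ has this height). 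Hence, for $\z\in\cD_n$,
\[
\log\frac1{|\Delta_n(\phi_n^{-1}(\z))|}=\sum_{j:\,h_j>1/n}G_n\big((\fz_n\circ\phi_n^{-1})(\z),c_j^{(n)}\big).
\]

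First I note that $\cD_n\uparrow\bbD$: a path joining $0$ to a given $\z_0$ has $\max|b|=e^{-\delta}<1$ on it, hence lies in $\{|b|<e^{-1/n}\}$, and so in its $0$-component $\cD_n$, once $n>1/\delta$. By the Carath\'eodory kernel theorem (and $\phi_n(0)=0$, $\phi_n'(0)>0$) one gets $\phi_n^{-1}\to\mathrm{id}$ locally uniformly; together with \eqref{0fba1} and Lemma~\ref{lfba1} this forces the convergence of the covering data $\fz_n\to\fz$, $\Gamma_n\to\Gamma$, whence $(\fz_n\circ\phi_n^{-1})(\z)\to\fz(\z)$, $c_j^{(n)}\to c_j$, and each summand converges, $G_n((\fz_n\circ\phi_n^{-1})(\z),c_j^{(n)})\to G(\fz(\z),c_j)$, uniformly on compacta of $\bbD$. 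A uniform majorant for the series comes from Harnack's inequality in $\bbD$: lifting to $\bbD$, $G_n((\fz_n\circ\phi_n^{-1})(\z),c_j^{(n)})=\sum_{\gamma\in\Gamma_n}g_{\bbD}(\phi_n^{-1}(\z),\gamma\phi_n^{-1}(\z_{c_j}))$ with $g_{\bbD}$ the Green function of the disc, each term is at most $\tfrac{1+|\phi_n^{-1}(\z)|}{1-|\phi_n^{-1}(\z)|}$ times its value at $0$, and summing gives $G_n((\fz_n\circ\phi_n^{-1})(\z),c_j^{(n)})\le\tfrac{1+|\phi_n^{-1}(\z)|}{1-|\phi_n^{-1}(\z)|}\,G_n(c_j^{(n)})\le\tfrac{1+r}{1-r}\,h_j$ for $\z$ in a compactum ($|\phi_n^{-1}(\z)|\le r<1$, $n$ large) and all $n$. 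Since $\sum_j h_j<\infty$, dominated convergence for the series (the index sets $\{j:h_j>1/n\}$ exhaust $\bbN$) yields
\[
\log\frac1{|\Delta_n(\phi_n^{-1}(\z))|}\longrightarrow\log\frac1{|\Delta(\z)|},\qquad\z\in\bbD .
\]

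It remains to upgrade convergence of moduli to convergence of the functions. Each $\Delta_n$ is a Blaschke product, so $\{\Delta_n\circ\phi_n^{-1}\}$ is locally bounded on the exhausting domains $\cD_n$, and $(\Delta_n\circ\phi_n^{-1})(0)=\Delta_n(0)=e^{-\sum_{j:h_j>1/n}(h_j-1/n)}\to e^{-\sum h_j}=\Delta(0)>0$. By Montel's theorem every subsequence has a further subsequence converging locally uniformly on $\bbD$ to some $g\in H^\infty$; by the modulus convergence $|g|=|\Delta|$ on $\bbD$, so $g/\Delta$ extends holomorphically across their common zeros to a unimodular, hence constant, function, and $g(0)/\Delta(0)>0$ forces $g=\Delta$. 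As every subsequence sub-converges to $\Delta$, the whole sequence converges to $\Delta$, which is \eqref{fba3}.

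The step I expect to be the main obstacle is the middle one, namely the convergence of the covering data: $\phi_n^{-1}\to\mathrm{id}$ is routine, but $\fz_n\to\fz$, $\Gamma_n\to\Gamma$, and with them $c_j^{(n)}\to c_j$ and the termwise convergence $G_n\to G$, require a careful analysis of the comb approximation $\Pi_n\to\Pi$ (the finitely many slits of $\Pi_n$ growing back to the infinitely many slits of $\Pi$). The matching of critical points is exactly what Remark~\ref{remfba3} provides, and the Widom hypothesis $\sum h_j<\infty$ is what supplies the majorant for the dominated--convergence argument.
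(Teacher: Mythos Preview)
Your argument is correct in outline, but it takes a much heavier route than the paper's. The paper's proof is only a few lines and completely bypasses the step you yourself flag as the main obstacle. Here is the comparison.

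The paper argues as follows: since $|\Delta_n\circ\phi_n^{-1}|\le 1$ and $\cD_n\uparrow\bbD$, Montel gives a locally uniform limit $f$ along any subsequence. By Remark~\ref{remfba3} each $\Delta_n\circ\phi_n^{-1}$ vanishes at $\z_{c_j}$ once $\z_{c_j}\in\cD_n$, so $f(\z_{c_j})=0$ for every $j$; hence $f=\Delta g$ with $g\in H^\infty$, $\|g\|_\infty\le 1$. Since $\Delta_n(0)=e^{-\sum h_j^{(n)}}\to e^{-\sum h_j}=\Delta(0)$, one gets $g(0)=1$, and the maximum principle forces $g\equiv 1$. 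That is the entire proof.

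The crucial difference is that the paper never needs the convergence of the covering data $\fz_n\to\fz$, $\Gamma_n\to\Gamma$, $c_j^{(n)}\to c_j$, nor the termwise Green--function convergence $G_n\to G$; it uses only the \emph{location of the zeros} of $\Delta_n\circ\phi_n^{-1}$ (which is exactly Remark~\ref{remfba3}) together with the value at the origin. Your approach establishes full modulus convergence first and then upgrades, which is more informative but requires precisely the comb--approximation analysis you were worried about. One small caution if you pursue your route: the Harnack bound $G_n\le\frac{1+r}{1-r}h_j$ fails when $\phi_n^{-1}(\z)$ hits a zero of $b_{c_j^{(n)}}$; this is harmless because Blaschke zeros accumulate only on $\bbT$, so on any fixed compact only finitely many $j$ need separate treatment, but it should be said.
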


\begin{proof}
First of all we note that $\cup_{n\ge 1} \cD_n=\bbD$, $\cD_n\subset\cD_{n+1}$, that is, starting from a certain $n$, $\phi^{-1}(\z)$ is well defined. Thus the sequence $\Delta_n(\phi_n^{-1}(\z))$ is well defined on an arbitrary compact set in $\bbD$. Recall that $|\Delta_n(\z)|\le 1$. 

We use  compactness of the family and consider an arbitrary convergent subsequence to define 
$$
f(\z)=\lim_{k\to\infty}\Delta_{n_k}(\phi_{n_k}^{-1}(\z)).
$$
According to  Remark \ref{remfba3}, $f(\z_{c_j})=0$, that is, $f=\Delta g$, where $\|g\|_{H^\infty}\le 1$. Evidently, $\lim_{n\to\infty}\sum h_j^{(n)}=\sum h_j$, where $h_j^{(n)}$ are the heights of the slits in $\Pi_n$. Therefore
$$
\lim_{n\to\infty}\Delta_n(0)=\Delta(0),
$$
that is, $g(0)=1$, thus $g=1$  identically and \eqref{fba3} is proved.
\end{proof}

\begin{theorem}
\label{fb}
 For the given finite band approximation
\begin{equation}\label{fba5}
\lim_{n\to\infty} k^{\phi_n^*(\alpha)}(0)=\hat k^\alpha(0).
\end{equation}
\end{theorem}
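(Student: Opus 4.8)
The plan is to prove the equality by two matching inequalities, using the variational description of the reproducing kernel at the origin: $\hat k^\alpha(0)=\sup\{|f(0)|^2:\ f\in\hat H^2(\alpha),\ \|f\|\le 1\}$, and likewise $k^{\phi_n^*(\alpha)}(0)=\sup\{|f(0)|^2:\ f\in H^2(\phi_n^*(\alpha)),\ \|f\|\le1\}$ in the finitely connected domain $\Omega^n$ (where the hat-- and check--spaces coincide). So it suffices to show $\liminf_n k^{\phi_n^*(\alpha)}(0)\ge\hat k^\alpha(0)$ and $\limsup_n k^{\phi_n^*(\alpha)}(0)\le\hat k^\alpha(0)$.

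For the lower bound I would fix $f\in\hat H^2(\alpha)$ with $\|f\|\le1$ and push it to $\Omega^n$: by Lemma \ref{lfba1} (its proof via best harmonic majorants applies verbatim to the hat--space) one has $f\circ\phi_n\in H^2(\phi_n^*(\alpha))$ with $\|f\circ\phi_n\|\le\|f\|\le1$, while $(f\circ\phi_n)(0)=f(0)$ because $\phi_n(0)=0$. Hence $k^{\phi_n^*(\alpha)}(0)\ge|f(0)|^2$ for every $n$; taking the supremum over such $f$ gives $k^{\phi_n^*(\alpha)}(0)\ge\hat k^\alpha(0)$ for all $n$, and a fortiori $\liminf_n k^{\phi_n^*(\alpha)}(0)\ge\hat k^\alpha(0)$.

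For the upper bound --- the substantial half --- I would argue by compactness, in the spirit of the proof of Lemma \ref{lfba2}. Let $f_n\in H^2(\phi_n^*(\alpha))$ be the normalized reproducing kernel, so $\|f_n\|=1$ and $f_n(0)=\sqrt{k^{\phi_n^*(\alpha)}(0)}>0$, and pass to a subsequence along which $k^{\phi_n^*(\alpha)}(0)$ converges to its $\limsup$. Put $g_n:=f_n\circ\phi_n^{-1}$, an analytic function on $\cD_n$. If $u_n$ is the best harmonic majorant of $|f_n|^2$ on $\bbD$, then $u_n(0)=\|f_n\|^2=1$, so $v_n:=u_n\circ\phi_n^{-1}$ is a positive harmonic majorant of $|g_n|^2$ on $\cD_n$ with $v_n(0)=1$. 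Since $\cD_n\uparrow\bbD$, for any compact $K\subset\bbD$ there is a bounded subdomain $D$ with $K\cup\{0\}\subset D$ and $\bar D\subset\bbD$, and $\bar D\subset\cD_n$ for all large $n$; Harnack's inequality on $D$ then gives $\sup_K|g_n|^2\le v_n\le C(K)v_n(0)=C(K)$, while the mean value property of $v_n$ yields $\frac1{2\pi}\int_0^{2\pi}|g_n(re^{i\theta})|^2\,d\theta\le v_n(0)=1$ for every $r<1$ and all large $n$. By Montel's theorem a further subsequence of $g_n$ converges locally uniformly on $\bbD$ to an analytic $g$ with $g(0)=\lim f_n(0)$, and the mean value bound passes to the limit to give $\|g\|_{H^2(\bbD)}^2\le1$. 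It remains to verify that $g$ is character automorphic with character $\alpha$: for fixed $\gamma\in\Gamma$, once $n$ is large enough that all generators appearing in a word representing $\gamma$ correspond to slits present in $\Pi_n$ --- which happens because $h_k\to0$ forces the stabilizer $\Gamma^{(n)}=\{\gamma\in\Gamma:\gamma(\cD_n)=\cD_n\}$ to increase to $\Gamma$, with $\phi_n$ identifying $\Gamma^{(n)}$ with $\Gamma_n$ --- the relation $g_n\circ\gamma=\alpha(\gamma)g_n$ holds on $\cD_n$; letting $n\to\infty$ along the subsequence gives $g\circ\gamma=\alpha(\gamma)g$ on $\bbD$. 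Thus $g\in\hat H^2(\alpha)$ with $\|g\|\le1$, whence $\limsup_n k^{\phi_n^*(\alpha)}(0)=g(0)^2\le\hat k^\alpha(0)$.

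Combining the two inequalities yields $\lim_n k^{\phi_n^*(\alpha)}(0)=\hat k^\alpha(0)$. I expect the main obstacle to be the upper-bound half, and within it the two pieces of bookkeeping: showing that the normalized majorants $v_n$ on the expanding domains $\cD_n$ produce genuinely uniform bounds on compacts (so that Montel applies and the $H^2$-norm estimate survives the limit), and showing that the approximating symmetry groups $\Gamma^{(n)}\simeq\Gamma_n$ exhaust $\Gamma$ with matching character values, so that the limit function $g$ indeed lies in $\hat H^2(\alpha)$. Once these are in place, the extraction of the convergent subsequence and the passage to the limit are routine, exactly as in Lemmas \ref{lfba1} and \ref{lfba2}.
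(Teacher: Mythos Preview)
Your argument is correct, and the lower bound coincides with the paper's (both use Lemma~\ref{lfba1}). For the upper bound, however, the paper takes a quite different and considerably shorter route: rather than a normal-families extraction, it exploits the duality $\hat k^{\beta}(0)\,\check k^{\beta^{-1}\nu}(0)=\Delta(0)^2$ of \eqref{bishs14} together with its finite-band analogue $k^{\beta_n}(0)\,k^{\beta_n^{-1}\nu_n^*}(0)=\Delta_n(0)^2$. Applying the easy inequality \eqref{fba6} to the character $\alpha^{-1}\nu_n$ (where $\nu_n\in\Gamma^*$ is any preimage of the character $\nu_n^*$ of $\Delta_n$) and then dualizing on both sides gives
\[
\frac{\Delta(0)^2}{\hat k^{\alpha\nu_n^{-1}\nu}(0)}\le\frac{\Delta_n(0)^2}{k^{\phi_n^*(\alpha)}(0)}.
\]
Since $\Delta_n(0)\to\Delta(0)$ (Lemma~\ref{lfba2}), $\nu_n\to\nu$, and $\hat k^\alpha(0)$ is upper semi-continuous, one reads off $\limsup_n k^{\phi_n^*(\alpha)}(0)\le\hat k^\alpha(0)$. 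In short, the trivial direction applied to the dual character is flipped, via the Widom-function identity, into the nontrivial direction --- no compactness argument needed.

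Your approach is more elementary in that it avoids the hat/check duality machinery entirely, at the price of the two pieces of bookkeeping you correctly flag: uniform Harnack bounds on the exhausting $\cD_n$ and the exhaustion $\Gamma^{(n)}\uparrow\Gamma$ with matching character values. Both go through; in particular your identification of $\Gamma_n$ with the stabilizer $\Gamma^{(n)}$ via $\phi_n$-conjugation, together with $(\phi_n^*\alpha)(\phi_n^{-1}\gamma\phi_n)=\alpha(\gamma)$, is exactly what is needed, and $\cD_n\uparrow\bbD$ (noted in the proof of Lemma~\ref{lfba2}) guarantees that every fixed $\gamma\in\Gamma$ eventually stabilizes $\cD_n$.
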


\begin{proof}
First of all by Lemma \ref{lfba1}
\begin{equation}\label{fba6}
\hat k^{\alpha}(0)\le k^{\phi_n^*(\alpha)}(0), \quad \forall \alpha\, \forall n.
\end{equation}

Let $\nu_n^*\in\Gamma_n^*$ be the character of the function $\Delta_n$. For each $n$ we fix a character $\nu_n\in\Gamma^*$ such that $\phi_n^*(\nu_n)=\nu_n^*$. We point out  that by Lemma \ref{lfba2}
$\lim_{n\to\infty}\nu_n=\nu$.
Since
$$
\check k^{\alpha^{-1}\nu_n}(0)\le \hat  k^{\alpha^{-1}\nu_n}(0)\le k^{\phi^*_n(\alpha)^{-1}\nu_n^*}(0)
$$
by the duality
$$
\frac{\Delta(0)^2}{\hat  k^{\alpha\nu_n^{-1}\nu}(0)}\le \frac{\Delta_n(0)^2}{k^{\phi^*_n(\alpha)}(0)}
$$
We can pass to the limit, having in mind that $\hat k^\alpha(0)$ is upper semi-continuous,
$$
\frac{\Delta(0)^2}{\hat  k^{\alpha}(0)}\le
\liminf_{n\to\infty}\frac{\Delta(0)^2}{\hat  k^{\alpha\nu_n^{-1}\nu}(0)}\le \liminf_{n\to\infty}
\frac{\Delta_n(0)^2}{k^{\phi^*_n(\alpha)}(0)}
=\frac{\Delta(0)^2}{\limsup_{n\to\infty}k^{\phi^*_n(\alpha)}(0)}.
$$
In combination with \eqref{fba6} we get \eqref{fba5}.
\end{proof}

\subsection{Shift invariant measure on the divisors}

Recall that the set $J(E)$ is homeomorphic to $D(E)$ (both are provided with corresponding weak topologies). Let $d\chi$ be  the measure on $D(E)$ corresponding to $dJ$  on $J(E)$.

\begin{theorem}
\label{thsid}
Let $\omega_k(z)$ be the harmonic measure of the set $E_k:=E\cap[b_k,a_0]$ evaluated at $z\in\Omega$. Then for an open set
\begin{equation}\label{sid1}
O=\{D: x_{i_1}\in I_{j_1}:=(a'_{j_1},b'_{j_1}),\dots, x_{i_\l}\in I_{j_\l}:= (a'_{j_\l},b'_{j_\l})\}
\end{equation}
where $[a'_{j_m},b'_{j_m}]\subset (a_{j_m},b_{j_m})$, and a fixed collections of $\{\epsilon_{j_m}\}$,
$1\le m\le \l$, we have
\begin{equation}\label{sid2}
\chi(O)=\frac 1 {2^\l}\left |\begin{matrix} 
\omega_{j_1}(b'_{j_1})-\omega_{j_1}(a'_{j_1})&\hdots\ &\omega_{j_1}(b'_{j_\l})-\omega_{j_1}(a'_{j_\l})\\
\vdots& &\vdots\\
\omega_{j_\l}(b'_{j_1})-\omega_{j_\l}(a'_{j_1})& \hdots&\omega_{j_\l}(b'_{j_\l})-
\omega_{j_\l}(a'_{j_\l})\\
\end{matrix}
\right|
\end{equation}

\end{theorem}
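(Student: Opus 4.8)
The plan is to recognize $\chi$ as the Haar measure of $\Gamma^*$ transported to $D(E)$ through the (inverse of the) generalized Abel map, and then to make the resulting — a priori infinite-dimensional — Jacobian computation legitimate by running it on the finite-band approximants of Section 7.1. First I would fix the transport map. By the theorem of the preceding section the shift invariant measure $dJ$ corresponds, via $\pi$, to the Haar measure $d\alpha$; since $J(E)\simeq D(E)$, the measure $\chi$ is the image of $d\alpha$ under the composite homeomorphism $\Phi=(\text{divisor map of Definition \ref{def31}})^{-1}\circ\pi$, defined on the full-measure set where $\pi$ is invertible, so that $\chi(O)=|\Phi^{-1}(O)|_{d\alpha}$. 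From Lemma \ref{le43} (see \eqref{abm2}, \eqref{abm3}) one reads off that, for a divisor $D=\{(x_j,\epsilon_j)\}$, the function $e_0$ carries the character $\alpha$ with $\alpha^2=\nu\prod_j\mu_{x_j}^{\epsilon_j}$, where $\mu_x$ is the character of $b_x$ and $\nu$ that of $\Delta=\prod_jb_{c_j}$, see \eqref{hs3}. In the coordinates $\Gamma^*\ni\alpha\mapsto(\alpha(\gamma_k))_{k\ge1}$, using the classical relation $\arg\mu_x(\gamma_k)=2\pi\,\omega_k(x)$ between the character of $b_x$ and the period of the conjugate Green function ($\omega_k$ the harmonic measure of $E_k=E\cap[b_k,a_0]$ as in the statement), this becomes
\begin{equation*}
\arg\alpha(\gamma_k)=\pi\Bigl(\textstyle\sum_j\omega_k(c_j)+\sum_j\epsilon_j\,\omega_k(x_j)\Bigr)\pmod{2\pi}.
\end{equation*}
The factor $\pi$ rather than $2\pi$ — a consequence of the square root $\alpha=(\alpha^2)^{1/2}$ — is what will eventually produce the prefactor $2^{-\l}$.

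Next I would carry out the computation on the finite-band domains $\Omega^n$ of Section 7.1. For $n$ large all gaps $j_1,\dots,j_\l$ persist among the finitely many, say $N_n$, gaps of $\Omega^n$; there the hat- and check-Hardy spaces coincide, so $\pi_n$ is a genuine homeomorphism, $D(E_n)=\prod_{j=1}^{N_n}I_j$ is an $N_n$-torus, and $\Phi_n$ is a homeomorphism onto $\Gamma_n^*\simeq\bbT^{N_n}$ which on each of the $2^{N_n}$ sheets $\{\epsilon=\mathrm{const}\}$ is given by the displayed formula with $\omega_k,c_j$ replaced by their $\Omega^n$-analogues $\omega_k^{(n)},c_j^{(n)}$. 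Writing the Haar measure of $\Gamma_n^*$ as $(2\pi)^{-N_n}\prod_kd(\arg\alpha(\gamma_k))$ and using $\partial_{x_j}\arg\alpha(\gamma_k)=\pi\,\epsilon_j\,(\omega_k^{(n)})'(x_j)$, the push-forward of $d\alpha_n$ has, on every sheet, the density
\begin{equation*}
\rho_n(x)=2^{-N_n}\bigl|\det\bigl[(\omega_k^{(n)})'(x_j)\bigr]_{k,j=1}^{N_n}\bigr|,
\end{equation*}
the absolute value being harmless because $\Phi_n$ is a piecewise diffeomorphism whose Jacobian has a fixed sign on each sheet. Now I evaluate $\chi_n(O)$ by integrating $\rho_n$: the $N_n-\l$ coordinates $j\notin\{j_1,\dots,j_\l\}$ range over the full gaps and over both signs, contributing a factor $2^{N_n-\l}$ from the sign sum and, by column-multilinearity of the determinant together with the period normalization $\int_{(a_j,b_j)}(\omega_k^{(n)})'\,dx=\oint_{\gamma_j}d\omega_k^{(n)}=\delta_{jk}$, replacing those columns by the standard basis vectors, so that Laplace expansion collapses the determinant to the $\l\times\l$ minor on $\{j_1,\dots,j_\l\}$. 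Since $2^{N_n-\l}\cdot2^{-N_n}=2^{-\l}$ and that minor again keeps a fixed sign ($\Phi_n$ being orientation-preserving on each sheet), integrating the remaining variables over $(a'_{j_p},b'_{j_p})$ produces
\begin{equation*}
\chi_n(O)=\frac1{2^\l}\bigl|\det\bigl[\omega_{j_m}^{(n)}(b'_{j_p})-\omega_{j_m}^{(n)}(a'_{j_p})\bigr]_{m,p=1}^{\l}\bigr|,
\end{equation*}
that is, \eqref{sid2} for $\Omega^n$.

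Finally I would let $n\to\infty$. On the analytic side $\fz_n\circ\phi_n\to\fz$ on compact sets and the gaps of $\Omega^n$ converge to those of $\Omega$, so, arguing exactly as in the proof of Lemma \ref{lfba2}, $\omega_k^{(n)}\to\omega_k$; in particular $\omega_{j_m}^{(n)}(a'_{j_p})\to\omega_{j_m}(a'_{j_p})$ and likewise at $b'_{j_p}$, since $a'_{j_p},b'_{j_p}$ lie in the interior of the gap $(a_{j_p},b_{j_p})$. On the measure side the homomorphism $\phi_n^*:\Gamma^*\to\Gamma_n^*$ of Lemma \ref{lfba1} is essentially the coordinate projection, hence surjective and measure-preserving from $d\alpha$ to $d\alpha_n$; therefore $d\chi_n$ is the image of $d\chi$ under the induced map $D(E)\to D(E_n)$, and this map converges to the identity on the finitely many coordinates $j_1,\dots,j_\l$, so $\chi_n(O)\to\chi(O)$ for the continuity set $O$. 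Passing to the limit in the finite-band formula gives \eqref{sid2}.

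The determinant bookkeeping of the middle step is routine but must be done with care: the square-root normalization has to yield exactly $2^{-\l}$, and the period normalization $\oint_{\gamma_j}d\omega_k^{(n)}=\delta_{jk}$ has to collapse the free columns to the identity. The genuine obstacle is the limiting step: one has to establish not merely convergence of the two sides of the formula but actual weak convergence $\chi_n\to\chi$ of the push-forward measures, and this is exactly where the finite-band apparatus — Lemmas \ref{lfba1} and \ref{lfba2}, Theorem \ref{fb}, and the compatibility of $\pi_n$ with $\pi$ through the uniformizations $\phi_n$ — is indispensable.
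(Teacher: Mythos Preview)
Your overall architecture---do the Jacobian computation on the finite-band approximant and then pass to the limit---is exactly the paper's, and your middle step reproduces its Lemma~\ref{l22sid}. The gap is in your limiting argument.

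You write that $\phi_n^*$ is ``essentially the coordinate projection'' and that therefore $d\chi_n$ is the push-forward of $d\chi$ under an ``induced map $D(E)\to D(E_n)$'' which ``converges to the identity on the finitely many coordinates $j_1,\dots,j_\l$''. Neither claim holds as stated. The group $\Gamma_n$ is not a subgroup of $\Gamma$, and $\phi_n^*$ is not a coordinate projection; more seriously, the only candidate for your induced divisor map is $\pi_n^{-1}\circ\phi_n^*\circ\pi$, which sends the zero set of $\hat e^{\alpha}$ to the zero set of $e^{\phi_n^*(\alpha)}$. These are zeros of \emph{different} reproducing kernels in \emph{different} Hardy spaces, and there is no soft diagram-chase that locates one near the other. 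Proving that the $j_s$-th zero of $e^{\phi_n^*(\alpha)}$ lies in $I_{j_s}(\delta)$ whenever the $j_s$-th zero of $\hat e^{\alpha}$ lies in $I_{j_s}$ (for most $\alpha$) is precisely the substance of the limit, not a formality.

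The paper supplies the missing mechanism: a Rouch\'e comparison. One needs (i) a uniform lower bound $|e^D(z)|\ge C(\Omega_j)|z-x_j|$ on the boundary of a small rectangle around each zero (Lemma~\ref{l23sid}, proved from the Widom condition), and (ii) a uniform sup-norm estimate $|\hat e^{\alpha}(\z)-e^{\phi_n^*(\alpha)}(\z)|\le C\,|1-\hat e^{\alpha}(0)/e^{\phi_n^*(\alpha)}(0)|^{1/2}$ on those compacts (Lemma~\ref{l24sid}). Theorem~\ref{fb} then gives $\int_{\Gamma^*}|1-\hat e^{\alpha}(0)/e^{\phi_n^*(\alpha)}(0)|^{1/2}\,d\alpha\to 0$, so outside an exceptional set of Haar measure $<\eta$ one can apply Rouch\'e and trap the zeros. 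This yields two-sided inequalities $\chi^{(n)}(O_n(\delta))-\eta\le\chi(O)\le\chi^{(n)}(O_n(-\delta))+\eta$, and sending $n\to\infty$, $\eta\to 0$, $\delta\to 0$ gives \eqref{sid2}. Your proposal names Theorem~\ref{fb} but omits (i) and (ii), without which the passage from $L^1$-convergence of values at $0$ to convergence of zero loci does not go through.
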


\begin{lemma}\label{l22sid}
If $E=E_N$ is a finite union of intervals then \eqref{sid2} holds.
\end{lemma}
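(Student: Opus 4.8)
The plan is to reduce the finite-band case to a direct computation with the explicit divisor parametrization. When $E=E_N$ is a finite union of intervals, the resolvent domain $\Omega^N$ is finitely connected, the Widom condition is automatic, and — crucially — the hat- and check-Hardy spaces coincide, so the generalized Abel map $\pi$ is a genuine homeomorphism $J(E_N)\to\Gamma_N^*$ and the unique shift-invariant measure $dJ$ is pulled back from the Haar measure $d\alpha$ on $\Gamma_N^*$ via Theorem \ref{th62} (which is vacuous here) and formula \eqref{9part2}. Thus $\chi$ is the image of Haar measure on $\Gamma_N^*$ under the composition $\Gamma_N^*\xrightarrow{\pi^{-1}} J(E_N)\to D(E_N)$. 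So the statement to prove becomes: under the identification of $D(E_N)$ with $\Gamma_N^*$ afforded by Lemma \ref{le43} (the divisor determines, and is determined by, the character $\alpha$ of $e_0$), Haar measure corresponds to the measure with the density prescribed by the right side of \eqref{sid2}.

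First I would make the parametrization $D(E_N)\to\Gamma_N^*$ fully explicit. By \eqref{abm3} and more precisely \eqref{abm2}, the character $\alpha$ attached to a divisor $D=\{(x_j,\epsilon_j)\}_{j=1}^N$ is
$$
\alpha=\nu^{1/2}\prod_{j=1}^N \mu_{x_j}^{(1+\epsilon_j)/2}\cdot(\text{character of the outer factor}),
$$
but it is cleaner to work on the level of the classical Abel map on the (finite) Riemann surface: up to the fixed contribution of $\Delta$, the character of $e_0$ is $\prod_j \mu_{x_j}^{\epsilon_j}$ in multiplicative notation, equivalently $\sum_j \epsilon_j\,\omega(\,\cdot\,,x_j)$ in the additive coordinates on $\Gamma_N^*\simeq\mathbb{R}^N/\mathbb{Z}^N$, where $\omega(\cdot,x)$ is the vector of harmonic measures $\{\omega_{j_m}(x)\}$ of the intervals $E_{k}$. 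Differentiating, the Jacobian of the map sending $(x_{j_1},\dots,x_{j_\ell})$ (with the other $x_j$ and all signs frozen, and the appropriate $2^\ell$ choices of the active signs) to the corresponding coordinates in $\Gamma_N^*$ is exactly the determinant of the matrix $[\,\omega_{j_m}'(x_{j_k})\,]$ — the harmonic-measure period matrix. Since Haar measure on $\Gamma_N^*$ is normalized Lebesgue measure $d\alpha$, integrating the reciprocal Jacobian over $I_{j_1}\times\cdots\times I_{j_\ell}$ and summing the contributions of the $2^\ell$ sign patterns on the active slits produces precisely
$$
\chi(O)=\frac{1}{2^\ell}\int_{I_{j_1}}\!\!\cdots\!\int_{I_{j_\ell}}
\bigl|\det[\,\omega_{j_m}'(x_{j_k})\,]\bigr|\,dx_{j_1}\cdots dx_{j_\ell}
=\frac{1}{2^\ell}\bigl|\det[\,\omega_{j_m}(b'_{j_k})-\omega_{j_m}(a'_{j_k})\,]\bigr|,
$$
where the last equality is the change of variables $y_k=\omega(\cdot,x_{j_k})$ (one must note that each $\omega_{j_m}$ restricted to the $k$-th gap depends only on $x_{j_k}$, so the Jacobian determinant integrates coordinatewise into the stated determinant of increments). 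The $1/2^\ell$ is the Haar mass of each fixed choice of the $\ell$ active signs; the frozen coordinates (inactive gaps and their signs) integrate out to $1$ because $\pi$ is a bijection and $O$ leaves them unconstrained.

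The main obstacle is not the calculus but the bookkeeping of the Abel map on the finitely connected surface: one must verify that, for a finite-band $E_N$, the character of $e_0$ in \eqref{abm2} really is (an affine shift of) $\sum_j\epsilon_j\,\omega(\cdot,x_j)$, i.e., that the outer factor's character depends on $D$ only through this classical Abel sum, and that the map $D(E_N)\to\Gamma_N^*$ is onto with the indicated local degree $2^\ell$ coming from the free signs. This is exactly the finitely connected, DCT-is-automatic incarnation of Lemmas \ref{le43}–\ref{le44} together with the finite-band functional model of Subsection \ref{subsec25}, so all the needed ingredients are already in place; what remains is to push the explicit expansion in \eqref{abm2}–\eqref{abm3} through and identify the Jacobian. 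Once the Jacobian is identified as $|\det[\omega_{j_m}'(x_{j_k})]|$, \eqref{sid2} follows by the elementary change of variables described above.
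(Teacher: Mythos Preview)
Your approach is essentially the same as the paper's: both identify the shift-invariant measure as the pullback of Haar measure on $\Gamma_N^*\simeq\bbT^N$ under the (bijective, in the finite-gap case) Abel map, compute the Jacobian as $\det[\omega_j'(x_k)]$, and integrate. The paper is terser---it simply writes $dm(\alpha_j)=\tfrac12\sum_k\omega_j'(x_k)\,dx_k$ as ``well known'' and takes the wedge product to get the full $N\times N$ determinant density $\tfrac1{2^N}\lvert\det[\omega_j'(x_k)]\rvert$---whereas you spend more effort unpacking the character of $e_0$ from \eqref{abm2}.

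One small point worth sharpening in your write-up: the reduction from $N$ active coordinates to the $\ell$ constrained ones is done in the paper not by a vague ``the frozen coordinates integrate out to $1$'' but by integrating each unconstrained $x_j$ over its full gap $(a_j,b_j)$ and invoking the identity $\omega_k(b_j)-\omega_k(a_j)=\delta_{k,j}$, which turns the corresponding columns of the $N\times N$ determinant of increments into standard basis vectors and collapses it to the $\ell\times\ell$ subdeterminant in \eqref{sid2}. Your remark that ``the Jacobian determinant integrates coordinatewise into the stated determinant of increments'' is correct (expand by permutations; each term is a product of one-variable functions), but you should make the $\delta_{k,j}$ step explicit to complete the reduction cleanly.
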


\begin{proof} In this case the Haar measure is of the form $d\alpha=\prod_{j=1}^N dm(\alpha_j)$, where $dm$ is the Lebesgue measure on $\bbT$.
As it is well known for  the Abel map, which was defined in the general case  in \eqref{abm2}, we have
$$
{dm(\alpha_j)}=\frac 1 2 \sum^N_{k=1} \omega'_j(x_k){dx_k}
$$
Therfore
\begin{equation*}
d\alpha=dm(\alpha_1)\dots dm(\alpha_N)=\frac 1{2^N}\left |\begin{matrix} 
\omega'_{1}(x_1)&\hdots\ &\omega'_{1}(x_{N})\\
\vdots& &\vdots\\
\omega'_{N}(x_{1})& \hdots&\omega'_{N}(x_{N})
\end{matrix}\right| d x_1\dots dx_N.
\end{equation*}
We integrate over $O=O_N$ and use the fact that
\begin{equation*}
\omega_k(b_j)-\omega_k(a_j)=\delta_{k,j}.
\end{equation*}

\end{proof}

Thus the main point is to justify the passage to the limit. To this end we prove two preliminary lemmas.
The first one deals with a
uniform estimation of the reproducing kernel from below. 

\begin{lemma}
\label{l23sid}
Let $e^D(z)$ be the normalized reproducing kernel, written by means of the divisor (see \eqref{abm2})
\begin{equation}\label{sid3}
e^D(z)=\prod_{k\ge 1}b_{x_k}^{\frac{1+\epsilon_k}2}\sqrt{\prod_{k\ge 1}\frac{(z-x_k)b_{c_k}}{(z-c_k)b_{x_k}}}.
\end{equation}
Then for a rectangle
\begin{equation}\label{}
\Omega_j=\Omega_j(\delta)=\{z=x+iy: -\delta +a'_{j}\le x\le \delta+b'_j,\ |y|\le \delta \}\subset \Omega
\end{equation}
there is a uniform estimate 
\begin{equation}\label{sid4}
|e^D(z)|\ge C(\Omega_j)|z-x_j|.
\end{equation}

\end{lemma}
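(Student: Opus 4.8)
The plan is to argue directly from the product formula \eqref{sid3}, reading every factor through its (single valued) modulus on $\Omega$, so that $|b_w(z)|=e^{-G(z,w)}$. Set $E_w(z):=|z-w|\,e^{G(z,w)}=e^{g(z,w)}$, where $g(z,w):=G(z,w)+\log|z-w|$ is the regular part of the Green function: $g(\cdot,w)$ is harmonic on all of $\Omega$, $g$ is symmetric, and — by standard potential theory, the gap endpoints being regular boundary points — $g$ is jointly continuous on $\Omega\times\Omega$ and extends continuously up to the closure of each gap in each variable. Using $|b_{x_k}|^{1+\epsilon_k}/|b_{x_k}|=|b_{x_k}|^{\epsilon_k}$ and $|b_{c_k}(z)|/|z-c_k|=1/E_{c_k}(z)$, \eqref{sid3} becomes
\[
|e^D(z)|^2=\prod_{k\ge 1}|b_{x_k}(z)|^{1+\epsilon_k}\cdot\prod_{k\ge1}\frac{|z-x_k|\,|b_{c_k}(z)|}{|z-c_k|\,|b_{x_k}(z)|}=\prod_{k\ge1}\frac{|b_{x_k}(z)|^{\epsilon_k}\,|z-x_k|}{E_{c_k}(z)}.
\]
Splitting off the $k=j$ factor gives $|e^D(z)|^2=M_j^D(z)\,P_j^D(z)$ with
\[
M_j^D(z)=\frac{|b_{x_j}(z)|^{\epsilon_j}\,|z-x_j|}{E_{c_j}(z)},\qquad P_j^D(z)=\prod_{k\ne j}\frac{|b_{x_k}(z)|^{\epsilon_k}\,|z-x_k|}{E_{c_k}(z)},
\]
and \eqref{sid4} reduces to two bounds on the compact set $\overline{\Omega_j}\subset\Omega$, uniform in the divisor $D$: $M_j^D(z)\ge c_1(\Omega_j)\,|z-x_j|^2$ and $P_j^D(z)\ge c_2(\Omega_j)>0$.

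For $M_j^D$ the estimate is elementary. Since $c_j$ is fixed, $E_{c_j}(z)=e^{g(z,c_j)}$ is bounded above on $\overline{\Omega_j}$ by a constant depending only on $\Omega_j$. If $\epsilon_j=+1$, then $|b_{x_j}(z)|^{\epsilon_j}|z-x_j|=|b_{x_j}(z)|\,|z-x_j|$, and $|b_{x_j}(z)|=e^{-G(z,x_j)}\ge c(\Omega_j)\,|z-x_j|$ is exactly the inequality $g(z,x_j)\le\log(1/c(\Omega_j))$, which holds because $g$ is bounded above on $\overline{\Omega_j}\times[a_j,b_j]$; hence $M_j^D(z)\ge(\text{const})\,|z-x_j|^2$. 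If $\epsilon_j=-1$, then $|b_{x_j}(z)|^{\epsilon_j}|z-x_j|=E_{x_j}(z)=e^{g(z,x_j)}$ is bounded below by a positive constant on $\overline{\Omega_j}\times[a_j,b_j]$ by continuity of $g$, and since $|z-x_j|\le M(\Omega_j)$ on $\Omega_j$ this lower bound is again $\ge(\text{const})\,|z-x_j|^2$. Either way $M_j^D(z)\ge c_1(\Omega_j)\,|z-x_j|^2$.

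The heart of the matter — and the step I expect to be the main obstacle — is the uniform positive lower bound for the tail $P_j^D$. Taking logarithms,
\[
-\log P_j^D(z)=\sum_{k\ne j}\Bigl[\,\epsilon_k\,G(z,x_k)+G(z,c_k)+\log\tfrac{|z-c_k|}{|z-x_k|}\,\Bigr].
\]
The logarithmic terms are harmless: for $k\ne j$ the interval $[a_k,b_k]$ lies entirely to one side of the gap $(a_j,b_j)$, so $\text{\rm dist}(\Omega_j,[a_k,b_k])\ge d_0(\Omega_j)>0$ uniformly in $k$, whence $\bigl|\log\tfrac{|z-c_k|}{|z-x_k|}\bigr|\le C(b_k-a_k)/d_0$, and $\sum_k(b_k-a_k)=(a_0-b_0)-|E|<\infty$. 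For the Green terms one first observes, from the comb picture, that $G(\cdot,\infty)=\Im\theta$ restricted to the real gap $(a_k,b_k)$ rises monotonically from $0$ to its maximum $h_k$ at $c_k$ and falls back to $0$, so $\max_{x\in[a_k,b_k]}G(x,\infty)=h_k$, and $\sum_k h_k<\infty$ is precisely the Widom condition \eqref{hs1}. Then, with a slightly larger rectangle $\Omega_j\subset\Omega_j'\subset\Omega$ and a large disc $\{|w|<R\}$, the maximum principle applied to $G(\cdot,z)-\lambda\,G(\cdot,\infty)$ on $\Omega\setminus(\overline{\Omega_j'}\cup\{|w|\ge R\})$ yields $G(w,z)\le\lambda_0(\Omega_j)\,G(w,\infty)$ for all $w$ outside $\Omega_j'$ and all $z\in\Omega_j$: the constant $\lambda_0$ is uniform in $z$ since $z$ keeps a positive distance from $\partial\Omega_j'$ (so $G(\cdot,z)$ is bounded on $\partial\Omega_j'$ uniformly in $z$) while $G(\cdot,\infty)$ is bounded below there, and on $\{|w|=R\}$ one has $G(\cdot,z)$ small and $G(\cdot,\infty)\to\infty$. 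Applying this with $w=x_k$ and $w=c_k$ for $k\ne j$ gives $G(z,x_k)=G(x_k,z)\le\lambda_0 h_k$ and $G(z,c_k)\le\lambda_0 h_k$ (alternatively, for the $c_k$-sum one may use $\sum_{k\ne j}G(z,c_k)=-\log\bigl|\prod_{k\ne j}b_{c_k}(z)\bigr|$, which is bounded on $\overline{\Omega_j}$). Summing, $|\log P_j^D(z)|\le C_0(\Omega_j)$ uniformly in $z\in\overline{\Omega_j}$ and in $D$, so $P_j^D(z)\ge e^{-C_0(\Omega_j)}$. Together with the bound on $M_j^D$ this gives $|e^D(z)|^2\ge c_1(\Omega_j)\,e^{-C_0(\Omega_j)}\,|z-x_j|^2$, which is \eqref{sid4}.
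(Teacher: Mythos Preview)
Your proof is correct and follows essentially the same route as the paper's: the key step in both is the Harnack-type estimate $G(w,z)\le\lambda_0\,G(w,\infty)$ for $w$ outside a neighbourhood of $\Omega_j$, obtained via the maximum principle, together with $G(x_k,\infty)\le G(c_k,\infty)=h_k$ and the Widom condition $\sum h_k<\infty$. The only cosmetic difference is in handling the rational (outer) factor $\prod_{k\ne j}|z-x_k|/|z-c_k|$: the paper bounds it from below by the telescoped quantity $|z-a_j|\,|z-b_j|/|z-b_0|\,|z-a_0|$, whereas you control each $\log\bigl(|z-c_k|/|z-x_k|\bigr)$ by $C(b_k-a_k)/d_0$ and sum the gap lengths---both arguments are elementary and give the same conclusion.
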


\begin{proof} First of all we note that for $x_k<a_j$ and $a_j<\Re z< b_j$ we have
$$
\frac{|z-x_k|}{|z-c_k|}\ge \frac{|z-b_k|}{|z-a_k|}.
$$ 
Therefore
$$
\prod_{x_k<a_j}\frac{|z-x_k|}{|z-c_k|}\ge \prod_{b_k<a_j}\frac{|z-b_k|}{|z-a_k|}\ge \frac{|z-a_j|}{|z-b_0|},
$$
and similarly,
$$
\prod_{x_k>b_j}\frac{|z-x_k|}{|z-c_k|}\ge \frac{|z-b_j|}{|z-a_0|}.
$$
For the product of $b_{x_k}$ we use the Widom condition. Indeed, 
there exists a large constant $M$ such that
\begin{equation}\label{sid5}
\frac 1 M G(z,z_0)\le G(z)
\end{equation}
for $|z|=r$ and $z\in \partial\Omega_j(2\delta)\subset \Omega$
uniformly on all $z_0\in \Omega_j(\delta)$.
Then \eqref{sid5}  holds in  
the region $\Omega_j(r,2\delta):=\Omega\cap\{z:|z|\le r\}\setminus\Omega_j(2\delta)$. Thus
$$
\sum_{k\not=j} G(x_k,z_0)\le M \sum_{k\not=j}G(x_k)\le M \sum_{k\ge 1} G(c_k),
$$
and so
$$
\prod_{k\not = j }|b_{x_k}(\fz^{-1}(z_0))|\ge \Delta(0)^M, \quad z_0\in \Omega_j(\delta).
$$
\end{proof}

We need a certain lifting of $\Omega_j$ into $\mathbb{D}$. Notice that lacunas $(a_j, b_j)$ correspond to certain arcs $A_j$ (of circles orthogonal to $\mathbb{T}$), and these arcs are parts of the boundary of the fundamental domain $F(\Gamma)$ of the Fuchsian group $\Gamma$ acting on the disc. We choose the connected component of $\fz^{-1}(\Omega_j)$ having a non-empty intersection with corresponding arcs $A_j$ on the boundary of $F(\Gamma)$. We call these components $K_j$. Unlike the whole pre-image $\fz^{-1}(\Omega_j)$, the sets $K_j$ are compactly contained in the unit disc, in particular,
\begin{equation}
\label{coj}
1-|\z| \ge \tau_j>0,\, \z\in K_j\,.
\end{equation}

\begin{lemma}
\label{l24sid}
For  $\Omega_j\subset\Omega$, there is a uniform estimate
\begin{equation}\label{sid6}
|\hat e^\alpha(\z)-e^{\phi_n^*(\alpha)}(\z)|^2\le C(\Omega_j)\left |1-\frac{\hat e^\alpha(0)}{e^{\phi_n^*(\alpha)}(0)} \right |,
\quad \z\in K_j.
\end{equation}

\end{lemma}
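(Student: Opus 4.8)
The plan is to transplant $\hat e^\alpha$ into the finite–band space $H^2(\phi_n^*(\alpha))$ via Lemma~\ref{lfba1}, compare it there with the finite–band reproducing kernel $e^{\phi_n^*(\alpha)}$ by the same two–norm computation used in the proof of Lemma~\ref{incl}, and then upgrade this $L^2$–bound to the pointwise bound on $K_j$ by the elementary reproducing–kernel estimate in $H^2(\bbD)$. As everywhere in this subsection, $e^{\phi_n^*(\alpha)}$ in \eqref{sid6} is read through its $\phi_n$–pullback $e^{\phi_n^*(\alpha)}\circ\phi_n^{-1}$, which is defined on $\cD_n$; recall also that $\Omega^n$ is finitely connected, so the hat– and check–spaces there coincide and $e^{\phi_n^*(\alpha)}$ is the (unique) normalized reproducing kernel of $H^2(\phi_n^*(\alpha))=\hat H^2(\phi_n^*(\alpha))$, i.e.\ $e^{\phi_n^*(\alpha)}=k^{\phi_n^*(\alpha)}/\sqrt{k^{\phi_n^*(\alpha)}(0)}$.

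First I would fix the geometry. Since $\cD_n\subset\cD_{n+1}$, $\bigcup_n\cD_n=\bbD$, and $K_j$ is a compact subset of $\bbD$ by \eqref{coj}, there is $n_0(j)$ with $K_j\subset\cD_n$ for all $n\ge n_0(j)$; moreover the Carath\'eodory kernel theorem (using $\phi_n(0)=0$, $\phi_n'(0)>0$) gives $\phi_n\to\mathrm{id}_{\bbD}$ locally uniformly, hence there is $\rho_j<1$ such that $\phi_n^{-1}(K_j)\subset\{|w|\le\rho_j\}$ for all large $n$. This is the only non–formal ingredient, and it is precisely where one uses that $\Omega^n$ exhausts $\Omega$.

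Next comes the $L^2$ estimate. Put $g_n:=\hat e^\alpha\circ\phi_n$. By Lemma~\ref{lfba1}, $g_n\in H^2(\phi_n^*(\alpha))$ with $\|g_n\|\le\|\hat e^\alpha\|_{\hat H^2(\alpha)}=1$, and $g_n(0)=\hat e^\alpha(0)$, while $\hat e^\alpha(0)\le e^{\phi_n^*(\alpha)}(0)$ by \eqref{fba6}. Since $e^{\phi_n^*(\alpha)}$ reproduces point evaluation at the origin in $H^2(\phi_n^*(\alpha))$ up to the factor $e^{\phi_n^*(\alpha)}(0)$, the computation in the proof of Lemma~\ref{incl} yields
\[
\|g_n-e^{\phi_n^*(\alpha)}\|^2=\|g_n\|^2+1-2\,\frac{\hat e^\alpha(0)}{e^{\phi_n^*(\alpha)}(0)}\le 2\Bigl(1-\frac{\hat e^\alpha(0)}{e^{\phi_n^*(\alpha)}(0)}\Bigr)=2\Bigl|1-\frac{\hat e^\alpha(0)}{e^{\phi_n^*(\alpha)}(0)}\Bigr|.
\]

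Finally I would pass to pointwise control. As $H^2(\phi_n^*(\alpha))\subset H^2(\bbD)$, every $h$ in it satisfies $|h(w)|^2\le\|h\|^2/(1-|w|^2)$ for $w\in\bbD$. Applying this with $h=g_n-e^{\phi_n^*(\alpha)}$ at the point $w=\phi_n^{-1}(\z)$, $\z\in K_j$, and using $1-|\phi_n^{-1}(\z)|^2\ge 1-\rho_j^2$ together with $g_n(\phi_n^{-1}(\z))=\hat e^\alpha(\z)$, one obtains
\[
|\hat e^\alpha(\z)-e^{\phi_n^*(\alpha)}(\phi_n^{-1}(\z))|^2\le\frac{2}{1-\rho_j^2}\Bigl|1-\frac{\hat e^\alpha(0)}{e^{\phi_n^*(\alpha)}(0)}\Bigr|,\qquad\z\in K_j,
\]
which is \eqref{sid6} with $C(\Omega_j)=2/(1-\rho_j^2)$. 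The main obstacle is the uniform (in $n$) compact containment $\phi_n^{-1}(K_j)\subset\{|w|\le\rho_j\}$ underlying the last step; once that is secured, the rest consists only of the one–line reproducing–kernel manipulations already used repeatedly in the paper.
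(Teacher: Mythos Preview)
Your proof is correct and follows the same two–step scheme as the paper: first the $L^2$ identity
\(\|\hat e^\alpha-e^{\phi_n^*(\alpha)}\|^2=2-2\,\hat e^\alpha(0)/e^{\phi_n^*(\alpha)}(0)\) (you write it as an inequality using \(\|g_n\|\le1\), which is equally good), then the elementary $H^2(\bbD)$ pointwise bound \(|h(\z)|^2\le\|h\|^2/(1-|\z|^2)\). The paper's write–up is terser: it simply says ``both vectors are in $H^2$'' and invokes \eqref{coj} directly; you make explicit the identification via $\phi_n$ that is implicit there, and correctly note that the pointwise step then takes place at $w=\phi_n^{-1}(\z)$, so that one needs $\phi_n^{-1}(K_j)\subset\{|w|\le\rho_j\}$ uniformly in $n$ (which follows from Carath\'eodory convergence $\phi_n\to\mathrm{id}$). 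This is a clarification rather than a different method.
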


\begin{proof}
Recall that
$$
\|\hat e^\alpha-e^{\phi_n^*(\alpha)}\|^2=2-2\frac{\hat e^\alpha(0)}{e^{\phi_n^*(\alpha)}(0)}.
$$
Since both vectors are in $H^2$, we have
$$
|\hat e^\alpha(\z)-e^{\phi_n^*}(\z)|^2=\frac 1{1-|\z|^2}\|\hat e^\alpha-e^{\phi_n^*}\|^2.
$$
An application of \eqref{coj} finishes the proof.
\end{proof}

\begin{proof}[Proof of Theorem \ref{thsid}]
We fix a small positive $\delta$. By Lemma \ref{l23sid} there is a constant $c_1(\delta)>0$ such that

\begin{equation}\label{sid7}
|e^D(z)|\ge c_1(\delta), \quad z\in\cup_{s=1}^\l\partial\Omega_{j_s}(\delta)
\end{equation}
for every function of the form \eqref{sid3} with zeros $x_{i_s}\in I_{i_s}$, $1\le s\le \l$.
By Lemma \ref{l24sid} there is a constant $c_2(\delta)>0$ such that

\begin{equation}\label{sid8}
|\hat e^\alpha(\z)-e^{\phi_n^*(\alpha)}(\z)|\le c_2(\delta)\left |1-\frac{\hat e^\alpha(0)}{e^{\phi_n^*(\alpha)}(0)} \right |^{1/2},
\quad \z\in \cup_{s=1}^\l K_{j_s},
\end{equation}
for all $\alpha\in \Gamma^*$.

Let us fix 
\begin{equation}\label{sid9}
\epsilon\le \frac{1}{10}\frac{c_1(\delta)}{c_2(\delta)}
\end{equation}
and a small positive $\eta$. Since by Theorem \ref{fb}
$$
\lim_{n\to\infty}\int_{\Gamma^*}\left |1-\frac{\hat e^\alpha(0)}{e^{\phi_n^*(\alpha)}(0)} \right |^{1/2}d\alpha=0
$$
there exists $n=n(\delta,\eta)\ge 1/\delta$ such that

\begin{equation}\label{sid10}
\int_{\Gamma^*}\left |1-\frac{\hat e^\alpha(0)}{e^{\phi_n^*(\alpha)}(0)} \right |^{1/2}d\alpha
\le\epsilon\eta.
\end{equation}
Let $\Theta_n=\{\alpha\in\Gamma^*: \left |1-\frac{\hat e^\alpha(0)}{e^{\phi_n^*(\alpha)}(0)} \right |^{1/2}\ge \epsilon\}$. Then \eqref{sid10} implies  that its Haar measure is less than $\eta$, $|\Theta_n|\le \eta$. In other words, by \eqref{sid8},
\begin{equation}\label{sid11}
|\hat e^\alpha(\z)-e^{\phi_n^*(\alpha)}(\z)|\le c_2(\delta)\epsilon\le\frac 1{10} c_1(\delta), 
\quad \z\in \cup_{s=1}^\l K_{j_s},
\end{equation}
holds for all $\alpha\in\Gamma^*\setminus\Theta$.

\medskip

\noindent In what follows we denote by $Z^{-1}$ the branch of $\fz^{-1}$ that maps $\Omega_j$ onto $K_j$. We already considered the coverings $\fz_n:\mathbb{D}\rightarrow \Omega^n=\mathbb{C}\setminus E_n$. For conformal maps $\phi_n, \theta_n$ and covering maps $\theta_n$ we saw the following relationship
$$
\theta_n\circ \fz_n = \theta\circ\fz\circ \phi_n -\frac{i}{n}\,.
$$
There is one more useful conformal map. It maps the upper half-plane $\mathbb{C}_+$  onto $\{z\in \mathbb{C}_+: G(z) > \frac{1}{n}\}$ preserving $b_0, a_0, \infty$. Call it $\psi_n$. It plunges the lacunas of $\Omega^n$  into the lacunas of $\Omega$. It is easy to see that $\psi_n$ can be analytically continued across the lacunas of $\Omega^n$. It is also easy to see that for a fixed $\delta>0$ and all sufficiently large $n$ all $\Omega_{j_s}, s=1,\dots, \l,$ are in $\Omega^n$, and that $\psi_n$ are well defined on these sets. Moreover, $\psi_n\rightarrow \text{id}$ on $\Omega_{j_s}(2\delta), s=1,\dots, \l,$ when $n\rightarrow\infty$.
Along with the last display formula, we have another obvious one
$$
\theta_n = \theta\circ\psi_n-\frac{i}{n}\,.
$$
Inspecting these two formulae we get $\theta\circ\psi_n\circ \fz_n = \theta\circ\fz\circ\phi_n$. If we denote by $K_j^n:=\phi_n^{-1}(K_j)$ (these sets are defined defined for all large $n$), we will get now that 
$$
\psi_n \circ \fz_n = \fz \circ \phi_n
$$
on $\cup_{s=1}^\l K_{j_s}^n$.

We can consider analytic function $Z_n^{-1}$  given by $\phi_n^{-1}\circ Z^{-1}\circ \psi_n$ defined on $\cup_{s=1}^\l \Omega_{j_s}(2\delta)$. By previous formulae it is a branch of $\fz_n^{-1}$. Moreover, we know that $Z_n^{-1}\rightarrow Z^{-1}$ on compact $\cup_{s=1}^\l \Omega_{j_s}(2\delta)$, because $\psi_n, \phi_n$ converge to identities on corresponding compacts.

\medskip

  Without lost of generality  we assume that $\epsilon_{j_1}=\dots=\epsilon_{j_\l}=1$. Then (as in Section 6, we denote by $|\,\cdot\,|$ the Haar measure $d\alpha$ on $\Gamma^*$)
\begin{equation}\label{sid12}
\chi(O)=\left| \{\alpha\in \Gamma^*: \hat e^{\alpha}(Z^{-1}(x_{j_s}))=0, \ x_{j_s}\in I_{j_s},\ 1\le s\le \l\} \right|
\end{equation}

\noindent Recall that all zeros of 
$e^{\phi_n^*(\alpha)}(\fz^{-1}(z))$ are real (and not more than one in a gap $(a_j,b_j)$).

Let $I_{j_s}(\delta)=\Omega_{j_s}(\delta)\cap\ \bbR$, $1\le s\le \l$.
Comparing \eqref{sid7} and \eqref{sid11}, by
Rouch\'e's Theorem we can continue \eqref{sid12} with  the following chain of estimates
\begin{eqnarray*}
\chi(O)
&\ge&\left| \{\alpha\in \Gamma^*\setminus \Theta_n: \hat e^{\alpha}(Z^{-1}(x_{j_s}))=0, \ x_{j_s}\in I_{j_s},\ 1\le s\le \l\} \right|\\
&\ge&\left| \{\alpha\in \Gamma^*\setminus \Theta_n: e^{\phi_n^*(\alpha)}(Z^{-1}(x_{j_s}))=0, \ x_{j_s}\in I_{j_s}(\frac12\delta),\ 1\le s\le \l\} \right|\\
&\ge&\left| \{\alpha\in \Gamma^*\setminus \Theta_n: e^{\phi_n^*(\alpha)}(Z_n^{-1}(x_{j_s}))=0, \ x_{j_s}\in I_{j_s}(\delta),\ 1\le s\le \l\} \right|\\
&\ge&\left|\{\alpha \in \Gamma^*: e^{\phi_n^*(\alpha)}(Z_n^{-1}(x_{j_s}))=0, \ x_{j_s}\in I_{j_s}(\delta),\ 1\le s\le \l\} \right| -\eta\,.
\end{eqnarray*}
But the last expression is exactly the Haar measure of the open set
\begin{equation*}
O_n(\delta)=\{D_n: x_{i_1}\in I_{j_1}(\delta),\dots, x_{i_\l}\in I_{j_\l}(\delta)\}
\end{equation*}
in the $n$-th finite band approximation $E_n$ of the set $E$, see the previous section. Thus,  we can apply Lemma \ref{l22sid} to this set. As the result we get
\begin{eqnarray*}
&\chi(O)&\ge-\eta+\chi^{(n)}(O_n(\delta))\ge\\
&+&
\frac 1 {2^\l}\left |\begin{matrix} 
\omega^{(n)}_{j_1}(b'_{j_1}+\delta)-\omega^{(n)}_{j_1}(a'_{j_1}-\delta)&\hdots\ &\omega^{(n)}_{j_1}(b'_{j_\l}+\delta)-\omega^{(n)}_{j_1}(a'_{j_\l}-\delta)\\
\vdots& &\vdots\\
\omega^{(n)}_{j_\l}(b'_{j_1}+\delta)-\omega^{(n)}_{j_\l}(a'_{j_1}-\delta)& \hdots&\omega^{(n)}_{j_\l}(b'_{j_\l}+\delta)-
\omega^{(n)}_{j_\l}(a'_{j_\l}-\delta)\\
\end{matrix}
\right|-\eta
\end{eqnarray*}
where $\omega^{(n)}_k(z)$ is the harmonic measure related to the $n$-th finite band approximation of $E$ (that is of $\Omega^n$).

Summarizing all that we conclude that fixing $\delta>0$ and $\eta>0$ we will have for all sufficiently large $n$
$$
\chi(O)\ge \chi^{(n)}(O_n(\delta))-\eta\,,
$$
where $\chi^{(n)}$ denotes the determinant above. Tend $n$ to infinity. Then harmonic measures in the determinant will become harmonic measures with respect to  our limit domain $\Omega=\mathbb{C}\setminus E$. Then subsequently tending $\eta$ to zero and $\delta$ to zero we obtain inequality from below of Theorem \ref{thsid}. The inequality from above is obtained in a totally similar manner.


\end{proof}

\begin{remark}
We will use only the estimate from below, but note that the formula \eqref{sid2} can be used in exact asymptotics for zeros of orthogonal polynomials.
\end{remark}

\section{$J\in J(E)$ is not  almost periodic \\ as soon as DCT fails}

In this section we finish a proof of Theorem \ref{thmain}. We assume that the conditions $(i-iii)$ hold and DCT fails, that is, $\Theta$ is not empty.

\begin{lemma}
There exists $J_0\in J(E)$, which is not (uniformly) almost periodic.
\end{lemma}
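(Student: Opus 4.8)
The plan is to take $J_0$ to be one of the two \emph{distinct} Jacobi matrices attached to the two extreme intermediate Hardy spaces over a character that witnesses the failure of DCT, and to rule out almost periodicity by playing the uniqueness of the shift--invariant measure on $J(E)$ (proved in Section~6) against the essential injectivity of the Abel map (Theorem~\ref{th62}).

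Since DCT fails, $\Theta\neq\emptyset$; fix $\alpha_0\in\Theta$, so $\check H^2(\alpha_0)\subsetneq\hat H^2(\alpha_0)$. Both spaces are admissible intermediate spaces in the sense of Definition~\ref{def41} (the inclusions \eqref{hs16}--\eqref{hs17} are immediate from Corollary~\ref{cor33}), hence by Theorem~\ref{mainj} together with Theorem~\ref{th44} they generate reflectionless matrices $\hat J_0:=J(\hat H^2(\alpha_0))$ and $\check J_0:=J(\check H^2(\alpha_0))$; by assumption $(i)$ both lie in $J(E)$, and by construction $\pi(\hat J_0)=\pi(\check J_0)=\alpha_0$. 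Moreover $\hat J_0\neq\check J_0$: by Theorem~\ref{thspectral10} the functional model, and hence the space $H^2_J$, is uniquely determined by $J\in J(E)$, so $\hat J_0=\check J_0$ would force $\hat H^2(\alpha_0)=\check H^2(\alpha_0)$. Put $J_0:=\hat J_0$.

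Suppose now, for contradiction, that $J_0$ is almost periodic, i.e. $\mathcal{G}:=\operatorname{clos}\{S^{-n}J_0S^n:n\in\bbZ\}$ is precompact in the operator topology; note $\mathcal{G}\subseteq J(E)$, since operator convergence implies coefficientwise convergence and $J(E)$ is closed in the latter topology. Conjugation by the unitaries $S^n$ is an isometry of the space of bounded operators, so $S^{-1}(\cdot)S$ acts on $\mathcal{G}$ by isometries; thus $(\mathcal{G},S^{-1}(\cdot)S)$ is a transitive equicontinuous system, and by the structure theorem for such systems it is conjugate to a minimal rotation on a compact abelian group, in particular it is uniquely ergodic. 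Its unique invariant probability measure $m_{\mathcal{G}}$, extended by zero to $J(E)$, is then a shift--invariant probability measure on $J(E)$, so by the uniqueness part of the main theorem $m_{\mathcal{G}}=dJ$. Now $dJ$ has full topological support in $J(E)$: transporting it to $D(E)$ via the homeomorphism $J(E)\simeq D(E)$, this follows from Theorem~\ref{thsid}, since the determinant in \eqref{sid2} is strictly positive for every admissible choice of sub-gaps, so $d\chi$ charges every basic open set. Since $\operatorname{supp}m_{\mathcal{G}}=\mathcal{G}$ by minimality, we get $\mathcal{G}=J(E)$, i.e. $J(E)$ is itself a compact abelian group on which $S^{-1}(\cdot)S$ is a minimal rotation.

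Finally, $\pi$ is continuous, $\pi(S^{-1}JS)=\mu^{-1}\pi(J)$, and $\pi(J(E))$ is closed and $\mu^{-1}$--invariant, hence all of $\Gamma^*$ by \eqref{11part2}. A continuous surjection between compact abelian groups intertwining two translations, the first acting minimally, is affine; write $\pi(J)=\pi(J_0)\cdot\rho(J)$ with $\rho:J(E)\to\Gamma^*$ a continuous surjective homomorphism. Every fibre of $\pi$ (equivalently of $\rho$) is a translate of the closed subgroup $\ker\rho$. On the other hand, off $\pi^{-1}(\Theta)$ one has $H^2_J=\hat H^2(\pi(J))=\check H^2(\pi(J))$, which determines $J$ uniquely among elements of $J(E)$, so $\pi^{-1}(\pi(J))=\{J\}$ for $dJ$--a.e.\ $J$ by Theorem~\ref{th62}; as $dJ(J(E))=1$, at least one fibre is a singleton, forcing $\ker\rho=\{\mathbf{1}\}$. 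Hence $\pi$ is injective, contradicting $\pi(\hat J_0)=\pi(\check J_0)=\alpha_0$ with $\hat J_0\neq\check J_0$. Therefore $J_0$ is not almost periodic. The main obstacle is the full-support property of $dJ$, i.e.\ the strict positivity of the determinant in \eqref{sid2}; the rest is a standard combination of unique ergodicity, the structure of equicontinuous systems, and the essential injectivity of $\pi$ already established.
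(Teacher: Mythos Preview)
Your argument is correct, but it takes a much heavier route than the paper's. The paper's proof is a three-line contradiction: assume \emph{both} $\hat J:=J(\hat H^2(\alpha))$ and $\check J:=J(\check H^2(\alpha))$ (for some $\alpha\in\Theta$) are almost periodic; pick any $\beta\in\Gamma^*\setminus\Theta$ and a sequence $m_n$ with $\alpha\mu^{-m_n}\to\beta$; pass to a subsequence along which $S^{-m_n}\hat J S^{m_n}\to J_1$ and $S^{-m_n}\check J S^{m_n}\to J_2$ in operator norm. Continuity of $\pi$ gives $\pi(J_1)=\pi(J_2)=\beta$, and since $\beta\notin\Theta$ one has $J_1=J_2$. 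But conjugation by $S^{m_n}$ is an isometry, so $\|\hat J-\check J\|=\|S^{-m_n}\hat J S^{m_n}-S^{-m_n}\check J S^{m_n}\|\to 0$, contradicting $\hat J\neq\check J$. This uses only Theorem~\ref{th62} (to guarantee $\Gamma^*\setminus\Theta\neq\emptyset$), condition~$(iii)$, and the continuity of $\pi$.

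Your route instead bootstraps from the almost periodicity of a single $\hat J_0$ to the statement that $J(E)$ is itself a compact abelian group on which $\pi$ is an affine bijection, by invoking the structure theory of equicontinuous systems, the uniqueness of the shift-invariant measure (Section~6), and the full topological support of $dJ$ via Theorem~\ref{thsid}. This is sound---your affine-map step follows because $\{g^n\}$ is dense in the group $\mathcal G$ and $\pi(g^n)=\mu^{-n}\pi(e)$ determines $\pi$ on a dense set as a translate of a homomorphism---and it actually proves more: it shows directly that almost periodicity of \emph{any} element of $J(E)$ forces $\pi$ to be a global homeomorphism, essentially establishing the full dichotomy in one stroke. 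The price is that you import all of Section~7, including the strict positivity of the determinant in \eqref{sid2}, which the paper only needs later for Lemma~\ref{ae} and which you rightly flag as the main nontrivial input. The paper's argument for this particular lemma is deliberately minimal and postpones the measure-theoretic machinery to where it is genuinely needed.
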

\begin{proof}
Let $\alpha\in\Theta$. Let $\hat J=J(\hat H^2(\alpha))$ and $\check J=J(\check H^2(\alpha))$, see Theorem \ref{th44}. Note that $\hat J\not=\check J$. Let $\beta\in\Gamma^*\setminus\Theta$. 
By \eqref{11part2} we chose a sequence $\{m_n\}$ such that $\alpha\mu^{-m_n}\to\beta$. 
Assume now that both $\hat J$ and $\check J$ are almost periodic. Then there is a subsequence, for which we keep the notation $\{m_n\}$, such that
$$
\lim_{n\to\infty}\|S^{-m_{n}}\hat J S^{m_{n}}- J_1\|=0, \quad 
\lim_{n\to\infty}\|S^{-m_{n}}\check J S^{m_{n}}- J_2\|=0.
$$
Since uniform convergence implies pointwise convergence and $\pi$ is continuous we have $\pi(J_1)=\pi(J_2)=\beta$.
But  $\beta\in\Gamma^*\setminus \Theta$, therefore  $J_1= J_2=\pi^{-1}(\beta)$. That is,
$$
\lim_{n\to\infty}\|S^{-m_{n}}\hat J S^{m_{n}}-
S^{-m_{n}}\check J S^{m_{n}}\|=0,
$$
but
$$
\|S^{-m_{n}}\hat J S^{m_{n}}-
S^{-m_{n}}\check J S^{m_{n}}\|=\|\hat J-\check J\|\not =0.
$$
Thus, at least one of the matrices $\hat J$, $\check J$ is not almost periodic. This one we denote by $J_0$.
\end{proof}

As always we denote the orbit by $\text{orb} (J):=\{S^{-n}JS^n\}_{n\in \mathbb{Z}}$. We need now a small and very well known

\begin{lemma}
\label{orb}
Let $J$ be almost periodic,  and let $J_0$ be in a weak closure of $\text{orb}(J)$. Then $J_0$ belongs to the operator closure of $\text{orb}(J)$. In particular, 
$J_0$ is also almost periodic.
\end{lemma}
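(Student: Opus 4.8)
The plan is to exploit that ``almost periodic'' here means precisely that the orbit $\text{orb}(J)=\{S^{-n}JS^{n}\}_{n\in\mathbb{Z}}$ is pre-compact in the operator norm; write $K:=\overline{\text{orb}(J)}^{\,\|\cdot\|}$ for its norm-closure, a norm-compact set. First I would record two soft facts. (a) $K$ is norm-bounded (all its elements share the bound $\|J\|$), and on a norm-bounded family of Jacobi matrices the weak operator topology coincides with entrywise convergence of the coefficient sequences, which is the ``weak'' topology meant in the statement. (b) The conjugation $T\colon A\mapsto S^{-1}AS$ is an isometry for the operator norm (as $S$ is unitary) and maps $\text{orb}(J)$ bijectively onto itself, hence $T(K)=K$.

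The core step is to show $J_{0}\in K$. Pick a net $(A_{\lambda})$ in $\text{orb}(J)$ with $A_{\lambda}\to J_{0}$ weakly. By norm-compactness of $K$ this net has a subnet $(A_{\lambda'})$ converging in norm to some $J_{1}\in K$. Norm convergence implies weak convergence, so $A_{\lambda'}\to J_{1}$ weakly; but $A_{\lambda'}\to J_{0}$ weakly as well, and the weak operator topology is Hausdorff, so $J_{0}=J_{1}\in K$. (Equivalently: the identity map from $K$ with the norm topology to $K$ with the weak topology is a continuous bijection from a compact space onto a Hausdorff space, hence a homeomorphism, so the weak and norm closures of $\text{orb}(J)$ inside $K$ agree.) Thus $J_{0}$ lies in the operator closure of $\text{orb}(J)$.

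For the final assertion, since $J_{0}\in K$ and $T(K)=K$, we get $\text{orb}(J_{0})=\{T^{n}J_{0}\}_{n\in\mathbb{Z}}\subset K$; as $K$ is norm-compact, $\overline{\text{orb}(J_{0})}^{\,\|\cdot\|}$ is a closed subset of a compact set and therefore compact, which is exactly the statement that $J_{0}$ is almost periodic. The only point requiring a moment's care — and it is hardly an obstacle — is the passage from the weakly convergent net to a norm-convergent subnet, which is where pre-compactness of $\text{orb}(J)$ is genuinely used; everything else is formal topology.
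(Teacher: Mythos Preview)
Your argument is correct and is essentially the same as the paper's: extract from the weakly convergent family a norm-convergent sub(net/sequence) using pre-compactness, identify the two limits, and then observe that $\text{orb}(J_0)\subset K$ with $K$ norm-compact. The only cosmetic difference is that you phrase everything with nets and add the explicit observations about Hausdorffness and $T(K)=K$, whereas the paper simply passes to a subsequence (legitimate here since the weak topology on $J(E)$ is metrizable) and says ``obviously $J_1=J_0$''.
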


\begin{proof}
Let the subsequence $S^{-m_n}JS^{m_n}$ converges to $J_0$ weakly. Then, using that $J$ is almost periodic we can find  a subsequence $\{m_{n_k}\}$ such that
$S^{-m_{n_k}}J S^{m_{n_k}}$ converges to some $J_1$ in the operator topology. Obviously $J_1=J_0$. Then the whole orbit of $J_0$  is contained  in the operator closure of the orbit of $J$. The latter is compact, so $\text{orb}(J_0)$ is pre-compact, and we are done.
\end{proof}

\begin{lemma}
\label{ae}
 Almost every $J\in J(E)$ (with respect to the shift invariant measure $dJ$) is not almost periodic.
\end{lemma}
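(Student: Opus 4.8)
The plan is to run an ergodicity dichotomy for the (shift invariant) set of almost periodic matrices, and then to rule out the ``measure one'' alternative by pushing the non–almost periodic matrix $J_0$ supplied by the first lemma of this section into the orbit closure of a $dJ$-typical matrix. Throughout I use the standing assumption of this section, that DCT fails, so that $\Theta\neq\emptyset$ and that first lemma applies.

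First I would record the two soft facts about the set $AP:=\{J\in J(E):\ J\ \text{is almost periodic}\}$. It is invariant under $J\mapsto S^{-1}JS$, since $\text{orb}(S^{-1}JS)=\text{orb}(J)$, and it is Borel: $J\in AP$ precisely when for every rational $\e>0$ there is a finite $F\subset\bbZ$ with $\min_{m\in F}\|S^{-n}JS^n-S^{-m}JS^m\|<\e$ for all $n\in\bbZ$, which is a countable combination of Borel conditions in $J$. Since $dJ$ is ergodic (the corollary proved above), $dJ(AP)\in\{0,1\}$, and I will derive a contradiction from the hypothesis $dJ(AP)=1$.

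Next I would establish that $\operatorname{supp}dJ=J(E)$. By Theorem \ref{thsid}, the $d\chi$-measure of every basic open box $O$ as in \eqref{sid1} is at least the right-hand side of \eqref{sid2}, and the determinant occurring there is nonzero: up to a positive factor it is the Jacobian of the Abel-type map, equivalently the nondegeneracy of the period matrix of the harmonic measures $\omega_{j_1},\dots,\omega_{j_\l}$. Hence every nonempty open subset of $J(E)\cong D(E)$ has positive $dJ$-measure. Applying Birkhoff's ergodic theorem to the indicators of a countable base of the topology of $J(E)$, I get that $dJ$-almost every $J$ has orbit $\{S^{-n}JS^n\}_{n\in\bbZ}$ dense in $J(E)$; intersecting this full-measure set with $AP$ (full measure by assumption) I may pick a matrix $J_1\in AP$ whose orbit is weakly dense in $J(E)$.

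Finally, let $J_0\in J(E)$ be the matrix produced in the first lemma of this section, which is not almost periodic. Since $\text{orb}(J_1)$ is weakly dense in $J(E)$, the matrix $J_0$ lies in the weak closure of $\text{orb}(J_1)$; as $J_1$ is almost periodic, Lemma \ref{orb} forces $J_0$ to be almost periodic too — a contradiction. Therefore $dJ(AP)=0$, which is the assertion. The one step I expect to require genuine care is $\operatorname{supp}dJ=J(E)$: the ergodic dichotomy, Birkhoff's theorem and Lemma \ref{orb} are all soft, but excluding that $dJ$ is carried by a proper closed shift invariant subset that still surjects onto $\Gamma^*$ rests precisely on the nonvanishing of the determinant in \eqref{sid2}, i.e.\ on the linear independence of the relevant abelian differentials, and I would spell that nondegeneracy out from Theorem \ref{thsid} and the structure of $\Omega$.
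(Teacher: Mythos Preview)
Your argument is correct and follows essentially the same route as the paper. Both proofs hinge on the same three ingredients: (i) positivity of the $dJ$-measure of open neighborhoods, obtained from Theorem~\ref{thsid}; (ii) ergodicity, to promote ``positive measure'' to ``full measure'' for the orbit sweep; and (iii) Lemma~\ref{orb}, to force $J_0$ to be almost periodic once it lies in the weak orbit closure of an almost periodic $J$.

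The packaging is slightly different. You run an ergodic $0$--$1$ dichotomy on the set $AP$ and invoke Birkhoff to get a $J_1\in AP$ with dense orbit; the paper instead takes a shrinking neighborhood base $V_n\searrow J_0$, uses ergodicity to show each orbit sweep $\bigcup_m S^{-m}V_nS^m$ has full measure, and intersects over $n$ to exhibit explicitly a full-measure set $\Sigma$ on which $J_0$ lies in the weak orbit closure. The paper's version is marginally more economical in that it only needs positivity of $dJ$ on neighborhoods of $J_0$, not on all open sets; your version goes through the stronger statement $\operatorname{supp}dJ=J(E)$, but this costs nothing extra since Theorem~\ref{thsid} already supplies it. Your flagged concern about the nonvanishing of the determinant in \eqref{sid2} is exactly the point the paper uses (and leaves implicit) when it writes $\chi(O_n)>0$; the nondegeneracy of that finite period matrix is classical for the harmonic measures $\omega_{j_1},\dots,\omega_{j_\l}$, so your caution there is well placed but not a gap.
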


\begin{proof}
The idea is to use that due to Theorem \ref{thsid} open subsets of $J(E)$ have strictly positive measure $dJ$. Let $J_0$ be the Jacobi matrix from the previous lemma. Let $V_n$ be a system of vicinities 
of $J_0$ such that
$$
V_1\supset V_2\supset\dots \supset V_n\supset\dots\ \text{and}\ \cap_{n\ge 1}V_n=J_0,
$$
or, equivalently, using the homeomorphism between $J(E)$ and $D(E)$, and denoting the corresponding sets by $O_n$, we can write
$$
O_1\supset O_2\supset\dots \supset O_n\supset\dots\ \text{and}\ \cap_{n\ge 1}O_n=D_0,
$$
where $O_n$ are vicinities of the divisor $D_0$, which corresponds to $J_0$.

Let 
$$
\Sigma_n= \pi\bigcup_{m\in \mathbb{Z}}S^{-m}V_nS^m\setminus \Theta\,.
$$
By Theorem \ref{thsid} we have $|V_n|_{dJ}=\chi(O_n)>0$. Then,  by ergodicity,
$$
|\Sigma_n|_{d\alpha}=|\pi^{-1}(\Sigma_n)|_{dJ}=1.
$$
Thus we have the set of the full measure
$$
\Sigma:=\cap_{n\ge 1}\Sigma_n.
$$

By definition, $J_0$ is in the weak closure of $\text{orb}(J(\beta))$ for any $\beta\in \Sigma$. By Lemma \eqref{orb} matrix $J_0$ is also almost periodic, this is the contradiction, and Lemma \ref{ae} is proved.


\end{proof}

\begin{proof}[Finishing the proof of Theorem \ref{thmain}]
Let $\beta$ belongs to the set $\Sigma\subset \Gamma^*\setminus \Theta$ from the previous lemma.
 Let $J\in J(E)$, $\pi(J)=\alpha$, and let $\alpha\mu^{-m_n}\to\beta$.
Then $J(\beta)$ is in the weak closure of the orbit of $J$.  Assume that this  $J$ is almost periodic. Then, by Lemma \ref{orb}, it is almost periodic. But  $J(\beta)$ is not almost periodic for the given $\beta$
by Lemma \ref{ae}. This is the contradiction, and Theorem \ref{thmain} is completely proved.

\end{proof}

\section{Appendix: Resolvent representation and Chri\-stoffel-Darboux identity for the transfer matrix}

For $\cK=\hat H^2(\alpha)\ominus \check H^2(\alpha)$ we define
$T:\cK\oplus\{\check e_0\}\to \{\hat e_{-1}\}\oplus \cK$ by
\begin{equation}\label{transfer6}
T=P_{\{\hat e_{-1}\}\oplus \cK}\,\fz\cdot\,| \cK\oplus\{\check e_0\}.
\end{equation}
We define $\hat \cE$ and $\check \cE$ acting in $\bbC^2$ by
\begin{equation}\label{transfer7}
\check \cE\begin{bmatrix} c_{-1}\\ c_0\end{bmatrix}=c_{-1}\check p_0\check e_{-1}+c_0\check e_0 
,\quad
\hat \cE\begin{bmatrix} c_{-1}\\ c_0\end{bmatrix}=c_{-1}\hat p_0\hat e_{-1}+c_0\hat e_0. 
\end{equation}
\begin{proposition} For $T$, $\hat \cE$, $\check \cE$ defined in \eqref{transfer6}, \eqref{transfer7},
the transfer matrix \eqref{transfer4} is of the form
\begin{equation}\label{transfer8}
\fA(z)=\begin{bmatrix}
0&0\\0&\langle \hat e_0,\check e_0\rangle
\end{bmatrix}+ j
\check \cE^*P_{\cK\oplus\{\check e_0\}}(T-z P_\cK)^{-1}P_{\{\hat e_{-1}\}\oplus \cK}\hat \cE.
\end{equation}

\end{proposition}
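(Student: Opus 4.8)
The plan is to imitate the elementary derivation of the finite-dimensional resolvent representation \eqref{transfer2}, now carried out on the (generally infinite-dimensional) space $\cK=\hat H^2(\alpha)\ominus\check H^2(\alpha)$: the coefficient matrix $T-zS$ of the three-term recurrence is replaced by the operator $T-zP_{\cK}$ from $\cK\oplus\{\check e_0\}$ to $\{\hat e_{-1}\}\oplus\cK$. By the Wronski identity \eqref{abm4} the matrix $\begin{bmatrix}\check p_0\check e_{-1}&\check e_0\\ \check p_0\widetilde{\check e}_0&\widetilde{\check e}_{-1}\end{bmatrix}$ has an inverse analytic in $\Omega$, so the defining relation \eqref{transfer4} fixes $\fA$ uniquely among meromorphic functions; it therefore suffices to check that the right-hand side $\fB(z):=\begin{bmatrix}0&0\\0&\langle\hat e_0,\check e_0\rangle\end{bmatrix}+ j\,\check\cE^*P_{\cK\oplus\{\check e_0\}}(T-zP_{\cK})^{-1}P_{\{\hat e_{-1}\}\oplus\cK}\hat\cE$ of \eqref{transfer8} satisfies $\begin{bmatrix}\hat p_0\hat e_{-1}(\z)&\hat e_0(\z)\end{bmatrix}=\begin{bmatrix}\check p_0\check e_{-1}(\z)&\check e_0(\z)\end{bmatrix}\fB(\fz(\z))$ on the set where $(T-zP_{\cK})^{-1}$ exists.

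The key is to describe $(T-zP_{\cK})^{-1}$ as an $H^2$-interpolation problem. For $f\in\cK\oplus\{\check e_0\}$ one has $\fz f\in b^{-1}\hat H^2(\alpha\mu)=\{\hat e_{-1}\}\oplus\hat H^2(\alpha)$, and inside this space the orthogonal complement of $\{\hat e_{-1}\}\oplus\cK$ is exactly $\check H^2(\alpha)$; hence $(T-zP_{\cK})f=v$ is equivalent to
$$
f\in\cK\oplus\{\check e_0\},\qquad (\fz-z)f-v\in\check H^2(\alpha).
$$
I would solve this using the Weyl solutions of the reflectionless matrix $\hat J=J(\hat H^2(\alpha))$. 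Let $\hat u\in\hat H^2(\alpha)$ be the image, under the functional model of $\hat J$, of the $\l^2_+$-vector $(\hat J_+-z)^{-1}e_0$; then $(\fz-z)\hat u=\hat e_0+\hat p_0 r_+(z)\hat e_{-1}$. Since $\fz\check H^2_0(\alpha)\subseteq\check H^2(\alpha)$, replacing $\hat u$ by its $(\cK\oplus\{\check e_0\})$-projection changes the right-hand side only by an element of $\check H^2(\alpha)$, and we obtain the clean identity
$$
(T-zP_{\cK})\bigl(P_{\cK\oplus\{\check e_0\}}\hat u\bigr)=P_{\cK}\hat e_0+\hat p_0 r_+(z)\hat e_{-1}=P_{\{\hat e_{-1}\}\oplus\cK}\hat\cE\begin{bmatrix}r_+(z)\\ 1\end{bmatrix}.
$$
A dual computation, using the tilde-involution \eqref{hsd0} and \eqref{rminus} — the reflected matrix $\tau\hat J$ is again reflectionless with its ``$r_+$'' equal to $r_-(z)$ — gives a companion identity along a direction linearly independent of $(r_+(z),1)^{t}$ for generic $z$. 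Together these determine $(T-zP_{\cK})^{-1}P_{\{\hat e_{-1}\}\oplus\cK}\hat\cE$ on all of $\bbC^2$ as the $(\cK\oplus\{\check e_0\})$-projection of an explicit combination of $\hat u$, $\widetilde{\hat u}$ and $r_\pm(z)$.

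It then remains to apply $j\,\check\cE^*P_{\cK\oplus\{\check e_0\}}$ and add the static term $\begin{bmatrix}0&0\\0&\langle\hat e_0,\check e_0\rangle\end{bmatrix}$. The inner products of the projected Weyl combinations against $\check p_0\check e_{-1}$ and $\check e_0$ are computed from the reproducing-kernel formula \eqref{transfer18} for $\check H^2(\alpha)$, the duality \eqref{hsd0}, \eqref{hs14}, and once more the Wronski identity \eqref{abm4} (equivalently, the $j$-unitarity on $E$ of the two fundamental $2\times2$ matrices), and one ends with exactly $\begin{bmatrix}\check p_0\check e_{-1}&\check e_0\end{bmatrix}\fB(\fz(\cdot))=\begin{bmatrix}\hat p_0\hat e_{-1}&\hat e_0\end{bmatrix}$. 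The additive constant is forced by $\check e_0\in\ker P_{\cK}$: this direction persists in $(T-zP_{\cK})^{-1}$ as $z\to\infty$, so the $(2,2)$-entry of the resolvent term misses the value $\langle\hat e_0,\check e_0\rangle=\lambda$ read off from the normalization \eqref{00transfer1}.

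The step I expect to be the main obstacle is this final assembly: $\hat e_{-1}$ and $\check e_{-1}$ are not mutually orthogonal and lie outside the Hardy spaces, so isolating the $\check$-components has to be done carefully. The structural reason the formula must hold is that, combining \eqref{transfer4} with \eqref{transfer18}, the reproducing kernel of $\cK$ equals $\begin{bmatrix}\check p_0\check e_{-1}(\z)&\check e_0(\z)\end{bmatrix}\frac{\fA(\fz(\z))\,j\,\fA(\fz(\z_0))^{*}-j}{\fz(\z)-\overline{\fz(\z_0)}}\begin{bmatrix}\check p_0\overline{\check e_{-1}(\z_0)}\\ \overline{\check e_0(\z_0)}\end{bmatrix}$, so that $\cK$ is a model copy of the de~Branges space of the $j$-inner function $\fA$ with $T,\hat\cE,\check\cE$ as its canonical colligation, and \eqref{transfer8} is the usual characteristic-function representation. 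One could base the entire proof on this identification, but the route above stays inside the already-constructed Hardy spaces.
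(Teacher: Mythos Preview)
Your outline is correct, but it takes a noticeably longer route than the paper. The paper does not introduce the Weyl solution $\hat u$ or the functions $r_\pm$ at all. Instead it writes down an explicit ansatz for the resolvent on each basis vector:
\[
(T-z_0P_{\cK})^{-1}\hat e_{-1}\;=\;\frac{\hat e_{-1}-A\,\check e_{-1}-B\,\check e_{0}}{\fz-z_0}\;\oplus\;\frac{A}{\check p_0}\,\check e_0,
\]
and similarly $(T-z_0P_{\cK})^{-1}P_{\cK}\hat e_0$ with scalars $C,D$. The two membership conditions (the displayed fraction lies in $\cK\subset\hat H^2(\alpha)$, and its tilde image lies in the dual $\cK$) are exactly the two scalar equations \eqref{transfer9}, \eqref{transfer10} (resp.\ \eqref{transfer11}); these say precisely that $\begin{bmatrix}A&C\\B&D\end{bmatrix}=\fA(z_0)$. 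One then verifies the ansatz directly and applies $\check\cE^*P_{\cK\oplus\{\check e_0\}}$ using the reproducing property of $\check e_{-1},\check e_0$. No $r_\pm$, no $2\times2$ inversion.

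What this buys: the paper's argument computes the resolvent on the standard basis of $\bbC^2$ in one stroke, whereas your method first obtains it only along the $z$-dependent direction $(r_+(z),1)^t$ and then needs a second direction plus a linear-algebra inversion in $r_\pm$ to get the columns of $\fA$. The ``dual computation'' you allude to for the second direction is exactly where care is needed: the tilde involution sends $\cK$ to the dual space $\hat H^2(\mu^{-1}\alpha^{-1}\nu)\ominus\check H^2(\mu^{-1}\alpha^{-1}\nu)$, so the companion identity naturally lives there rather than in $\cK\oplus\{\check e_0\}$, and pulling it back already uses the Wronski identity you were saving for the final assembly. Your de~Branges/colligation remark at the end is in fact closest in spirit to the paper's argument --- both recognize $\cK$ as the model space of the $j$-inner function $\fA$ with $(T,\hat\cE,\check\cE)$ as its colligation data, and \eqref{transfer8} as the characteristic-function formula; the paper just verifies this concretely rather than invoking the abstract theory.
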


\begin{proof}
With necessity the vector
$(T-z_0 P_\cK)^{-1}\hat e_{-1}$ is of the form
$$
x\oplus c \check e_0=\frac{\hat e_{-1}-A \check e_{-1}-B\check e_{0}}{\fz-z_0}\oplus c \check e_0.
$$
Since $x\in H^2$ we have
\begin{equation}\label{transfer9}
\hat e_{-1}(\z_0)=\begin{bmatrix} \check e_{-1}(\z_0)& \check e_{0}(\z_0)
\end{bmatrix}
\begin{bmatrix} A\\ B
\end{bmatrix}
\end{equation}

Since $x\in \hat H^2(\alpha)\ominus \check H^2(\alpha)$ the dual vector $\tilde x(\z)$, such that  $b(\z)\tilde x(\z)=\Delta(\z) x(\bar\z)$ for $\z\in\bbT$, belongs to  
$\hat H^2(\alpha^{-1}\mu^{-1}\nu)\ominus \check H^2(\alpha^{-1}\mu^{-1}\nu)$. Thus, in notations 
\eqref{hsd0} we have
\begin{equation}\label{transfer10}
\widetilde{\hat e}_{0} (\z_0)=\begin{bmatrix} \widetilde{\check e}_{0}(\z_0)& \widetilde{\check e}_{-1}(\z_0)
\end{bmatrix}
\begin{bmatrix} A\\ B
\end{bmatrix}
\end{equation}

In this case, indeed,
\begin{eqnarray*}
(T-z_0P_\cK) x&=&P_\cK(\fz-z_0)x+P_{\{\hat e_{-1}\}}\fz x\\
&=& P_\cK (\hat e_{-1}-A \check e_{-1}-B \check e_{0})+\hat e_{-1}\langle x, \hat e_0\rangle
\hat p_0\\
&=&-A P_\cK \check e_{-1}+\hat p_0\frac{(b\hat e_{-1})(0)-A(b\check e_{-1})(0)}{(b\fz)(0)\hat e_0(0)}\hat e_{-1}
\end{eqnarray*}
and
\begin{equation*}\label{darbu1}
T\check e_0=P_{{\{\hat e_{-1}\}\oplus \cK}} \check p_0 \check e_{-1}=\check p_0 P_{\cK} 
\check e_{-1}+ \check p_0\frac{(b\check e_{-1})(0)}{(b\hat e_{-1})(0)}\hat e_{-1}
\end{equation*}
That is, for $c=A/\check p_0$ we have
$$
(T-zP_\cK)(x+c \check e_0)=\left(A\frac{(b\check e_{-1})(0)}{(b\hat e_{-1})(0)}
+\frac{(b\hat e_{-1})(0)-A(b\check e_{-1})(0)}{(b \hat e_{-1})(0)}\right)\hat e_{-1}=
\hat e_{-1}
$$
Thus
\begin{equation}\label{0transfer1}
(T-zP_\cK)^{-1} \hat e_{-1}=
\frac{\hat e_{-1}-A \check e_{-1}-B\check e_{0}}{\fz-z_0}\oplus \frac{A}{\check p_0} \check e_0,
\end{equation}

Similarly we get
\begin{equation}\label{0transfer2}
(T-zP_\cK)^{-1} P_\cK \hat e_{0}=
\frac{\hat e_{0}-C \check e_{-1}-D\check e_{0}}{\fz-z_0}\oplus \frac{C}{\check p_0} \check e_0,
\end{equation}
where
\begin{equation}\label{transfer11}
\begin{bmatrix}\hat e_{0}(\z_0)\\
\widetilde{\hat e}_{-1} (\z_0)
\end{bmatrix}
=\begin{bmatrix} \check e_{-1}(\z_0)& \check e_{0}(\z_0)\\
 \widetilde{\check e}_{0}(\z_0)& \widetilde{\check e}_{-1}(\z_0)
\end{bmatrix}
\begin{bmatrix} C\\ D
\end{bmatrix}.
\end{equation}

We see that all four entries $A,B,C,D$ given in \eqref{transfer9}, \eqref{transfer10},
\eqref{transfer11} actually form the transfer matrix $\fA$.

Using \eqref{0transfer1}, \eqref{0transfer2} and the reproducing properties of the vectors
$\check e_{-1}, \check e_0$ we get \eqref{transfer8}.

\end{proof}

\begin{lemma} The following Christoffel-Darboux type identity holds
\begin{equation}\label{0transfer12}
\displaystyle
\frac{\fA^*(z)j\fA(z)-j}{z-\bar z}=\hat \cE^*P_{\{\hat e_{-1}\}\oplus \cK}(T^*-\bar z P_\cK)^{-1}
P_\cK(T-z P_\cK)^{-1}P_{\{\hat e_{-1}\}\oplus \cK}\hat \cE.
\end{equation}

\end{lemma}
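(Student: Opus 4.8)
The plan is to read off \eqref{0transfer12} from the resolvent representation \eqref{transfer8}, in exact parallel with the way the finite-matrix Christoffel--Darboux identity \eqref{transfer3} is obtained from the resolvent representation \eqref{transfer2}. Abbreviate $\Pi_-:=P_{\{\hat e_{-1}\}\oplus \cK}$, $\Pi_+:=P_{\cK\oplus\{\check e_0\}}$, and set
\[
W(z):=(T-zP_\cK)^{-1}\Pi_-\hat\cE\colon\ \bbC^2\to\cK\oplus\{\check e_0\}.
\]
Since $(T-zP_\cK)^*=T^*-\bar zP_\cK$ and all the projections are self-adjoint, the right-hand side of \eqref{0transfer12} is exactly $W(z)^*P_\cK W(z)$, a manifestly nonnegative matrix, while \eqref{transfer8} reads $\fA(z)=N+j\,\check\cE^*W(z)$ with $N=\begin{bmatrix}0&0\\0&\lambda\end{bmatrix}$, $\lambda=\langle\hat e_0,\check e_0\rangle$. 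So the theorem is the statement $\fA(z)^*j\fA(z)-j=(z-\bar z)W(z)^*P_\cK W(z)$.

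First I would expand $\fA(z)^*j\fA(z)-j$ using $j^*=-j$ and $j^2=-I$. A direct computation gives $N^*jN=0$, so
\[
\fA(z)^*j\fA(z)-j=-j-N^*\check\cE^*W(z)+W(z)^*\check\cE N+W(z)^*\bigl(\check\cE j\check\cE^*\bigr)W(z),
\]
and it remains to prove the operator identity
\[
W(z)^*\bigl(\check\cE j\check\cE^*-(z-\bar z)P_\cK\bigr)W(z)=j+N^*\check\cE^*W(z)-W(z)^*\check\cE N .
\]

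The content of this identity is a Green-type (colligation) relation for the transfer data $(T,\hat\cE,\check\cE,\cK)$ of \eqref{transfer6}--\eqref{transfer7}. Because $E\subset\bbR$, the function $\fz$ is real on $\bbT$, hence multiplication by $\fz$ is self-adjoint on $L^2(\alpha)$, $T$ is one of its compressions, and $P_\cK TP_\cK$ is the self-adjoint compression of $M_\fz$ to $\cK$. Writing $\fz u=Tu+(I-\Pi_-)\fz u$ for $u\in\cK\oplus\{\check e_0\}$, unfolding $\hat\cE$ and $\check\cE$ via the reproducing-kernel expansions of the type \eqref{hs15} and the duality \eqref{hsd0}, and using $\langle M_\fz f,g\rangle=\langle f,M_\fz g\rangle$, one produces an identity of the schematic shape $(z-\bar z)\langle P_\cK u,P_\cK v\rangle=(\hat\cE\text{-boundary terms})-(\check\cE\text{-boundary terms})$, the boundary terms being point evaluations at the origin made explicit by the Wronski formula \eqref{abm4}. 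Substituting $u=W(z)\xi$, using $(T-zP_\cK)W(z)=\Pi_-\hat\cE$ together with the explicit solution formulas \eqref{0transfer1}, \eqref{0transfer2} (which identify the $\check e_0$-tail of $W(z)$ with the entries $A$ and $C$ of $\fA$), turns the left side into $(z-\bar z)W(z)^*P_\cK W(z)$ and the right side into $j+N^*\check\cE^*W(z)-W(z)^*\check\cE N$, which is the displayed identity.

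I expect the only real obstacle to be bookkeeping: the three orthogonal projections $\Pi_-$, $\Pi_+$, $P_\cK$ live on distinct but overlapping subspaces and do not commute, so one must track the $\check e_0$-tails $\tfrac{A}{\check p_0}\check e_0$ and $\tfrac{C}{\check p_0}\check e_0$ of \eqref{0transfer1}--\eqref{0transfer2} carefully enough to see that they reassemble precisely into the $N$-terms and the bare $-j$ above, and to match the normalizations $\langle\hat e_0,\check e_0\rangle=\lambda$, $(b\hat e_{-1})(0)>0$, $\det\fA\equiv1$. The complex-analytic inputs (reality of $\fz$ on $\bbT$ and the Wronski identity \eqref{abm4}) are routine once the algebra is organized; as a consistency check, for a finite band $E$ the whole computation collapses to \eqref{transfer3}.
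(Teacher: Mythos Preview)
Your approach is essentially the paper's: expand $\fA^*j\fA-j$ via the resolvent representation \eqref{transfer8}, reduce to a commutator identity for $T$, and then sandwich between the resolvents $W(z)$, sorting out boundary terms with \eqref{0transfer1}--\eqref{0transfer2} and the reproducing properties of $\hat e_{-1},\hat e_0$. The only difference is that the paper isolates the commutator step cleanly as the single identity
\[
P_\cK T-T^*P_\cK=\check\cE\, j\,\check\cE^*,
\]
which it proves in three lines directly from the definition $T=P_{\{\hat e_{-1}\}\oplus\cK}\,\fz\cdot\,|\,\cK\oplus\{\check e_0\}$ and the orthogonality relations among $\check e_{-1},\check e_0,\cK$; neither the Wronski formula \eqref{abm4} nor the duality \eqref{hsd0} is needed here. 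Once you have that identity, rewriting it as $(z-\bar z)P_\cK=\check\cE j\check\cE^*-P_\cK(T-zP_\cK)+(T^*-\bar zP_\cK)P_\cK$ and multiplying by $W(z)$ and $W(z)^*$ gives exactly your displayed target equation, so your bookkeeping concern about the $\check e_0$-tails is the only remaining work --- and it is indeed just bookkeeping.
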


\begin{proof} We start with an easy identity
\begin{equation}\label{transfer12}
P_\cK T-T^*P_\cK=\check \cE \begin{bmatrix}0&1\\-1&0\end{bmatrix}\check\cE^*.
\end{equation}
Indeed, for $f=x+ c\check e_{0}$ we have 
\begin{eqnarray*}
\langle P_\cK T (x+c\check e_{0}),x+c\check e_{0}\rangle&=&
\langle \fz (x+c\check e_{0}),x\rangle=
\langle \fz x,x\rangle+c \langle \check p_0 \check e_{-1},x\rangle\\
&=&\langle \fz x,x\rangle+\check p_0
\langle f,\check e_{0}\rangle \langle \check e_{-1},f\rangle.
\end{eqnarray*}
We can rewrite  \eqref{transfer12} in a more sophisticated form
\begin{equation}\label{0darbu2}
(z-\bar z) P_{\cK}=\check \cE \begin{bmatrix}0&1\\-1&0\end{bmatrix}\check\cE^*
- P_\cK (T-zP_k)+(T^*-\bar z P_k)P_\cK.
\end{equation}
Now we multiply \eqref{0darbu2} by $P_{\cK\oplus\{\check e_0\}}(T-z P_\cK)^{-1}P_{\{\hat e_{-1}\}\oplus \cK}\hat \cE$ from the right and by the conjugated expression from the left. We use
\eqref{0transfer1}, \eqref{0transfer2} and the reproducing properties of $\hat e_{-1}, \hat e_0$ to obtain \eqref{0transfer12}.

\end{proof}

\bibliographystyle{amsplain}

\bigskip

Department of Mathematics, Michigan State University, East Lansing,
MI 48824, USA

\textit{E-mail address}: {volberg@math.msu.edu}

\smallskip

{Institute for Analysis, Johannes Kepler University Linz,
A-4040 Linz, Austria} 

\textit{E-mail address}:
{Petro.Yudytskiy@jku.at}
\end{document}